\def\tit{Generalized Convolution and Efficient Language Recognition
(Extended version)
}
\newcommand\auth{Conal Elliott}
\author{\auth \\[1.5ex]
Target\\[1.5ex]conal@conal.net
}
   \newcommand\SkipToFmtEnd{}%
   \newcommand\EndFmtInput{}%
   \long\def\SkipToFmtEnd#1\EndFmtInput{}%
\newcommand\ReadOnlyOnce[1]{\@ifundefined{#1}{\@namedef{#1}{}}\SkipToFmtEnd}
\DeclareFontFamily{OT1}{cmtex}{}
\DeclareFontShape{OT1}{cmtex}{m}{n}
  {<5><6><7><8>cmtex8
   <9>cmtex9
   <10><10.95><12><14.4><17.28><20.74><24.88>cmtex10}{}
\DeclareFontShape{OT1}{cmtex}{m}{it}
  {<-> ssub * cmtt/m/it}{}
\DeclareFontShape{OT1}{cmtt}{bx}{n}
  {<5><6><7><8>cmtt8
   <9>cmbtt9
   <10><10.95><12><14.4><17.28><20.74><24.88>cmbtt10}{}
\DeclareFontShape{OT1}{cmtex}{bx}{n}
  {<-> ssub * cmtt/bx/n}{}
\newcommand{\Conid}[1]{\mathit{#1}}
\newcommand{\Varid}[1]{\mathit{#1}}
\newcommand{\anonymous}{\kern0.06em \vbox{\hrule\@width.5em}}
\newcommand{\plus}{\mathbin{+\!\!\!+}}
\newcommand{\bind}{\mathbin{>\!\!\!>\mkern-6.7mu=}}
\newdimen\mathindent\mathindent\leftmargini}%
\def\resethooks{%
  \global\let\SaveRestoreHook\empty
  \global\let\ColumnHook\empty}
\newcommand*{\savecolumns}[1][default]%
  {\g@addto@macro\SaveRestoreHook{\savecolumns[#1]}}
\newcommand*{\restorecolumns}[1][default]%
  {\g@addto@macro\SaveRestoreHook{\restorecolumns[#1]}}
\newcommand*{\aligncolumn}[2]%
  {\g@addto@macro\ColumnHook{\column{#1}{#2}}}
\newcommand{\onelinecommentchars}{\quad-{}- }
\newcommand{\commentbeginchars}{\enskip\{-}
\newcommand{\commentendchars}{-\}\enskip}
\newcommand{\visiblecomments}{%
  \let\onelinecomment=\onelinecommentchars
  \let\commentbegin=\commentbeginchars
  \let\commentend=\commentendchars}
\newcommand{\invisiblecomments}{%
  \let\onelinecomment=\empty
  \let\commentbegin=\empty
  \let\commentend=\empty}
\newlength{\blanklineskip}
\newcommand{\hsindent}[1]{\quad}% default is fixed indentation
\let\hspre\empty
\let\hspost\empty
\newcommand{\hsnewpar}[1]%
  {{\parskip=0pt\parindent=0pt\par\vskip #1\noindent}}
\newcommand{\hscodestyle}{}
\newcommand{\sethscode}[1]%
  {\expandafter\let\expandafter\hscode\csname #1\endcsname
   \expandafter\let\expandafter\endhscode\csname end#1\endcsname}
   \let\hspre\(\let\hspost\)%
   \let\hspre\(\let\hspost\)%
\newcommand{\plainhs}{\sethscode{plainhscode}}
\def\codeframewidth{\arrayrulewidth}
   \let\endoflinesave=\\
   \framedhslinecorrect\endoflinesave{.5ex}\hline
\newcommand{\framedhslinecorrect}[2]%
  {#1[#2]}
\def\column##1##2{}%
   \newcommand\>[1][]{}\newcommand\<[1][]{}\newcommand\\[1][]{}%
   \def\fromto##1##2##3{##3}%
\let\orighscode=\hscode
   \let\origendhscode=\endhscode
   \def\endhscode{\def\hscode{\endgroup\def\@currenvir{hscode}\\}\begingroup}
\def\hscode{\endgroup\def\@currenvir{hscode}}}%
   \global\let\hscode=\orighscode
   \global\let\endhscode=\origendhscode}%
\let\HaskellResetHook\empty
\newcommand*{\AtHaskellReset}[1]{%
  \g@addto@macro\HaskellResetHook{#1}}
\newcommand*{\HaskellReset}{\HaskellResetHook}
\newcommand\hsforall{\global\let\hsdot=\hsperiodonce}
\newcommand*\hsperiodonce[2]{#2\global\let\hsdot=\hscompose}
\newcommand*\hscompose[2]{#1}
\newcommand{\calculationcomments}{%
  \let\onelinecomment=\onelinecommentchars
  \def\commentbegin{\ \{ }%
  \def\commentend{\}}%
}
\newcommand\nc\newcommand
\renewcommand
\nc\out[1]{}
\nc\indraft[1]{#1}
\nc\note[1]{\indraft{\textcolor{red}{#1}}}
\nc\notefoot[1]{\note{\footnote{\note{#1}}}}
\nc\todo[1]{\note{To do: #1}}
\nc\eqnlabel[1]{\label{equation:#1}}
\nc\eqnref[1]{Equation~\ref{equation:#1}}
\nc\eqnreftwo[2]{Equations~\ref{equation:#1} and \ref{equation:#2}}
\nc\figlabel[1]{\label{fig:#1}}
\nc\figref[1]{Figure~\ref{fig:#1}}
\nc\figreftwo[2]{Figures~\ref{fig:#1} and \ref{fig:#2}}
\nc\seclabel[1]{\label{sec:#1}}
\nc\secref[1]{Section~\ref{sec:#1}}
\nc\secreftwo[2]{Sections~\ref{sec:#1} and~\ref{sec:#2}}
\nc\secrefs[2]{Sections \ref{sec:#1} through \ref{sec:#2}}
\nc\appref[1]{Appendix~\ref{sec:#1}}
\nc\sectiondef[1]{\section{#1}\seclabel{#1}}
\nc\subsectiondef[1]{\subsection{#1}\seclabel{#1}}
\nc\subsubsectiondef[1]{\subsubsection{#1}\seclabel{#1}}
\nc\needcite{\note{[ref]}}
\nc\figoneW[4]{
\fbox{%
\begin{minipage}{#1\linewidth}
  \centering
  \setlength\mathindent{0ex}
  #4
  \vspace*{-5ex}
  \captionof{figure}{#3}
  \label{fig:#2}
\end{minipage}
}
}
\nc\figone{\figoneW{\stdWidth}}
\nc\figo[1]{
\begin{figure}
\centering
#1
\end{figure}
}
\nc\figp[2]{\begin{figure}\centering #1 \hspace{-2ex} #2\end{figure}}
\nc\figdefG[4]{\begin{#1}[tbp]
\begin{center}
#4
\end{center}
\caption{#3}
\figlabel{#2}
\end{#1}}
\nc\figdef{\figdefG{figure}}
\nc\figdefwide{\figdefG{figure*}}
\nc\figrefdef[3]{\figref{#1}\figdef{#1}{#2}{#3}}
\nc\figrefdefwide[3]{\figref{#1}\figdefwide{#1}{#2}{#3}}
\nc\stdWidth{0.46}
\nc\figpairW[8]{
\begin{figure}
\centering
\figoneW{#1}{#3}{#4}{#5}
%\hspace{.05\linewidth}
%\hfill
\figoneW{#2}{#6}{#7}{#8}
\end{figure}
}
\nc\figpair{\figpairW{\stdWidth}{\stdWidth}}
\nc\incpic[1]{\includegraphics[width=\linewidth]{figures/#1}}
\nc\incpicW[2]{\includegraphics[width=#1\linewidth]{figures/#2}}
\nc\db[1]{\llbracket#1\rrbracket}
\nc\smalltriangleup{\triangle}
\nc\smalltriangledown{\triangledown}
\nc\dq{\text{\tt\char34}}
\nc\hquoted[1]{\dq\!#1\!\dq}
\nc\sectionl[1]{\section{#1}\seclabel{#1}}
\nc\subsectionl[1]{\subsection{#1}\seclabel{#1}}
\nc\workingHere{
\vspace{1ex}
\begin{center}
\setlength{\fboxsep}{3ex}
\setlength{\fboxrule}{4pt}
\huge\textcolor{red}{\framebox{Working here}}
\end{center}
\vspace{1ex}
}
\theoremstyle{definition} % remark
\newtheoremstyle{plainstyle}
  {\topsep} % Space above
  {\topsep} % Space below
  {} % Body font
  {} % Indent amount
  {\bfseries} % Theorem head font
  {.} % Punctuation after theorem head
  {.5em} % Space after theorem head
  {} % Theorem head spec (can be left empty, meaning `normal')
\newtheorem{definition}{Definition}%[section]
\nc\deflabel[1]{\label{definition:#1}}
\nc\defref[1]{Definition \ref{definition:#1}}
\nc\defreftwo[2]{Definitions \ref{definition:#1} and \ref{definition:#2}}
\nc\defrefs[2]{Definitions \ref{definition:#1} through \ref{definition:#2}}
\newtheorem{theorem}{Theorem}%[section]
\nc\thmlabel[1]{\label{theorem:#1}}
\nc\thmref[1]{Theorem \ref{theorem:#1}}
\nc\thmreftwo[2]{Theorems \ref{theorem:#1} and \ref{theorem:#2}}
\nc\thmrefs[2]{Theorems \ref{theorem:#1} through \ref{theorem:#2}}
\nc\corlabel[1]{\label{corollary:#1}}
\nc\corref[1]{Corollary \ref{corollary:#1}}
\nc\correftwo[2]{Corollaries \ref{corollary:#1} and \ref{corollary:#2}}
\nc\correfs[2]{Corollaries \ref{corollary:#1} through \ref{corollary:#2}}
\newtheorem{lemma}[theorem]{Lemma}
\nc\lemlabel[1]{\label{lemma:#1}}
\nc\lemref[1]{Lemma \ref{lemma:#1}}
\nc\lemreftwo[2]{Lemmas \ref{lemma:#1} and \ref{lemma:#2}}
\nc\lemrefthree[3]{Lemmas \ref{lemma:#1}, \ref{lemma:#2}, and \ref{lemma:#3}}
\nc\lemrefs[2]{Lemmas \ref{lemma:#1} through \ref{lemma:#2}}
\nc\exclabel[1]{\label{exercise:#1}}
\nc\excref[1]{Exercise \ref{exercise:#1}}
\nc\excreftwo[2]{Exercises \ref{exercise:#1} and \ref{exercise:#2}}
\nc\excrefs[2]{Exercises \ref{exercise:#1} through \ref{exercise:#2}}
\definecolor{codesep}{gray}{0.85}
\nc\codesep[1]{
\begin{minipage}[b]{0ex}
\color{codesep}{\rule[1ex]{0.8pt}{#1}}
\end{minipage}}
\nc\twocol[4]{
\\
\begin{minipage}[c]{#1\textwidth}
#2
\vspace{-2ex}
\end{minipage}
\begin{minipage}[c]{#3\textwidth} % \mathindent1em
#4
\vspace{-2ex}
\end{minipage}
\\
}
\let\oldFootnote\footnote
\rnc\footnote[1]{%
    \oldFootnote{#1}\futurelet\nextToken\isFootnote}
\nc\footcomma[1]{\ifx#1\nextToken\textsuperscript{,}\fi}
\nc\isFootnote{%
    \footcomma\footnote
    \footcomma\notefoot
}
\DeclareFontFamily{U}{mathx}{\hyphenchar\font45}
\DeclareFontShape{U}{mathx}{m}{n}{
      <5> <6> <7> <8> <9> <10>
      <10.95> <12> <14.4> <17.28> <20.74> <24.88>
      mathx10
      }{}
\DeclareSymbolFont{mathx}{U}{mathx}{m}{n}
\DeclareMathAccent{\widebar}{0}{mathx}{"73} % "
\nc\iflong[1]{#1}
\rnc\indraft[1]{}
\let\cite=\citep
\title\tit
\nc\prooflabel[1]{\label{proof:#1}}
\nc\proofref[1]{Appendix \ref{proof:#1}}
\nc\seeproof[1]{(details in \proofref{#1})}
\nc\provedIn[1]{\textnormal{Proved in \proofref{#1}}}
\nc\set[1]{\{\,#1\,\}}
\nc\Pow{\mathcal{P}}
\nc\mempty{\varepsilon}
\nc\closure[1]{#1^{\ast}}
\nc\mappend{\diamond}
\nc\cat{\mathop{}}
\nc\union{\cup}
\nc\bigunion{\bigcup\limits}
\nc\has[1]{\mathop{\delta_{#1}}}
\nc\derivOp{\mathcal{D}}
\nc\conv{*}
\nc\hasEps{\mathop{\Varid{has}_{\mempty}}}
\nc\id{\mathop{\Varid{id}}}
\nc\ite[3]{\text{if}\ #1\ \text{then}\ #2\ \text{else}\ #3}
\nc\lis{\mathop{\Varid{list}}}
\nc\liftA{\mathop{\Varid{liftA}}}
\nc\cons{\mathit{:}}
\nc\bigOp[3]{{\displaystyle \hspace{-#3ex}#1\limits_{\substack{#2}}\hspace{-#3ex}}}
\begin{document}

\maketitle

%% %let otherApps = not icfp

\begin{abstract}

\emph{Convolution} is a broadly useful operation with applications including signal processing, machine learning, probability, optics, polynomial multiplication, and efficient parsing.
Usually, however, this operation is understood and implemented in more specialized forms, hiding commonalities and limiting usefulness.
This paper formulates convolution in the common algebraic framework of semirings and semimodules and populates that framework with various representation types.
One of those types is the grand abstract template and itself generalizes to the free semimodule monad.
Other representations serve varied uses and performance trade-offs, with implementations calculated from simple and regular specifications.

Of particular interest is Brzozowski's method for regular expression matching.
Uncovering the method's essence frees it from syntactic manipulations, while generalizing from boolean to weighted membership (such as multisets and probability distributions) and from sets to \emph{n}-ary relations.
The classic \emph{trie} data structure then provides an elegant and efficient alternative to syntax.

Pleasantly, polynomial arithmetic requires no additional implementation effort, works correctly with a variety of representations, and handles multivariate polynomials and power series with ease.
Image convolution also falls out as a special case.

\end{abstract}

%% %format zero = "\mathbf{0}"
%% %format one = "\mathbf{1}"

%% experiment
%% %format * = "\times"

%% Sometimes formatting breaks without an infix separator.

%% Now I realize that the issue is parsing by lhs2tex. A better solution would
%% be to parenthesize the second argument to bigSum & friends, and have
%% formatting hide the parens. I'd like to make this change when I have time to
%% check the results carefully.

%% %format (single (s)) = "\single{"s"}"

%% Handy alternative for complex exponent

%% %format BR = "\hspace{-5mu}\\\hspace{-5mu}"

\sectionl{Introduction}

The mathematical operation of \emph{convolution} combines two functions into a third---often written ``\ensuremath{\Varid{h}\mathrel{=}\Varid{f}\mathbin{*}\Varid{g}}''---with each \ensuremath{\Varid{h}} value resulting from summing or integrating over the products of several pairs of \ensuremath{\Varid{f}} and \ensuremath{\Varid{g}} values according to a simple rule.
This operation is at the heart of many important and interesting applications in a variety of fields \citep{SnehaHL2017Conv}.
\begin{itemize}
\item In image processing, convolution provides operations like blurring, sharpening, and edge detection \citep{Young95FIP}.
  \note{Add something about more general signal processing \citep[Chapter 2]{Yarlagadda2010ADSS}.}
\item In machine learning convolutional neural networks (CNNs) allowed recognition of translation-independent image features \citep{Fukushima1988Neo, LeCun1998GBDR, Schmidhuber2015DL}.
  % Alom2018History
\item In probability, the convolution of the distributions of two independent random variables yields the distribution of their sum \citep{Grinstead2003IP}.
\item In acoustics, reverberation results from convolving sounds and their echos \citep{Pishdadian2017FRC}.
      Musical uses are known as ``convolution reverb'' \citep[Chapter 4]{HassICM1}.
%\item In optics, blurring is convolution with a lens or iris, and shadowing is convolution with an occluding object.
\item The coefficients of the product of polynomials is the convolution of their coefficients \citep{Dolan2013FunSemi}.
\item In formal languages, (generalized) convolution is language concatenation \citep{Dongol2016CUC}.
\end{itemize}
Usually, however, convolution is taught, applied, and implemented in more specialized forms, obscuring the underlying commonalities and unnecessarily limiting its usefulness.
For instance,
\begin{itemize}
\item
  Standard definitions rely on subtraction (which is unavailable in many useful settings) and are dimension-specific, while the more general form applies to any monoid \citep{Golan2005RecentSemi,Wilding2015LAS}.
\item
  Brzozowski's method of regular expression matching \citep{Brzozowski64} appears quite unlike other applications and is limited to \emph{sets} of strings (i.e., languages), leaving unclear how to generalize to variations like weighted membership (multisets and probability distributions) as well as \emph{n}-ary \emph{relations} between strings.
\item
  Image convolution is usually tied to arrays and involves somewhat arbitrary semantic choices at image boundaries, including replication, zero-padding, and mirroring.
\notefoot{Not really a good example.}
\end{itemize}

\note{Rework the rest of this section for better coherence.}

This paper formulates general convolution in the algebraic framework of semirings and semimodules, including a collection of types for which semiring multiplication is convolution.
One of those types is the grand abstract template, namely the \emph{monoid semiring}, i.e., functions from any monoid to any semiring.
Furthermore, convolution reveals itself as a special case of an even more general notion---the \emph{free semimodule monad}.
The other types are specific representations for various uses and performance trade-offs, relating to the monoid semiring by simple denotation functions (interpretations).
The corresponding semiring implementations are calculated from the requirement that these denotations be semiring homomorphisms, thus guaranteeing that the computationally efficient representations are consistent with their mathematically simple and general template.

An application of central interest in this paper is language specification and recognition, in which convolution specializes to language concatenation.
Here, we examine a method by \citet{Brzozowski64} for flexible and efficient regular expression matching, later extended to parsing context-free languages \citep{Might2010YaccID}.
We will see that the essential technique is much more general, namely functions from lists to an arbitrary semiring.
While Brzozowski's method involves repeated manipulation of syntactic representations (regular expressions or grammars), uncovering the method's essence frees us from such representations.
Thue's tries provide a compelling alternative in simplicity and efficiency, as well as a satisfying confluence of classic techniques from the second and seventh decades of the twentieth century, as well as a modern functional programming notion: the cofree comonad.

Concretely, this paper makes the following contributions:
\notefoot{Maybe add section references.}
\begin{itemize}
\item Generalization of Brzozowski's algorithm from regular expressions representing sets of strings, to various representations of \ensuremath{[\mskip1.5mu \Varid{c}\mskip1.5mu]\to \Varid{b}} where \ensuremath{\Varid{c}} is any type and \ensuremath{\Varid{b}} is any semiring, including $n$-ary functions and relations on lists (via currying).
\item Demonstration that the subtle aspect of Brzozowski's algorithm (matching of concatenated languages) is an instance of generalized convolution.
\item Specialization of the generalized algorithm to tries (rather than regular expressions), yielding a simple and apparently quite efficient implementation, requiring no construction or manipulation of syntactic representations.
\item Observation that Brzozowski's key operations on languages generalize to the comonad operations of the standard function-from-monoid comonad and its various representations (including generalized regular expressions).
      The trie representation is the cofree comonad, which memoizes functions from the free monoid (lists).
\item Application and evaluation of a simple memoization strategy encapsulated in a familiar functor, resulting in dramatic speed improvement.
\end{itemize}

\sectionl{Monoids, Semirings and Semimodules}

The ideas in this paper revolve around a small collection of closely related algebraic abstractions, so let's begin by introducing these abstractions along with examples.

\subsectionl{Monoids}

The simplest abstraction we'll use is the monoid, expressed in Haskell as follows:
\begin{hscode}\SaveRestoreHook
\column{B}{@{}>{\hspre}l<{\hspost}@{}}%
\column{3}{@{}>{\hspre}l<{\hspost}@{}}%
\column{11}{@{}>{\hspre}l<{\hspost}@{}}%
\column{E}{@{}>{\hspre}l<{\hspost}@{}}%
\>[B]{}\mathbf{class}\;\Conid{Monoid}\;\Varid{a}\;\mathbf{where}{}\<[E]%
\\
\>[B]{}\hsindent{3}{}\<[3]%
\>[3]{}\varepsilon{}\<[11]%
\>[11]{}\mathbin{::}\Varid{a}{}\<[E]%
\\
\>[B]{}\hsindent{3}{}\<[3]%
\>[3]{}( \diamond ){}\<[11]%
\>[11]{}\mathbin{::}\Varid{a}\to \Varid{a}\to \Varid{a}{}\<[E]%
\\
\>[B]{}\hsindent{3}{}\<[3]%
\>[3]{}\mathbf{infixr}\;\mathrm{6} \diamond {}\<[E]%
\ColumnHook
\end{hscode}\resethooks
The monoid laws require that \ensuremath{( \diamond )}\iflong{ (sometimes pronounced ``mappend'')} be associative and that \ensuremath{\varepsilon}\iflong{ (``mempty'')} is its left and right identity, i.e.,
\begin{hscode}\SaveRestoreHook
\column{B}{@{}>{\hspre}l<{\hspost}@{}}%
\column{E}{@{}>{\hspre}l<{\hspost}@{}}%
\>[B]{}(\Varid{u} \diamond \Varid{v}) \diamond \Varid{w}\mathrel{=}\Varid{u} \diamond (\Varid{v} \diamond \Varid{w}){}\<[E]%
\\
\>[B]{}\varepsilon \diamond \Varid{v}\mathrel{=}\Varid{v}{}\<[E]%
\\
\>[B]{}\Varid{u} \diamond \varepsilon\mathrel{=}\Varid{u}{}\<[E]%
\ColumnHook
\end{hscode}\resethooks
One monoid especially familiar to functional programmers is lists with append:
\begin{hscode}\SaveRestoreHook
\column{B}{@{}>{\hspre}l<{\hspost}@{}}%
\column{3}{@{}>{\hspre}l<{\hspost}@{}}%
\column{11}{@{}>{\hspre}l<{\hspost}@{}}%
\column{E}{@{}>{\hspre}l<{\hspost}@{}}%
\>[B]{}\mathbf{instance}\;\Conid{Monoid}\;[\mskip1.5mu \Varid{a}\mskip1.5mu]\;\mathbf{where}{}\<[E]%
\\
\>[B]{}\hsindent{3}{}\<[3]%
\>[3]{}\varepsilon{}\<[11]%
\>[11]{}\mathrel{=}[\mskip1.5mu \mskip1.5mu]{}\<[E]%
\\
\>[B]{}\hsindent{3}{}\<[3]%
\>[3]{}( \diamond ){}\<[11]%
\>[11]{}\mathrel{=}(\plus ){}\<[E]%
\ColumnHook
\end{hscode}\resethooks
Natural numbers form a monoid under addition and zero.
These two monoids are related via the function \ensuremath{\Varid{length}\mathbin{::}[\mskip1.5mu \Varid{a}\mskip1.5mu]\to \mathbb N}, which not only maps lists to natural numbers, but does so in a way that preserves monoid structure%
:
\begin{hscode}\SaveRestoreHook
\column{B}{@{}>{\hspre}c<{\hspost}@{}}%
\column{BE}{@{}l@{}}%
\column{5}{@{}>{\hspre}l<{\hspost}@{}}%
\column{27}{@{}>{\hspre}l<{\hspost}@{}}%
\column{E}{@{}>{\hspre}l<{\hspost}@{}}%
\>[5]{}\Varid{length}\;\varepsilon{}\<[E]%
\\
\>[B]{}\mathrel{=}{}\<[BE]%
\>[5]{}\Varid{length}\;[\mskip1.5mu \mskip1.5mu]{}\<[27]%
\>[27]{}\mbox{\onelinecomment  \ensuremath{\varepsilon} on \ensuremath{[\mskip1.5mu \Varid{a}\mskip1.5mu]}}{}\<[E]%
\\
\>[B]{}\mathrel{=}{}\<[BE]%
\>[5]{}\mathrm{0}{}\<[27]%
\>[27]{}\mbox{\onelinecomment  \ensuremath{\Varid{length}} definition}{}\<[E]%
\\
\>[B]{}\mathrel{=}{}\<[BE]%
\>[5]{}\varepsilon{}\<[27]%
\>[27]{}\mbox{\onelinecomment  \ensuremath{\mathrm{0}} on \ensuremath{\mathbb N}}{}\<[E]%
\\[\blanklineskip]%
\>[5]{}\Varid{length}\;(\Varid{u} \diamond \Varid{v}){}\<[E]%
\\
\>[B]{}\mathrel{=}{}\<[BE]%
\>[5]{}\Varid{length}\;(\Varid{u}\plus \Varid{v}){}\<[27]%
\>[27]{}\mbox{\onelinecomment  \ensuremath{( \diamond )} on \ensuremath{[\mskip1.5mu \Varid{a}\mskip1.5mu]}}{}\<[E]%
\\
\>[B]{}\mathrel{=}{}\<[BE]%
\>[5]{}\Varid{length}\;\Varid{u}\mathbin{+}\Varid{length}\;\Varid{v}{}\<[27]%
\>[27]{}\mbox{\onelinecomment  \ensuremath{\Varid{length}} definition and induction}{}\<[E]%
\\
\>[B]{}\mathrel{=}{}\<[BE]%
\>[5]{}\Varid{length}\;\Varid{u} \diamond \Varid{length}\;\Varid{v}{}\<[27]%
\>[27]{}\mbox{\onelinecomment  \ensuremath{( \diamond )} on \ensuremath{\mathbb N}}{}\<[E]%
\ColumnHook
\end{hscode}\resethooks
This pattern is common and useful enough to have a name \citep{Yorgey2012Monoids}:
\begin{definition}\deflabel{monoid homomorphism}
A function \ensuremath{\Varid{h}} from one monoid to another is called a \emph{monoid homomorphism} when it satisfies the following properties:
\begin{hscode}\SaveRestoreHook
\column{B}{@{}>{\hspre}l<{\hspost}@{}}%
\column{E}{@{}>{\hspre}l<{\hspost}@{}}%
\>[B]{}\Varid{h}\;\varepsilon\mathrel{=}\varepsilon{}\<[E]%
\\
\>[B]{}\Varid{h}\;(\Varid{u} \diamond \Varid{v})\mathrel{=}\Varid{h}\;\Varid{u} \diamond \Varid{h}\;\Varid{v}{}\<[E]%
\ColumnHook
\end{hscode}\resethooks
\end{definition}

A fancier monoid example is functions from a type to itself, also known as \emph{endofunctions}, for which \ensuremath{\varepsilon} is the identity function, and \ensuremath{( \diamond )} is composition:
\begin{hscode}\SaveRestoreHook
\column{B}{@{}>{\hspre}l<{\hspost}@{}}%
\column{3}{@{}>{\hspre}l<{\hspost}@{}}%
\column{E}{@{}>{\hspre}l<{\hspost}@{}}%
\>[B]{}\mathbf{newtype}\;\Conid{Endo}\;\Varid{a}\mathrel{=}\Conid{Endo}\;(\Varid{a}\to \Varid{a}){}\<[E]%
\\[\blanklineskip]%
\>[B]{}\mathbf{instance}\;\Conid{Monoid}\;(\Conid{Endo}\;\Varid{a})\;\mathbf{where}{}\<[E]%
\\
\>[B]{}\hsindent{3}{}\<[3]%
\>[3]{}\varepsilon\mathrel{=}\Conid{Endo}\;\Varid{id}{}\<[E]%
\\
\>[B]{}\hsindent{3}{}\<[3]%
\>[3]{}\Conid{Endo}\;\Varid{g} \diamond \Conid{Endo}\;\Varid{f}\mathrel{=}\Conid{Endo}\;(\Varid{g}\hsdot{\circ }{.\:}\Varid{f}){}\<[E]%
\ColumnHook
\end{hscode}\resethooks
The identity and associativity monoid laws follow from the identity and associativity category laws, so we can generalize to endomorphisms, i.e., morphisms from an object to itself in any category.
A modest generalization of Cayley's theorem states that every monoid is isomorphic to a monoid of endofunctions \citep{Boisseau2018YNK}.
This embedding is useful for turning quadratic-time algorithms linear \citep{Hughes1986NRL,Voigtlander2008AIC}.
% (The Yoneda embedding generalizes this theorem to categories and endomorphisms.)
\twocol{0.45}{
\begin{hscode}\SaveRestoreHook
\column{B}{@{}>{\hspre}l<{\hspost}@{}}%
\column{E}{@{}>{\hspre}l<{\hspost}@{}}%
\>[B]{}\Varid{toEndo}\mathbin{::}\Conid{Monoid}\;\Varid{a}\Rightarrow \Varid{a}\to \Conid{Endo}\;\Varid{a}{}\<[E]%
\\
\>[B]{}\Varid{toEndo}\;\Varid{a}\mathrel{=}\Conid{Endo}\;(\lambda\, \Varid{z}\to \Varid{a} \diamond \Varid{z}){}\<[E]%
\ColumnHook
\end{hscode}\resethooks
}{0.45}{
\begin{hscode}\SaveRestoreHook
\column{B}{@{}>{\hspre}l<{\hspost}@{}}%
\column{E}{@{}>{\hspre}l<{\hspost}@{}}%
\>[B]{}\Varid{fromEndo}\mathbin{::}\Conid{Monoid}\;\Varid{a}\Rightarrow \Conid{Endo}\;\Varid{a}\to \Varid{a}{}\<[E]%
\\
\>[B]{}\Varid{fromEndo}\;(\Conid{Endo}\;\Varid{f})\mathrel{=}\Varid{f}\;\varepsilon{}\<[E]%
\ColumnHook
\end{hscode}\resethooks
}
The \ensuremath{\Varid{toEndo}} embedding provides another example of a monoid homomorphism%
:
\begin{hscode}\SaveRestoreHook
\column{B}{@{}>{\hspre}c<{\hspost}@{}}%
\column{BE}{@{}l@{}}%
\column{5}{@{}>{\hspre}l<{\hspost}@{}}%
\column{51}{@{}>{\hspre}l<{\hspost}@{}}%
\column{E}{@{}>{\hspre}l<{\hspost}@{}}%
\>[5]{}\Varid{toEndo}\;\varepsilon{}\<[E]%
\\
\>[B]{}\mathrel{=}{}\<[BE]%
\>[5]{}\Conid{Endo}\;(\lambda\, \Varid{z}\to \varepsilon \diamond \Varid{z}){}\<[51]%
\>[51]{}\mbox{\onelinecomment  \ensuremath{\Varid{toEndo}} definition}{}\<[E]%
\\
\>[B]{}\mathrel{=}{}\<[BE]%
\>[5]{}\Conid{Endo}\;(\lambda\, \Varid{z}\to \Varid{z}){}\<[51]%
\>[51]{}\mbox{\onelinecomment  monoid law}{}\<[E]%
\\
\>[B]{}\mathrel{=}{}\<[BE]%
\>[5]{}\varepsilon{}\<[51]%
\>[51]{}\mbox{\onelinecomment  \ensuremath{\Varid{id}} on \ensuremath{\Conid{Endo}\;\Varid{a}}}{}\<[E]%
\\[\blanklineskip]%
\>[5]{}\Varid{toEndo}\;(\Varid{a} \diamond \Varid{b}){}\<[E]%
\\
\>[B]{}\mathrel{=}{}\<[BE]%
\>[5]{}\Conid{Endo}\;(\lambda\, \Varid{z}\to (\Varid{a} \diamond \Varid{b}) \diamond \Varid{z}){}\<[51]%
\>[51]{}\mbox{\onelinecomment  \ensuremath{\Varid{toEndo}} definition}{}\<[E]%
\\
\>[B]{}\mathrel{=}{}\<[BE]%
\>[5]{}\Conid{Endo}\;(\lambda\, \Varid{z}\to \Varid{a} \diamond (\Varid{b} \diamond \Varid{z})){}\<[51]%
\>[51]{}\mbox{\onelinecomment  monoid law}{}\<[E]%
\\
\>[B]{}\mathrel{=}{}\<[BE]%
\>[5]{}\Conid{Endo}\;((\lambda\, \Varid{z}\to \Varid{a} \diamond \Varid{z})\hsdot{\circ }{.\:}(\lambda\, \Varid{z}\to \Varid{b} \diamond \Varid{z})){}\<[51]%
\>[51]{}\mbox{\onelinecomment  \ensuremath{(\hsdot{\circ }{.\:})} definition}{}\<[E]%
\\
\>[B]{}\mathrel{=}{}\<[BE]%
\>[5]{}\Conid{Endo}\;(\lambda\, \Varid{z}\to \Varid{a} \diamond \Varid{z}) \diamond \Conid{Endo}\;(\lambda\, \Varid{z}\to \Varid{b} \diamond \Varid{z}){}\<[51]%
\>[51]{}\mbox{\onelinecomment  \ensuremath{( \diamond )} on \ensuremath{\Conid{Endo}\;\Varid{a}}}{}\<[E]%
\\
\>[B]{}\mathrel{=}{}\<[BE]%
\>[5]{}\Varid{toEndo}\;\Varid{a} \diamond \Varid{toEndo}\;\Varid{b}{}\<[51]%
\>[51]{}\mbox{\onelinecomment  \ensuremath{\Varid{toEndo}} definition (twice)}{}\<[E]%
\ColumnHook
\end{hscode}\resethooks

\subsectionl{Additive Monoids}

While \ensuremath{( \diamond )} must be associative, it needn't be commutative.
Commutative monoids, however, will play an important role in this paper as well.
For clarity and familiarity, it will be convenient to use the name ``\ensuremath{(\mathbin{+})}'' instead of ``\ensuremath{( \diamond )}'' and refer to such monoids as ``additive'':
\begin{hscode}\SaveRestoreHook
\column{B}{@{}>{\hspre}l<{\hspost}@{}}%
\column{3}{@{}>{\hspre}l<{\hspost}@{}}%
\column{9}{@{}>{\hspre}l<{\hspost}@{}}%
\column{E}{@{}>{\hspre}l<{\hspost}@{}}%
\>[B]{}\mathbf{class}\;\Conid{Additive}\;\Varid{b}\;\mathbf{where}{}\<[E]%
\\
\>[B]{}\hsindent{3}{}\<[3]%
\>[3]{}\mathrm{0}{}\<[9]%
\>[9]{}\mathbin{::}\Varid{b}{}\<[E]%
\\
\>[B]{}\hsindent{3}{}\<[3]%
\>[3]{}(\mathbin{+}){}\<[9]%
\>[9]{}\mathbin{::}\Varid{b}\to \Varid{b}\to \Varid{b}{}\<[E]%
\\
\>[B]{}\hsindent{3}{}\<[3]%
\>[3]{}\mathbf{infixl}\;\mathrm{6}\mathbin{+}{}\<[E]%
\ColumnHook
\end{hscode}\resethooks
The \ensuremath{\Conid{Additive}} laws are the same as for \ensuremath{\Conid{Monoid}} (translating \ensuremath{\varepsilon} and \ensuremath{( \diamond )} to \ensuremath{\mathrm{0}} and \ensuremath{(\mathbin{+})}), together with commutativity:
\begin{hscode}\SaveRestoreHook
\column{B}{@{}>{\hspre}l<{\hspost}@{}}%
\column{E}{@{}>{\hspre}l<{\hspost}@{}}%
\>[B]{}(\Varid{u}\mathbin{+}\Varid{v})\mathbin{+}\Varid{w}\mathrel{=}\Varid{u}\mathbin{+}(\Varid{v}\mathbin{+}\Varid{w}){}\<[E]%
\\
\>[B]{}\mathrm{0}\mathbin{+}\Varid{v}\mathrel{=}\Varid{v}{}\<[E]%
\\
\>[B]{}\Varid{u}\mathbin{+}\mathrm{0}\mathrel{=}\Varid{u}{}\<[E]%
\\
\>[B]{}\Varid{u}\mathbin{+}\Varid{v}\mathrel{=}\Varid{v}\mathbin{+}\Varid{u}{}\<[E]%
\ColumnHook
\end{hscode}\resethooks
Unlike lists with append, natural numbers form a \emph{additive} monoid.
Another example is functions with pointwise addition, with any domain and with any \emph{additive} codomain:
\begin{hscode}\SaveRestoreHook
\column{B}{@{}>{\hspre}l<{\hspost}@{}}%
\column{3}{@{}>{\hspre}l<{\hspost}@{}}%
\column{10}{@{}>{\hspre}l<{\hspost}@{}}%
\column{E}{@{}>{\hspre}l<{\hspost}@{}}%
\>[B]{}\mathbf{instance}\;\Conid{Additive}\;\Varid{b}\Rightarrow \Conid{Additive}\;(\Varid{a}\to \Varid{b})\;\mathbf{where}{}\<[E]%
\\
\>[B]{}\hsindent{3}{}\<[3]%
\>[3]{}\mathrm{0}\mathrel{=}\lambda\, \Varid{a}\to \mathrm{0}{}\<[E]%
\\
\>[B]{}\hsindent{3}{}\<[3]%
\>[3]{}\Varid{f}\mathbin{+}\Varid{g}{}\<[10]%
\>[10]{}\mathrel{=}\lambda\, \Varid{a}\to \Varid{f}\;\Varid{a}\mathbin{+}\Varid{g}\;\Varid{a}{}\<[E]%
\ColumnHook
\end{hscode}\resethooks

\noindent
Additive monoids have their form of homomorphism:
\begin{definition}\deflabel{additive monoid homomorphism}
A function \ensuremath{\Varid{h}} from one additive monoid to another is an \emph{additive monoid homomorphism} if it satisfies the following properties:
\begin{hscode}\SaveRestoreHook
\column{B}{@{}>{\hspre}l<{\hspost}@{}}%
\column{E}{@{}>{\hspre}l<{\hspost}@{}}%
\>[B]{}\Varid{h}\;\mathrm{0}\mathrel{=}\mathrm{0}{}\<[E]%
\\
\>[B]{}\Varid{h}\;(\Varid{u}\mathbin{+}\Varid{v})\mathrel{=}\Varid{h}\;\Varid{u}\mathbin{+}\Varid{h}\;\Varid{v}{}\<[E]%
\ColumnHook
\end{hscode}\resethooks
\end{definition}
\noindent
Curried function types of \emph{any number} of arguments (and additive result type) are additive, thanks to repeated application of this instance.
In fact,
\begin{theorem}[\provedIn{theorem:curry additive}]\thmlabel{curry additive}
Currying and uncurrying are additive monoid homomorphisms.
\end{theorem}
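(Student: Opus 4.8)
The plan is to verify the two defining equations of an additive monoid homomorphism (\defref{additive monoid homomorphism}) directly from the definitions of currying and the pointwise \ensuremath{\Conid{Additive}} instance on function types, and then to exploit the fact that \ensuremath{\Varid{curry}} and \ensuremath{\Varid{uncurry}} are mutually inverse bijections so that only one of the two calculations is genuinely needed.

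First I would fix the relevant definitions. Currying and uncurrying are \ensuremath{\Varid{curry}\;\Varid{f}\mathrel{=}\lambda\, \Varid{a}\;\Varid{b}\to \Varid{f}\;(\Varid{a},\Varid{b})} and \ensuremath{\Varid{uncurry}\;\Varid{g}\mathrel{=}\lambda\, (\Varid{a},\Varid{b})\to \Varid{g}\;\Varid{a}\;\Varid{b}}. The crucial point is that the additive structure lives on two different function spaces: on \ensuremath{(\Varid{a},\Varid{b})\to \Varid{c}} a single application of the instance gives \ensuremath{\mathrm{0}\mathrel{=}\lambda\, \Varid{p}\to \mathrm{0}} and \ensuremath{(\Varid{f}\mathbin{+}\Varid{g})\;\Varid{p}\mathrel{=}\Varid{f}\;\Varid{p}\mathbin{+}\Varid{g}\;\Varid{p}}, whereas on the iterated space \ensuremath{\Varid{a}\to \Varid{b}\to \Varid{c}} the same instance applies twice, giving \ensuremath{\mathrm{0}\mathrel{=}\lambda\, \Varid{a}\;\Varid{b}\to \mathrm{0}} and \ensuremath{(\Varid{f}\mathbin{+}\Varid{g})\;\Varid{a}\;\Varid{b}\mathrel{=}\Varid{f}\;\Varid{a}\;\Varid{b}\mathbin{+}\Varid{g}\;\Varid{a}\;\Varid{b}}.

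Next I would check the two homomorphism equations for \ensuremath{\Varid{curry}} by short extensional calculations. For the zero law, \ensuremath{\Varid{curry}\;\mathrm{0}} applied to \ensuremath{\Varid{a}} and \ensuremath{\Varid{b}} reduces to \ensuremath{\mathrm{0}\;(\Varid{a},\Varid{b})\mathrel{=}\mathrm{0}}, which is exactly \ensuremath{\mathrm{0}} on \ensuremath{\Varid{a}\to \Varid{b}\to \Varid{c}}. For the additivity law, \ensuremath{\Varid{curry}\;(\Varid{f}\mathbin{+}\Varid{g})} applied to \ensuremath{\Varid{a},\Varid{b}} reduces to \ensuremath{(\Varid{f}\mathbin{+}\Varid{g})\;(\Varid{a},\Varid{b})\mathrel{=}\Varid{f}\;(\Varid{a},\Varid{b})\mathbin{+}\Varid{g}\;(\Varid{a},\Varid{b})}, which is precisely \ensuremath{(\Varid{curry}\;\Varid{f}\mathbin{+}\Varid{curry}\;\Varid{g})\;\Varid{a}\;\Varid{b}}. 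Both equalities rest only on the pointwise definitions plus function extensionality, invoked at the outer and inner lambda layers, to conclude equality of the functions themselves.

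Finally, rather than repeating the analogous derivation for \ensuremath{\Varid{uncurry}}, I would observe that \ensuremath{\Varid{curry}} and \ensuremath{\Varid{uncurry}} are mutually inverse bijections between the two function spaces, and that the inverse of a bijective additive monoid homomorphism is again one: from \ensuremath{\Varid{uncurry}\;(\Varid{x}\mathbin{+}\Varid{y})\mathrel{=}\Varid{uncurry}\;(\Varid{curry}\;(\Varid{uncurry}\;\Varid{x})\mathbin{+}\Varid{curry}\;(\Varid{uncurry}\;\Varid{y}))} the already-established homomorphism property of \ensuremath{\Varid{curry}} lets the sum be pushed outward, and similarly for \ensuremath{\mathrm{0}}. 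This dispatches the \ensuremath{\Varid{uncurry}} case for free. There is no deep obstacle here; the only thing demanding care is keeping the two distinct \ensuremath{\Conid{Additive}} structures straight—the single pointwise layer on \ensuremath{(\Varid{a},\Varid{b})\to \Varid{c}} versus the doubly nested layer on \ensuremath{\Varid{a}\to \Varid{b}\to \Varid{c}}—and applying extensionality at the right level. The remark about functions of any number of arguments then follows by iterating this two-argument case.
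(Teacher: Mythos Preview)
Your proposal is correct and follows essentially the same approach as the paper: a direct calculation verifying the homomorphism equations for \ensuremath{\Varid{curry}} using the pointwise \ensuremath{\Conid{Additive}} instance and the definition of \ensuremath{\Varid{curry}}, followed by the observation that \ensuremath{\Varid{uncurry}} is handled either by a symmetric calculation or by the inverse relationship. The paper's version lays out the equational chain step by step rather than in prose, but the content is the same.
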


\subsectionl{Semirings}

The natural numbers form a monoid in two familiar ways: addition and zero, and multiplication and one.
Moreover, these monoids interact usefully in two ways: multiplication distributes over addition, and multiplication by zero (the additive identity) yields zero (i.e., ``annihilates'').
Similarly, \emph{linear} endofunctions and their various representations (e.g., square matrices) forms a monoid via addition and via composition, with composition distributing over addition, and composition with zero yielding zero.
In both examples, addition commutes; but while natural number multiplication commutes, composition does not.
The vocabulary and laws these examples share is called a \emph{semiring} (distinguished from a ring by dropping the requirement of additive inverses):
\begin{hscode}\SaveRestoreHook
\column{B}{@{}>{\hspre}l<{\hspost}@{}}%
\column{3}{@{}>{\hspre}l<{\hspost}@{}}%
\column{10}{@{}>{\hspre}l<{\hspost}@{}}%
\column{E}{@{}>{\hspre}l<{\hspost}@{}}%
\>[B]{}\mathbf{class}\;\Conid{Additive}\;\Varid{b}\Rightarrow \Conid{Semiring}\;\Varid{b}\;\mathbf{where}{}\<[E]%
\\
\>[B]{}\hsindent{3}{}\<[3]%
\>[3]{}\mathrm{1}{}\<[10]%
\>[10]{}\mathbin{::}\Varid{b}{}\<[E]%
\\
\>[B]{}\hsindent{3}{}\<[3]%
\>[3]{}(\mathbin{*}){}\<[10]%
\>[10]{}\mathbin{::}\Varid{b}\to \Varid{b}\to \Varid{b}{}\<[E]%
\\
\>[B]{}\hsindent{3}{}\<[3]%
\>[3]{}\mathbf{infixl}\;\mathrm{7}\mathbin{*}{}\<[E]%
\ColumnHook
\end{hscode}\resethooks
The laws, in addition to those for \ensuremath{\Conid{Additive}} above, include multiplicative monoid, distribution, and annihilation:
\twocol{0.25}{
\begin{hscode}\SaveRestoreHook
\column{B}{@{}>{\hspre}l<{\hspost}@{}}%
\column{E}{@{}>{\hspre}l<{\hspost}@{}}%
\>[B]{}\Varid{u}\mathbin{*}\mathrm{0}\mathrel{=}\mathrm{0}{}\<[E]%
\\
\>[B]{}\mathrm{0}\mathbin{*}\Varid{v}\mathrel{=}\mathrm{0}{}\<[E]%
\\
\>[B]{}\mathrm{1}\mathbin{*}\Varid{v}\mathrel{=}\Varid{v}{}\<[E]%
\\
\>[B]{}\Varid{u}\mathbin{*}\mathrm{1}\mathrel{=}\Varid{u}{}\<[E]%
\ColumnHook
\end{hscode}\resethooks
}{0.35}{
\begin{hscode}\SaveRestoreHook
\column{B}{@{}>{\hspre}l<{\hspost}@{}}%
\column{E}{@{}>{\hspre}l<{\hspost}@{}}%
\>[B]{}(\Varid{u}\mathbin{*}\Varid{v})\mathbin{*}\Varid{w}\mathrel{=}\Varid{u}\mathbin{*}(\Varid{v}\mathbin{*}\Varid{w}){}\<[E]%
\\[\blanklineskip]%
\>[B]{}\Varid{p}\mathbin{*}(\Varid{q}\mathbin{+}\Varid{r})\mathrel{=}\Varid{p}\mathbin{*}\Varid{q}\mathbin{+}\Varid{p}\mathbin{*}\Varid{r}{}\<[E]%
\\
\>[B]{}(\Varid{p}\mathbin{+}\Varid{q})\mathbin{*}\Varid{r}\mathrel{=}\Varid{p}\mathbin{*}\Varid{r}\mathbin{+}\Varid{q}\mathbin{*}\Varid{r}{}\<[E]%
\ColumnHook
\end{hscode}\resethooks
}

\noindent
\begin{definition} \deflabel{semiring homomorphism}
A function \ensuremath{\Varid{h}} from one semiring to another is a \emph{semiring homomorphism} if it is an additive monoid homomorphism (\defref{additive monoid homomorphism}) and satisfies the following additional properties:
\begin{hscode}\SaveRestoreHook
\column{B}{@{}>{\hspre}l<{\hspost}@{}}%
\column{E}{@{}>{\hspre}l<{\hspost}@{}}%
\>[B]{}\Varid{h}\;\mathrm{1}\mathrel{=}\mathrm{1}{}\<[E]%
\\
\>[B]{}\Varid{h}\;(\Varid{u}\mathbin{*}\Varid{v})\mathrel{=}\Varid{h}\;\Varid{u}\mathbin{*}\Varid{h}\;\Varid{v}{}\<[E]%
\ColumnHook
\end{hscode}\resethooks
\end{definition}

\noindent
As mentioned, numbers and various linear endofunction representations form semirings.
A simpler example is the semiring of booleans, with disjunction as addition and conjunction as multiplication (though we could reverse roles):
\twocol{0.4}{
\begin{hscode}\SaveRestoreHook
\column{B}{@{}>{\hspre}l<{\hspost}@{}}%
\column{3}{@{}>{\hspre}l<{\hspost}@{}}%
\column{10}{@{}>{\hspre}l<{\hspost}@{}}%
\column{E}{@{}>{\hspre}l<{\hspost}@{}}%
\>[B]{}\mathbf{instance}\;\Conid{Additive}\;\Conid{Bool}\;\mathbf{where}{}\<[E]%
\\
\>[B]{}\hsindent{3}{}\<[3]%
\>[3]{}\mathrm{0}{}\<[10]%
\>[10]{}\mathrel{=}\Conid{False}{}\<[E]%
\\
\>[B]{}\hsindent{3}{}\<[3]%
\>[3]{}(\mathbin{+}){}\<[10]%
\>[10]{}\mathrel{=}(\mathrel{\vee}){}\<[E]%
\ColumnHook
\end{hscode}\resethooks
}{0.4}{
\begin{hscode}\SaveRestoreHook
\column{B}{@{}>{\hspre}l<{\hspost}@{}}%
\column{3}{@{}>{\hspre}l<{\hspost}@{}}%
\column{10}{@{}>{\hspre}l<{\hspost}@{}}%
\column{E}{@{}>{\hspre}l<{\hspost}@{}}%
\>[B]{}\mathbf{instance}\;\Conid{Semiring}\;\Conid{Bool}\;\mathbf{where}{}\<[E]%
\\
\>[B]{}\hsindent{3}{}\<[3]%
\>[3]{}\mathrm{1}{}\<[10]%
\>[10]{}\mathrel{=}\Conid{True}{}\<[E]%
\\
\>[B]{}\hsindent{3}{}\<[3]%
\>[3]{}(\mathbin{*}){}\<[10]%
\>[10]{}\mathrel{=}(\mathrel{\wedge}){}\<[E]%
\ColumnHook
\end{hscode}\resethooks
}
An example of a semiring homomorphism is testing natural numbers for positivity%
:
\begin{hscode}\SaveRestoreHook
\column{B}{@{}>{\hspre}l<{\hspost}@{}}%
\column{E}{@{}>{\hspre}l<{\hspost}@{}}%
\>[B]{}\Varid{positive}\mathbin{::}\mathbb N\to \Conid{Bool}{}\<[E]%
\\
\>[B]{}\Varid{positive}\;\Varid{n}\mathrel{=}\Varid{n}\mathbin{>}\mathrm{0}{}\<[E]%
\ColumnHook
\end{hscode}\resethooks
As required, the following properties hold for \ensuremath{\Varid{m},\Varid{n}\mathbin{::}\mathbb N}:%
\footnote{\emph{Exercise:} What goes wrong if we replace natural numbers by integers?}
\begin{spacing}{1.2}
\begin{hscode}\SaveRestoreHook
\column{B}{@{}>{\hspre}l<{\hspost}@{}}%
\column{14}{@{}>{\hspre}c<{\hspost}@{}}%
\column{14E}{@{}l@{}}%
\column{16}{@{}>{\hspre}c<{\hspost}@{}}%
\column{16E}{@{}l@{}}%
\column{17}{@{}>{\hspre}l<{\hspost}@{}}%
\column{20}{@{}>{\hspre}l<{\hspost}@{}}%
\column{27}{@{}>{\hspre}l<{\hspost}@{}}%
\column{35}{@{}>{\hspre}c<{\hspost}@{}}%
\column{35E}{@{}l@{}}%
\column{39}{@{}>{\hspre}l<{\hspost}@{}}%
\column{65}{@{}>{\hspre}c<{\hspost}@{}}%
\column{65E}{@{}l@{}}%
\column{68}{@{}>{\hspre}l<{\hspost}@{}}%
\column{E}{@{}>{\hspre}l<{\hspost}@{}}%
\>[B]{}\Varid{positive}\;\mathrm{0}{}\<[16]%
\>[16]{}\mathrel{=}{}\<[16E]%
\>[20]{}\Conid{False}{}\<[27]%
\>[27]{}\mathrel{=}\mathrm{0}{}\<[E]%
\\
\>[B]{}\Varid{positive}\;\mathrm{1}{}\<[16]%
\>[16]{}\mathrel{=}{}\<[16E]%
\>[20]{}\Conid{True}{}\<[27]%
\>[27]{}\mathrel{=}\mathrm{1}{}\<[E]%
\\
\>[B]{}\Varid{positive}\;(\Varid{m}{}\<[14]%
\>[14]{}\mathbin{+}{}\<[14E]%
\>[17]{}\Varid{n})\mathrel{=}\Varid{positive}\;\Varid{m}{}\<[35]%
\>[35]{}\mathrel{\vee}{}\<[35E]%
\>[39]{}\Varid{positive}\;\Varid{n}\mathrel{=}\Varid{positive}\;\Varid{m}{}\<[65]%
\>[65]{}\mathbin{+}{}\<[65E]%
\>[68]{}\Varid{positive}\;\Varid{n}{}\<[E]%
\\
\>[B]{}\Varid{positive}\;(\Varid{m}{}\<[14]%
\>[14]{}\mathbin{*}{}\<[14E]%
\>[17]{}\Varid{n})\mathrel{=}\Varid{positive}\;\Varid{m}{}\<[35]%
\>[35]{}\mathrel{\wedge}{}\<[35E]%
\>[39]{}\Varid{positive}\;\Varid{n}\mathrel{=}\Varid{positive}\;\Varid{m}{}\<[65]%
\>[65]{}\mathbin{*}{}\<[65E]%
\>[68]{}\Varid{positive}\;\Varid{n}{}\<[E]%
\ColumnHook
\end{hscode}\resethooks
%% \vspace{-4ex}
\end{spacing}

\noindent
There is a more fundamental example we will have use for later:
\begin{theorem}[\provedIn{theorem:curry semiring}]\thmlabel{curry semiring}
Currying and uncurrying are semiring homomorphisms.
\end{theorem}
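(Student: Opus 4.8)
The plan is to build on \thmref{curry additive}, which already gives that currying and uncurrying are additive monoid homomorphisms, so that only the multiplicative clauses of \defref{semiring homomorphism} remain: preservation of $1$ and of $(*)$. First I would record two standard facts that halve the work. The maps $\Varid{curry}$ and $\Varid{uncurry}$ are mutually inverse bijections; and the inverse of a bijective semiring homomorphism is again a semiring homomorphism (apply the inverse to both sides of $h\,(u \mathbin{*} v) = h\,u \mathbin{*} h\,v$ and $h\,1 = 1$, using injectivity for the unit). Hence it suffices to verify the two multiplicative equations for a single one of the maps, say $\Varid{uncurry}$, relating the semiring on $\Varid{a}\to \Varid{b}\to \Varid{c}$ to the semiring on $(\Varid{a},\Varid{b})\to \Varid{c}$; the additive clauses for both directions are already supplied by \thmref{curry additive}, and the bijection transports the result from $\Varid{uncurry}$ to $\Varid{curry}$.

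The semiring structure in play is the monoid--semiring (convolution) structure carried by functions from a monoid to a semiring: the unit $1$ is the indicator of the identity $\varepsilon$, and $(\Varid{f}\mathbin{*}\Varid{g})\,\Varid{w} = \sum_{\Varid{u}\diamond \Varid{v}=\Varid{w}} \Varid{f}\,\Varid{u} \cdot \Varid{g}\,\Varid{v}$. Here $(\Varid{a},\Varid{b})$ is the product monoid, with $\varepsilon = (\varepsilon,\varepsilon)$ and componentwise $\diamond$, while $\Varid{a}\to \Varid{b}\to \Varid{c}$ is the monoid semiring over $\Varid{a}$ whose coefficient semiring is itself the monoid semiring $\Varid{b}\to \Varid{c}$. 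The entire argument therefore turns on how a factorization in the product monoid decomposes.

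For the unit, I would unfold both sides at an argument $(\Varid{x},\Varid{y})$: on $(\Varid{a},\Varid{b})\to \Varid{c}$ the unit is $1$ exactly when $(\Varid{x},\Varid{y}) = (\varepsilon,\varepsilon)$, i.e.\ when $\Varid{x}=\varepsilon$ and $\Varid{y}=\varepsilon$; on $\Varid{a}\to \Varid{b}\to \Varid{c}$ the nested unit gives the inner unit $1_{\Varid{b}\to \Varid{c}}$ when $\Varid{x}=\varepsilon$ and $0$ otherwise, and that inner unit evaluated at $\Varid{y}$ is $1$ exactly when $\Varid{y}=\varepsilon$. Both sides are thus $1$ precisely when $\Varid{x}=\varepsilon$ and $\Varid{y}=\varepsilon$, so $1$ is preserved. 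For the product, I would expand $\Varid{uncurry}\,(\Varid{f}\mathbin{*}\Varid{g})$ at $(\Varid{x},\Varid{y})$ into the iterated convolution $\sum_{\Varid{u_1}\diamond \Varid{v_1}=\Varid{x}}\sum_{\Varid{u_2}\diamond \Varid{v_2}=\Varid{y}} \Varid{f}\,\Varid{u_1}\,\Varid{u_2}\cdot \Varid{g}\,\Varid{v_1}\,\Varid{v_2}$, and expand $\Varid{uncurry}\,\Varid{f}\mathbin{*}\Varid{uncurry}\,\Varid{g}$ at $(\Varid{x},\Varid{y})$ into the single convolution $\sum_{(\Varid{u_1},\Varid{u_2})\diamond (\Varid{v_1},\Varid{v_2})=(\Varid{x},\Varid{y})}$ of the same summand.

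The main obstacle is exactly this reindexing of the convolution sums: I must show that summing over pairs $((\Varid{u_1},\Varid{u_2}),(\Varid{v_1},\Varid{v_2}))$ with $(\Varid{u_1},\Varid{u_2})\diamond (\Varid{v_1},\Varid{v_2}) = (\Varid{x},\Varid{y})$ agrees with the iterated sum over $\Varid{u_1}\diamond \Varid{v_1}=\Varid{x}$ and $\Varid{u_2}\diamond \Varid{v_2}=\Varid{y}$. This uses that the product monoid multiplies componentwise (so the single constraint splits into a conjunction of independent constraints) together with distributivity in $\Varid{c}$ to turn a product of sums into a double sum---a semiring ``Fubini.'' Care is needed about well-definedness when factorizations are infinite; I would lean on whatever summability assumption the monoid--semiring development already imposes, since under it the reindexing is just a bijection of index sets followed by distributivity, leaving the remainder routine unfolding.
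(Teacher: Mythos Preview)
Your proposal is correct and arrives at the same destination as the paper, but via a somewhat different route. You choose \ensuremath{\Varid{uncurry}}, work in the pointwise form of convolution \ensuremath{(\Varid{f}\mathbin{*}\Varid{g})\;\Varid{w}\mathrel{=}\bigOp\sum{\Varid{u}\diamond\Varid{v}=\Varid{w}}{1}\Varid{f}\,\Varid{u}\mathbin{*}\Varid{g}\,\Varid{v}}, evaluate both sides at a point \ensuremath{(\Varid{x},\Varid{y})}, and then do an explicit Fubini reindexing using the componentwise structure of the product monoid. The paper instead works with \ensuremath{\Varid{curry}} and the alternative form \ensuremath{\Varid{f}\mathbin{*}\Varid{g}\mathrel{=}\bigOp\sum{\Varid{u},\Varid{v}}{0}\;\Varid{u}\diamond\Varid{v}\mapsto\Varid{f}\,\Varid{u}\mathbin{*}\Varid{g}\,\Varid{v}}, never evaluating at a point; it pushes \ensuremath{\Varid{curry}} through \ensuremath{(\mapsto)} using \lemref{curry +->} (\ensuremath{\Varid{curry}\;((\Varid{a},\Varid{b})\mapsto\Varid{c})\mathrel{=}\Varid{a}\mapsto\Varid{b}\mapsto\Varid{c}}) and then pulls the inner sum out via the linearity of \ensuremath{(\Varid{a}\mapsto)} from \lemref{+-> homomorphism}. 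Your distributivity/Fubini step and the paper's use of \lemref{+-> homomorphism} are doing the same work in different clothing. The paper's approach amortizes the effort by leaning on two small lemmas it has already established; yours is more self-contained and requires no auxiliary lemmas, at the cost of spelling out the reindexing by hand. Either is a perfectly good proof.
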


\subsectionl{Star Semirings}

The semiring operations allow all \emph{finite} combinations of addition, zero, multiplication, and one.
It's often useful, however, to form infinite combinations, particularly in the form of Kleene's ``star'' (or ``closure'') operation:
\begin{hscode}\SaveRestoreHook
\column{B}{@{}>{\hspre}l<{\hspost}@{}}%
\column{E}{@{}>{\hspre}l<{\hspost}@{}}%
\>[B]{}\closure{\Varid{p}}\mathrel{=}\bigOp\sum{\Varid{i}}{0}{\,}\Varid{p}^\Varid{i}\mbox{\onelinecomment  where \ensuremath{\Varid{p}^\mathrm{0}\mathrel{=}\mathrm{1}}, and \ensuremath{\Varid{p}^{\Varid{n}\mathbin{+}\mathrm{1}}\mathrel{=}\Varid{p}\mathbin{*}\Varid{p}^\Varid{n}}.}{}\<[E]%
\ColumnHook
\end{hscode}\resethooks
Another characterization is as a solution to either of the following semiring equations:
\twocol{0.35}{
\begin{hscode}\SaveRestoreHook
\column{B}{@{}>{\hspre}l<{\hspost}@{}}%
\column{E}{@{}>{\hspre}l<{\hspost}@{}}%
\>[B]{}\closure{\Varid{p}}\mathrel{=}\mathrm{1}\mathbin{+}\Varid{p}\mathbin{*}\closure{\Varid{p}}{}\<[E]%
\ColumnHook
\end{hscode}\resethooks
}{0.35}{
\begin{hscode}\SaveRestoreHook
\column{B}{@{}>{\hspre}l<{\hspost}@{}}%
\column{E}{@{}>{\hspre}l<{\hspost}@{}}%
\>[B]{}\closure{\Varid{p}}\mathrel{=}\mathrm{1}\mathbin{+}\closure{\Varid{p}}\mathbin{*}\Varid{p}{}\<[E]%
\ColumnHook
\end{hscode}\resethooks
}
which we will take as a laws for a new abstraction, as well as a default recursive implementation:
\begin{hscode}\SaveRestoreHook
\column{B}{@{}>{\hspre}l<{\hspost}@{}}%
\column{3}{@{}>{\hspre}l<{\hspost}@{}}%
\column{37}{@{}>{\hspre}l<{\hspost}@{}}%
\column{E}{@{}>{\hspre}l<{\hspost}@{}}%
\>[B]{}\mathbf{class}\;\Conid{Semiring}\;\Varid{b}\Rightarrow \Conid{StarSemiring}\;\Varid{b}\;{}\<[37]%
\>[37]{}\mathbf{where}{}\<[E]%
\\
\>[B]{}\hsindent{3}{}\<[3]%
\>[3]{}\closure{\cdot }\mathbin{::}\Varid{b}\to \Varid{b}{}\<[E]%
\\
\>[B]{}\hsindent{3}{}\<[3]%
\>[3]{}\closure{\Varid{p}}\mathrel{=}\mathrm{1}\mathbin{+}\Varid{p}\mathbin{*}\closure{\Varid{p}}{}\<[E]%
\ColumnHook
\end{hscode}\resethooks
Sometimes there are more appealing alternative implementations.
For instance, when subtraction and division are available, we can instead define \ensuremath{\closure{\Varid{p}}\mathrel{=}(\mathrm{1}\mathbin{-}\Varid{p})^{-1}} \citep{Dolan2013FunSemi}.

Predictably, there is a notion of homomorphisms for star semirings:
\begin{definition} \deflabel{star semiring homomorphism}
A function \ensuremath{\Varid{h}} from one star semiring to another is a \emph{star semiring homomorphism} if it is a semiring homomorphism (\defref{semiring homomorphism}) and satisfies the additional property \ensuremath{\Varid{h}\;(\closure{\Varid{p}})\mathrel{=}\closure{(\Varid{h}\;\Varid{p})}}.
\end{definition}

\noindent
One simple example of a star semiring (also known as a ``closed semiring'' \citep{Lehmann77,Dolan2013FunSemi}) is booleans:
\begin{hscode}\SaveRestoreHook
\column{B}{@{}>{\hspre}l<{\hspost}@{}}%
\column{42}{@{}>{\hspre}l<{\hspost}@{}}%
\column{E}{@{}>{\hspre}l<{\hspost}@{}}%
\>[B]{}\mathbf{instance}\;\Conid{StarSemiring}\;\Conid{Bool}\;\mathbf{where}\;\closure{\Varid{b}}{}\<[42]%
\>[42]{}\mathrel{=}\mathrm{1}\mbox{\onelinecomment  \ensuremath{\mathrel{=}\mathrm{1}\mathrel{\vee}(\Varid{b}\mathrel{\wedge}\closure{\Varid{b}})}}{}\<[E]%
\ColumnHook
\end{hscode}\resethooks

A useful property of star semirings is that recursive affine equations have solutions
\citep{Dolan2013FunSemi}:
\begin{lemma}\lemlabel{affine over semiring}
In a star semiring, the affine equation \ensuremath{\Varid{p}\mathrel{=}\Varid{b}\mathbin{+}\Varid{m}\mathbin{*}\Varid{p}} has solution \ensuremath{\Varid{p}\mathrel{=}\closure{\Varid{m}}\mathbin{*}\Varid{b}}.
\end{lemma}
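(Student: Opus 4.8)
The statement is an \emph{existence} claim about a solution, so rather than deriving the closed form I would simply verify that the proposed \ensuremath{p = \closure{m} \mathbin{*} b} satisfies the affine equation. The plan is to substitute this \ensuremath{p} into the right-hand side \ensuremath{b \mathbin{+} m \mathbin{*} p} and rewrite it back to \ensuremath{\closure{m} \mathbin{*} b}, using only the semiring axioms together with the defining star law \ensuremath{\closure{m} = 1 \mathbin{+} m \mathbin{*} \closure{m}}.

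Concretely, the calculation I would carry out is:
\begin{align*}
b \mathbin{+} m \mathbin{*} (\closure{m} \mathbin{*} b)
  &= b \mathbin{+} (m \mathbin{*} \closure{m}) \mathbin{*} b \\
  &= 1 \mathbin{*} b \mathbin{+} (m \mathbin{*} \closure{m}) \mathbin{*} b \\
  &= (1 \mathbin{+} m \mathbin{*} \closure{m}) \mathbin{*} b \\
  &= \closure{m} \mathbin{*} b.
\end{align*}
The first step uses associativity of \ensuremath{(\mathbin{*})}; the second rewrites \ensuremath{b} as \ensuremath{1 \mathbin{*} b} via the multiplicative identity; the third factors on the right using right-distributivity; and the last applies the star law. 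Since the right-hand side reduces to \ensuremath{\closure{m} \mathbin{*} b = p}, the equation \ensuremath{p = b \mathbin{+} m \mathbin{*} p} holds, establishing the claim.

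The calculation itself is routine, so the only points requiring care are bookkeeping ones. First, I must use the \emph{left} orientation of the star law, \ensuremath{\closure{m} = 1 \mathbin{+} m \mathbin{*} \closure{m}} (the one adopted as the class law), together with \emph{right}-distributivity, so that the common factor \ensuremath{b} stays on the right throughout; the mirror-image choices would instead verify the companion solution \ensuremath{p = b \mathbin{*} \closure{m}} of the equation \ensuremath{p = b \mathbin{+} p \mathbin{*} m}. Second, I would be careful to read ``has solution'' as an existence statement only: in a general star semiring the affine equation need not have a \emph{unique} solution (uniqueness or leastness would require extra hypotheses, e.g.\ that \ensuremath{\closure{m}} compute a least fixed point), so no such stronger claim should be smuggled into the proof.
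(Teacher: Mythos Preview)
Your proof is correct and matches the paper's own proof essentially line for line: substitute \ensuremath{\closure{m}\mathbin{*}b} into the right-hand side, use associativity, the multiplicative identity, right-distributivity, and the star law, in that order. Your additional remarks about orientation of the star law and about existence versus uniqueness are sound and go slightly beyond what the paper spells out.
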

\begin{proof}~
\begin{hscode}\SaveRestoreHook
\column{B}{@{}>{\hspre}c<{\hspost}@{}}%
\column{BE}{@{}l@{}}%
\column{5}{@{}>{\hspre}l<{\hspost}@{}}%
\column{31}{@{}>{\hspre}l<{\hspost}@{}}%
\column{E}{@{}>{\hspre}l<{\hspost}@{}}%
\>[5]{}\Varid{b}\mathbin{+}\Varid{m}\mathbin{*}(\closure{\Varid{m}}\mathbin{*}\Varid{b}){}\<[E]%
\\
\>[B]{}\mathrel{=}{}\<[BE]%
\>[5]{}\Varid{b}\mathbin{+}(\Varid{m}\mathbin{*}\closure{\Varid{m}})\mathbin{*}\Varid{b}{}\<[31]%
\>[31]{}\mbox{\onelinecomment  associativity of \ensuremath{(\mathbin{*})}}{}\<[E]%
\\
\>[B]{}\mathrel{=}{}\<[BE]%
\>[5]{}\mathrm{1}\mathbin{*}\Varid{b}\mathbin{+}\Varid{m}\mathbin{*}\closure{\Varid{m}}\mathbin{*}\Varid{b}{}\<[31]%
\>[31]{}\mbox{\onelinecomment  identity for \ensuremath{(\mathbin{*})}}{}\<[E]%
\\
\>[B]{}\mathrel{=}{}\<[BE]%
\>[5]{}(\mathrm{1}\mathbin{+}\Varid{m}\mathbin{*}\closure{\Varid{m}})\mathbin{*}\Varid{b}{}\<[31]%
\>[31]{}\mbox{\onelinecomment  distributivity}{}\<[E]%
\\
\>[B]{}\mathrel{=}{}\<[BE]%
\>[5]{}\closure{\Varid{m}}\mathbin{*}\Varid{b}{}\<[31]%
\>[31]{}\mbox{\onelinecomment  star semiring law}{}\<[E]%
\ColumnHook
\end{hscode}\resethooks
\end{proof}

\note{Mention tropical semirings, schedule algebra (max-plus), and optimization algebra (min-plus) \citep{Golan2005RecentSemi}. Maybe also polynomials.}

\subsectionl{Semimodules}

\note{I think I can remove semimodules from the discussion and use \ensuremath{\Varid{fmap}\;(\Varid{s}\;{}\mathbin{*})} in place of \ensuremath{(\Varid{scale}\;\Varid{s})}.
If so, do it.
One serious catch, however: when I introduce \ensuremath{\Varid{b}\leftarrow\Varid{a}}, I'll no longer have a functor in \ensuremath{\Varid{a}}.}

%% Temporary deception. Later change scale's formatting back to the usual,
%% and introduce (.>).

As fields are to vector spaces, rings are to modules, and semirings are to \emph{semimodules}.
For any semiring \ensuremath{\Varid{s}}, a \emph{left \ensuremath{\Varid{s}}-semimodule} \ensuremath{\Varid{b}} is a additive monoid whose values can be multiplied by \ensuremath{\Varid{s}} values on the left.
Here, \ensuremath{\Varid{s}} plays the role of ``scalars'', while \ensuremath{\Varid{b}} plays the role of ``vectors''.
\notefoot{Perhaps say just ``semimodule'', and add a remark that I really mean ``left semimodule'' throughout.
Or start out with ``left'', then make the remark, and then perhaps add an occasional ``(left)''.}
\begin{hscode}\SaveRestoreHook
\column{B}{@{}>{\hspre}l<{\hspost}@{}}%
\column{3}{@{}>{\hspre}l<{\hspost}@{}}%
\column{E}{@{}>{\hspre}l<{\hspost}@{}}%
\>[B]{}\mathbf{class}\;(\Conid{Semiring}\;\Varid{s},\Conid{Additive}\;\Varid{b})\Rightarrow \Conid{LeftSemimodule}\;\Varid{s}\;\Varid{b}\mid \Varid{b}\to \Varid{s}\;\mathbf{where}{}\<[E]%
\\
\>[B]{}\hsindent{3}{}\<[3]%
\>[3]{}(\cdot)\mathbin{::}\Varid{s}\to \Varid{b}\to \Varid{b}{}\<[E]%
\ColumnHook
\end{hscode}\resethooks
In addition to the laws for the additive monoid \ensuremath{\Varid{b}} and the semiring \ensuremath{\Varid{s}}, we have the following laws specific to left semimodules: \citep{Golan2005RecentSemi}:
\twocol{0.35}{
\begin{hscode}\SaveRestoreHook
\column{B}{@{}>{\hspre}l<{\hspost}@{}}%
\column{E}{@{}>{\hspre}l<{\hspost}@{}}%
\>[B]{}(\Varid{s}\mathbin{*}\Varid{t})\cdot\Varid{b}\mathrel{=}\Varid{s}\cdot(\Varid{t}\cdot\Varid{b}){}\<[E]%
\\
\>[B]{}(\Varid{s}\mathbin{+}\Varid{t})\cdot\Varid{b}\mathrel{=}\Varid{s}\cdot\Varid{b}\mathbin{+}\Varid{t}\cdot\Varid{b}{}\<[E]%
\\
\>[B]{}\Varid{s}\cdot(\Varid{b}\mathbin{+}\Varid{c})\mathrel{=}\Varid{s}\cdot\Varid{b}\mathbin{+}\Varid{s}\cdot\Varid{c}{}\<[E]%
\ColumnHook
\end{hscode}\resethooks
}{0.25}{
\begin{hscode}\SaveRestoreHook
\column{B}{@{}>{\hspre}l<{\hspost}@{}}%
\column{7}{@{}>{\hspre}l<{\hspost}@{}}%
\column{E}{@{}>{\hspre}l<{\hspost}@{}}%
\>[B]{}\mathrm{1}{}\<[7]%
\>[7]{}\cdot\Varid{b}\mathrel{=}\Varid{b}{}\<[E]%
\\
\>[B]{}\mathrm{0}{}\<[7]%
\>[7]{}\cdot\Varid{b}\mathrel{=}\mathrm{0}{}\<[E]%
\ColumnHook
\end{hscode}\resethooks
}
There is also a corresponding notion of \emph{right} \ensuremath{\Varid{s}}-semimodule (with multiplication on the right by \ensuremath{\Varid{s}} values), which we will not need in this paper.
(Rings also have left- and right-modules, and in \emph{commutative} rings and semirings (including vector spaces), the left and right variants coincide.)

As usual, we have a corresponding notion of homomorphism, which is more commonly referred to as ``linearity'':
\begin{definition} \deflabel{left semimodule homomorphism}
A function \ensuremath{\Varid{h}} from one left \ensuremath{\Varid{s}}-semimodule to another is a \emph{left \ensuremath{\Varid{s}}-semimodule homomorphism} if it is an additive monoid homomorphism (\defref{additive monoid homomorphism}) and satisfies the additional property \ensuremath{\Varid{h}\;(\Varid{s}\cdot\Varid{b})\mathrel{=}\Varid{s}\cdot\Varid{h}\;\Varid{b}}.
\end{definition}

Familiar \ensuremath{\Varid{s}}-semimodule examples include various containers of \ensuremath{\Varid{s}} values, including single- or multi-dimensional arrays, lists, infinite streams, sets, multisets, and trees.
Another, of particular interest in this paper, is functions from any type to any semiring:
\begin{hscode}\SaveRestoreHook
\column{B}{@{}>{\hspre}l<{\hspost}@{}}%
\column{E}{@{}>{\hspre}l<{\hspost}@{}}%
\>[B]{}\mathbf{instance}\;\Conid{LeftSemimodule}\;\Varid{s}\;(\Varid{a}\to \Varid{s})\;\mathbf{where}\;\Varid{s}\cdot\Varid{f}\mathrel{=}\lambda\, \Varid{a}\to \Varid{s}\mathbin{*}\Varid{f}\;\Varid{a}{}\<[E]%
\ColumnHook
\end{hscode}\resethooks
If we think of \ensuremath{\Varid{a}\to \Varid{s}} as a ``vector'' of \ensuremath{\Varid{s}} values, indexed by \ensuremath{\Varid{a}}, then \ensuremath{\Varid{s}\cdot\Varid{f}} scales each component of the vector \ensuremath{\Varid{f}} by \ensuremath{\Varid{s}}.

There is an important optimization to be made for scaling.
When \ensuremath{\Varid{s}\mathrel{=}\mathrm{0}}, \ensuremath{\Varid{s}\cdot\Varid{p}\mathrel{=}\mathrm{0}}, so we can discard \ensuremath{\Varid{p}} entirely.
This optimization applies quite often in practice, for instance with languages, which tend to be sparse.
Another optimization (though less dramatically helpful) is \ensuremath{\mathrm{1}\cdot\Varid{p}\mathrel{=}\Varid{p}}.
Rather than burden each \ensuremath{\Conid{LeftSemimodule}} instance with these two optimizations, let's define \ensuremath{(\cdot)} via a more primitive \ensuremath{(\mathbin{\hat{\cdot}})} method:
%% Shorten the names for the code examples in the paper
\begin{hscode}\SaveRestoreHook
\column{B}{@{}>{\hspre}l<{\hspost}@{}}%
\column{3}{@{}>{\hspre}l<{\hspost}@{}}%
\column{9}{@{}>{\hspre}l<{\hspost}@{}}%
\column{18}{@{}>{\hspre}l<{\hspost}@{}}%
\column{22}{@{}>{\hspre}l<{\hspost}@{}}%
\column{E}{@{}>{\hspre}l<{\hspost}@{}}%
\>[B]{}\mathbf{class}\;(\Conid{Semiring}\;\Varid{s},\Conid{Additive}\;\Varid{b})\Rightarrow \Conid{LeftSemimodule}\;\Varid{s}\;\Varid{b}\mid \Varid{b}\to \Varid{s}\;\mathbf{where}{}\<[E]%
\\
\>[B]{}\hsindent{3}{}\<[3]%
\>[3]{}(\mathbin{\hat{\cdot}})\mathbin{::}\Varid{s}\to \Varid{b}\to \Varid{b}{}\<[E]%
\\[\blanklineskip]%
\>[B]{}\mathbf{infixr}\;\mathrm{7}\cdot{}\<[E]%
\\
\>[B]{}(\cdot)\mathbin{::}(\Conid{Additive}\;\Varid{b},\Conid{LeftSemimodule}\;\Varid{s}\;\Varid{b},\Conid{IsZero}\;\Varid{s},\Conid{IsOne}\;\Varid{s})\Rightarrow \Varid{s}\to \Varid{b}\to \Varid{b}{}\<[E]%
\\
\>[B]{}\Varid{s}\cdot\Varid{b}{}\<[9]%
\>[9]{}\mid \Varid{isZero}\;\Varid{s}{}\<[22]%
\>[22]{}\mathrel{=}\mathrm{0}{}\<[E]%
\\
\>[9]{}\mid \Varid{isOne}\;{}\<[18]%
\>[18]{}\Varid{s}{}\<[22]%
\>[22]{}\mathrel{=}\Varid{b}{}\<[E]%
\\
\>[9]{}\mid \Varid{otherwise}{}\<[22]%
\>[22]{}\mathrel{=}\Varid{s}\mathbin{\hat{\cdot}}\Varid{b}{}\<[E]%
\ColumnHook
\end{hscode}\resethooks
The \ensuremath{\Conid{IsZero}} and \ensuremath{\Conid{IsOne}} classes:
\notefoot{Maybe use semiring-num again.}
\begin{hscode}\SaveRestoreHook
\column{B}{@{}>{\hspre}l<{\hspost}@{}}%
\column{17}{@{}>{\hspre}l<{\hspost}@{}}%
\column{38}{@{}>{\hspre}l<{\hspost}@{}}%
\column{54}{@{}>{\hspre}l<{\hspost}@{}}%
\column{E}{@{}>{\hspre}l<{\hspost}@{}}%
\>[B]{}\mathbf{class}\;\Conid{Additive}\;{}\<[17]%
\>[17]{}\Varid{b}\Rightarrow \Conid{IsZero}\;{}\<[38]%
\>[38]{}\Varid{b}\;\mathbf{where}\;\Varid{isZero}{}\<[54]%
\>[54]{}\mathbin{::}\Varid{b}\to \Conid{Bool}{}\<[E]%
\\
\>[B]{}\mathbf{class}\;\Conid{Semiring}\;{}\<[17]%
\>[17]{}\Varid{b}\Rightarrow \Conid{IsOne}\;{}\<[38]%
\>[38]{}\Varid{b}\;\mathbf{where}\;\Varid{isOne}{}\<[54]%
\>[54]{}\mathbin{::}\Varid{b}\to \Conid{Bool}{}\<[E]%
\ColumnHook
\end{hscode}\resethooks

As with star semirings (\lemref{affine over semiring}), recursive affine equations in semimodules \emph{over} star semirings also have solutions%
:
\begin{lemma}\lemlabel{affine over semimodule}
In a left semimodule over a star semiring, the affine equation \ensuremath{\Varid{p}\mathrel{=}\Varid{b}\mathbin{+}\Varid{m}\cdot\Varid{p}} has solution \ensuremath{\Varid{p}\mathrel{=}\closure{\Varid{m}}\cdot\Varid{b}}%
\end{lemma}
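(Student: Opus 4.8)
The plan is to verify the proposed solution by direct substitution, reusing almost verbatim the calculation from the proof of \lemref{affine over semiring}. The statement only asks that \ensuremath{\Varid{p}\mathrel{=}\closure{\Varid{m}}\cdot\Varid{b}} \emph{be} a solution, so no solving or uniqueness argument is needed --- I simply plug this \ensuremath{\Varid{p}} into the right-hand side \ensuremath{\Varid{b}\mathbin{+}\Varid{m}\cdot\Varid{p}} and simplify back to \ensuremath{\closure{\Varid{m}}\cdot\Varid{b}}. The crucial point is that the three semimodule laws \ensuremath{(\Varid{s}\mathbin{*}\Varid{t})\cdot\Varid{b}\mathrel{=}\Varid{s}\cdot(\Varid{t}\cdot\Varid{b})}, \ensuremath{\mathrm{1}\cdot\Varid{b}\mathrel{=}\Varid{b}}, and \ensuremath{(\Varid{s}\mathbin{+}\Varid{t})\cdot\Varid{b}\mathrel{=}\Varid{s}\cdot\Varid{b}\mathbin{+}\Varid{t}\cdot\Varid{b}} are exactly parallel to the associativity, identity, and distributivity laws used before, so the earlier derivation transfers with the outer \ensuremath{(\mathbin{*})} replaced by the scaling action \ensuremath{(\cdot)}.

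Concretely, I would carry out the following chain:
\begin{align*}
\Varid{b} \mathbin{+} \Varid{m} \cdot (\closure{\Varid{m}} \cdot \Varid{b})
&= \Varid{b} \mathbin{+} (\Varid{m} \mathbin{*} \closure{\Varid{m}}) \cdot \Varid{b}
&& \text{scalar associativity}\\
&= \mathrm{1} \cdot \Varid{b} \mathbin{+} (\Varid{m} \mathbin{*} \closure{\Varid{m}}) \cdot \Varid{b}
&& \text{identity for }(\cdot)\\
&= (\mathrm{1} \mathbin{+} \Varid{m} \mathbin{*} \closure{\Varid{m}}) \cdot \Varid{b}
&& \text{distributivity over }(\mathbin{+})\\
&= \closure{\Varid{m}} \cdot \Varid{b}
&& \text{star semiring law.}
\end{align*}
The first step applies \ensuremath{(\Varid{s}\mathbin{*}\Varid{t})\cdot\Varid{b}\mathrel{=}\Varid{s}\cdot(\Varid{t}\cdot\Varid{b})} right-to-left, collapsing the nested scaling into a single scaling by the scalar product \ensuremath{\Varid{m}\mathbin{*}\closure{\Varid{m}}}. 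The next two steps rewrite the bare \ensuremath{\Varid{b}} as \ensuremath{\mathrm{1}\cdot\Varid{b}} and then factor using distributivity of the action over scalar addition, and the final step invokes the defining law \ensuremath{\closure{\Varid{m}}\mathrel{=}\mathrm{1}\mathbin{+}\Varid{m}\mathbin{*}\closure{\Varid{m}}}.

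I expect no real obstacle here; the only thing to watch is bookkeeping. Because the semimodule laws look syntactically almost identical to the semiring laws, I must make sure the first step cites the \emph{scalar} associativity law relating the scalar product \ensuremath{(\mathbin{*})} to the action \ensuremath{(\cdot)}, rather than plain associativity of \ensuremath{(\mathbin{*})}, which would not even type-check here since \ensuremath{\closure{\Varid{m}}\cdot\Varid{b}} is a vector, not a scalar. With that distinction kept straight, the argument is a routine transcription of the star-semiring proof.
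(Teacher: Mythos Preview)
Your proposal is correct and essentially identical to the paper's proof. The paper runs the same four-step chain in the opposite direction (starting from $\closure{m}\cdot b$, expanding via the star law, distributing, then using identity and scalar associativity to reach $b + m\cdot(\closure{m}\cdot b)$), and even uses different variable names ($s,r$ for your $m,b$), but the content is the same.
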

The proof closely resembles that of \lemref{affine over semiring}, using the left semimodule laws above:
\begin{proof}~
\begin{hscode}\SaveRestoreHook
\column{B}{@{}>{\hspre}c<{\hspost}@{}}%
\column{BE}{@{}l@{}}%
\column{5}{@{}>{\hspre}l<{\hspost}@{}}%
\column{41}{@{}>{\hspre}l<{\hspost}@{}}%
\column{E}{@{}>{\hspre}l<{\hspost}@{}}%
\>[5]{}\closure{\Varid{s}}\cdot\Varid{r}{}\<[E]%
\\
\>[B]{}\mathrel{=}{}\<[BE]%
\>[5]{}(\mathrm{1}\mathbin{+}\Varid{s}\mathbin{*}\closure{\Varid{s}})\cdot\Varid{r}{}\<[41]%
\>[41]{}\mbox{\onelinecomment  star semiring law}{}\<[E]%
\\
\>[B]{}\mathrel{=}{}\<[BE]%
\>[5]{}\mathrm{1}\cdot\Varid{r}\mathbin{+}(\Varid{s}\mathbin{*}\closure{\Varid{s}})\cdot\Varid{r}{}\<[41]%
\>[41]{}\mbox{\onelinecomment  distributivity}{}\<[E]%
\\
\>[B]{}\mathrel{=}{}\<[BE]%
\>[5]{}\Varid{r}\mathbin{+}\Varid{s}\cdot(\closure{\Varid{s}}\cdot\Varid{r}){}\<[41]%
\>[41]{}\mbox{\onelinecomment  multiplicative identity and associativity}{}\<[E]%
\ColumnHook
\end{hscode}\resethooks
\vspace{-4ex}
\end{proof}

\subsectionl{Function-like Types and Singletons}

Most of the representations used in this paper behave like functions, and it will be useful to use a standard vocabulary.
An ``indexable'' type \ensuremath{\Varid{x}} with domain \ensuremath{\Varid{a}} and codomain \ensuremath{\Varid{b}} represents \ensuremath{\Varid{a}\to \Varid{b}}:
Sometimes we will need to restrict \ensuremath{\Varid{a}} or \ensuremath{\Varid{b}}.
\begin{hscode}\SaveRestoreHook
\column{B}{@{}>{\hspre}l<{\hspost}@{}}%
\column{3}{@{}>{\hspre}l<{\hspost}@{}}%
\column{E}{@{}>{\hspre}l<{\hspost}@{}}%
\>[B]{}\mathbf{class}\;\Conid{Indexable}\;\Varid{a}\;\Varid{b}\;\Varid{x}\mid \Varid{x}\to \Varid{a}\;\Varid{b}\;\mathbf{where}{}\<[E]%
\\
\>[B]{}\hsindent{3}{}\<[3]%
\>[3]{}\mathbf{infixl}\;\mathrm{9}\mathbin{!}{}\<[E]%
\\
\>[B]{}\hsindent{3}{}\<[3]%
\>[3]{}(\mathbin{!})\mathbin{::}\Varid{x}\to \Varid{a}\to \Varid{b}{}\<[E]%
\\[\blanklineskip]%
\>[B]{}\mathbf{instance}\;\Conid{Indexable}\;\Varid{a}\;\Varid{b}\;(\Varid{a}\to \Varid{b})\;\mathbf{where}{}\<[E]%
\\
\>[B]{}\hsindent{3}{}\<[3]%
\>[3]{}\Varid{f}\mathbin{!}\Varid{k}\mathrel{=}\Varid{f}\;\Varid{k}{}\<[E]%
\ColumnHook
\end{hscode}\resethooks

\note{Add a law for \ensuremath{\Conid{Indexable}}: \ensuremath{(\mathbin{!})} must be natural?
Probably also that \ensuremath{\Varid{h}} maps \ensuremath{\Conid{Additive}} to \ensuremath{\Conid{Additive}} and that \ensuremath{(\mathbin{!})} is an \ensuremath{\Conid{Additive}} homomorphism.
Hm. It seems I can't even express those laws now that there's no functor.
}

\secrefs{Monoids}{Semimodules} provides a fair amount of vocabulary for combining values.
We'll also want an operation that constructs a ``vector'' (e.g., language or function) with a single nonzero component:
\begin{hscode}\SaveRestoreHook
\column{B}{@{}>{\hspre}l<{\hspost}@{}}%
\column{3}{@{}>{\hspre}l<{\hspost}@{}}%
\column{E}{@{}>{\hspre}l<{\hspost}@{}}%
\>[B]{}\mathbf{class}\;\Conid{Indexable}\;\Varid{a}\;\Varid{b}\;\Varid{x}\Rightarrow \Conid{HasSingle}\;\Varid{a}\;\Varid{b}\;\Varid{x}\;\mathbf{where}{}\<[E]%
\\
\>[B]{}\hsindent{3}{}\<[3]%
\>[3]{}\mathbf{infixr}\;\mathrm{2}\mapsto{}\<[E]%
\\
\>[B]{}\hsindent{3}{}\<[3]%
\>[3]{}(\mapsto)\mathbin{::}\Varid{a}\to \Varid{b}\to \Varid{x}{}\<[E]%
\\[\blanklineskip]%
\>[B]{}\mathbf{instance}\;(\Conid{Eq}\;\Varid{a},\Conid{Additive}\;\Varid{b})\Rightarrow \Conid{HasSingle}\;\Varid{a}\;\Varid{b}\;(\Varid{a}\to \Varid{b})\;\mathbf{where}{}\<[E]%
\\
\>[B]{}\hsindent{3}{}\<[3]%
\>[3]{}\Varid{a}\mapsto\Varid{b}\mathrel{=}\lambda\, \Varid{a'}\to \mathbf{if}\;\Varid{a'}\mathrel{=}\Varid{a}\;\mathbf{then}\;\Varid{b}\;\mathbf{else}\;\mathrm{0}{}\<[E]%
\ColumnHook
\end{hscode}\resethooks
Two specializations of \ensuremath{\Varid{a}\mapsto\Varid{b}} will come in handy: one for \ensuremath{\Varid{a}\mathrel{=}\varepsilon}, and the other for \ensuremath{\Varid{b}\mathrel{=}\mathrm{1}}.
\begin{hscode}\SaveRestoreHook
\column{B}{@{}>{\hspre}l<{\hspost}@{}}%
\column{E}{@{}>{\hspre}l<{\hspost}@{}}%
\>[B]{}\Varid{single}\mathbin{::}(\Conid{HasSingle}\;\Varid{a}\;\Varid{b}\;\Varid{x},\Conid{Semiring}\;\Varid{b})\Rightarrow \Varid{a}\to \Varid{x}{}\<[E]%
\\
\>[B]{}\Varid{single}\;\Varid{a}\mathrel{=}\Varid{a}\mapsto\mathrm{1}{}\<[E]%
\\[\blanklineskip]%
\>[B]{}\Varid{value}\mathbin{::}(\Conid{HasSingle}\;\Varid{a}\;\Varid{b}\;\Varid{x},\Conid{Monoid}\;\Varid{a})\Rightarrow \Varid{b}\to \Varid{x}{}\<[E]%
\\
\>[B]{}\Varid{value}\;\Varid{b}\mathrel{=}\varepsilon\mapsto\Varid{b}{}\<[E]%
\ColumnHook
\end{hscode}\resethooks
In particular, \ensuremath{\varepsilon\mapsto\mathrm{1}\mathrel{=}\Varid{single}\;\varepsilon\mathrel{=}\Varid{value}\;\mathrm{1}}.

The \ensuremath{(\mapsto)} operation gives a way to decompose arbitrary functions:
\begin{lemma}[\provedIn{lemma:decomp +->}]\lemlabel{decomp +->}
For all \ensuremath{\Varid{f}\mathbin{::}\Varid{a}\to \Varid{b}} where \ensuremath{\Varid{b}} is an additive monoid,
\begin{hscode}\SaveRestoreHook
\column{B}{@{}>{\hspre}l<{\hspost}@{}}%
\column{E}{@{}>{\hspre}l<{\hspost}@{}}%
\>[B]{}\Varid{f}\mathrel{=}\bigOp\sum{\Varid{a}}{0}\;\Varid{a}\mapsto\Varid{f}\;\Varid{a}{}\<[E]%
\ColumnHook
\end{hscode}\resethooks
\vspace{-3ex}
\end{lemma}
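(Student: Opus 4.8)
The plan is to prove the two functions equal by \emph{extensionality}: it suffices to show that both sides agree when applied to an arbitrary argument $a' :: a$. So I would fix $a'$ and compute the value of the right-hand side $\sum_a a \mapsto f\,a$ at $a'$, aiming to reduce it to $f\,a'$, the value of the left-hand side there.

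First I would push the application to $a'$ inside the summation. This is justified by the pointwise $\Conid{Additive}$ instance for $a \to b$ given earlier: both $\mathrm{0}$ and $(\mathbin{+})$ on functions are defined pointwise, so evaluation-at-$a'$ is an additive monoid homomorphism (\defref{additive monoid homomorphism}) and hence commutes with the indexed sum, turning $\left(\sum_a a \mapsto f\,a\right) a'$ into $\sum_a \left((a \mapsto f\,a)\, a'\right)$. Next I would unfold the definition of $(\mapsto)$ for functions, $a \mapsto f\,a = \lambda a'.\,\text{if } a' = a \text{ then } f\,a \text{ else } \mathrm{0}$, so that applying it at $a'$ yields $f\,a$ when $a' = a$ and $\mathrm{0}$ otherwise. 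Every summand therefore vanishes except the single one at $a = a'$, and the whole sum collapses to $f\,a'$. Reading the calculation upward establishes the claimed identity.

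The main obstacle is not the pointwise algebra, which is routine, but making sense of the \emph{indexed sum} $\sum_a$ when the domain type $a$ is infinite, and in particular justifying the interchange step where evaluation is pulled through the sum. After evaluating at a point, though, the sum has exactly one nonzero term regardless of the cardinality of $a$, so no genuine convergence question arises at the level of $b$; the only care needed concerns the convention by which these (possibly infinite) formal sums over the index type are interpreted, which is precisely what licenses treating evaluation as a homomorphism out of the sum. I would make that convention explicit up front, after which the remaining steps are immediate.
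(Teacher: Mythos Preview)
Your proposal is correct and follows essentially the same approach as the paper's proof: unfold the definition of $(\mapsto)$, use the pointwise \ensuremath{\Conid{Additive}} instance on functions to interchange evaluation with the sum, observe that all but one summand vanish, and conclude by extensionality (the paper phrases this last step as $\eta$-reduction). Your explicit remark about the infinite index set and the convention needed to license the interchange is more careful than the paper, which simply performs the step with the justification ``\ensuremath{(\mathbin{+})} on \ensuremath{\Varid{a}\to \Varid{b}}''.
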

\noindent
For the uses in this paper, \ensuremath{\Varid{f}} is often ``sparse'', i.e., nonzero on a relatively small (e.g., finite or at least countable) subset of its domain.

Singletons also curry handily and provide another useful homomorphism:
\begin{lemma}[\provedIn{lemma:curry +->}]\lemlabel{curry +->}~
\begin{hscode}\SaveRestoreHook
\column{B}{@{}>{\hspre}l<{\hspost}@{}}%
\column{E}{@{}>{\hspre}l<{\hspost}@{}}%
\>[B]{}(\Varid{a}\mapsto\Varid{b}\mapsto\Varid{c})\mathrel{=}\Varid{curry}\;((\Varid{a},\Varid{b})\mapsto\Varid{c}){}\<[E]%
\ColumnHook
\end{hscode}\resethooks
\vspace{-4ex}
\end{lemma}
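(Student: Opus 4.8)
The plan is to prove the identity by \emph{extensionality}. Both sides denote functions in the function representation---the only \ensuremath{\Conid{HasSingle}} instance introduced so far---so it suffices to apply each side to arbitrary arguments $x$ (first) and $y$ (second) and to check that the two results agree. The only ingredients needed are the function instance of $(\mapsto)$, namely $a \mapsto b = \lambda z \to \ite{z = a}{b}{0}$; the definition of \ensuremath{\Varid{curry}}; and the pointwise definition of $0$ on function types coming from the \ensuremath{\Conid{Additive}} instance for functions. No semiring structure beyond these is required.

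First I would unfold the left-hand side. Recalling that $(\mapsto)$ associates to the right, the expression is $a \mapsto (b \mapsto c)$; applying it to $x$ and using the function instance gives $\ite{x = a}{(b \mapsto c)}{0}$, where the $0$ in the else-branch is the constant-zero \emph{function} $\lambda z \to 0$. Applying this to $y$ and distributing application over the conditional (which is just the meaning of if-then-else as a function) yields $\ite{x = a}{(b \mapsto c)\,y}{0\,y}$, which simplifies to $\ite{x = a}{\ite{y = b}{c}{0}}{0}$ after unfolding the inner $(\mapsto)$ and using $0\,y = 0$.

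Next I would unfold the right-hand side. By the definition of \ensuremath{\Varid{curry}} we have $\Varid{curry}\,((a,b) \mapsto c)\,x\,y = ((a,b) \mapsto c)\,(x,y)$, and the function instance of $(\mapsto)$ on the product domain rewrites this to $\ite{(x,y) = (a,b)}{c}{0}$; componentwise equality of pairs then turns the guard into $\ite{x = a \wedge y = b}{c}{0}$. It remains to reconcile the nested conditional obtained on the left with this single conjunctive conditional, which I would do by a three-way case analysis on the guards: (i) $x = a$ and $y = b$, where both sides reduce to $c$; (ii) $x = a$ and $y \neq b$, where both reduce to $0$; and (iii) $x \neq a$ (for any $y$), where both reduce to $0$.

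The entire argument is routine unfolding except for case (iii): here the left-hand side collapses its else-branch only because the zero of a function-valued semimodule is the pointwise-zero function, so that $0\,y = 0$. This is the one spot where I must look past the surface syntax and recall the \ensuremath{\Conid{Additive}} instance for functions rather than treat $0$ as an opaque constant, and it is therefore the main---though quite mild---obstacle to the proof.
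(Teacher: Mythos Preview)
Your proposal is correct and follows essentially the same route as the paper. The paper presents the argument as an equational chain from \ensuremath{\Varid{curry}\;((\Varid{a},\Varid{b})\mapsto\Varid{c})} to \ensuremath{\Varid{a}\mapsto\Varid{b}\mapsto\Varid{c}}, while you apply both sides to arguments and compare by case analysis; but the underlying manipulations---unfolding \ensuremath{(\mapsto)} on functions, splitting pair equality componentwise, trading a conjunctive guard for nested conditionals, and using that the function-typed \ensuremath{\mathrm{0}} is pointwise zero---are the same in both.
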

\begin{lemma} \lemlabel{+-> homomorphism}
For \ensuremath{(\to )\;\Varid{a}}, partial applications \ensuremath{(\Varid{a}\mapsto)} are left semi-module (and hence additive) homomorphisms.
Moreover, \ensuremath{\Varid{single}\mathrel{=}(\varepsilon\mapsto)} is a semiring homomorphism.
\end{lemma}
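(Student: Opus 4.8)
The plan is to handle the two assertions separately, since each reduces to the pointwise structure of $a \to b$. Throughout I use that two elements of $a \to b$ are equal exactly when they agree on every argument, together with the defining equation that $(a \mapsto v)\,a'$ equals $v$ when $a' = a$ and $0$ otherwise.

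\textbf{Part 1: each $(a \mapsto)$ is a left $s$-semimodule homomorphism.} Here the source is the semimodule $b$ and the target is $a \to b$ with its pointwise additive and scaling structure. I would verify the three defining laws by case analysis on whether the test argument $a'$ equals $a$. For $0$-preservation, $(a \mapsto 0)\,a'$ is $0$ in both branches, so $a \mapsto 0 = 0$. For additivity, $(a \mapsto (u+v))\,a'$ is $u+v$ when $a' = a$ and $0$ otherwise, while $((a \mapsto u) + (a \mapsto v))\,a'$ gives the same, using $0 + 0 = 0$ in the ``else'' branch. For scaling, $(a \mapsto (s \cdot u))\,a'$ matches $(s \cdot (a \mapsto u))\,a'$ since the ``then'' branches agree on the nose and the ``else'' branches agree by annihilation $s \cdot 0 = 0$. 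The parenthetical ``hence additive'' needs no separate argument: by \defref{left semimodule homomorphism} a left-semimodule homomorphism is in particular an additive-monoid homomorphism.

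\textbf{Part 2: $(\varepsilon \mapsto)$ is a semiring homomorphism.} By definition $\Varid{value} = (\varepsilon \mapsto)$, the map from the scalar semiring $b$ into the monoid semiring $a \to b$; it is the $a := \varepsilon$ instance of Part 1, so its additive obligations $h\,0 = 0$ and $h(u+v) = h\,u + h\,v$ are already discharged. Two multiplicative laws remain. The unit law $h\,1 = 1$ is immediate because the multiplicative identity of $a \to b$ is exactly $\varepsilon \mapsto 1$. For $h(u * v) = h\,u * h\,v$ I would expand the convolution product: since $\varepsilon \mapsto u$ is supported only at $\varepsilon$, every term of the convolution sum for $(\varepsilon \mapsto u) * (\varepsilon \mapsto v)$ carries a zero factor except the one indexed by the decomposition $\varepsilon \diamond \varepsilon$; by the monoid identity law this is the only decomposition of $\varepsilon$ that contributes, and the surviving term yields $u * v$ at $\varepsilon$ and $0$ elsewhere, which is precisely $\varepsilon \mapsto (u * v)$.

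The routine pointwise analyses of Part 1 and the unit law of Part 2 are immediate. \textbf{The main obstacle} is the multiplicative law of Part 2: it is the only step that genuinely invokes the definition of convolution (the semiring multiplication on the monoid semiring $a \to b$), and it turns on the observation that a singleton has one-point support, so the convolution sum collapses to a single surviving term. I would also flag a minor point: since $\Varid{single}\,a = a \mapsto 1$ has domain the monoid $a$, which carries no additive structure, whereas $(\varepsilon \mapsto) = \Varid{value}$ has domain the scalar semiring $b$, the semiring-homomorphism claim is naturally read as being about $(\varepsilon \mapsto) = \Varid{value}$, whose image of the unit, $\varepsilon \mapsto 1$, coincides with $\Varid{single}\,\varepsilon$.
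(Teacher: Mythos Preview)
Your proof is correct and follows the same route the paper has in mind: the paper's own proof is the single sentence ``Straightforward from the definition of \ensuremath{(\mapsto)}'', and your pointwise case analysis is exactly the unpacking of that sentence. Your observation about the \ensuremath{\Varid{single}}/\ensuremath{\Varid{value}} mismatch is apt—the semiring-homomorphism claim is indeed about \ensuremath{(\varepsilon\mapsto)=\Varid{value}}, and the paper's phrasing is loose on this point.
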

\begin{proof}
Straightforward from the definition of \ensuremath{(\mapsto)}.
\end{proof}

\sectionl{Calculating Instances from Homomorphisms}

So far, we've started with instance definitions and then noted and proved homomorphisms where they arise.
We can instead invert the process, taking homomorphisms as specifications and \emph{calculating} instance definitions that satisfy them.
This process of calculating instances from homomorphisms is the central guiding principle of this paper, so let's see how it works.

Consider a type ``\ensuremath{\Pow\;\Varid{a}}'' of mathematical \emph{sets} of values of some type \ensuremath{\Varid{a}}.
Are there useful instances of the abstractions from \secref{Monoids, Semirings and Semimodules} for sets?
Rather than guessing at such instances and then trying to prove the required laws, let's consider how sets are related to a type for which we already know instances, namely functions.

Sets are closely related to functions-to-booleans (``predicates''):
%% doesn't work
\twocol{0.4}{
\begin{hscode}\SaveRestoreHook
\column{B}{@{}>{\hspre}l<{\hspost}@{}}%
\column{E}{@{}>{\hspre}l<{\hspost}@{}}%
\>[B]{}\Varid{pred}\mathbin{::}\Pow\;\Varid{a}\to (\Varid{a}\to \Conid{Bool}){}\<[E]%
\\
\>[B]{}\Varid{pred}\;\Varid{as}\mathrel{=}\lambda\, \Varid{a}\to \Varid{a}\mathbin{\in}\Varid{as}{}\<[E]%
\ColumnHook
\end{hscode}\resethooks
}{0.4}{
\begin{hscode}\SaveRestoreHook
\column{B}{@{}>{\hspre}l<{\hspost}@{}}%
\column{E}{@{}>{\hspre}l<{\hspost}@{}}%
\>[B]{}\Varid{pred}^{-1}\mathbin{::}(\Varid{a}\to \Conid{Bool})\to \Pow\;\Varid{a}{}\<[E]%
\\
\>[B]{}\Varid{pred}^{-1}\;\Varid{f}\mathrel{=}\set{\Varid{a}\mid \Varid{f}\;\Varid{a}}{}\<[E]%
\ColumnHook
\end{hscode}\resethooks
}
This pair of functions forms an isomorphism, i.e., \ensuremath{\Varid{pred}^{-1}\hsdot{\circ }{.\:}\Varid{pred}\mathrel{=}\Varid{id}} and \ensuremath{\Varid{pred}\hsdot{\circ }{.\:}\Varid{pred}^{-1}\mathrel{=}\Varid{id}}, as can be checked by inlining definitions and simplifying.
Moreover, for sets \ensuremath{\Varid{p}} and \ensuremath{\Varid{q}}, \ensuremath{\Varid{p}\mathrel{=}\Varid{q}\Longleftrightarrow\Varid{pred}\;\Varid{p}\mathrel{=}\Varid{pred}\;\Varid{q}}, by the \emph{extensionality} axiom of sets and of functions.
Now let's also require that \ensuremath{\Varid{pred}} be an \emph{additive monoid homomorphism}.
The required homomorphism properties:
\begin{spacing}{1.2}
\begin{hscode}\SaveRestoreHook
\column{B}{@{}>{\hspre}l<{\hspost}@{}}%
\column{E}{@{}>{\hspre}l<{\hspost}@{}}%
\>[B]{}\Varid{pred}\;\mathrm{0}\mathrel{=}\mathrm{0}{}\<[E]%
\\
\>[B]{}\Varid{pred}\;(\Varid{p}\mathbin{+}\Varid{q})\mathrel{=}\Varid{pred}\;\Varid{p}\mathbin{+}\Varid{pred}\;\Varid{q}{}\<[E]%
\ColumnHook
\end{hscode}\resethooks
\end{spacing}\noindent
We already know definitions of \ensuremath{\Varid{pred}} as well as the function versions of \ensuremath{\mathrm{0}} and \ensuremath{(\mathbin{+})} (on the RHS) but not yet the set versions of \ensuremath{\mathrm{0}} and \ensuremath{(\mathbin{+})} (on the LHS).
We thus have two algebra problems in two unknowns.
Since only one unknown appears in each homomorphism equation, we can solve them independently.
The \ensuremath{\Varid{pred}}/\ensuremath{\Varid{pred}^{-1}} isomorphism makes it easy to solve these equations, and removes all semantic choice, allowing only varying implementations of the same meaning.
\begin{hscode}\SaveRestoreHook
\column{B}{@{}>{\hspre}c<{\hspost}@{}}%
\column{BE}{@{}l@{}}%
\column{6}{@{}>{\hspre}l<{\hspost}@{}}%
\column{68}{@{}>{\hspre}l<{\hspost}@{}}%
\column{E}{@{}>{\hspre}l<{\hspost}@{}}%
\>[6]{}\Varid{pred}\;\mathrm{0}\mathrel{=}\mathrm{0}{}\<[E]%
\\
\>[B]{}\Longleftrightarrow{}\<[BE]%
\>[6]{}\Varid{pred}^{-1}\;(\Varid{pred}\;\mathrm{0})\mathrel{=}\Varid{pred}^{-1}\;\mathrm{0}{}\<[68]%
\>[68]{}\mbox{\onelinecomment  \ensuremath{\Varid{pred}^{-1}} injectivity}{}\<[E]%
\\
\>[B]{}\Longleftrightarrow{}\<[BE]%
\>[6]{}\mathrm{0}\mathrel{=}\Varid{pred}^{-1}\;\mathrm{0}{}\<[68]%
\>[68]{}\mbox{\onelinecomment  \ensuremath{\Varid{pred}^{-1}\hsdot{\circ }{.\:}\Varid{pred}\mathrel{=}\Varid{id}}}{}\<[E]%
\\[\blanklineskip]%
\>[6]{}\Varid{pred}\;(\Varid{p}\mathbin{+}\Varid{q})\mathrel{=}\Varid{pred}\;\Varid{p}\mathbin{+}\Varid{pred}\;\Varid{q}{}\<[E]%
\\
\>[B]{}\Longleftrightarrow{}\<[BE]%
\>[6]{}\Varid{pred}^{-1}\;(\Varid{pred}\;(\Varid{p}\mathbin{+}\Varid{q}))\mathrel{=}\Varid{pred}^{-1}\;(\Varid{pred}\;\Varid{p}\mathbin{+}\Varid{pred}\;\Varid{q}){}\<[68]%
\>[68]{}\mbox{\onelinecomment  \ensuremath{\Varid{pred}^{-1}} injectivity}{}\<[E]%
\\
\>[B]{}\Longleftrightarrow{}\<[BE]%
\>[6]{}\Varid{p}\mathbin{+}\Varid{q}\mathrel{=}\Varid{pred}^{-1}\;(\Varid{pred}\;\Varid{p}\mathbin{+}\Varid{pred}\;\Varid{q}){}\<[68]%
\>[68]{}\mbox{\onelinecomment  \ensuremath{\Varid{pred}^{-1}\hsdot{\circ }{.\:}\Varid{pred}\mathrel{=}\Varid{id}}}{}\<[E]%
\ColumnHook
\end{hscode}\resethooks
We thus have sufficient (and in this case semantically necessary) definitions for \ensuremath{\mathrm{0}} and \ensuremath{(\mathbin{+})} on sets.
Now let's simplify to get more direct definitions:
\begin{hscode}\SaveRestoreHook
\column{B}{@{}>{\hspre}c<{\hspost}@{}}%
\column{BE}{@{}l@{}}%
\column{5}{@{}>{\hspre}l<{\hspost}@{}}%
\column{50}{@{}>{\hspre}l<{\hspost}@{}}%
\column{E}{@{}>{\hspre}l<{\hspost}@{}}%
\>[5]{}\Varid{pred}^{-1}\;\mathrm{0}{}\<[E]%
\\
\>[B]{}\mathrel{=}{}\<[BE]%
\>[5]{}\Varid{pred}^{-1}\;(\lambda\, \Varid{a}\to \mathrm{0}){}\<[50]%
\>[50]{}\mbox{\onelinecomment  \ensuremath{\mathrm{0}} on functions}{}\<[E]%
\\
\>[B]{}\mathrel{=}{}\<[BE]%
\>[5]{}\Varid{pred}^{-1}\;(\lambda\, \Varid{a}\to \Conid{False}){}\<[50]%
\>[50]{}\mbox{\onelinecomment  \ensuremath{\mathrm{0}} on \ensuremath{\Conid{Bool}}}{}\<[E]%
\\
\>[B]{}\mathrel{=}{}\<[BE]%
\>[5]{}\set{\Varid{a}\mid\Conid{False}}{}\<[50]%
\>[50]{}\mbox{\onelinecomment  \ensuremath{\Varid{pred}^{-1}} definition}{}\<[E]%
\\
\>[B]{}\mathrel{=}{}\<[BE]%
\>[5]{}\emptyset{}\<[E]%
\\[\blanklineskip]%
\>[5]{}\Varid{pred}^{-1}\;(\Varid{pred}\;\Varid{p}\mathbin{+}\Varid{pred}\;\Varid{q}){}\<[E]%
\\
\>[B]{}\mathrel{=}{}\<[BE]%
\>[5]{}\Varid{pred}^{-1}\;((\lambda\, \Varid{a}\to \Varid{a}\mathbin{\in}\Varid{p})\mathbin{+}(\lambda\, \Varid{a}\to \Varid{a}\mathbin{\in}\Varid{q})){}\<[50]%
\>[50]{}\mbox{\onelinecomment  \ensuremath{\Varid{pred}} definition (twice)}{}\<[E]%
\\
\>[B]{}\mathrel{=}{}\<[BE]%
\>[5]{}\Varid{pred}^{-1}\;(\lambda\, \Varid{a}\to (\Varid{a}\mathbin{\in}\Varid{p})\mathbin{+}(\Varid{a}\mathbin{\in}\Varid{q})){}\<[50]%
\>[50]{}\mbox{\onelinecomment  \ensuremath{(\mathbin{+})} on functions}{}\<[E]%
\\
\>[B]{}\mathrel{=}{}\<[BE]%
\>[5]{}\Varid{pred}^{-1}\;(\lambda\, \Varid{a}\to \Varid{a}\mathbin{\in}\Varid{p}\mathrel{\vee}\Varid{a}\mathbin{\in}\Varid{q}){}\<[50]%
\>[50]{}\mbox{\onelinecomment  \ensuremath{(\mathbin{+})} on \ensuremath{\Conid{Bool}}}{}\<[E]%
\\
\>[B]{}\mathrel{=}{}\<[BE]%
\>[5]{}\set{\Varid{a}\mid\Varid{a}\mathbin{\in}\Varid{p}\mathrel{\vee}\Varid{a}\mathbin{\in}\Varid{q}}{}\<[50]%
\>[50]{}\mbox{\onelinecomment  \ensuremath{\Varid{pred}^{-1}} definition}{}\<[E]%
\\
\>[B]{}\mathrel{=}{}\<[BE]%
\>[5]{}\Varid{p}\cup\Varid{q}{}\<[50]%
\>[50]{}\mbox{\onelinecomment  \ensuremath{(\cup)} definition}{}\<[E]%
\ColumnHook
\end{hscode}\resethooks
Without applying any real creativity, we have discovered the desired \ensuremath{\Conid{Semiring}} instance for sets:
\begin{hscode}\SaveRestoreHook
\column{B}{@{}>{\hspre}l<{\hspost}@{}}%
\column{3}{@{}>{\hspre}l<{\hspost}@{}}%
\column{9}{@{}>{\hspre}l<{\hspost}@{}}%
\column{E}{@{}>{\hspre}l<{\hspost}@{}}%
\>[B]{}\mathbf{instance}\;\Conid{Additive}\;(\Pow\;\Varid{a})\;\mathbf{where}{}\<[E]%
\\
\>[B]{}\hsindent{3}{}\<[3]%
\>[3]{}\mathrm{0}{}\<[9]%
\>[9]{}\mathrel{=}\emptyset{}\<[E]%
\\
\>[B]{}\hsindent{3}{}\<[3]%
\>[3]{}(\mathbin{+}){}\<[9]%
\>[9]{}\mathrel{=}(\cup){}\<[E]%
\ColumnHook
\end{hscode}\resethooks

Next consider a \ensuremath{\Conid{LeftSemimodule}} instance for sets.
We might be tempted to define \ensuremath{\Varid{s}\cdot\Varid{p}} to multiply \ensuremath{\Varid{s}} by each value in \ensuremath{\Varid{p}}, i.e.,
\begin{hscode}\SaveRestoreHook
\column{B}{@{}>{\hspre}l<{\hspost}@{}}%
\column{79}{@{}>{\hspre}l<{\hspost}@{}}%
\column{E}{@{}>{\hspre}l<{\hspost}@{}}%
\>[B]{}\mathbf{instance}\;\Conid{LeftSemimodule}\;\Varid{s}\;(\Pow\;\Varid{s})\;\mathbf{where}\;\Varid{s}\mathbin{\hat{\cdot}}\Varid{p}\mathrel{=}\set{\Varid{s}\mathbin{*}\Varid{x}\mid \Varid{x}\mathbin{\in}\Varid{p}}{}\<[79]%
\>[79]{}\mbox{\onelinecomment  \emph{wrong}}{}\<[E]%
\ColumnHook
\end{hscode}\resethooks
This definition, however, would violate the semimodule law that \ensuremath{\mathrm{0}\cdot\Varid{p}\mathrel{=}\mathrm{0}}, since \ensuremath{\mathrm{0}\cdot\Varid{p}} would be \ensuremath{\set{\mathrm{0}}}, but \ensuremath{\mathrm{0}} for sets is \ensuremath{\emptyset}.
Both semimodule distributive laws fail as well.
There is an alternative choice, necessitated by requiring that \ensuremath{\Varid{pred}^{-1}} be a left \ensuremath{\Conid{Bool}}-semimodule homomorphism.
The choice of \ensuremath{\Conid{Bool}} is inevitable from the type of \ensuremath{\Varid{pred}^{-1}} and the fact that \ensuremath{\Varid{a}\to \Varid{b}} is a \ensuremath{\Varid{b}}-semimodule for all semirings \ensuremath{\Varid{b}}, so \ensuremath{\Varid{a}\to \Conid{Bool}} is a \ensuremath{\Conid{Bool}}-semimodule.
The necessary homomorphism property:
\begin{hscode}\SaveRestoreHook
\column{B}{@{}>{\hspre}l<{\hspost}@{}}%
\column{E}{@{}>{\hspre}l<{\hspost}@{}}%
\>[B]{}\Varid{pred}\;(\Varid{s}\cdot\Varid{p})\mathrel{=}\Varid{s}\cdot\Varid{pred}\;\Varid{p}{}\<[E]%
\ColumnHook
\end{hscode}\resethooks
Equivalently,
\begin{spacing}{1.2}
\vspace{-1ex}
\begin{hscode}\SaveRestoreHook
\column{B}{@{}>{\hspre}c<{\hspost}@{}}%
\column{BE}{@{}l@{}}%
\column{5}{@{}>{\hspre}l<{\hspost}@{}}%
\column{65}{@{}>{\hspre}l<{\hspost}@{}}%
\column{E}{@{}>{\hspre}l<{\hspost}@{}}%
\>[5]{}\Varid{s}\cdot\Varid{p}{}\<[E]%
\\
\>[B]{}\mathrel{=}{}\<[BE]%
\>[5]{}\Varid{pred}^{-1}\;(\Varid{s}\cdot\Varid{pred}\;\Varid{p}){}\<[65]%
\>[65]{}\mbox{\onelinecomment  \ensuremath{\Varid{pred}^{-1}} injectivity}{}\<[E]%
\\
\>[B]{}\mathrel{=}{}\<[BE]%
\>[5]{}\Varid{pred}^{-1}\;(\Varid{s}\cdot(\lambda\, \Varid{a}\to \Varid{a}\mathbin{\in}\Varid{p})){}\<[65]%
\>[65]{}\mbox{\onelinecomment  \ensuremath{\Varid{pred}} definition}{}\<[E]%
\\
\>[B]{}\mathrel{=}{}\<[BE]%
\>[5]{}\Varid{pred}^{-1}\;(\lambda\, \Varid{a}\to \Varid{s}\mathbin{*}(\Varid{a}\mathbin{\in}\Varid{p})){}\<[65]%
\>[65]{}\mbox{\onelinecomment  \ensuremath{(\cdot)} on functions}{}\<[E]%
\\
\>[B]{}\mathrel{=}{}\<[BE]%
\>[5]{}\Varid{pred}^{-1}\;(\lambda\, \Varid{a}\to \Varid{s}\mathrel{\wedge}\Varid{a}\mathbin{\in}\Varid{p}){}\<[65]%
\>[65]{}\mbox{\onelinecomment  \ensuremath{(\mathbin{*})} on \ensuremath{\Conid{Bool}}}{}\<[E]%
\\
\>[B]{}\mathrel{=}{}\<[BE]%
\>[5]{}\set{\Varid{a}\mid\Varid{s}\mathrel{\wedge}\Varid{a}\mathbin{\in}\Varid{p}}{}\<[65]%
\>[65]{}\mbox{\onelinecomment  \ensuremath{\Varid{pred}^{-1}} definition}{}\<[E]%
\\
\>[B]{}\mathrel{=}{}\<[BE]%
\>[5]{}\mathbf{if}\;\Varid{s}\;\mathbf{then}\;\set{\Varid{a}\mid\Varid{s}\mathrel{\wedge}\Varid{a}\mathbin{\in}\Varid{p}}\;\mathbf{else}\;\set{\Varid{a}\mid\Varid{s}\mathrel{\wedge}\Varid{a}\mathbin{\in}\Varid{p}}{}\<[65]%
\>[65]{}\mbox{\onelinecomment  property of \ensuremath{\mathbf{if}}}{}\<[E]%
\\
\>[B]{}\mathrel{=}{}\<[BE]%
\>[5]{}\mathbf{if}\;\Varid{s}\;\mathbf{then}\;\set{\Varid{a}\mid\Varid{a}\mathbin{\in}\Varid{p}}\;\mathbf{else}\;\emptyset{}\<[65]%
\>[65]{}\mbox{\onelinecomment  simplify conditional branches}{}\<[E]%
\\
\>[B]{}\mathrel{=}{}\<[BE]%
\>[5]{}\mathbf{if}\;\Varid{s}\;\mathbf{then}\;\Varid{p}\;\mathbf{else}\;\emptyset{}\<[65]%
\>[65]{}\mbox{\onelinecomment  \ensuremath{\Varid{pred}^{-1}\hsdot{\circ }{.\:}\Varid{pred}\mathrel{=}\Varid{id}}}{}\<[E]%
\\
\>[B]{}\mathrel{=}{}\<[BE]%
\>[5]{}\mathbf{if}\;\Varid{s}\;\mathbf{then}\;\Varid{p}\;\mathbf{else}\;\mathrm{0}{}\<[65]%
\>[65]{}\mbox{\onelinecomment  \ensuremath{\mathrm{0}} for sets}{}\<[E]%
\ColumnHook
\end{hscode}\resethooks
\end{spacing}
\noindent
Summarizing,
\begin{hscode}\SaveRestoreHook
\column{B}{@{}>{\hspre}l<{\hspost}@{}}%
\column{3}{@{}>{\hspre}l<{\hspost}@{}}%
\column{E}{@{}>{\hspre}l<{\hspost}@{}}%
\>[B]{}\mathbf{instance}\;\Conid{LeftSemimodule}\;\Conid{Bool}\;(\Pow\;\Varid{a})\;\mathbf{where}{}\<[E]%
\\
\>[B]{}\hsindent{3}{}\<[3]%
\>[3]{}\Varid{s}\mathbin{\hat{\cdot}}\Varid{p}\mathrel{=}\mathbf{if}\;\Varid{s}\;\mathbf{then}\;\Varid{p}\;\mathbf{else}\;\mathrm{0}{}\<[E]%
\ColumnHook
\end{hscode}\resethooks
While perhaps obscure at first, this alternative will prove useful later on.

Note that the left \ensuremath{\Varid{s}}-semimodule laws specialized to \ensuremath{\Varid{s}\mathrel{=}\Conid{Bool}} require \ensuremath{\Conid{True}} (\ensuremath{\mathrm{1}}) to preserve and \ensuremath{\Conid{False}} (\ensuremath{\mathrm{0}}) to annihilate the second \ensuremath{(\cdot)} argument.
\emph{Every} left \ensuremath{\Conid{Bool}}-semimodule instance must therefore agree with this definition.\out{ Also note that \ensuremath{\forall \Varid{a}\hsforall \hsdot{\circ }{.\:}(\Varid{a}\mathbin{\in}\Varid{s}\cdot\Varid{p})\Longleftrightarrow(\Varid{s}\mathrel{\wedge}\Varid{a}\mathbin{\in}\Varid{p})}, which resembles the \ensuremath{\Conid{LeftSemimodule}\;(\Varid{a}\to \Varid{b})} instance given above.}

\note{Demonstrate that homomorphic specifications also guarantee that laws hold, assuming that equality is consistent with homomorphism.}

\sectionl{Languages and the Monoid Semiring}

A \emph{language} is a set of strings over some alphabet, so the \ensuremath{\Conid{Additive}} and \ensuremath{\Conid{LeftSemimodule}} instances for sets given above apply directly.
Conspicuously missing, however, are the usual notions of language concatenation and closure (Kleene star), which are defined as follows for languages \ensuremath{\Conid{U}} and \ensuremath{\Conid{V}}:
\begin{hscode}\SaveRestoreHook
\column{B}{@{}>{\hspre}l<{\hspost}@{}}%
\column{E}{@{}>{\hspre}l<{\hspost}@{}}%
\>[B]{}\Conid{U}\;\Conid{V}\mathrel{=}\set{\Varid{u} \diamond \Varid{v}\mid \Varid{u}\mathbin{\in}\Conid{U}\mathrel{\wedge}\Varid{v}\mathbin{\in}\Conid{V}}{}\<[E]%
\\[\blanklineskip]%
\>[B]{}\closure{\Conid{U}}\mathrel{=}\bigOp\bigcup{\Varid{i}}{0}{\,}\Conid{U}^\Varid{i}\mbox{\onelinecomment  where \ensuremath{\Conid{U}^\mathrm{0}\mathrel{=}\mathrm{1}}, and \ensuremath{\Conid{U}^{\Varid{n}\mathbin{+}\mathrm{1}}\mathrel{=}\Conid{U}{\,}\Conid{U}^\Varid{n}}.}{}\<[E]%
\ColumnHook
\end{hscode}\resethooks
Intriguingly, this \ensuremath{\closure{\Conid{U}}} definition would satisfy the \ensuremath{\Conid{StarSemiring}} laws if \ensuremath{(\mathbin{*})} were language concatenation.
A bit of reasoning shows that all of the semiring laws would hold as well:
\begin{itemize}
\item Concatenation is associative and has as identity the language \ensuremath{\set{\varepsilon}}.
\item Concatenation distributes over union, both from the left and from the right.
\item The \ensuremath{\mathrm{0}} (empty) language annihilates (yields \ensuremath{\mathrm{0}}) under concatenation, both from the left and from the right.
\end{itemize}
All we needed from strings is that they form a monoid, so we may as well generalize:
\begin{hscode}\SaveRestoreHook
\column{B}{@{}>{\hspre}l<{\hspost}@{}}%
\column{3}{@{}>{\hspre}l<{\hspost}@{}}%
\column{E}{@{}>{\hspre}l<{\hspost}@{}}%
\>[B]{}\mathbf{instance}\;\Conid{Monoid}\;\Varid{a}\Rightarrow \Conid{Semiring}\;(\Conid{P}\;\Varid{a})\;\mathbf{where}{}\<[E]%
\\
\>[B]{}\hsindent{3}{}\<[3]%
\>[3]{}\mathrm{1}\mathrel{=}\set{\varepsilon}\mbox{\onelinecomment  \ensuremath{\mathrel{=}\varepsilon\mapsto\mathrm{1}\mathrel{=}\Varid{single}\;\varepsilon\mathrel{=}\Varid{value}\;\mathrm{1}} (\secref{Function-like Types and Singletons})}{}\<[E]%
\\
\>[B]{}\hsindent{3}{}\<[3]%
\>[3]{}\Varid{p}\mathbin{*}\Varid{q}\mathrel{=}\set{\Varid{u} \diamond \Varid{v}\mid\Varid{u}\mathbin{\in}\Varid{p}\mathrel{\wedge}\Varid{v}\mathbin{\in}\Varid{q}}{}\<[E]%
\\[\blanklineskip]%
\>[B]{}\mathbf{instance}\;\Conid{StarSemiring}\;(\Pow\;\Varid{a})\mbox{\onelinecomment  use default \ensuremath{\closure{\cdot }} definition (\secref{Star Semirings}).}{}\<[E]%
\ColumnHook
\end{hscode}\resethooks

\noindent
These new instances indeed satisfy the laws for additive monoids, semimodules, semirings, and star semirings.
They seem to spring from nothing, however, which is disappointing compared with the way the \ensuremath{\Conid{Additive}} and \ensuremath{\Conid{LeftSemimodule}} instances for sets follow inevitably from the requirement that \ensuremath{\Varid{pred}} be a homomorphism for those classes (\secref{Calculating Instances from Homomorphisms}).
Let's not give up yet, however.
Perhaps there's a \ensuremath{\Conid{Semiring}} instance for \ensuremath{\Varid{a}\to \Varid{b}} that specializes with \ensuremath{\Varid{b}\mathrel{=}\Conid{Bool}} to bear the same relationship to \ensuremath{\Pow\;\Varid{a}} that the \ensuremath{\Conid{Additive}} and \ensuremath{\Conid{LeftSemimodule}} instances bear.
The least imaginative thing we can try is to require that \ensuremath{\Varid{pred}} be a \emph{semiring} homomorphism.
If we apply the same sort of reasoning as in \secref{Calculating Instances from Homomorphisms} and then generalize from \ensuremath{\Conid{Bool}} to an arbitrary semiring, we get the definitions in \figrefdef{monoid semiring}{The monoid semiring}{
\begin{hscode}\SaveRestoreHook
\column{B}{@{}>{\hspre}l<{\hspost}@{}}%
\column{3}{@{}>{\hspre}l<{\hspost}@{}}%
\column{10}{@{}>{\hspre}l<{\hspost}@{}}%
\column{34}{@{}>{\hspre}l<{\hspost}@{}}%
\column{60}{@{}>{\hspre}l<{\hspost}@{}}%
\column{E}{@{}>{\hspre}l<{\hspost}@{}}%
\>[B]{}\mathbf{instance}\;(\Conid{Semiring}\;\Varid{b},\Conid{Monoid}\;\Varid{a})\Rightarrow \Conid{Semiring}\;(\Varid{a}\to \Varid{b})\;\mathbf{where}{}\<[E]%
\\
\>[B]{}\hsindent{3}{}\<[3]%
\>[3]{}\mathrm{1}\mathrel{=}\Varid{single}\;\varepsilon{}\<[E]%
\\
\>[B]{}\hsindent{3}{}\<[3]%
\>[3]{}\Varid{f}\mathbin{*}\Varid{g}{}\<[10]%
\>[10]{}\mathrel{=}\bigOp\sum{\Varid{u},\Varid{v}}{0}\;\Varid{u} \diamond \Varid{v}\mapsto\Varid{f}\;\Varid{u}\mathbin{*}\Varid{g}\;\Varid{v}{}\<[E]%
\\
\>[10]{}\mathrel{=}\lambda\, \Varid{w}\to \bigOp\sum{\Varid{u},\Varid{v}\;\!\!\\\!\!\;\Varid{u} \diamond \Varid{v}\mathrel{=}\Varid{w}}{1}\;\Varid{f}\;\Varid{u}\mathbin{*}\Varid{g}\;\Varid{v}{}\<[E]%
\\[\blanklineskip]%
\>[B]{}\mathbf{instance}\;(\Conid{Semiring}\;\Varid{b},\Conid{Monoid}\;\Varid{a}){}\<[34]%
\>[34]{}\Rightarrow \Conid{StarSemiring}\;(\Varid{a}\to \Varid{b}){}\<[60]%
\>[60]{}\mbox{\onelinecomment  default \ensuremath{\closure{\cdot }}}{}\<[E]%
\ColumnHook
\end{hscode}\resethooks
\vspace{-4ex}
}.
With this instance, \ensuremath{\Varid{a}\to \Varid{b}} type is known as the \emph{monoid semiring}, and its \ensuremath{(\mathbin{*})} operation as \emph{convolution} \citep{Golan2005RecentSemi,Wilding2015LAS}.

\begin{theorem}[\provedIn{theorem:semiring hom ->}]\thmlabel{semiring hom ->}
Given the instance definitions in \figref{monoid semiring}, \ensuremath{\Varid{pred}} is a star semiring homomorphism.
\end{theorem}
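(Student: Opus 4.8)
The plan is to unfold \defref{star semiring homomorphism} into its constituent obligations and discharge them one at a time. Since \secref{Calculating Instances from Homomorphisms} already establishes that $\Varid{pred}$ is an additive monoid homomorphism (indeed a left $\Conid{Bool}$-semimodule homomorphism), the only remaining obligations are (i)~$\Varid{pred}\;1 = 1$, (ii)~$\Varid{pred}\;(\Varid{p}\mathbin{*}\Varid{q}) = \Varid{pred}\;\Varid{p}\mathbin{*}\Varid{pred}\;\Varid{q}$, and (iii)~$\Varid{pred}\;(\closure{\Varid{p}}) = \closure{(\Varid{pred}\;\Varid{p})}$, where on the left the operations are the set instances from \secref{Languages and the Monoid Semiring} and on the right they are the monoid-semiring instances of \figref{monoid semiring}.

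For (i) I would compute pointwise: $\Varid{pred}\;1 = \Varid{pred}\;\set{\varepsilon}$ is the predicate $\lambda\,\Varid{w}\to \Varid{w}\in\set{\varepsilon}$, i.e. $\lambda\,\Varid{w}\to \Varid{w}=\varepsilon$, which is exactly $\Varid{single}\;\varepsilon = 1$ in $\Varid{a}\to\Conid{Bool}$. For (ii), the heart of the matter, I would evaluate both sides at an arbitrary $\Varid{w}$. The left side gives $\Varid{w}\in \Varid{p}\mathbin{*}\Varid{q}$, which by the definition of set concatenation unfolds to $\exists\,\Varid{u},\Varid{v}.\ \Varid{u}\diamond\Varid{v}=\Varid{w}\wedge \Varid{u}\in\Varid{p}\wedge \Varid{v}\in\Varid{q}$. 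The right side is $\sum_{\Varid{u}\diamond\Varid{v}=\Varid{w}} \Varid{pred}\;\Varid{p}\;\Varid{u}\mathbin{*}\Varid{pred}\;\Varid{q}\;\Varid{v}$; because multiplication in $\Conid{Bool}$ is conjunction and summation is (possibly infinite) disjunction, this is $\bigvee_{\Varid{u}\diamond\Varid{v}=\Varid{w}}(\Varid{u}\in\Varid{p}\wedge \Varid{v}\in\Varid{q})$, the same existential. The two sides therefore coincide; the fact being exploited is precisely that the $\Conid{Bool}$ semiring turns the indexed sum-of-products of convolution into the existential-of-conjunctions that defines set concatenation.

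For (iii) I would first note, by induction on $\Varid{i}$ using (i) and (ii), that $\Varid{pred}\;(\Varid{p}^\Varid{i}) = (\Varid{pred}\;\Varid{p})^\Varid{i}$ for every $\Varid{i}\ge 0$. Then, using that the set star is $\closure{\Varid{p}} = \bigunion_{\Varid{i}} \Varid{p}^\Varid{i}$ and the function star is $\closure{\Varid{f}} = \sum_\Varid{i} \Varid{f}^\Varid{i}$, I would evaluate at $\Varid{w}$: membership $\Varid{w}\in\bigunion_\Varid{i} \Varid{p}^\Varid{i}$ is $\exists\,\Varid{i}.\ \Varid{w}\in\Varid{p}^\Varid{i}$, while $\closure{(\Varid{pred}\;\Varid{p})}\;\Varid{w} = \bigvee_\Varid{i}(\Varid{pred}\;\Varid{p})^\Varid{i}\,\Varid{w} = \bigvee_\Varid{i}(\Varid{w}\in\Varid{p}^\Varid{i})$, and these agree.

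The main obstacle is the infinitary step in (iii): the additive-homomorphism property of $\Varid{pred}$ only delivers preservation of \emph{finite} sums, whereas the star involves a countable union on the set side and a countable sum on the function side. For the $\Conid{Bool}$/set setting this gap is harmless, since membership in a countable union is literally the countable disjunction of the individual memberships; the pointwise computation above sidesteps any appeal to a general continuity property of $\Varid{pred}$. The only care required is to confirm that the indexed disjunction over the (possibly infinite) index set $\set{(\Varid{u},\Varid{v}) \mid \Varid{u}\diamond\Varid{v}=\Varid{w}}$ in (ii), and over $\Varid{i}$ in (iii), is well-defined as an existential in $\Conid{Bool}$, which it is.
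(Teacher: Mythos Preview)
Your proposal is correct and, for the semiring obligations, tracks the paper's own argument closely. The only organizational difference is that for multiplication the paper computes \ensuremath{\Varid{pred}^{-1}\;(\Varid{pred}\;\Varid{p}\mathbin{*}\Varid{pred}\;\Varid{q})} and simplifies it back to \ensuremath{\Varid{p}\mathbin{*}\Varid{q}} (exploiting the isomorphism), whereas you stay on the predicate side and compare both functions pointwise at \ensuremath{\Varid{w}}; the underlying calculation---turning the indexed sum-of-products into an existential-of-conjunctions---is identical.

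Your treatment of the star case is actually more explicit than the paper's. The paper dispatches it in one line by observing that both sides use the same default recursive definition \ensuremath{\closure{\Varid{p}}\mathrel{=}\mathrm{1}\mathbin{+}\Varid{p}\mathbin{*}\closure{\Varid{p}}} and that \ensuremath{\Varid{pred}} already preserves \ensuremath{\mathrm{1}}, \ensuremath{(\mathbin{+})}, and \ensuremath{(\mathbin{*})}, even flagging a lingering worry about well-definedness. You instead unfold the closed form \ensuremath{\closure{\Varid{p}}\mathrel{=}\bigOp\sum{\Varid{i}}{0}\,\Varid{p}^\Varid{i}}, push \ensuremath{\Varid{pred}} through each power by induction, and then argue pointwise that countable union corresponds to countable disjunction. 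This buys you a cleaner justification of the infinitary step (which the recursive-definition appeal leaves implicit), at the cost of invoking the sum characterization rather than the defining fixed-point equation.
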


%% %format splits = split
For some monoids, we can also express the product operation in a more clearly computable form via \emph{splittings}:
\begin{hscode}\SaveRestoreHook
\column{B}{@{}>{\hspre}l<{\hspost}@{}}%
\column{3}{@{}>{\hspre}l<{\hspost}@{}}%
\column{E}{@{}>{\hspre}l<{\hspost}@{}}%
\>[3]{}\Varid{f}\mathbin{*}\Varid{g}\mathrel{=}\lambda\, \Varid{w}\to \bigOp\sum{(\Varid{u},\Varid{v})\mathbin{\in}\Varid{splits}\;\Varid{w}}{2.5}\;\Varid{f}\;\Varid{u}\mathbin{*}\Varid{g}\;\Varid{v}{}\<[E]%
\ColumnHook
\end{hscode}\resethooks
where \ensuremath{\Varid{splits}\;\Varid{w}} yields all pairs \ensuremath{(\Varid{u},\Varid{v})} such that \ensuremath{\Varid{u} \diamond \Varid{v}\mathrel{=}\Varid{w}}:
% \notefoot{Maybe generalize from \emph{lists} of pairs to an associated |Foldable|.}
\begin{hscode}\SaveRestoreHook
\column{B}{@{}>{\hspre}l<{\hspost}@{}}%
\column{3}{@{}>{\hspre}l<{\hspost}@{}}%
\column{12}{@{}>{\hspre}l<{\hspost}@{}}%
\column{29}{@{}>{\hspre}l<{\hspost}@{}}%
\column{E}{@{}>{\hspre}l<{\hspost}@{}}%
\>[B]{}\mathbf{class}\;\Conid{Monoid}\;\Varid{t}\Rightarrow \Conid{Splittable}\;\Varid{t}\;\mathbf{where}{}\<[E]%
\\
\>[B]{}\hsindent{3}{}\<[3]%
\>[3]{}\Varid{splits}{}\<[12]%
\>[12]{}\mathbin{::}\Varid{t}\to [\mskip1.5mu (\Varid{t},\Varid{t})\mskip1.5mu]{}\<[29]%
\>[29]{}\mbox{\onelinecomment  multi-valued inverse of \ensuremath{(\diamond)}}{}\<[E]%
\ColumnHook
\end{hscode}\resethooks
Examples of splittable monoids include natural numbers and lists%
:
\begin{hscode}\SaveRestoreHook
\column{B}{@{}>{\hspre}l<{\hspost}@{}}%
\column{3}{@{}>{\hspre}l<{\hspost}@{}}%
\column{18}{@{}>{\hspre}l<{\hspost}@{}}%
\column{E}{@{}>{\hspre}l<{\hspost}@{}}%
\>[B]{}\mathbf{instance}\;\Conid{Splittable}\;\mathbb N\;\mathbf{where}{}\<[E]%
\\
\>[B]{}\hsindent{3}{}\<[3]%
\>[3]{}\Varid{splits}\;\Varid{n}\mathrel{=}[\mskip1.5mu (\Varid{i},\Varid{n}\mathbin{-}\Varid{i})\mid \Varid{i}\leftarrow [\mskip1.5mu \mathrm{0}\mathinner{\ldotp\ldotp}\Varid{n}\mskip1.5mu]\mskip1.5mu]{}\<[E]%
\\[\blanklineskip]%
\>[B]{}\mathbf{instance}\;\Conid{Splittable}\;[\mskip1.5mu \Varid{c}\mskip1.5mu]\;\mathbf{where}{}\<[E]%
\\
\>[B]{}\hsindent{3}{}\<[3]%
\>[3]{}\Varid{splits}\;[\mskip1.5mu \mskip1.5mu]{}\<[18]%
\>[18]{}\mathrel{=}[\mskip1.5mu ([\mskip1.5mu \mskip1.5mu],[\mskip1.5mu \mskip1.5mu])\mskip1.5mu]{}\<[E]%
\\
\>[B]{}\hsindent{3}{}\<[3]%
\>[3]{}\Varid{splits}\;(\Varid{c}\mathbin{:}\Varid{cs}){}\<[18]%
\>[18]{}\mathrel{=}([\mskip1.5mu \mskip1.5mu],\Varid{c}\mathbin{:}\Varid{cs})\mathbin{:}[\mskip1.5mu ((\Varid{c}\mathbin{:}\Varid{l}),\Varid{r})\mid (\Varid{l},\Varid{r})\leftarrow \Varid{splits}\;\Varid{cs}\mskip1.5mu]{}\<[E]%
\ColumnHook
\end{hscode}\resethooks

While simple, general, and (assuming \ensuremath{\Conid{Splittable}} domain) computable, the definitions of \ensuremath{(\mathbin{+})} and \ensuremath{(\mathbin{*})} above for the monoid semiring make for quite inefficient implementations, primarily due to naive backtracking.
As a simple example, consider the language \ensuremath{\Varid{single}\;\text{\tt \char34 pickles\char34}\mathbin{+}\Varid{single}\;\text{\tt \char34 pickled\char34}}, and suppose that we want to test the word ``pickling'' for membership.
The \ensuremath{(\mathbin{+})} definition from \secref{Additive Monoids} will first try ``pickles'', fail near the end, and then backtrack all the way to the beginning to try ``pickled''.
The second attempt redundantly discovers that the prefix ``pickl'' is also a prefix of the candidate word and that ``pickle'' is not.
Next consider the language \ensuremath{\Varid{single}\;\text{\tt \char34 ca\char34}\mathbin{*}\Varid{single}\;\text{\tt \char34 ts\char34}\mathbin{*}\Varid{single}\;\text{\tt \char34 up\char34}}, and suppose we want to test ``catsup'' for membership.
The \ensuremath{(\mathbin{*})} implementation above will try all possible three-way splittings of the test string.

\sectionl{Finite maps}

%% Tighten spacing after ``M'' in ``M.''

\note{I don't think finite maps need their own section. Look for another home. Maybe with \ensuremath{\Conid{Cofree}} as a suggested functor.}

One representation of \emph{partial} functions is the type of finite maps, \ensuremath{\Conid{Map}\;\Varid{a}\;\Varid{b}} from keys of type \ensuremath{\Varid{a}} to values of type \ensuremath{\Varid{b}}, represented is a key-ordered balanced tree \citep{Adams1993Sets,Straka2012ATR,Nievergelt1973BST}.
To model \emph{total} functions instead, we can treat unassigned keys as denoting zero.
Conversely, merging two finite maps can yield a key collision, which can be resolved by addition.
Both interpretations require \ensuremath{\Varid{b}} to be an additive monoid.
Given the definitions in \figrefdef{Map}{Finite maps}{
\begin{hscode}\SaveRestoreHook
\column{B}{@{}>{\hspre}l<{\hspost}@{}}%
\column{3}{@{}>{\hspre}l<{\hspost}@{}}%
\column{E}{@{}>{\hspre}l<{\hspost}@{}}%
\>[B]{}\mathbf{instance}\;(\Conid{Ord}\;\Varid{a},\Conid{Additive}\;\Varid{b})\Rightarrow \Conid{Indexable}\;\Varid{a}\;\Varid{b}\;(\Conid{Map}\;\Varid{a}\;\Varid{b})\;\mathbf{where}{}\<[E]%
\\
\>[B]{}\hsindent{3}{}\<[3]%
\>[3]{}\Varid{m}\mathbin{!}\Varid{a}\mathrel{=}\Conid{M}\!.\Varid{findWithDefault}\;\mathrm{0}\;\Varid{a}\;\Varid{m}{}\<[E]%
\\[\blanklineskip]%
\>[B]{}\mathbf{instance}\;(\Conid{Ord}\;\Varid{a},\Conid{Additive}\;\Varid{b})\Rightarrow \Conid{HasSingle}\;\Varid{a}\;\Varid{b}\;(\Conid{Map}\;\Varid{a}\;\Varid{b})\;\mathbf{where}{}\<[E]%
\\
\>[B]{}\hsindent{3}{}\<[3]%
\>[3]{}(\mapsto)\mathrel{=}\Conid{M}\!.\Varid{singleton}{}\<[E]%
\\[\blanklineskip]%
\>[B]{}\mathbf{instance}\;(\Conid{Ord}\;\Varid{a},\Conid{Additive}\;\Varid{b})\Rightarrow \Conid{Additive}\;(\Conid{Map}\;\Varid{a}\;\Varid{b})\;\mathbf{where}{}\<[E]%
\\
\>[B]{}\hsindent{3}{}\<[3]%
\>[3]{}\mathrm{0}\mathrel{=}\Conid{M}\!.\Varid{empty}{}\<[E]%
\\
\>[B]{}\hsindent{3}{}\<[3]%
\>[3]{}(\mathbin{+})\mathrel{=}\Conid{M}\!.\Varid{unionWith}\;(\mathbin{+}){}\<[E]%
\\[\blanklineskip]%
\>[B]{}\mathbf{instance}\;(\Conid{Ord}\;\Varid{a},\Conid{Additive}\;\Varid{b})\Rightarrow \Conid{IsZero}\;(\Conid{Map}\;\Varid{a}\;\Varid{b})\;\mathbf{where}\;\Varid{isZero}\mathrel{=}\Conid{M}\!.\Varid{null}{}\<[E]%
\\[\blanklineskip]%
\>[B]{}\mathbf{instance}\;\Conid{Semiring}\;\Varid{b}\Rightarrow \Conid{LeftSemimodule}\;\Varid{b}\;(\Conid{Map}\;\Varid{a}\;\Varid{b})\;\mathbf{where}{}\<[E]%
\\
\>[B]{}\hsindent{3}{}\<[3]%
\>[3]{}(\mathbin{\hat{\cdot}})\;\Varid{b}\mathrel{=}\Varid{fmap}\;(\Varid{b}\;{}\mathbin{*}){}\<[E]%
\\[\blanklineskip]%
\>[B]{}\mathbf{instance}\;(\Conid{Ord}\;\Varid{a},\Conid{Monoid}\;\Varid{a},\Conid{Semiring}\;\Varid{b})\Rightarrow \Conid{Semiring}\;(\Conid{Map}\;\Varid{a}\;\Varid{b})\;\mathbf{where}{}\<[E]%
\\
\>[B]{}\hsindent{3}{}\<[3]%
\>[3]{}\mathrm{1}\mathrel{=}\varepsilon\mapsto\mathrm{1}{}\<[E]%
\\
\>[B]{}\hsindent{3}{}\<[3]%
\>[3]{}\Varid{p}\mathbin{*}\Varid{q}\mathrel{=}\Varid{sum}\;[\mskip1.5mu \Varid{u} \diamond \Varid{v}\mapsto\Varid{p}\mathbin{!}\Varid{u}\mathbin{*}\Varid{q}\mathbin{!}\Varid{v}\mid \Varid{u}\leftarrow \Conid{M}\!.\Varid{keys}\;\Varid{p},\Varid{v}\leftarrow \Conid{M}\!.\Varid{keys}\;\Varid{q}\mskip1.5mu]{}\<[E]%
\ColumnHook
\end{hscode}\resethooks
\vspace{-4ex}
}, \ensuremath{(\mathbin{!})} is a homomorphism with respect to each instantiated class.
(The ``\ensuremath{\Conid{M}}.'' module qualifier indicates names coming from the finite map library \citep{Data.Map}.)
\notefoot{Do I want a theorem and proof here?
I think so, though I'll have to make a few assumptions about finite maps.
On the other hand, those assumptions don't differ much from the homomorphism properties I'm claiming to hold.}
The finiteness of finite maps prevents giving a useful \ensuremath{\Conid{StarSemiring}} instance.

\sectionl{Decomposing Functions from Lists}

For functions from \emph{lists} specifically, we can decompose in a way that lays the groundwork for more efficient implementations than the ones in previous sections.
\begin{lemma}[\provedIn{lemma:decomp ([c] -> b)}]\lemlabel{decomp ([c] -> b)}
Any \ensuremath{\Varid{f}\mathbin{::}[\mskip1.5mu \Varid{c}\mskip1.5mu]\to \Varid{b}} can be decomposed as follows:
\begin{hscode}\SaveRestoreHook
\column{B}{@{}>{\hspre}l<{\hspost}@{}}%
\column{E}{@{}>{\hspre}l<{\hspost}@{}}%
\>[B]{}\Varid{f}\mathrel{=}\Varid{at}_\epsilon\;\Varid{f}\mathrel\triangleleft\derivOp\;\Varid{f}{}\<[E]%
\ColumnHook
\end{hscode}\resethooks
Moreover, for all \ensuremath{\Varid{b}} and \ensuremath{\Varid{h}},
\begin{hscode}\SaveRestoreHook
\column{B}{@{}>{\hspre}l<{\hspost}@{}}%
\column{8}{@{}>{\hspre}l<{\hspost}@{}}%
\column{E}{@{}>{\hspre}l<{\hspost}@{}}%
\>[B]{}\Varid{at}_\epsilon\;{}\<[8]%
\>[8]{}(\Varid{b}\mathrel\triangleleft\Varid{h})\mathrel{=}\Varid{b}{}\<[E]%
\\
\>[B]{}\derivOp\;{}\<[8]%
\>[8]{}(\Varid{b}\mathrel\triangleleft\Varid{h})\mathrel{=}\Varid{h}{}\<[E]%
\ColumnHook
\end{hscode}\resethooks
where
\begin{hscode}\SaveRestoreHook
\column{B}{@{}>{\hspre}l<{\hspost}@{}}%
\column{48}{@{}>{\hspre}l<{\hspost}@{}}%
\column{E}{@{}>{\hspre}l<{\hspost}@{}}%
\>[B]{}\Varid{at}_\epsilon\mathbin{::}([\mskip1.5mu \Varid{c}\mskip1.5mu]\to \Varid{b})\to \Varid{b}{}\<[E]%
\\
\>[B]{}\Varid{at}_\epsilon\;\Varid{f}\mathrel{=}\Varid{f}\;\varepsilon{}\<[E]%
\\[\blanklineskip]%
\>[B]{}\derivOp\mathbin{::}([\mskip1.5mu \Varid{c}\mskip1.5mu]\to \Varid{b})\to \Varid{c}\to ([\mskip1.5mu \Varid{c}\mskip1.5mu]\to \Varid{b}){}\<[E]%
\\
\>[B]{}\derivOp\;\Varid{f}\mathrel{=}\lambda\, \Varid{c}\;\Varid{cs}\to \Varid{f}\;(\Varid{c}\mathbin{:}\Varid{cs}){}\<[E]%
\\[\blanklineskip]%
\>[B]{}\mathbf{infix}\;\mathrm{1}\mathrel\triangleleft{}\<[E]%
\\
\>[B]{}(\mathrel\triangleleft)\mathbin{::}\Varid{b}\to (\Varid{c}\to ([\mskip1.5mu \Varid{c}\mskip1.5mu]\to \Varid{b}))\to ([\mskip1.5mu \Varid{c}\mskip1.5mu]\to \Varid{b}){}\<[E]%
\\
\>[B]{}\Varid{b}\mathrel\triangleleft\Varid{h}\mathrel{=}\lambda\, {}\;\mathbf{case}\;\{\mskip1.5mu {}\;[\mskip1.5mu \mskip1.5mu]\to \Varid{b}\;{};{}\;\Varid{c}\mathbin{:}\Varid{cs}{}\<[48]%
\>[48]{}\to \Varid{h}\;\Varid{c}\;\Varid{cs}\;{}\mskip1.5mu\}{}\<[E]%
\ColumnHook
\end{hscode}\resethooks
\vspace{-4ex}
\end{lemma}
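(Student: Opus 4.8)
The plan is to prove all three equations by unfolding the definitions of $\Varid{at}_\epsilon$, $\derivOp$, and $(\triangleleft)$, using the fact that every list $w :: [c]$ is either the empty list $\varepsilon$ or a cons $c : cs$. Since each claim equates two functions (or, in the case $\Varid{at}_\epsilon\,(b \triangleleft h) = b$, two values), I would reason pointwise via function extensionality together with a case split on the list argument. No induction is needed: the recursive structure of lists is already absorbed into $\derivOp\,f$, which is itself a function on lists, so a single-level case analysis suffices at each step.

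First, for the decomposition $f = \Varid{at}_\epsilon\,f \triangleleft \derivOp\,f$, I would apply both sides to an arbitrary list $w$ and split on its shape. When $w = \varepsilon$, the right-hand side reduces through the empty-list branch of $(\triangleleft)$ to $\Varid{at}_\epsilon\,f = f\,\varepsilon$, which matches the left-hand side. When $w = c : cs$, the right-hand side takes the cons branch, giving $\derivOp\,f\,c\,cs$, which by the definition of $\derivOp$ equals $f\,(c : cs)$, again matching the left-hand side. These two cases are exhaustive, so the two functions agree on every input.

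The two ``inverse'' laws follow even more directly. For $\Varid{at}_\epsilon\,(b \triangleleft h) = b$, I would unfold $\Varid{at}_\epsilon$ to evaluate $b \triangleleft h$ at $\varepsilon$, which selects the empty-list branch and yields $b$. For $\derivOp\,(b \triangleleft h) = h$, I would apply both sides to arbitrary $c$ and $cs$: the left-hand side unfolds to $(b \triangleleft h)\,(c : cs)$, which selects the cons branch and returns $h\,c\,cs$; two appeals to extensionality then collapse this to $\derivOp\,(b \triangleleft h) = h$.

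I expect no real obstacle here, as every step is a definitional unfolding. The only point requiring care is that $\varepsilon$ on lists \emph{is} the empty list, so that the empty-list branch of $(\triangleleft)$ and the $\varepsilon$-argument of $\Varid{at}_\epsilon$ refer to the same case; keeping that identification explicit is what makes the first and second equations line up. Taken together, the three equations show that $(\triangleleft)$ assembles a head value and a derivative function into a function on lists, while $\Varid{at}_\epsilon$ and $\derivOp$ recover those two components, establishing that $(\triangleleft)$ is inverse to the pairing sending $f$ to $(\Varid{at}_\epsilon\,f,\,\derivOp\,f)$.
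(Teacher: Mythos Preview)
Your proposal is correct and matches the paper's own proof essentially step for step: a pointwise case split on the list argument for the decomposition equation, and direct definitional unfolding (with $\eta$-reduction for the $\derivOp$ case) for the two inverse laws. There is nothing to add.
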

\noindent
Considering the isomorphism \ensuremath{\Pow\;[\mskip1.5mu \Varid{c}\mskip1.5mu] \simeq [\mskip1.5mu \Varid{c}\mskip1.5mu]\to \Conid{Bool}}, this decomposition generalizes the \ensuremath{\delta } and \ensuremath{\derivOp} operations used by \citet{Brzozowski64} mapping languages to languages (as sets of strings), the latter of which he referred to as the ``derivative''.\footnote{Brzozowski wrote ``\ensuremath{\derivOp_{\Varid{c}}\;\Varid{p}}'' instead of ``\ensuremath{\derivOp\;\Varid{p}\;\Varid{c}}'', but the latter will prove more convenient below.}
Brzozowski used differentiation with respect to single symbols to implement a more general form of language differentiation with respect to a \emph{string} of symbols, where the \emph{derivative} \ensuremath{\derivOp^{\ast}\;\Varid{u}\;\Varid{p}} of a language \ensuremath{\Varid{p}} with respect to a prefix string \ensuremath{\Varid{u}} is the set of \ensuremath{\Varid{u}}-suffixes of strings in \ensuremath{\Varid{p}}, i.e.,
\begin{hscode}\SaveRestoreHook
\column{B}{@{}>{\hspre}l<{\hspost}@{}}%
\column{3}{@{}>{\hspre}l<{\hspost}@{}}%
\column{E}{@{}>{\hspre}l<{\hspost}@{}}%
\>[3]{}\derivOp^{\ast}\;\Varid{p}\;\Varid{u}\mathrel{=}\set{\Varid{v}\mid \Varid{u} \diamond \Varid{v}\mathbin{\in}\Varid{p}}{}\<[E]%
\ColumnHook
\end{hscode}\resethooks
so that
\begin{hscode}\SaveRestoreHook
\column{B}{@{}>{\hspre}l<{\hspost}@{}}%
\column{3}{@{}>{\hspre}l<{\hspost}@{}}%
\column{E}{@{}>{\hspre}l<{\hspost}@{}}%
\>[3]{}\Varid{u}\mathbin{\in}\Varid{p}\Longleftrightarrow\varepsilon\mathbin{\in}\derivOp^{\ast}\;\Varid{p}\;\Varid{u}{}\<[E]%
\ColumnHook
\end{hscode}\resethooks
Further, he noted that%
\footnote{Here, Brzozowski's notation makes for a prettier formulation:
\begin{hscode}\SaveRestoreHook
\column{B}{@{}>{\hspre}l<{\hspost}@{}}%
\column{21}{@{}>{\hspre}l<{\hspost}@{}}%
\column{E}{@{}>{\hspre}l<{\hspost}@{}}%
\>[B]{}\derivOp^{\ast}_{\varepsilon}\;\Varid{p}{}\<[21]%
\>[21]{}\mathrel{=}\Varid{p}{}\<[E]%
\\
\>[B]{}\derivOp^{\ast}_{\Varid{u} \diamond \Varid{v}}\;\Varid{p}{}\<[21]%
\>[21]{}\mathrel{=}\derivOp^{\ast}_{\Varid{v}}\;(\derivOp^{\ast}_{\Varid{u}}\;\Varid{p}){}\<[E]%
\ColumnHook
\end{hscode}\resethooks
Equivalently,
\begin{hscode}\SaveRestoreHook
\column{B}{@{}>{\hspre}l<{\hspost}@{}}%
\column{19}{@{}>{\hspre}l<{\hspost}@{}}%
\column{E}{@{}>{\hspre}l<{\hspost}@{}}%
\>[B]{}\derivOp^{\ast}_{\varepsilon}{}\<[19]%
\>[19]{}\mathrel{=}\Varid{id}{}\<[E]%
\\
\>[B]{}\derivOp^{\ast}_{\Varid{u} \diamond \Varid{v}}{}\<[19]%
\>[19]{}\mathrel{=}\derivOp^{\ast}_{\Varid{v}}\hsdot{\circ }{.\:}\derivOp^{\ast}_{\Varid{u}}{}\<[E]%
\ColumnHook
\end{hscode}\resethooks
where \ensuremath{\Varid{id}} is the identity function.
In other words, \ensuremath{\derivOp^{\ast}_{\cdot }} is a contravariant monoid homomorphism (targeting the monoid of endofunctions).}
\begin{hscode}\SaveRestoreHook
\column{B}{@{}>{\hspre}l<{\hspost}@{}}%
\column{20}{@{}>{\hspre}l<{\hspost}@{}}%
\column{E}{@{}>{\hspre}l<{\hspost}@{}}%
\>[B]{}\derivOp^{\ast}\;\Varid{p}\;\varepsilon{}\<[20]%
\>[20]{}\mathrel{=}\Varid{p}{}\<[E]%
\\
\>[B]{}\derivOp^{\ast}\;\Varid{p}\;(\Varid{u} \diamond \Varid{v}){}\<[20]%
\>[20]{}\mathrel{=}\derivOp^{\ast}\;(\derivOp^{\ast}\;\Varid{p}\;\Varid{u})\;\Varid{v}{}\<[E]%
\ColumnHook
\end{hscode}\resethooks
Thanks to this decomposition property and the fact that \ensuremath{\derivOp\;\Varid{p}\;\Varid{c}\mathrel{=}\derivOp^{\ast}\;\Varid{p}\;[\mskip1.5mu \Varid{c}\mskip1.5mu]}, one can successively differentiate with respect to single symbols.

Generalizing from sets to functions,
\begin{hscode}\SaveRestoreHook
\column{B}{@{}>{\hspre}l<{\hspost}@{}}%
\column{E}{@{}>{\hspre}l<{\hspost}@{}}%
\>[B]{}\derivOp^{\ast}\;\Varid{f}\;\Varid{u}\mathrel{=}\lambda\, \Varid{v}\to \Varid{f}\;(\Varid{u} \diamond \Varid{v}){}\<[E]%
\ColumnHook
\end{hscode}\resethooks
so that
\begin{hscode}\SaveRestoreHook
\column{B}{@{}>{\hspre}l<{\hspost}@{}}%
\column{4}{@{}>{\hspre}l<{\hspost}@{}}%
\column{E}{@{}>{\hspre}l<{\hspost}@{}}%
\>[B]{}\Varid{f}{}\<[4]%
\>[4]{}\mathrel{=}\lambda\, \Varid{u}\to \derivOp^{\ast}\;\Varid{f}\;\Varid{u}\;\varepsilon{}\<[E]%
\\
\>[4]{}\mathrel{=}\lambda\, \Varid{u}\to \Varid{at}_\epsilon\;(\derivOp^{\ast}\;\Varid{f}\;\Varid{u}){}\<[E]%
\\
\>[4]{}\mathrel{=}\Varid{at}_\epsilon\hsdot{\circ }{.\:}\derivOp^{\ast}\;\Varid{f}{}\<[E]%
\\
\>[4]{}\mathrel{=}\Varid{at}_\epsilon\hsdot{\circ }{.\:}\Varid{foldl}\;\derivOp\;\Varid{f}{}\<[E]%
\ColumnHook
\end{hscode}\resethooks
where \ensuremath{\Varid{foldl}} is the usual left fold on lists:
\notefoot{The choice of \emph{left} fold bothers me. Could we define \ensuremath{\derivOp^{\ast}} via a \emph{right} fold (the natural fold for lists), perhaps leading to a right-to-left string matching algorithm? Or maybe left vs right doesn't matter?}
\begin{hscode}\SaveRestoreHook
\column{B}{@{}>{\hspre}l<{\hspost}@{}}%
\column{19}{@{}>{\hspre}l<{\hspost}@{}}%
\column{E}{@{}>{\hspre}l<{\hspost}@{}}%
\>[B]{}\Varid{foldl}\mathbin{::}(\Varid{c}\to \Varid{b}\to \Varid{b})\to \Varid{b}\to [\mskip1.5mu \Varid{c}\mskip1.5mu]\to \Varid{b}{}\<[E]%
\\
\>[B]{}\Varid{foldl}\;\Varid{h}\;\Varid{e}\;[\mskip1.5mu \mskip1.5mu]{}\<[19]%
\>[19]{}\mathrel{=}\Varid{e}{}\<[E]%
\\
\>[B]{}\Varid{foldl}\;\Varid{h}\;\Varid{e}\;(\Varid{c}\mathbin{:}\Varid{cs}){}\<[19]%
\>[19]{}\mathrel{=}\Varid{foldl}\;\Varid{h}\;(\Varid{h}\;\Varid{e}\;\Varid{c})\;\Varid{cs}{}\<[E]%
\ColumnHook
\end{hscode}\resethooks
Intriguingly, \ensuremath{\Varid{at}_\epsilon} and \ensuremath{\derivOp^{\ast}} correspond to \ensuremath{\Varid{coreturn}} and \ensuremath{\Varid{cojoin}} for the function-from-monoid comonad, also called the ``exponent comonad'' \citep{Uustalu2008CNC}.

Understanding how \ensuremath{\Varid{at}_\epsilon} and \ensuremath{\derivOp} relate to the semiring vocabulary will help us develop efficient implementations in later sections.

\begin{lemma}[\provedIn{lemma:atEps [c] -> b}]\lemlabel{atEps [c] -> b}
The \ensuremath{\Varid{at}_\epsilon} function is a star semiring and left semimodule homomorphism, i.e.,
\begin{spacing}{1.3}
\begin{hscode}\SaveRestoreHook
\column{B}{@{}>{\hspre}l<{\hspost}@{}}%
\column{11}{@{}>{\hspre}c<{\hspost}@{}}%
\column{11E}{@{}l@{}}%
\column{16}{@{}>{\hspre}l<{\hspost}@{}}%
\column{20}{@{}>{\hspre}l<{\hspost}@{}}%
\column{32}{@{}>{\hspre}c<{\hspost}@{}}%
\column{32E}{@{}l@{}}%
\column{37}{@{}>{\hspre}l<{\hspost}@{}}%
\column{E}{@{}>{\hspre}l<{\hspost}@{}}%
\>[B]{}\Varid{at}_\epsilon\;\mathrm{0}{}\<[20]%
\>[20]{}\mathrel{=}\mathrm{0}{}\<[E]%
\\
\>[B]{}\Varid{at}_\epsilon\;\mathrm{1}{}\<[20]%
\>[20]{}\mathrel{=}\mathrm{1}{}\<[E]%
\\
\>[B]{}\Varid{at}_\epsilon\;(\Varid{p}{}\<[11]%
\>[11]{}\mathbin{+}{}\<[11E]%
\>[16]{}\Varid{q}){}\<[20]%
\>[20]{}\mathrel{=}\Varid{at}_\epsilon\;\Varid{p}{}\<[32]%
\>[32]{}\mathbin{+}{}\<[32E]%
\>[37]{}\Varid{at}_\epsilon\;\Varid{q}{}\<[E]%
\\
\>[B]{}\Varid{at}_\epsilon\;(\Varid{p}{}\<[11]%
\>[11]{}\mathbin{*}{}\<[11E]%
\>[16]{}\Varid{q}){}\<[20]%
\>[20]{}\mathrel{=}\Varid{at}_\epsilon\;\Varid{p}{}\<[32]%
\>[32]{}\mathbin{*}{}\<[32E]%
\>[37]{}\Varid{at}_\epsilon\;\Varid{q}{}\<[E]%
\\
\>[B]{}\Varid{at}_\epsilon\;(\closure{\Varid{p}}){}\<[20]%
\>[20]{}\mathrel{=}\closure{(\Varid{at}_\epsilon\;\Varid{p})}{}\<[E]%
\ColumnHook
\end{hscode}\resethooks
\end{spacing}
\noindent
Moreover,\footnote{Mathematically, the \ensuremath{(\cdot)} equation says that \ensuremath{\Varid{at}_\epsilon} is a left \ensuremath{\Varid{b}}-semiring homomorphism as well, since every semiring is a (left and right) semimodule over itself.
Likewise, the \ensuremath{(\mapsto)} equation might be written as ``\ensuremath{\Varid{null}\;\Varid{w}\mapsto\Varid{b}}'' or even ``\ensuremath{\Varid{at}_\epsilon\;\Varid{w}\mapsto\Varid{b}}''.
Unfortunately, these prettier formulations would lead to ambiguity during Haskell type inference.}
\begin{spacing}{1.3}
\begin{hscode}\SaveRestoreHook
\column{B}{@{}>{\hspre}l<{\hspost}@{}}%
\column{11}{@{}>{\hspre}l<{\hspost}@{}}%
\column{14}{@{}>{\hspre}l<{\hspost}@{}}%
\column{18}{@{}>{\hspre}l<{\hspost}@{}}%
\column{E}{@{}>{\hspre}l<{\hspost}@{}}%
\>[B]{}\Varid{at}_\epsilon\;(\Varid{s}\cdot\Varid{p}){}\<[18]%
\>[18]{}\mathrel{=}\Varid{s}\mathbin{*}\Varid{at}_\epsilon\;\Varid{p}{}\<[E]%
\\[\blanklineskip]%
\>[B]{}\Varid{at}_\epsilon\;({}\<[11]%
\>[11]{}[\mskip1.5mu \mskip1.5mu]{}\<[18]%
\>[18]{}\mapsto\Varid{b})\mathrel{=}\Varid{b}{}\<[E]%
\\
\>[B]{}\Varid{at}_\epsilon\;(\Varid{c}{}\<[11]%
\>[11]{}\mathbin{:}{}\<[14]%
\>[14]{}\Varid{cs}{}\<[18]%
\>[18]{}\mapsto\Varid{b})\mathrel{=}\mathrm{0}{}\<[E]%
\ColumnHook
\end{hscode}\resethooks
\end{spacing}
\vspace{-2ex}
\noindent
\end{lemma}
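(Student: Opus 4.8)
The plan is to verify each equation by unfolding the definition \ensuremath{\Varid{at}_\epsilon\;\Varid{f}\mathrel{=}\Varid{f}\;\varepsilon} (with \ensuremath{\varepsilon\mathrel{=}[\mskip1.5mu \mskip1.5mu]} for lists) together with the monoid-semiring operations of \figref{monoid semiring} and the definition of \ensuremath{(\mapsto)}. All but two of the equations are immediate. Since \ensuremath{\mathrm{0}}, \ensuremath{(\mathbin{+})}, and \ensuremath{(\cdot)} on functions are pointwise, we get \ensuremath{\Varid{at}_\epsilon\;\mathrm{0}\mathrel{=}\mathrm{0}\;\varepsilon\mathrel{=}\mathrm{0}}, \ensuremath{\Varid{at}_\epsilon\;(\Varid{p}\mathbin{+}\Varid{q})\mathrel{=}\Varid{p}\;\varepsilon\mathbin{+}\Varid{q}\;\varepsilon}, and \ensuremath{\Varid{at}_\epsilon\;(\Varid{s}\cdot\Varid{p})\mathrel{=}\Varid{s}\mathbin{*}\Varid{p}\;\varepsilon} directly. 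The two singleton equations follow by evaluating \ensuremath{(\Varid{w}\mapsto\Varid{b})\;[\mskip1.5mu \mskip1.5mu]}, which returns \ensuremath{\Varid{b}} exactly when \ensuremath{\Varid{w}\mathrel{=}[\mskip1.5mu \mskip1.5mu]} and \ensuremath{\mathrm{0}} otherwise; the unit equation \ensuremath{\Varid{at}_\epsilon\;\mathrm{1}\mathrel{=}\mathrm{1}} is then just the \ensuremath{[\mskip1.5mu \mskip1.5mu]\mapsto\Varid{b}} case at \ensuremath{\Varid{b}\mathrel{=}\mathrm{1}}, since \ensuremath{\mathrm{1}\mathrel{=}\varepsilon\mapsto\mathrm{1}}.

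The multiplicative equation is where the structure of the free monoid enters. Unfolding convolution gives \ensuremath{\Varid{at}_\epsilon\;(\Varid{p}\mathbin{*}\Varid{q})\mathrel{=}(\Varid{p}\mathbin{*}\Varid{q})\;\varepsilon\mathrel{=}\sum_{\Varid{u} \diamond \Varid{v}\mathrel{=}\varepsilon}\Varid{p}\;\Varid{u}\mathbin{*}\Varid{q}\;\Varid{v}}. The key observation is that in the list monoid the empty list admits only the trivial splitting: \ensuremath{\Varid{u} \diamond \Varid{v}\mathrel{=}[\mskip1.5mu \mskip1.5mu]} holds iff \ensuremath{\Varid{u}\mathrel{=}\Varid{v}\mathrel{=}[\mskip1.5mu \mskip1.5mu]}, equivalently \ensuremath{\Varid{splits}\;[\mskip1.5mu \mskip1.5mu]\mathrel{=}[\mskip1.5mu ([\mskip1.5mu \mskip1.5mu],[\mskip1.5mu \mskip1.5mu])\mskip1.5mu]}. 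The sum therefore collapses to the single term \ensuremath{\Varid{p}\;\varepsilon\mathbin{*}\Varid{q}\;\varepsilon\mathrel{=}\Varid{at}_\epsilon\;\Varid{p}\mathbin{*}\Varid{at}_\epsilon\;\Varid{q}}.

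For the star equation I would use the series characterization \ensuremath{\closure{\Varid{p}}\mathrel{=}\sum_{\Varid{i}\geq 0}\Varid{p}^{\Varid{i}}} from \secref{Star Semirings}. Because the additive structure on functions is pointwise, evaluation at \ensuremath{\varepsilon} commutes with this sum, so \ensuremath{\Varid{at}_\epsilon\;(\closure{\Varid{p}})\mathrel{=}\sum_{\Varid{i}}\Varid{at}_\epsilon\;(\Varid{p}^{\Varid{i}})}. A short induction on \ensuremath{\Varid{i}} — base case \ensuremath{\Varid{at}_\epsilon\;\mathrm{1}\mathrel{=}\mathrm{1}} and inductive step using the multiplicative equation just established — gives \ensuremath{\Varid{at}_\epsilon\;(\Varid{p}^{\Varid{i}})\mathrel{=}(\Varid{at}_\epsilon\;\Varid{p})^{\Varid{i}}}, whence \ensuremath{\sum_{\Varid{i}}(\Varid{at}_\epsilon\;\Varid{p})^{\Varid{i}}\mathrel{=}\closure{(\Varid{at}_\epsilon\;\Varid{p})}}.

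The main obstacle is precisely this star case: justifying that \ensuremath{\Varid{at}_\epsilon} commutes with the infinite sum defining \ensuremath{\closure{\Varid{p}}}. Appealing instead to the recursive law \ensuremath{\closure{\Varid{p}}\mathrel{=}\mathrm{1}\mathbin{+}\Varid{p}\mathbin{*}\closure{\Varid{p}}} only shows that \ensuremath{\Varid{at}_\epsilon\;(\closure{\Varid{p}})} solves the affine equation \ensuremath{\Varid{x}\mathrel{=}\mathrm{1}\mathbin{+}\Varid{at}_\epsilon\;\Varid{p}\mathbin{*}\Varid{x}}, as does \ensuremath{\closure{(\Varid{at}_\epsilon\;\Varid{p})}} by \lemref{affine over semiring}; without a uniqueness-of-solutions hypothesis this does not close the argument. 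This is why I favour the explicit series expansion, where the pointwise definition of sums makes the interchange purely definitional.
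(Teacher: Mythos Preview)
Your proposal is correct and follows essentially the same route as the paper: pointwise operations are handled by direct evaluation, multiplication uses the unique splitting of \ensuremath{[\mskip1.5mu \mskip1.5mu]}, and the star case is handled via the series expansion \ensuremath{\closure{\Varid{p}}\mathrel{=}\sum_{\Varid{i}}\Varid{p}^{\Varid{i}}} together with the already-established semiring homomorphism properties. Your explicit separation of ``commute with the infinite sum'' from ``preserve powers'' in the star case, and your remark on why the recursive law alone does not suffice, are more careful than the paper's one-line justification, which simply cites ``\ensuremath{\Varid{at}_\epsilon} is a semiring homomorphism (above)'' for the combined step.
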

\begin{lemma}[\provedIn{lemma:deriv [c] -> b}, generalizing Lemma 3.1 of \citet{Brzozowski64}]\lemlabel{deriv [c] -> b}
Differentiation has the following properties:
\notefoot{If I replace application to \ensuremath{\Varid{c}} by indexing by \ensuremath{\Varid{c}} (i.e., \ensuremath{(\mathbin{!}{}\;\Varid{c})}), will this lemma hold for all of the representations? I suspect so. Idea: Define $\derivOp_c\,p = \derivOp\,p\:!\:c$.}
\begin{spacing}{1.3}
\begin{hscode}\SaveRestoreHook
\column{B}{@{}>{\hspre}l<{\hspost}@{}}%
\column{11}{@{}>{\hspre}c<{\hspost}@{}}%
\column{11E}{@{}l@{}}%
\column{13}{@{}>{\hspre}l<{\hspost}@{}}%
\column{14}{@{}>{\hspre}l<{\hspost}@{}}%
\column{16}{@{}>{\hspre}l<{\hspost}@{}}%
\column{21}{@{}>{\hspre}l<{\hspost}@{}}%
\column{22}{@{}>{\hspre}l<{\hspost}@{}}%
\column{E}{@{}>{\hspre}l<{\hspost}@{}}%
\>[B]{}\derivOp\;\mathrm{0}\;{}\<[14]%
\>[14]{}\Varid{c}{}\<[22]%
\>[22]{}\mathrel{=}\mathrm{0}{}\<[E]%
\\
\>[B]{}\derivOp\;\mathrm{1}\;{}\<[14]%
\>[14]{}\Varid{c}{}\<[22]%
\>[22]{}\mathrel{=}\mathrm{0}{}\<[E]%
\\
\>[B]{}\derivOp\;(\Varid{p}{}\<[11]%
\>[11]{}\mathbin{+}{}\<[11E]%
\>[16]{}\Varid{q})\;\Varid{c}{}\<[22]%
\>[22]{}\mathrel{=}\derivOp\;\Varid{p}\;\Varid{c}\mathbin{+}\derivOp\;\Varid{q}\;\Varid{c}{}\<[E]%
\\
\>[B]{}\derivOp\;(\Varid{p}{}\<[11]%
\>[11]{}\mathbin{*}{}\<[11E]%
\>[16]{}\Varid{q})\;\Varid{c}{}\<[22]%
\>[22]{}\mathrel{=}\Varid{at}_\epsilon\;\Varid{p}\cdot\derivOp\;\Varid{q}\;\Varid{c}\mathbin{+}\derivOp\;\Varid{p}\;\Varid{c}\mathbin{*}\Varid{q}{}\<[E]%
\\
\>[B]{}\derivOp\;(\closure{\Varid{p}})\;\Varid{c}{}\<[22]%
\>[22]{}\mathrel{=}\closure{(\Varid{at}_\epsilon\;\Varid{p})}\cdot\derivOp\;\Varid{p}\;\Varid{c}\mathbin{*}\closure{\Varid{p}}{}\<[E]%
\\
\>[B]{}\derivOp\;(\Varid{s}\cdot\Varid{p})\;\Varid{c}{}\<[22]%
\>[22]{}\mathrel{=}\Varid{s}\cdot\derivOp\;\Varid{p}\;\Varid{c}{}\<[E]%
\\[\blanklineskip]%
\>[B]{}\derivOp\;({}\<[13]%
\>[13]{}[\mskip1.5mu \mskip1.5mu]{}\<[21]%
\>[21]{}\mapsto\Varid{b})\mathrel{=}\lambda\, \Varid{c}\to \mathrm{0}{}\<[E]%
\\
\>[B]{}\derivOp\;(\Varid{c'}{}\<[13]%
\>[13]{}\mathbin{:}{}\<[16]%
\>[16]{}\Varid{cs'}{}\<[21]%
\>[21]{}\mapsto\Varid{b})\mathrel{=}\Varid{c'}\mapsto\Varid{cs'}\mapsto\Varid{b}{}\<[E]%
\ColumnHook
\end{hscode}\resethooks
\end{spacing}
\vspace{-2ex}
\end{lemma}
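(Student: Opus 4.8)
The plan is to verify each equation by unfolding the definitions of $\derivOp$ and $\Varid{at}_\epsilon$ against the monoid-semiring operations on $[c]\to b$. Recall that $\derivOp\;f\;c \mathrel{=} \lambda\,cs \to f\;(c\cons cs)$ and $\Varid{at}_\epsilon\;f \mathrel{=} f\;\varepsilon$, so $\derivOp\;f\;c$ is simply $f$ precomposed with $(c\cons)$. The additive equations ($\derivOp\;0\;c \mathrel{=} 0$ and $\derivOp\;(p\mathbin{+}q)\;c \mathrel{=} \derivOp\;p\;c \mathbin{+} \derivOp\;q\;c$) and the scaling equation ($\derivOp\;(s\cdot p)\;c \mathrel{=} s\cdot\derivOp\;p\;c$) are immediate pointwise calculations, since $0$, $(\mathbin{+})$, and $(\cdot)$ are all defined pointwise on functions. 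The equation $\derivOp\;1\;c \mathrel{=} 0$ holds because $1 \mathrel{=} \varepsilon\mapsto 1$ sends every nonempty list to $0$, and the two $(\mapsto)$ equations follow by expanding $(\mapsto)$ and casing on whether the queried string is empty: in the nonempty case the single equality test $(c\cons cs) \mathrel{=} (c'\cons cs')$ factors into a head test against $c'$ and a tail test against $cs'$, which is exactly the curried singleton $c'\mapsto cs'\mapsto b$.

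The central case is the product rule. I would compute $\derivOp\;(p\mathbin{*}q)\;c \mathrel{=} \lambda\,cs \to (p\mathbin{*}q)\;(c\cons cs)$ and expand the convolution $(p\mathbin{*}q)\;(c\cons cs)$ as a sum over all $(u,v)$ with $u\diamond v \mathrel{=} c\cons cs$. The crucial observation is that the splittings of a \emph{nonempty} list fall into exactly two families: the split with empty prefix, $u \mathrel{=} \varepsilon$ and $v \mathrel{=} c\cons cs$; and the splits with nonempty prefix, $u \mathrel{=} c\cons l$ and $v \mathrel{=} r$ with $(l,r)$ ranging over the splittings of $cs$. The first family contributes $p\;\varepsilon \mathbin{*} q\;(c\cons cs) \mathrel{=} \Varid{at}_\epsilon\;p \mathbin{*} (\derivOp\;q\;c)\;cs$, i.e.\ the scaled term $(\Varid{at}_\epsilon\;p\cdot\derivOp\;q\;c)\;cs$; the second reassembles into $((\derivOp\;p\;c)\mathbin{*}q)\;cs$. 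Summing and discharging the $cs$ argument yields the stated Leibniz-style identity. I expect this to be the main obstacle, as it requires justifying the two-way decomposition of list splittings carefully rather than by the informal enumeration above.

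Finally, the star rule reduces to the product rule. Starting from the defining law $\closure{p} \mathrel{=} 1 \mathbin{+} p\mathbin{*}\closure{p}$ and applying the already-established unit, additive, and product equations gives
\[
\derivOp\;(\closure{p})\;c \mathrel{=} \Varid{at}_\epsilon\;p \cdot \derivOp\;(\closure{p})\;c \mathbin{+} \derivOp\;p\;c \mathbin{*} \closure{p},
\]
an affine equation in the unknown $\derivOp\;(\closure{p})\;c$, with scalar coefficient $\Varid{at}_\epsilon\;p$ and constant term $\derivOp\;p\;c \mathbin{*} \closure{p}$, inside the left semimodule $[c]\to b$ over the star semiring $b$. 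Invoking \lemref{affine over semimodule} with $m \mathrel{=} \Varid{at}_\epsilon\;p$ delivers the solution $\closure{(\Varid{at}_\epsilon\;p)} \cdot (\derivOp\;p\;c \mathbin{*} \closure{p})$, which matches the claimed formula; the grouping is immaterial thanks to the semimodule–semiring compatibility $s\cdot(x\mathbin{*}y) \mathrel{=} (s\cdot x)\mathbin{*}y$, itself a one-line consequence of the convolution definition.
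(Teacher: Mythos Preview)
Your proposal is correct and follows essentially the same route as the paper. The one presentational difference worth noting is in the product rule: the paper works with the sum-of-singletons form $f*g = \sum_{u,v}\,u\diamond v \mapsto f\,u * g\,v$, splits by whether $u$ is empty, and then invokes additivity of $\derivOp$ together with the $(\mapsto)$ cases to simplify each addend; you instead evaluate $(p*q)(c\cons cs)$ directly via the pointwise convolution formula and split the splittings of $c\cons cs$. This is the same decomposition seen from two angles, and your version is arguably more self-contained since it avoids the paper's forward reference to the $(\mapsto)$ rules. The star case in both proofs reduces to the product rule and appeals to \lemref{affine over semimodule} in the same way.
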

Although \ensuremath{\derivOp\;\Varid{p}} is defined as a \emph{function} from leading symbols, it could instead be another representation with function-like semantics, such as as \ensuremath{\Varid{h}\;\Varid{b}} for an appropriate functor \ensuremath{\Varid{h}}.
To relate \ensuremath{\Varid{h}} to the choice of alphabet \ensuremath{\Varid{c}}, introduce a type family:
\begin{hscode}\SaveRestoreHook
\column{B}{@{}>{\hspre}l<{\hspost}@{}}%
\column{26}{@{}>{\hspre}l<{\hspost}@{}}%
\column{E}{@{}>{\hspre}l<{\hspost}@{}}%
\>[B]{}\mathbf{type}\;\mathbf{family}\;\Conid{Key}\;(\Varid{h}\mathbin{::}\Conid{Type}\to \Conid{Type})\mathbin{::}\Conid{Type}{}\<[E]%
\\[\blanklineskip]%
\>[B]{}\mathbf{type}\;\mathbf{instance}\;\Conid{Key}\;((\to )\;{}\<[26]%
\>[26]{}\Varid{a})\mathrel{=}\Varid{a}{}\<[E]%
\\
\>[B]{}\mathbf{type}\;\mathbf{instance}\;\Conid{Key}\;(\Conid{Map}\;{}\<[26]%
\>[26]{}\Varid{a})\mathrel{=}\Varid{a}{}\<[E]%
\ColumnHook
\end{hscode}\resethooks
Generalizing in this way (with functions as a special case) enables convenient memoization, which has been found to be quite useful in practice for derivative-based parsing \citep{Might2010YaccID}.
A few generalizations to the equations in \lemref{deriv [c] -> b} suffice to generalize from \ensuremath{\Varid{c}\to ([\mskip1.5mu \Varid{c}\mskip1.5mu]\to \Varid{b})} to \ensuremath{\Varid{h}\;([\mskip1.5mu \Varid{c}\mskip1.5mu]\to \Varid{b})} \seeproof{lemma:deriv [c] -> b}.
We must assume that \ensuremath{\Conid{Key}\;\Varid{h}\mathrel{=}\Varid{c}} and that \ensuremath{\Varid{h}} is an ``additive functor'', i.e., \ensuremath{\forall \Varid{b}\hsforall \hsdot{\circ }{.\:}\Conid{Additive}\;\Varid{b}\Rightarrow \Conid{Additive}\;(\Varid{h}\;\Varid{b})} with \ensuremath{(\mathbin{!})} for \ensuremath{\Varid{h}} being an additive monoid homomorphism.
\begin{spacing}{1.3}
\begin{hscode}\SaveRestoreHook
\column{B}{@{}>{\hspre}l<{\hspost}@{}}%
\column{11}{@{}>{\hspre}c<{\hspost}@{}}%
\column{11E}{@{}l@{}}%
\column{16}{@{}>{\hspre}l<{\hspost}@{}}%
\column{20}{@{}>{\hspre}l<{\hspost}@{}}%
\column{E}{@{}>{\hspre}l<{\hspost}@{}}%
\>[B]{}\derivOp\;\mathrm{0}{}\<[20]%
\>[20]{}\mathrel{=}\mathrm{0}{}\<[E]%
\\
\>[B]{}\derivOp\;\mathrm{1}{}\<[20]%
\>[20]{}\mathrel{=}\mathrm{0}{}\<[E]%
\\
\>[B]{}\derivOp\;(\Varid{p}{}\<[11]%
\>[11]{}\mathbin{+}{}\<[11E]%
\>[16]{}\Varid{q}){}\<[20]%
\>[20]{}\mathrel{=}\derivOp\;\Varid{p}\mathbin{+}\derivOp\;\Varid{q}{}\<[E]%
\\
\>[B]{}\derivOp\;(\Varid{p}{}\<[11]%
\>[11]{}\mathbin{*}{}\<[11E]%
\>[16]{}\Varid{q}){}\<[20]%
\>[20]{}\mathrel{=}\Varid{fmap}\;(\Varid{at}_\epsilon\;\Varid{p}\;{}\cdot)\;(\derivOp\;\Varid{q})\mathbin{+}\Varid{fmap}\;(\mathbin{*}\Varid{q})\;(\derivOp\;\Varid{p}){}\<[E]%
\\
\>[B]{}\derivOp\;(\closure{\Varid{p}}){}\<[20]%
\>[20]{}\mathrel{=}\Varid{fmap}\;(\lambda\, \Varid{d}\to \closure{(\Varid{at}_\epsilon\;\Varid{p})}\cdot\Varid{d}\mathbin{*}\Conid{Star}\;\Varid{p})\;(\derivOp\;\Varid{p}){}\<[E]%
\\
\>[B]{}\derivOp\;(\Varid{s}\cdot\Varid{p}){}\<[20]%
\>[20]{}\mathrel{=}\Varid{fmap}\;(\Varid{s}\;{}\cdot)\;(\derivOp\;\Varid{p}){}\<[E]%
\ColumnHook
\end{hscode}\resethooks
\vspace{-4ex}
\end{spacing}

\note{Consider reexpressing \lemref{deriv [c] -> b} in terms of \ensuremath{(\mathbin{!})}. Maybe even generalize \ensuremath{(\mathrel\triangleleft)} to indexable functors.}

\begin{theorem}[\provedIn{theorem:semiring decomp [c] -> b}]\thmlabel{semiring decomp [c] -> b}
The following properties hold (in the generalized setting of a functor \ensuremath{\Varid{h}} with \ensuremath{\Conid{Key}\;\Varid{h}\mathrel{=}\Varid{c}}):
\begin{spacing}{1.4}
\begin{hscode}\SaveRestoreHook
\column{B}{@{}>{\hspre}l<{\hspost}@{}}%
\column{5}{@{}>{\hspre}c<{\hspost}@{}}%
\column{5E}{@{}l@{}}%
\column{7}{@{}>{\hspre}l<{\hspost}@{}}%
\column{9}{@{}>{\hspre}l<{\hspost}@{}}%
\column{16}{@{}>{\hspre}l<{\hspost}@{}}%
\column{28}{@{}>{\hspre}l<{\hspost}@{}}%
\column{34}{@{}>{\hspre}c<{\hspost}@{}}%
\column{34E}{@{}l@{}}%
\column{39}{@{}>{\hspre}l<{\hspost}@{}}%
\column{E}{@{}>{\hspre}l<{\hspost}@{}}%
\>[B]{}\mathrm{0}{}\<[7]%
\>[7]{}\mathrel{=}\mathrm{0}{}\<[16]%
\>[16]{}\mathrel\triangleleft\mathrm{0}{}\<[E]%
\\
\>[B]{}\mathrm{1}{}\<[7]%
\>[7]{}\mathrel{=}\mathrm{1}{}\<[16]%
\>[16]{}\mathrel\triangleleft\mathrm{0}{}\<[E]%
\\
\>[B]{}(\Varid{a}{}\<[5]%
\>[5]{}\mathrel\triangleleft{}\<[5E]%
\>[9]{}\Varid{dp})\mathbin{+}(\Varid{b}\mathrel\triangleleft\Varid{dq}){}\<[28]%
\>[28]{}\mathrel{=}\Varid{a}{}\<[34]%
\>[34]{}\mathbin{+}{}\<[34E]%
\>[39]{}\Varid{b}\mathrel\triangleleft\Varid{dp}\mathbin{+}\Varid{dq}{}\<[E]%
\\
\>[B]{}(\Varid{a}{}\<[5]%
\>[5]{}\mathrel\triangleleft{}\<[5E]%
\>[9]{}\Varid{dp})\mathbin{*}\Varid{q}\mathrel{=}\Varid{a}\cdot\Varid{q}\mathbin{+}(\mathrm{0}\mathrel\triangleleft\Varid{fmap}\;(\mathbin{*}{}\;\Varid{q})\;\Varid{dp}){}\<[E]%
\\
\>[B]{}\closure{(\Varid{a}\mathrel\triangleleft\Varid{dp})}\mathrel{=}\Varid{q}\;\mathbf{where}\;\Varid{q}\mathrel{=}\closure{\Varid{a}}\cdot(\mathrm{1}\mathrel\triangleleft\Varid{fmap}\;(\mathbin{*}{}\;\Varid{q})\;\Varid{dp}){}\<[E]%
\\
\>[B]{}\Varid{s}\cdot(\Varid{a}\mathrel\triangleleft\Varid{dp})\mathrel{=}\Varid{s}\mathbin{*}\Varid{a}\mathrel\triangleleft\Varid{fmap}\;(\Varid{s}\;{}\cdot)\;\Varid{dp}{}\<[E]%
\\
\>[B]{}\Varid{w}\mapsto\Varid{b}\mathrel{=}\Varid{foldr}\;(\lambda\, \Varid{c}\;\Varid{t}\to \mathrm{0}\mathrel\triangleleft\Varid{c}\mapsto\Varid{t})\;(\Varid{b}\mathrel\triangleleft\mathrm{0})\;\Varid{w}{}\<[E]%
\ColumnHook
\end{hscode}\resethooks
\vspace{-6ex}
\end{spacing}
\end{theorem}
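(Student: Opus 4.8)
The plan is to turn the decomposition $\Varid{f}=\Varid{at}_\epsilon\;\Varid{f}\triangleleft\derivOp\;\Varid{f}$ of \lemref{decomp ([c] -> b)} into an \emph{extensionality principle}. Since that lemma also supplies $\Varid{at}_\epsilon\;(\Varid{b}\triangleleft\Varid{h})=\Varid{b}$ and $\derivOp\;(\Varid{b}\triangleleft\Varid{h})=\Varid{h}$, two values $\Varid{x},\Varid{y}$ are equal exactly when $\Varid{at}_\epsilon\;\Varid{x}=\Varid{at}_\epsilon\;\Varid{y}$ and $\derivOp\;\Varid{x}=\derivOp\;\Varid{y}$ (the backward direction reconstructs $\Varid{x}=\Varid{at}_\epsilon\;\Varid{x}\triangleleft\derivOp\;\Varid{x}=\Varid{at}_\epsilon\;\Varid{y}\triangleleft\derivOp\;\Varid{y}=\Varid{y}$). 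So to prove each displayed equation I would show the two sides agree after applying $\Varid{at}_\epsilon$ and after applying $\derivOp$, reading off the needed facts from the homomorphism properties of $\Varid{at}_\epsilon$ (\lemref{atEps [c] -> b}) and of $\derivOp$ (the functor-generalized form of \lemref{deriv [c] -> b}).

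For the first six equations this is essentially mechanical. The identities $\mathrm{0}=\mathrm{0}\triangleleft\mathrm{0}$ and $\mathrm{1}=\mathrm{1}\triangleleft\mathrm{0}$ follow by reconstructing from $\Varid{at}_\epsilon\;\mathrm{0}=\mathrm{0}$, $\derivOp\;\mathrm{0}=\mathrm{0}$, $\Varid{at}_\epsilon\;\mathrm{1}=\mathrm{1}$, $\derivOp\;\mathrm{1}=\mathrm{0}$. The additive law uses that both $\Varid{at}_\epsilon$ and $\derivOp$ are additive, giving projections $\Varid{a}\mathbin{+}\Varid{b}$ and $\Varid{dp}\mathbin{+}\Varid{dq}$. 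For the product law, writing $\Varid{p}=\Varid{a}\triangleleft\Varid{dp}$, the $\Varid{at}_\epsilon$ projection is $\Varid{a}\mathbin{*}\Varid{at}_\epsilon\;\Varid{q}$ (multiplicativity of $\Varid{at}_\epsilon$ together with $\Varid{at}_\epsilon\;(\Varid{s}\cdot\Varid{p})=\Varid{s}\mathbin{*}\Varid{at}_\epsilon\;\Varid{p}$) and the $\derivOp$ projection is $\Varid{fmap}\;(\Varid{a}\;\cdot)\;(\derivOp\;\Varid{q})\mathbin{+}\Varid{fmap}\;(\mathbin{*}\Varid{q})\;\Varid{dp}$ from the generalized product rule; these match the projections of $\Varid{a}\cdot\Varid{q}\mathbin{+}(\mathrm{0}\triangleleft\Varid{fmap}\;(\mathbin{*}\Varid{q})\;\Varid{dp})$ via $\Varid{at}_\epsilon\;(\mathrm{0}\triangleleft\Varid{h})=\mathrm{0}$ and $\derivOp\;(\mathrm{0}\triangleleft\Varid{h})=\Varid{h}$. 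The scaling law follows identically from $\Varid{at}_\epsilon\;(\Varid{s}\cdot\Varid{p})=\Varid{s}\mathbin{*}\Varid{at}_\epsilon\;\Varid{p}$ and $\derivOp\;(\Varid{s}\cdot\Varid{p})=\Varid{fmap}\;(\Varid{s}\;\cdot)\;(\derivOp\;\Varid{p})$.

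The star equation is the main obstacle, because its right-hand side is defined recursively (the ``$\mathbf{where}\ \Varid{q}=\cdots$''). I would set $\Varid{q}=\closure{(\Varid{a}\triangleleft\Varid{dp})}$ and verify that this value \emph{satisfies} the stated recursion, again by comparing projections. The $\Varid{at}_\epsilon$ side is easy: the star-homomorphism property gives $\Varid{at}_\epsilon\;\closure{(\Varid{a}\triangleleft\Varid{dp})}=\closure{\Varid{a}}$, while the right side projects to $\closure{\Varid{a}}\mathbin{*}\Varid{at}_\epsilon\;(\mathrm{1}\triangleleft\cdots)=\closure{\Varid{a}}\mathbin{*}\mathrm{1}=\closure{\Varid{a}}$. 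The $\derivOp$ side carries the real work: the generalized derivative-of-star rule rewrites $\derivOp\;\closure{(\Varid{a}\triangleleft\Varid{dp})}$ as $\Varid{fmap}\;(\lambda\,\Varid{d}\to(\closure{\Varid{a}}\cdot\Varid{d})\mathbin{*}\Varid{q})\;\Varid{dp}$, whereas expanding the right side through $\derivOp\;(\Varid{s}\cdot\Varid{p})=\Varid{fmap}\;(\Varid{s}\;\cdot)\;(\derivOp\;\Varid{p})$, $\derivOp\;(\mathrm{1}\triangleleft\Varid{h})=\Varid{h}$, and functoriality yields $\Varid{fmap}\;(\lambda\,\Varid{d}\to\closure{\Varid{a}}\cdot(\Varid{d}\mathbin{*}\Varid{q}))\;\Varid{dp}$. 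Reconciling the two requires the scalar/multiplication compatibility law $(\Varid{s}\cdot\Varid{x})\mathbin{*}\Varid{y}=\Varid{s}\cdot(\Varid{x}\mathbin{*}\Varid{y})$, and I expect isolating exactly this law (and confirming it for the representations in play) to be the one nonroutine step.

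Finally, the singleton equation is proved by induction on the list $\Varid{w}$. The base case $[\mskip1.5mu\mskip1.5mu]\mapsto\Varid{b}=\Varid{b}\triangleleft\mathrm{0}$ reduces to $\Varid{at}_\epsilon\;([\mskip1.5mu\mskip1.5mu]\mapsto\Varid{b})=\Varid{b}$ and $\derivOp\;([\mskip1.5mu\mskip1.5mu]\mapsto\Varid{b})=\mathrm{0}$; the step case $(\Varid{c'}\mathbin{:}\Varid{cs'})\mapsto\Varid{b}=\mathrm{0}\triangleleft(\Varid{c'}\mapsto\Varid{cs'}\mapsto\Varid{b})$ reduces to $\Varid{at}_\epsilon\;((\Varid{c'}\mathbin{:}\Varid{cs'})\mapsto\Varid{b})=\mathrm{0}$ and $\derivOp\;((\Varid{c'}\mathbin{:}\Varid{cs'})\mapsto\Varid{b})=\Varid{c'}\mapsto\Varid{cs'}\mapsto\Varid{b}$, after which unfolding $\Varid{foldr}$ and applying the inductive hypothesis lines the two sides up exactly. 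Apart from the compatibility law noted for the star case, every step is a direct substitution of the two cited lemmas into the extensionality principle.
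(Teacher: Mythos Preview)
Your proposal is correct and follows essentially the same route as the paper: apply the decomposition $\Varid{f}=\Varid{at}_\epsilon\,\Varid{f}\triangleleft\derivOp\,\Varid{f}$ from \lemref{decomp ([c] -> b)} and read off the two projections from \lemreftwo{atEps [c] -> b}{deriv [c] -> b} (the paper packages these as an auxiliary ``\ensuremath{\Varid{at}_\epsilon} and \ensuremath{\derivOp} via \ensuremath{(\mathrel\triangleleft)}'' lemma, but that is exactly your extensionality principle). The one place you diverge slightly is the star case: the paper's derivative-of-star rule is already in the form $\closure{\Varid{a}}\cdot(\Varid{d}\mathbin{*}\Varid{q})$ (it arises from \lemref{affine over semimodule}), so factoring $\closure{\Varid{a}}$ out via the \ensuremath{(\cdot)} equation needs no extra compatibility law; your reading as $(\closure{\Varid{a}}\cdot\Varid{d})\mathbin{*}\Varid{q}$ forces you to invoke $(\Varid{s}\cdot\Varid{x})\mathbin{*}\Varid{y}=\Varid{s}\cdot(\Varid{x}\mathbin{*}\Varid{y})$, which is true here but unnecessary.
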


\sectionl{Regular Expressions}

\lemreftwo{atEps [c] -> b}{deriv [c] -> b} generalize and were inspired by a technique of \citet{Brzozowski64} for recognizing regular languages.
\figrefdef{RegExp}{Semiring-generalized regular expressions denoting \ensuremath{[\mskip1.5mu \Varid{c}\mskip1.5mu]\to \Varid{b}}}{
\begin{hscode}\SaveRestoreHook
\column{B}{@{}>{\hspre}l<{\hspost}@{}}%
\column{3}{@{}>{\hspre}l<{\hspost}@{}}%
\column{8}{@{}>{\hspre}l<{\hspost}@{}}%
\column{11}{@{}>{\hspre}c<{\hspost}@{}}%
\column{11E}{@{}l@{}}%
\column{12}{@{}>{\hspre}c<{\hspost}@{}}%
\column{12E}{@{}l@{}}%
\column{16}{@{}>{\hspre}c<{\hspost}@{}}%
\column{16E}{@{}l@{}}%
\column{17}{@{}>{\hspre}l<{\hspost}@{}}%
\column{18}{@{}>{\hspre}l<{\hspost}@{}}%
\column{19}{@{}>{\hspre}l<{\hspost}@{}}%
\column{22}{@{}>{\hspre}l<{\hspost}@{}}%
\column{23}{@{}>{\hspre}l<{\hspost}@{}}%
\column{27}{@{}>{\hspre}c<{\hspost}@{}}%
\column{27E}{@{}l@{}}%
\column{30}{@{}>{\hspre}l<{\hspost}@{}}%
\column{42}{@{}>{\hspre}c<{\hspost}@{}}%
\column{42E}{@{}l@{}}%
\column{48}{@{}>{\hspre}l<{\hspost}@{}}%
\column{E}{@{}>{\hspre}l<{\hspost}@{}}%
\>[B]{}\mathbf{data}\;\Conid{RegExp}\;\Varid{h}\;\Varid{b}{}\<[27]%
\>[27]{}\mathrel{=}{}\<[27E]%
\>[30]{}\Conid{Char}\;(\Conid{Key}\;\Varid{h}){}\<[E]%
\\
\>[27]{}\mid {}\<[27E]%
\>[30]{}\Conid{Value}\;\Varid{b}{}\<[E]%
\\
\>[27]{}\mid {}\<[27E]%
\>[30]{}\Conid{RegExp}\;\Varid{h}\;\Varid{b}{}\<[42]%
\>[42]{}\mathbin{:\!\!+}{}\<[42E]%
\>[48]{}\Conid{RegExp}\;\Varid{h}\;\Varid{b}{}\<[E]%
\\
\>[27]{}\mid {}\<[27E]%
\>[30]{}\Conid{RegExp}\;\Varid{h}\;\Varid{b}{}\<[42]%
\>[42]{}\mathbin{:\!\!\conv}{}\<[42E]%
\>[48]{}\Conid{RegExp}\;\Varid{h}\;\Varid{b}{}\<[E]%
\\
\>[27]{}\mid {}\<[27E]%
\>[30]{}\Conid{Star}\;(\Conid{RegExp}\;\Varid{h}\;\Varid{b}){}\<[E]%
\\
\>[B]{}\hsindent{3}{}\<[3]%
\>[3]{}\mathbf{deriving}\;\Conid{Functor}{}\<[E]%
\\[\blanklineskip]%
\>[B]{}\mathbf{instance}\;\Conid{Additive}\;\Varid{b}\Rightarrow \Conid{Additive}\;(\Conid{RegExp}\;\Varid{h}\;\Varid{b})\;\mathbf{where}{}\<[E]%
\\
\>[B]{}\hsindent{3}{}\<[3]%
\>[3]{}\mathrm{0}\mathrel{=}\Conid{Value}\;\mathrm{0}{}\<[E]%
\\
\>[B]{}\hsindent{3}{}\<[3]%
\>[3]{}(\mathbin{+})\mathrel{=}(\mathbin{:\!\!+}){}\<[E]%
\\[\blanklineskip]%
\>[B]{}\mathbf{instance}\;\Conid{Semiring}\;\Varid{b}\Rightarrow \Conid{Semiring}\;(\Conid{RegExp}\;\Varid{h}\;\Varid{b})\;\mathbf{where}{}\<[E]%
\\
\>[B]{}\hsindent{3}{}\<[3]%
\>[3]{}(\mathbin{\hat{\cdot}})\;\Varid{b}\mathrel{=}\Varid{fmap}\;(\Varid{b}\;{}\mathbin{*}){}\<[E]%
\\[\blanklineskip]%
\>[B]{}\mathbf{instance}\;\Conid{Semiring}\;\Varid{b}\Rightarrow \Conid{Semiring}\;(\Conid{RegExp}\;\Varid{h}\;\Varid{b})\;\mathbf{where}{}\<[E]%
\\
\>[B]{}\hsindent{3}{}\<[3]%
\>[3]{}\mathrm{1}\mathrel{=}\Conid{Value}\;\mathrm{1}{}\<[E]%
\\
\>[B]{}\hsindent{3}{}\<[3]%
\>[3]{}(\mathbin{*})\mathrel{=}(\mathbin{:\!\!\conv}){}\<[E]%
\\[\blanklineskip]%
\>[B]{}\mathbf{instance}\;\Conid{Semiring}\;\Varid{b}\Rightarrow \Conid{StarSemiring}\;(\Conid{RegExp}\;\Varid{h}\;\Varid{b})\;\mathbf{where}{}\<[E]%
\\
\>[B]{}\hsindent{3}{}\<[3]%
\>[3]{}\closure{\Varid{e}}\mathrel{=}\Conid{Star}\;\Varid{e}{}\<[E]%
\\[\blanklineskip]%
\>[B]{}\mathbf{type}\;\Conid{FR}\;\Varid{h}\;\Varid{b}\mathrel{=}{}\<[16]%
\>[16]{}({}\<[16E]%
\>[19]{}\Conid{HasSingle}\;(\Conid{Key}\;\Varid{h})\;(\Conid{RegExp}\;\Varid{h}\;\Varid{b})\;(\Varid{h}\;(\Conid{RegExp}\;\Varid{h}\;\Varid{b})){}\<[E]%
\\
\>[16]{},{}\<[16E]%
\>[19]{}\Conid{Additive}\;(\Varid{h}\;(\Conid{RegExp}\;\Varid{h}\;\Varid{b})),\Conid{Functor}\;\Varid{h},\Conid{IsZero}\;\Varid{b},\Conid{IsOne}\;\Varid{b}){}\<[E]%
\\[\blanklineskip]%
\>[B]{}\mathbf{instance}\;(\Conid{FR}\;\Varid{h}\;\Varid{b},\Conid{StarSemiring}\;\Varid{b},\Varid{c}\mathbin{\sim}\Conid{Key}\;\Varid{h},\Conid{Eq}\;\Varid{c})\Rightarrow \Conid{Indexable}\;[\mskip1.5mu \Varid{c}\mskip1.5mu]\;\Varid{b}\;(\Conid{RegExp}\;\Varid{h}\;\Varid{b})\;\mathbf{where}{}\<[E]%
\\
\>[B]{}\hsindent{3}{}\<[3]%
\>[3]{}\Varid{e}\mathbin{!}\Varid{w}\mathrel{=}\Varid{at}_\epsilon\;(\Varid{foldl}\;((\mathbin{!})\hsdot{\circ }{.\:}\derivOp)\;\Varid{e}\;\Varid{w}){}\<[E]%
\\[\blanklineskip]%
\>[B]{}\mathbf{instance}\;(\Conid{FR}\;\Varid{h}\;\Varid{b},\Conid{StarSemiring}\;\Varid{b},\Varid{c}\mathbin{\sim}\Conid{Key}\;\Varid{h},\Conid{Eq}\;\Varid{c})\Rightarrow \Conid{HasSingle}\;[\mskip1.5mu \Varid{c}\mskip1.5mu]\;\Varid{b}\;(\Conid{RegExp}\;\Varid{h}\;\Varid{b})\;\mathbf{where}{}\<[E]%
\\
\>[B]{}\hsindent{3}{}\<[3]%
\>[3]{}\Varid{w}\mapsto\Varid{b}\mathrel{=}\Varid{b}\cdot\Varid{product}\;(\Varid{map}\;\Conid{Char}\;\Varid{w}){}\<[E]%
\\[\blanklineskip]%
\>[B]{}\Varid{at}_\epsilon\mathbin{::}\Conid{StarSemiring}\;\Varid{b}\Rightarrow \Conid{RegExp}\;\Varid{h}\;\Varid{b}\to \Varid{b}{}\<[E]%
\\
\>[B]{}\Varid{at}_\epsilon\;(\Conid{Char}\;\anonymous ){}\<[22]%
\>[22]{}\mathrel{=}\mathrm{0}{}\<[E]%
\\
\>[B]{}\Varid{at}_\epsilon\;(\Conid{Value}\;\Varid{b}){}\<[22]%
\>[22]{}\mathrel{=}\Varid{b}{}\<[E]%
\\
\>[B]{}\Varid{at}_\epsilon\;(\Varid{p}{}\<[11]%
\>[11]{}\mathbin{:\!\!+}{}\<[11E]%
\>[17]{}\Varid{q}){}\<[22]%
\>[22]{}\mathrel{=}\Varid{at}_\epsilon\;\Varid{p}\mathbin{+}\Varid{at}_\epsilon\;\Varid{q}{}\<[E]%
\\
\>[B]{}\Varid{at}_\epsilon\;(\Varid{p}{}\<[11]%
\>[11]{}\mathbin{:\!\!\conv}{}\<[11E]%
\>[17]{}\Varid{q}){}\<[22]%
\>[22]{}\mathrel{=}\Varid{at}_\epsilon\;\Varid{p}\mathbin{*}\Varid{at}_\epsilon\;\Varid{q}{}\<[E]%
\\
\>[B]{}\Varid{at}_\epsilon\;(\Conid{Star}\;\Varid{p}){}\<[22]%
\>[22]{}\mathrel{=}\closure{(\Varid{at}_\epsilon\;\Varid{p})}{}\<[E]%
\\[\blanklineskip]%
\>[B]{}\derivOp\mathbin{::}(\Conid{FR}\;\Varid{h}\;\Varid{b},\Conid{StarSemiring}\;\Varid{b})\Rightarrow \Conid{RegExp}\;\Varid{h}\;\Varid{b}\to \Varid{h}\;(\Conid{RegExp}\;\Varid{h}\;\Varid{b}){}\<[E]%
\\
\>[B]{}\derivOp\;{}\<[8]%
\>[8]{}(\Conid{Char}\;\Varid{c}){}\<[23]%
\>[23]{}\mathrel{=}\Varid{single}\;\Varid{c}{}\<[E]%
\\
\>[B]{}\derivOp\;{}\<[8]%
\>[8]{}(\Conid{Value}\;\anonymous ){}\<[23]%
\>[23]{}\mathrel{=}\mathrm{0}{}\<[E]%
\\
\>[B]{}\derivOp\;{}\<[8]%
\>[8]{}(\Varid{p}{}\<[12]%
\>[12]{}\mathbin{:\!\!+}{}\<[12E]%
\>[18]{}\Varid{q}){}\<[23]%
\>[23]{}\mathrel{=}\derivOp\;\Varid{p}\mathbin{+}\derivOp\;\Varid{q}{}\<[E]%
\\
\>[B]{}\derivOp\;{}\<[8]%
\>[8]{}(\Varid{p}\mathbin{:\!\!\conv}\Varid{q}){}\<[23]%
\>[23]{}\mathrel{=}\Varid{fmap}\;(\Varid{at}_\epsilon\;\Varid{p}\;{}\cdot)\;(\derivOp\;\Varid{q})\mathbin{+}\Varid{fmap}\;(\mathbin{*}{}\;\Varid{q})\;(\derivOp\;\Varid{p}){}\<[E]%
\\
\>[B]{}\derivOp\;{}\<[8]%
\>[8]{}(\Conid{Star}\;\Varid{p}){}\<[23]%
\>[23]{}\mathrel{=}\Varid{fmap}\;(\lambda\, \Varid{d}\to \closure{(\Varid{at}_\epsilon\;\Varid{p})}\cdot\Varid{d}\mathbin{*}\Conid{Star}\;\Varid{p})\;(\derivOp\;\Varid{p}){}\<[E]%
\ColumnHook
\end{hscode}\resethooks
\vspace{-4ex}
} generalizes regular expressions in the same way that \ensuremath{\Varid{a}\to \Varid{b}} generalizes \ensuremath{\Pow\;\Varid{a}}, to yield a value of type \ensuremath{\Varid{b}} (a star semiring).
The constructor \ensuremath{\Conid{Value}\;\Varid{b}} generalizes \ensuremath{\mathrm{0}} and \ensuremath{\mathrm{1}} to yield a semiring value.
\begin{theorem}\thmlabel{RegExp}
Given the definitions in \figref{RegExp}, \ensuremath{(\mathbin{!})} is a homomorphism with respect to each instantiated class.
\end{theorem}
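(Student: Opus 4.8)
The plan is to treat \ensuremath{(\mathbin{!})} as the denotation \ensuremath{\mu \mathbin{::}\Conid{RegExp}\;\Varid{h}\;\Varid{b}\to ([\mskip1.5mu \Varid{c}\mskip1.5mu]\to \Varid{b})} and to show it preserves each instantiated structure, with the \ensuremath{\Conid{Indexable}} part being the defining/naturality piece and hence immediate. First I would unfold the \ensuremath{\Varid{foldl}}-based definition \ensuremath{\Varid{e}\mathbin{!}\Varid{w}\mathrel{=}\Varid{at}_\epsilon\;(\Varid{foldl}\;((\mathbin{!})\circ \derivOp)\;\Varid{e}\;\Varid{w})} at the two shapes of \ensuremath{\Varid{w}} to obtain the two intertwining identities \ensuremath{(\mu\;\Varid{e})\;\varepsilon\mathrel{=}\Varid{at}_\epsilon\;\Varid{e}} and \ensuremath{\derivOp\;(\mu\;\Varid{e})\;\Varid{c}\mathrel{=}\mu\;(\derivOp\;\Varid{e}\mathbin{!}\Varid{c})}, where the left-hand \ensuremath{\Varid{at}_\epsilon} and \ensuremath{\derivOp} are the function-level operations of \secref{Decomposing Functions from Lists} and the right-hand ones are the \ensuremath{\Conid{RegExp}} operations of \figref{RegExp}. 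These two identities say precisely that \ensuremath{\mu} is a morphism of the \ensuremath{(\Varid{at}_\epsilon,\derivOp)} decomposition structure.

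The crux of the reduction is \lemref{decomp ([c] -> b)}: a function \ensuremath{[\mskip1.5mu \Varid{c}\mskip1.5mu]\to \Varid{b}} is determined by its \ensuremath{\Varid{at}_\epsilon} value together with its derivatives \ensuremath{\derivOp\;\cdot\;\Varid{c}}, so two such functions coincide once their \ensuremath{\Varid{at}_\epsilon} values agree and, for every \ensuremath{\Varid{c}}, their \ensuremath{\derivOp\;\cdot\;\Varid{c}} agree. Combining this with the intertwining identities, I would prove each homomorphism equation for \ensuremath{\mu} by induction on the length of the input list. The key observation is that the clauses defining \ensuremath{\Conid{RegExp}}'s \ensuremath{\Varid{at}_\epsilon} and \ensuremath{\derivOp} in \figref{RegExp} are term-for-term the function-level homomorphism equations already established in \lemref{atEps [c] -> b} and the generalized form of \lemref{deriv [c] -> b}, granted the standing assumptions that \ensuremath{\Conid{Key}\;\Varid{h}\mathrel{=}\Varid{c}} and that \ensuremath{(\mathbin{!})} for \ensuremath{\Varid{h}} is natural with respect to \ensuremath{\Varid{fmap}} and is an additive homomorphism. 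So each base case collapses to an \ensuremath{\Varid{at}_\epsilon} identity and each inductive step to a \ensuremath{\derivOp} identity on a strictly shorter list.

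Concretely I would run a single \emph{simultaneous} strong induction on \ensuremath{\Varid{length}\;\Varid{w}} covering all the homomorphism equations at once: \ensuremath{\Conid{Additive}} (\ensuremath{\mu\;\mathrm{0}\mathrel{=}\mathrm{0}} and \ensuremath{\mu\;(\Varid{p}\mathbin{+}\Varid{q})\mathrel{=}\mu\;\Varid{p}\mathbin{+}\mu\;\Varid{q}}), \ensuremath{\Conid{LeftSemimodule}} (\ensuremath{\mu\;(\Varid{s}\cdot\Varid{p})\mathrel{=}\Varid{s}\cdot\mu\;\Varid{p}}), \ensuremath{\Conid{Semiring}} (\ensuremath{\mu\;\mathrm{1}\mathrel{=}\mathrm{1}} and \ensuremath{\mu\;(\Varid{p}\mathbin{*}\Varid{q})\mathrel{=}\mu\;\Varid{p}\mathbin{*}\mu\;\Varid{q}}), \ensuremath{\Conid{StarSemiring}} (\ensuremath{\mu\;(\closure{\Varid{p}})\mathrel{=}\closure{(\mu\;\Varid{p})}}), and \ensuremath{\Conid{HasSingle}}. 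The additive, unit, and scaling cases are immediate from the corresponding \ensuremath{\Varid{at}_\epsilon} and \ensuremath{\derivOp} clauses. For \ensuremath{\Conid{HasSingle}} I would start from the defining equation \ensuremath{\Varid{w}\mapsto\Varid{b}\mathrel{=}\Varid{b}\cdot\Varid{product}\;(\Varid{map}\;\Conid{Char}\;\Varid{w})}, push \ensuremath{\mu} through it using the semiring and semimodule parts of the same induction together with the \ensuremath{\Conid{Char}}/\ensuremath{\Conid{Value}} base clauses, and match the result against the \ensuremath{\Varid{foldr}} characterization of \ensuremath{(\mapsto)} in \thmref{semiring decomp [c] -> b}.

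The main obstacle, as always with Brzozowski's method, is the multiplicative and closure cases. Differentiating a product yields \ensuremath{\derivOp\;(\Varid{p}\mathbin{*}\Varid{q})\;\Varid{c}\mathrel{=}\Varid{at}_\epsilon\;\Varid{p}\cdot\derivOp\;\Varid{q}\;\Varid{c}\mathbin{+}\derivOp\;\Varid{p}\;\Varid{c}\mathbin{*}\Varid{q}}, and the star clause produces an analogous self-referential expression, so the inductive step for \ensuremath{(\mathbin{*})} and \ensuremath{\closure{\cdot }} feeds back a mixture of scaling, addition, and multiplication of smaller \ensuremath{\Conid{RegExp}} values. This is exactly why the induction must be simultaneous over all the homomorphism properties rather than carried out one class at a time, and why it is well founded: each differentiation step peels one leading symbol, reducing \ensuremath{\Varid{length}\;\Varid{w}} by exactly one, so the pooled induction hypothesis at the shorter length already supplies the semiring, semimodule, and additive facts needed to rewrite the right-hand side; the self-reference to \ensuremath{\Conid{Star}\;\Varid{p}} in the closure case is harmless because it is only ever re-entered at strictly shorter inputs. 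Granting this, every case closes and \ensuremath{\mu} is a homomorphism for each instantiated class.
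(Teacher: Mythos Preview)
The paper states this theorem without proof: there is no \texttt{\textbackslash provedIn} reference and no corresponding appendix section. The theorem is evidently meant to follow directly from the preceding \lemreftwo{atEps [c] -> b}{deriv [c] -> b}, since the \ensuremath{\Varid{at}_\epsilon} and \ensuremath{\derivOp} clauses in \figref{RegExp} are, line for line, the conclusions of those lemmas transported to the syntactic setting, and \ensuremath{(\mathbin{!})} is defined exactly as \ensuremath{\Varid{at}_\epsilon\hsdot{\circ }{.\:}\derivOp^{\ast}}.

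Your proposal is a correct and careful unpacking of that implicit argument. The two intertwining identities you isolate, \ensuremath{\mu\;\Varid{e}\;\varepsilon\mathrel{=}\Varid{at}_\epsilon\;\Varid{e}} and \ensuremath{\derivOp\;(\mu\;\Varid{e})\;\Varid{c}\mathrel{=}\mu\;(\derivOp\;\Varid{e}\mathbin{!}\Varid{c})}, are exactly what the \ensuremath{\Varid{foldl}} definition of \ensuremath{(\mathbin{!})} gives, and from there the simultaneous induction on input length is the right move: the product and star cases genuinely require the additive and semimodule hypotheses at shorter inputs, so the properties cannot be established one class at a time. Your handling of the \ensuremath{\Conid{Star}} self-reference is also correct---it only recurs at strictly shorter words. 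The one place to be slightly careful is \ensuremath{\Conid{HasSingle}}: you need \ensuremath{\mu\;(\Conid{Char}\;\Varid{c})\mathrel{=}\Varid{single}\;[\mskip1.5mu \Varid{c}\mskip1.5mu]} as a base fact, which itself is a small two-step induction using the \ensuremath{\Varid{at}_\epsilon} and \ensuremath{\derivOp} clauses for \ensuremath{\Conid{Char}}; once that is in hand, the rest follows from the semiring and semimodule parts as you say.
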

The implementation in \figref{RegExp} generalizes the regular expression matching algorithm of \citet{Brzozowski64}, adding customizable memoization, depending on choice of the indexable functor \ensuremath{\Varid{h}}.
Note that the definition of \ensuremath{\Varid{e}\mathbin{!}\Varid{w}} is exactly \ensuremath{\Varid{at}_\epsilon\;(\derivOp^{\ast}\;\Varid{e}\;\Varid{w})} generalized to indexable \ensuremath{\Varid{h}}, performing syntactic differentiation with respect to successive characters in \ensuremath{\Varid{w}} and applying \ensuremath{\Varid{at}_\epsilon} to the final resulting regular expression.

For efficiency, and sometimes even termination (with recursively defined languages), we will need to add some optimizations to the \ensuremath{\Conid{Additive}} and \ensuremath{\Conid{Semiring}} instances for \ensuremath{\Conid{RegExp}} in \figref{RegExp}:
\twocol{0.4}{
\begin{hscode}\SaveRestoreHook
\column{B}{@{}>{\hspre}l<{\hspost}@{}}%
\column{3}{@{}>{\hspre}l<{\hspost}@{}}%
\column{12}{@{}>{\hspre}l<{\hspost}@{}}%
\column{25}{@{}>{\hspre}l<{\hspost}@{}}%
\column{E}{@{}>{\hspre}l<{\hspost}@{}}%
\>[3]{}\Varid{p}\mathbin{+}\Varid{q}{}\<[12]%
\>[12]{}\mid \Varid{isZero}\;\Varid{p}{}\<[25]%
\>[25]{}\mathrel{=}\Varid{q}{}\<[E]%
\\
\>[12]{}\mid \Varid{isZero}\;\Varid{q}{}\<[25]%
\>[25]{}\mathrel{=}\Varid{p}{}\<[E]%
\\
\>[12]{}\mid \Varid{otherwise}{}\<[25]%
\>[25]{}\mathrel{=}\Varid{p}\mathbin{:\!\!+}\Varid{q}{}\<[E]%
\ColumnHook
\end{hscode}\resethooks
}{0.45}{
\begin{hscode}\SaveRestoreHook
\column{B}{@{}>{\hspre}l<{\hspost}@{}}%
\column{3}{@{}>{\hspre}l<{\hspost}@{}}%
\column{12}{@{}>{\hspre}l<{\hspost}@{}}%
\column{21}{@{}>{\hspre}l<{\hspost}@{}}%
\column{25}{@{}>{\hspre}l<{\hspost}@{}}%
\column{E}{@{}>{\hspre}l<{\hspost}@{}}%
\>[3]{}\Varid{p}\mathbin{*}\Varid{q}{}\<[12]%
\>[12]{}\mid \Varid{isZero}\;\Varid{p}{}\<[25]%
\>[25]{}\mathrel{=}\mathrm{0}{}\<[E]%
\\
\>[12]{}\mid \Varid{isOne}\;{}\<[21]%
\>[21]{}\Varid{p}{}\<[25]%
\>[25]{}\mathrel{=}\Varid{q}{}\<[E]%
\\
\>[12]{}\mid \Varid{otherwise}{}\<[25]%
\>[25]{}\mathrel{=}\Varid{p}\mathbin{:\!\!\conv}\Varid{q}{}\<[E]%
\ColumnHook
\end{hscode}\resethooks
}
For \ensuremath{\Varid{p}\mathbin{*}\Varid{q}}, we might also check whether \ensuremath{\Varid{q}} is \ensuremath{\mathrm{0}} or \ensuremath{\mathrm{1}}, but doing so itself leads to non-termination in right-recursive grammars.

As an alternative to repeated syntactic differentiation, we can reinterpret the original (syntactic) regular expression in another semiring as follows:
\begin{hscode}\SaveRestoreHook
\column{B}{@{}>{\hspre}l<{\hspost}@{}}%
\column{12}{@{}>{\hspre}c<{\hspost}@{}}%
\column{12E}{@{}l@{}}%
\column{18}{@{}>{\hspre}l<{\hspost}@{}}%
\column{22}{@{}>{\hspre}l<{\hspost}@{}}%
\column{34}{@{}>{\hspre}c<{\hspost}@{}}%
\column{34E}{@{}l@{}}%
\column{39}{@{}>{\hspre}l<{\hspost}@{}}%
\column{E}{@{}>{\hspre}l<{\hspost}@{}}%
\>[B]{}\Varid{regexp}\mathbin{::}(\Conid{StarSemiring}\;\Varid{x},\Conid{HasSingle}\;[\mskip1.5mu \Conid{Key}\;\Varid{h}\mskip1.5mu]\;\Varid{b}\;\Varid{x},\Conid{Semiring}\;\Varid{b})\Rightarrow \Conid{RegExp}\;\Varid{h}\;\Varid{b}\to \Varid{x}{}\<[E]%
\\
\>[B]{}\Varid{regexp}\;(\Conid{Char}\;\Varid{c}){}\<[22]%
\>[22]{}\mathrel{=}\Varid{single}\;[\mskip1.5mu \Varid{c}\mskip1.5mu]{}\<[E]%
\\
\>[B]{}\Varid{regexp}\;(\Conid{Value}\;\Varid{b}){}\<[22]%
\>[22]{}\mathrel{=}\Varid{value}\;\Varid{b}{}\<[E]%
\\
\>[B]{}\Varid{regexp}\;(\Varid{u}{}\<[12]%
\>[12]{}\mathbin{:\!\!+}{}\<[12E]%
\>[18]{}\Varid{v}){}\<[22]%
\>[22]{}\mathrel{=}\Varid{regexp}\;\Varid{u}{}\<[34]%
\>[34]{}\mathbin{+}{}\<[34E]%
\>[39]{}\Varid{regexp}\;\Varid{v}{}\<[E]%
\\
\>[B]{}\Varid{regexp}\;(\Varid{u}{}\<[12]%
\>[12]{}\mathbin{:\!\!\conv}{}\<[12E]%
\>[18]{}\Varid{v}){}\<[22]%
\>[22]{}\mathrel{=}\Varid{regexp}\;\Varid{u}{}\<[34]%
\>[34]{}\mathbin{*}{}\<[34E]%
\>[39]{}\Varid{regexp}\;\Varid{v}{}\<[E]%
\\
\>[B]{}\Varid{regexp}\;(\Conid{Star}\;\Varid{u}){}\<[22]%
\>[22]{}\mathrel{=}\closure{(\Varid{regexp}\;\Varid{u})}{}\<[E]%
\ColumnHook
\end{hscode}\resethooks
Next, we will see a choice of \ensuremath{\Varid{f}} that eliminates the syntactic overhead.

\sectionl{Tries}

\secref{Languages and the Monoid Semiring} provided an implementation of language recognition and its generalization to the monoid semiring (\ensuremath{\Varid{a}\to \Varid{b}} for monoid \ensuremath{\Varid{a}} and semiring \ensuremath{\Varid{b}}), packaged as instances of a few common algebraic abstractions (\ensuremath{\Conid{Additive}}, \ensuremath{\Conid{Semiring}} etc).
While simple and correct, these implementations are quite inefficient, primarily due to naive backtracking and redundant comparison.
\secref{Decomposing Functions from Lists} explored the nature of functions on lists, identifying a decomposition principle and its relationship to the vocabulary of semirings and related algebraic abstractions.
Applying this principle to a generalized form of regular expressions led to Brzozowski's algorithm, generalized from sets to functions in \secref{Regular Expressions}, providing an alternative to naive backtracking but still involving repeated syntactic manipulation as each candidate string is matched.
Nevertheless, with some syntactic optimizations and memoization, recognition speed with this technique can be fairly good \citep{Might2010YaccID,Adams2016CPP}.

As an alternative to regular expression differentiation, note that the problem of redundant comparison is solved elegantly by the classic trie (``prefix tree'') data structure introduced by Thue in 1912 \citep[Section 6.3]{Knuth1998ACP3}.
This data structure was later generalized to arbitrary (regular) algebraic data types \citep{Connelly1995GenTrie} and then from sets to functions \citep{Hinze2000GGT}.\out{ We'll explore the data type generalization later.\notefoot{Add a forward pointer, or remove the promise.}}
Restricting our attention to functions of \emph{lists} (``strings'' over some alphabet), we can formulate a simple trie data type along the lines of \ensuremath{(\mathrel\triangleleft)} from \secref{Decomposing Functions from Lists}, with an entry for \ensuremath{\varepsilon} and a sub-trie for each possible character:
\begin{hscode}\SaveRestoreHook
\column{B}{@{}>{\hspre}l<{\hspost}@{}}%
\column{39}{@{}>{\hspre}l<{\hspost}@{}}%
\column{E}{@{}>{\hspre}l<{\hspost}@{}}%
\>[B]{}\mathbf{data}\;\Conid{LTrie}\;\Varid{c}\;\Varid{b}\mathrel{=}\Varid{b}\mathrel{\Varid{:\!\!\triangleleft}}\Varid{c}\to \Conid{LTrie}\;\Varid{c}\;\Varid{b}{}\<[39]%
\>[39]{}\mbox{\onelinecomment  first guess}{}\<[E]%
\ColumnHook
\end{hscode}\resethooks
While this definition would work, we can get much better efficiency if we memoize the functions of \ensuremath{\Varid{c}}, e.g., as a generalized trie or a finite map.
Rather than commit to a particular representation for subtrie collections, let's replace the type parameter \ensuremath{\Varid{c}} with a functor \ensuremath{\Varid{h}} whose associated key type is \ensuremath{\Varid{c}}.
The functor-parametrized list trie is also known as the ``cofree comonad'' \citep{Uustalu2005EDP,Uustalu2008CNC,Uustalu2011RS,Hinze2013USR,Kmett2015MfL,Penner2017RSTT}.
\begin{hscode}\SaveRestoreHook
\column{B}{@{}>{\hspre}l<{\hspost}@{}}%
\column{E}{@{}>{\hspre}l<{\hspost}@{}}%
\>[B]{}\mathbf{data}\;\Conid{Cofree}\;\Varid{h}\;\Varid{b}\mathrel{=}\Varid{b}\mathrel{\Varid{:\!\!\triangleleft}}\Varid{h}\;(\Conid{Cofree}\;\Varid{h}\;\Varid{b}){}\<[E]%
\ColumnHook
\end{hscode}\resethooks

The similarity between \ensuremath{\Conid{Cofree}\;\Varid{h}\;\Varid{b}} and the function decomposition from \secref{Decomposing Functions from Lists} (motivating the constructor name ``\ensuremath{\mathrel{\Varid{:\!\!\triangleleft}}}'') makes for easy instance calculation.
As with \ensuremath{\Pow\;\Varid{a}} and \ensuremath{\Conid{Map}\;\Varid{a}\;\Varid{b}}, we can define a trie counterpart to the free monoid semiring \ensuremath{[\mskip1.5mu \Varid{c}\mskip1.5mu]\to \Varid{b}}.
\begin{theorem}[\provedIn{theorem:Cofree}]\thmlabel{Cofree}
Given the definitions in \figrefdef{Cofree}{List tries denoting \ensuremath{[\mskip1.5mu \Varid{c}\mskip1.5mu]\to \Varid{b}}}{
\begin{hscode}\SaveRestoreHook
\column{B}{@{}>{\hspre}l<{\hspost}@{}}%
\column{3}{@{}>{\hspre}l<{\hspost}@{}}%
\column{11}{@{}>{\hspre}l<{\hspost}@{}}%
\column{E}{@{}>{\hspre}l<{\hspost}@{}}%
\>[B]{}\mathbf{infix}\;\mathrm{1}\mathrel{\Varid{:\!\!\triangleleft}}{}\<[E]%
\\
\>[B]{}\mathbf{data}\;\Conid{Cofree}\;\Varid{h}\;\Varid{b}\mathrel{=}\Varid{b}\mathrel{\Varid{:\!\!\triangleleft}}\Varid{h}\;(\Conid{Cofree}\;\Varid{h}\;\Varid{b})\;\mathbf{deriving}\;\Conid{Functor}{}\<[E]%
\\[\blanklineskip]%
\>[B]{}\mathbf{instance}\;\Conid{Indexable}\;\Varid{c}\;(\Conid{Cofree}\;\Varid{h}\;\Varid{b})\;(\Varid{h}\;(\Conid{Cofree}\;\Varid{h}\;\Varid{b}))\Rightarrow \Conid{Indexable}\;[\mskip1.5mu \Varid{c}\mskip1.5mu]\;\Varid{b}\;(\Conid{Cofree}\;\Varid{h}\;\Varid{b})\;\mathbf{where}{}\<[E]%
\\
\>[B]{}\hsindent{3}{}\<[3]%
\>[3]{}(\mathbin{!})\;(\Varid{b}\mathrel{\Varid{:\!\!\triangleleft}}\Varid{dp})\mathrel{=}\Varid{b}\mathrel\triangleleft(\mathbin{!})\hsdot{\circ }{.\:}(\mathbin{!})\;\Varid{dp}\mbox{\onelinecomment  \ensuremath{(\Varid{b}\mathrel{\Varid{:\!\!\triangleleft}}\Varid{dp})\mathbin{!}\Varid{w}\mathrel{=}\mathbf{case}\;\Varid{w}\;\mathbf{of}\;\{\mskip1.5mu [\mskip1.5mu \mskip1.5mu]\to \Varid{b};\Varid{c}\mathbin{:}\Varid{cs}\to \Varid{dp}\mathbin{!}\Varid{c}\mathbin{!}\Varid{cs}\mskip1.5mu\}}}{}\<[E]%
\\[\blanklineskip]%
\>[B]{}\mathbf{instance}\;(\Conid{Additive}\;(\Varid{h}\;(\Conid{Cofree}\;\Varid{h}\;\Varid{b})),\Conid{Additive}\;\Varid{b})\Rightarrow \Conid{Additive}\;(\Conid{Cofree}\;\Varid{h}\;\Varid{b})\;\mathbf{where}{}\<[E]%
\\
\>[B]{}\hsindent{3}{}\<[3]%
\>[3]{}\mathrm{0}\mathrel{=}\mathrm{0}\mathrel{\Varid{:\!\!\triangleleft}}\mathrm{0}{}\<[E]%
\\
\>[B]{}\hsindent{3}{}\<[3]%
\>[3]{}(\Varid{a}\mathrel{\Varid{:\!\!\triangleleft}}\Varid{dp})\mathbin{+}(\Varid{b}\mathrel{\Varid{:\!\!\triangleleft}}\Varid{dq})\mathrel{=}\Varid{a}\mathbin{+}\Varid{b}\mathrel{\Varid{:\!\!\triangleleft}}\Varid{dp}\mathbin{+}\Varid{dq}{}\<[E]%
\\[\blanklineskip]%
\>[B]{}\mathbf{instance}\;(\Conid{Functor}\;\Varid{h},\Conid{Semiring}\;\Varid{b})\Rightarrow \Conid{LeftSemimodule}\;\Varid{b}\;(\Conid{Cofree}\;\Varid{h}\;\Varid{b})\;\mathbf{where}{}\<[E]%
\\
\>[B]{}\hsindent{3}{}\<[3]%
\>[3]{}(\mathbin{\hat{\cdot}})\;\Varid{s}\mathrel{=}\Varid{fmap}\;(\Varid{s}\;{}\mathbin{*}){}\<[E]%
\\[\blanklineskip]%
\>[B]{}\mathbf{instance}\;{}\<[11]%
\>[11]{}(\Conid{Functor}\;\Varid{h},\Conid{Additive}\;(\Varid{h}\;(\Conid{Cofree}\;\Varid{h}\;\Varid{b})),\Conid{Semiring}\;\Varid{b},\Conid{IsZero}\;\Varid{b})\Rightarrow {}\<[E]%
\\
\>[11]{}\Conid{Semiring}\;(\Conid{Cofree}\;\Varid{h}\;\Varid{b})\;\mathbf{where}{}\<[E]%
\\
\>[B]{}\hsindent{3}{}\<[3]%
\>[3]{}\mathrm{1}\mathrel{=}\mathrm{1}\mathrel{\Varid{:\!\!\triangleleft}}\mathrm{0}{}\<[E]%
\\
\>[B]{}\hsindent{3}{}\<[3]%
\>[3]{}(\Varid{a}\mathrel{\Varid{:\!\!\triangleleft}}\Varid{dp})\mathbin{*}\Varid{q}\mathrel{=}\Varid{a}\cdot\Varid{q}\mathbin{+}(\mathrm{0}\mathrel{\Varid{:\!\!\triangleleft}}\Varid{fmap}\;(\mathbin{*}{}\;\Varid{q})\;\Varid{dp}){}\<[E]%
\\[\blanklineskip]%
\>[B]{}\mathbf{instance}\;{}\<[11]%
\>[11]{}(\Conid{Functor}\;\Varid{h},\Conid{Additive}\;(\Varid{h}\;(\Conid{Cofree}\;\Varid{h}\;\Varid{b})),\Conid{StarSemiring}\;\Varid{b},\Conid{IsZero}\;\Varid{b})\Rightarrow {}\<[E]%
\\
\>[11]{}\Conid{StarSemiring}\;(\Conid{Cofree}\;\Varid{h}\;\Varid{b})\;\mathbf{where}{}\<[E]%
\\
\>[B]{}\hsindent{3}{}\<[3]%
\>[3]{}\closure{(\Varid{a}\mathrel{\Varid{:\!\!\triangleleft}}\Varid{dp})}\mathrel{=}\Varid{q}\;\mathbf{where}\;\Varid{q}\mathrel{=}\closure{\Varid{a}}\cdot(\mathrm{1}\mathrel{\Varid{:\!\!\triangleleft}}\Varid{fmap}\;(\mathbin{*}{}\;\Varid{q})\;\Varid{dp}){}\<[E]%
\\[\blanklineskip]%
\>[B]{}\mathbf{instance}\;{}\<[11]%
\>[11]{}(\Conid{HasSingle}\;(\Conid{Cofree}\;\Varid{h}\;\Varid{b})\;\Varid{h},\Conid{Additive}\;(\Varid{h}\;(\Conid{Cofree}\;\Varid{h}\;\Varid{b})),\Conid{Additive}\;\Varid{b})\Rightarrow {}\<[E]%
\\
\>[11]{}\Conid{HasSingle}\;\Varid{b}\;(\Conid{Cofree}\;\Varid{h})\;\mathbf{where}{}\<[E]%
\\
\>[B]{}\hsindent{3}{}\<[3]%
\>[3]{}\Varid{w}\mapsto\Varid{b}\mathrel{=}\Varid{foldr}\;(\lambda\, \Varid{c}\;\Varid{t}\to \mathrm{0}\mathrel{\Varid{:\!\!\triangleleft}}\Varid{c}\mapsto\Varid{t})\;(\Varid{b}\mathrel{\Varid{:\!\!\triangleleft}}\mathrm{0})\;\Varid{w}{}\<[E]%
\\[\blanklineskip]%
\>[B]{}\mathbf{instance}\;{}\<[11]%
\>[11]{}(\Conid{Additive}\;(\Varid{h}\;(\Conid{Cofree}\;\Varid{h}\;\Varid{b})),\Conid{IsZero}\;(\Varid{h}\;(\Conid{Cofree}\;\Varid{h}\;\Varid{b})),\Conid{IsZero}\;\Varid{b})\Rightarrow {}\<[E]%
\\
\>[11]{}\Conid{IsZero}\;(\Conid{Cofree}\;\Varid{h}\;\Varid{b})\;\mathbf{where}{}\<[E]%
\\
\>[B]{}\hsindent{3}{}\<[3]%
\>[3]{}\Varid{isZero}\;(\Varid{a}\mathrel{\Varid{:\!\!\triangleleft}}\Varid{dp})\mathrel{=}\Varid{isZero}\;\Varid{a}\mathrel{\wedge}\Varid{isZero}\;\Varid{dp}{}\<[E]%
\\[\blanklineskip]%
\>[B]{}\mathbf{instance}\;{}\<[11]%
\>[11]{}(\Conid{Functor}\;\Varid{h},\Conid{Additive}\;(\Varid{h}\;(\Conid{Cofree}\;\Varid{h}\;\Varid{b})),\Conid{IsZero}\;\Varid{b},\Conid{IsZero}\;(\Varid{h}\;(\Conid{Cofree}\;\Varid{h}\;\Varid{b})),\Conid{IsOne}\;\Varid{b})\Rightarrow {}\<[E]%
\\
\>[11]{}\Conid{IsOne}\;(\Conid{Cofree}\;\Varid{h}\;\Varid{b})\;\mathbf{where}{}\<[E]%
\\
\>[B]{}\hsindent{3}{}\<[3]%
\>[3]{}\Varid{isOne}\;(\Varid{a}\mathrel{\Varid{:\!\!\triangleleft}}\Varid{dp})\mathrel{=}\Varid{isOne}\;\Varid{a}\mathrel{\wedge}\Varid{isZero}\;\Varid{dp}{}\<[E]%
\ColumnHook
\end{hscode}\resethooks
\vspace{-4ex}
}, \ensuremath{(\mathbin{!})} is a homomorphism with respect to each instantiated class.
\end{theorem}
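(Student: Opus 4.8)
The plan is to read $(\mathbin{!})$ as the denotation $\Conid{Cofree}\;\Varid{h}\;\Varid{b}\to([\mskip1.5mu \Varid{c}\mskip1.5mu]\to \Varid{b})$ and to exploit the fact that every instance in \figref{Cofree} is a transcription of the corresponding $(\mathrel\triangleleft)$-decomposition equation of \thmref{semiring decomp [c] -> b}, with the coinductive constructor $(\mathbin{:\triangleleft})$ standing in for $(\mathrel\triangleleft)$, $\Varid{fmap}$ on $\Varid{h}$ for the action on derivatives, and indexing into $\Varid{h}$ for application to a leading symbol. First I would record the one bridging fact that drives the whole argument: unfolding the $\Conid{Indexable}$ instance gives $(\mathbin{!})\;(\Varid{a}\mathbin{:\triangleleft}\Varid{dp})=\Varid{a}\mathrel\triangleleft(\mathbin{!})\circ(\mathbin{!})\;\Varid{dp}$, so by the extraction properties of \lemref{decomp ([c] -> b)} we get $\Varid{at}_\epsilon\;((\mathbin{!})\;(\Varid{a}\mathbin{:\triangleleft}\Varid{dp}))=\Varid{a}$ and $\derivOp\;((\mathbin{!})\;(\Varid{a}\mathbin{:\triangleleft}\Varid{dp}))=(\mathbin{!})\circ(\mathbin{!})\;\Varid{dp}$. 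Intuitively, $(\mathbin{!})$ sends the head of a trie to $\Varid{at}_\epsilon$ of its denotation and sends each subtrie to the corresponding derivative.

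With this bridge in hand, each algebraic law reduces to the same three-move calculation, carried out by induction on the length of the index word: two elements of $[\mskip1.5mu \Varid{c}\mskip1.5mu]\to \Varid{b}$ are equal exactly when they agree at $\varepsilon$ (via $\Varid{at}_\epsilon$) and have equal derivatives (\lemref{decomp ([c] -> b)}), and the derivative clause recurses on strictly shorter words. For additivity, I would take the instance equation $(\Varid{a}\mathbin{:\triangleleft}\Varid{dp})\mathbin{+}(\Varid{b}\mathbin{:\triangleleft}\Varid{dq})=\Varid{a}\mathbin{+}\Varid{b}\mathbin{:\triangleleft}\Varid{dp}\mathbin{+}\Varid{dq}$, apply $(\mathbin{!})$, rewrite both sides through the bridge fact, push $(\mathbin{!})$ through the sum on the $\Varid{h}$-structure using the standing assumption that indexing into $\Varid{h}$ is an additive homomorphism, discharge the remaining subtrie obligation by the inductive hypothesis, and recognize the outcome as the additive-decomposition equation of \thmref{semiring decomp [c] -> b}. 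The $\Conid{LeftSemimodule}$ and $\Conid{Semiring}$ product laws follow identically, now also invoking \lemreftwo{atEps [c] -> b}{deriv [c] -> b} to justify the $\Varid{at}_\epsilon\;\Varid{p}\cdot(\cdots)$ and $\Varid{fmap}\;(\mathbin{*}\Varid{q})$ terms, with the $\Varid{fmap}$ on $\Varid{h}$ handled by functoriality together with the homomorphism property of $\Varid{h}$-indexing. The $\Conid{HasSingle}$ law is a separate induction on the word $\Varid{w}$: the $\Varid{foldr}$ in the instance mirrors the $\Varid{foldr}$ of \thmref{semiring decomp [c] -> b}, the base $\Varid{b}\mathbin{:\triangleleft}\mathrm{0}$ denotes $\varepsilon\mapsto\Varid{b}$ by the bridge, and the step uses that $(\mapsto)$ into $\Varid{h}$ is a $\Conid{HasSingle}$ homomorphism; the $\Conid{IsZero}$ and $\Conid{IsOne}$ predicates are checked to agree with vanishing and unit-ness of the denotation, again reducing through the bridge to the head and the subtries.

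The genuine obstacle is the $\Conid{StarSemiring}$ law, because $\closure{(\Varid{a}\mathbin{:\triangleleft}\Varid{dp})}$ is given as a guarded fixed point $\Varid{q}=\closure{\Varid{a}}\cdot(\mathrm{1}\mathbin{:\triangleleft}\Varid{fmap}\;(\mathbin{*}\Varid{q})\;\Varid{dp})$ rather than by a direct structural recurrence, so I cannot simply unfold it. Here I would not compute $(\mathbin{!})\;\Varid{q}$ outright, but instead observe that the function-level star obeys the \emph{same} recurrence (the star clause of \thmref{semiring decomp [c] -> b}), show that $(\mathbin{!})$ intertwines the two recurrences, and conclude equality of denotations by the same head-and-derivative argument: the head evaluates to $\closure{\Varid{a}}$, matching the star clause of \lemref{atEps [c] -> b}, and the derivative unfolds via \lemref{deriv [c] -> b}, with the underlying solution of the affine recursion justified by \lemreftwo{affine over semiring}{affine over semimodule}. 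Establishing productivity of this corecursion and aligning the two fixed points cleanly is the step I expect to require the most care.
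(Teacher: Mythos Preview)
Your proposal is correct and follows the same approach as the paper: establish the bridge $(\mathbin{!})\;(\Varid{a}\mathbin{:\triangleleft}\Varid{dp})=\Varid{a}\mathrel\triangleleft(\mathbin{!})\circ(\mathbin{!})\;\Varid{dp}$, then reduce each homomorphism property to the corresponding $(\mathrel\triangleleft)$-decomposition equation of \thmref{semiring decomp [c] -> b}, using the additive-functor assumption on $\Varid{h}$ and coinduction. The paper's proof in fact only spells out the $\mathrm{0}$ and $(+)$ cases in detail (labelling the recursive step ``coinduction'') and leaves the rest implicit, so your plan is if anything more complete---in particular your flagging of the $\Conid{StarSemiring}$ case as the delicate one is well-placed, since the paper does not work that case out explicitly.
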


Although the \ensuremath{(\mathrel\triangleleft)} decomposition in \secref{Decomposing Functions from Lists} was inspired by wanting to understand the essence of regular expression derivatives, the application to tries is in retrospect more straightforward, since the representation directly mirrors the decomposition.\out{ Pleasantly, this trie data structure is a classic, though perhaps not in its lazy infinite form for use as a language representation.}
Applying the \ensuremath{(\mathrel\triangleleft)} decomposition to tries also appears to be more streamlined than the application to regular expressions.
During matching, the next character in the candidate string is used to directly index to the relevant derivative (sub-trie), efficiently bypassing all other paths.
As one might hope,
\ensuremath{(\mathbin{!})} on \ensuremath{\Conid{Cofree}\;\Varid{h}} is another homomorphism:
\begin{theorem}[\provedIn{theorem:Cofree hom}]\thmlabel{Cofree hom}
Given the definitions in \figreftwo{Cofree}{Cofree hom}, if \ensuremath{(\mathbin{!})} on \ensuremath{\Varid{h}} behaves like\notefoot{Come up with a better phrasing of this condition, and use it consistently} \ensuremath{(\to )\;(\Conid{Key}\;\Varid{h})}, then \ensuremath{\Conid{Cofree}\;\Varid{h}} is a comonad homomorphism from \ensuremath{\Conid{Cofree}\;\Varid{h}} to \ensuremath{(\to )\;(\Conid{Key}\;\Varid{h})}.
\end{theorem}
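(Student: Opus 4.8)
The plan is to verify the two laws that make a natural transformation a morphism of comonads, here with source the cofree comonad \ensuremath{\Conid{Cofree}\;\Varid{h}} (whose \ensuremath{\Varid{coreturn}} and \ensuremath{\Varid{cojoin}} are defined in \figref{Cofree hom}) and target the function-from-monoid (exponent) comonad \ensuremath{(\to )\;[\mskip1.5mu \Varid{c}\mskip1.5mu]}, \ensuremath{\Varid{c}\mathrel{=}\Conid{Key}\;\Varid{h}}, whose \ensuremath{\Varid{coreturn}} and \ensuremath{\Varid{cojoin}} are \ensuremath{\Varid{at}_\epsilon} and \ensuremath{\derivOp^{\ast}} (\secref{Decomposing Functions from Lists}). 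The candidate morphism is the indexing map \ensuremath{(\mathbin{!})\mathbin{::}\Conid{Cofree}\;\Varid{h}\;\Varid{b}\to ([\mskip1.5mu \Varid{c}\mskip1.5mu]\to \Varid{b})}. First I note that \ensuremath{(\mathbin{!})} is already a natural transformation: that it commutes with \ensuremath{\Varid{fmap}} is exactly the \ensuremath{\Conid{Functor}} part of the homomorphism claim in \thmref{Cofree}, which rests on the standing hypothesis that \ensuremath{(\mathbin{!})} on \ensuremath{\Varid{h}} behaves like \ensuremath{(\to )\;\Varid{c}}, i.e.\ \ensuremath{(\Varid{fmap}\;\Varid{g}\;\Varid{x})\mathbin{!}\Varid{c}\mathrel{=}\Varid{g}\;(\Varid{x}\mathbin{!}\Varid{c})}. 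It therefore remains only to check preservation of \ensuremath{\Varid{coreturn}} and of \ensuremath{\Varid{cojoin}}.

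Preservation of \ensuremath{\Varid{coreturn}} asks that \ensuremath{\Varid{at}_\epsilon\;(\Varid{t}\mathbin{!})\mathrel{=}\Varid{coreturn}\;\Varid{t}}, which is immediate: writing \ensuremath{\Varid{t}\mathrel{=}\Varid{a}\mathrel{\Varid{:\!\!\triangleleft}}\Varid{dp}}, the \ensuremath{\Conid{Cofree}} clause for \ensuremath{(\mathbin{!})} gives \ensuremath{(\Varid{t}\mathbin{!})\;\varepsilon\mathrel{=}\Varid{a}}, while \ensuremath{\Varid{coreturn}\;(\Varid{a}\mathrel{\Varid{:\!\!\triangleleft}}\Varid{dp})\mathrel{=}\Varid{a}} and \ensuremath{\Varid{at}_\epsilon\;\Varid{f}\mathrel{=}\Varid{f}\;\varepsilon} by definition.

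Preservation of \ensuremath{\Varid{cojoin}} is the substantive part. The generic law \ensuremath{\Varid{cojoin}_V\hsdot{\circ }{.\:}\phi\mathrel{=}\Varid{fmap}_V\;\phi\hsdot{\circ }{.\:}\phi\hsdot{\circ }{.\:}\Varid{cojoin}_W}, specialized to \ensuremath{\phi\mathrel{=}(\mathbin{!})} and to \ensuremath{\Varid{fmap}_V} being post-composition by \ensuremath{(\mathbin{!})}, reduces (using \ensuremath{\derivOp^{\ast}\;\Varid{f}\;\Varid{u}\;\Varid{v}\mathrel{=}\Varid{f}\;(\Varid{u} \diamond \Varid{v})}) to the pointwise identity \ensuremath{\Varid{t}\mathbin{!}(\Varid{u} \diamond \Varid{v})\mathrel{=}((\Varid{cojoin}\;\Varid{t})\mathbin{!}\Varid{u})\mathbin{!}\Varid{v}} for all \ensuremath{\Varid{u}} and \ensuremath{\Varid{v}}. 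I would prove this by induction on \ensuremath{\Varid{u}}. For \ensuremath{\Varid{u}\mathrel{=}\varepsilon}, the empty-list clause of \ensuremath{(\mathbin{!})} together with the top value of \ensuremath{\Varid{cojoin}\;\Varid{t}} being \ensuremath{\Varid{t}} itself collapses both sides to \ensuremath{\Varid{t}\mathbin{!}\Varid{v}}. For \ensuremath{\Varid{u}\mathrel{=}\Varid{c}\mathbin{:}\Varid{cs}} with \ensuremath{\Varid{t}\mathrel{=}\Varid{a}\mathrel{\Varid{:\!\!\triangleleft}}\Varid{dp}}, the left side becomes \ensuremath{(\Varid{dp}\mathbin{!}\Varid{c})\mathbin{!}(\Varid{cs} \diamond \Varid{v})}; on the right, \ensuremath{\Varid{cojoin}\;(\Varid{a}\mathrel{\Varid{:\!\!\triangleleft}}\Varid{dp})\mathrel{=}\Varid{t}\mathrel{\Varid{:\!\!\triangleleft}}\Varid{fmap}\;\Varid{cojoin}\;\Varid{dp}} together with the \ensuremath{\Varid{h}}-naturality equation \ensuremath{(\Varid{fmap}\;\Varid{cojoin}\;\Varid{dp})\mathbin{!}\Varid{c}\mathrel{=}\Varid{cojoin}\;(\Varid{dp}\mathbin{!}\Varid{c})} rewrites it to \ensuremath{((\Varid{cojoin}\;(\Varid{dp}\mathbin{!}\Varid{c}))\mathbin{!}\Varid{cs})\mathbin{!}\Varid{v}}, which the induction hypothesis applied to the subtrie \ensuremath{\Varid{dp}\mathbin{!}\Varid{c}} equates to the left side.

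The main obstacle I anticipate is bookkeeping around the \ensuremath{\Varid{h}}-layer rather than any deep difficulty: the inductive step hinges entirely on pinning down ``behaves like \ensuremath{(\to )\;\Varid{c}}'' as the naturality law \ensuremath{(\Varid{fmap}\;\Varid{g}\;\Varid{x})\mathbin{!}\Varid{c}\mathrel{=}\Varid{g}\;(\Varid{x}\mathbin{!}\Varid{c})}, so that indexing commutes with the \ensuremath{\Varid{fmap}\;\Varid{cojoin}} buried inside \ensuremath{\Varid{cojoin}}. A secondary care point is fixing the conventions so that the generic comonad-morphism square really does translate into the displayed pointwise identity; once that translation is set, the calculation is routine. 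I will take the comonad laws for \ensuremath{\Conid{Cofree}\;\Varid{h}} and for the exponent comonad as given, since the claim is only that \ensuremath{(\mathbin{!})} is a morphism between them.
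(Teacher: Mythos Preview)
Your proposal is correct and follows essentially the same route as the paper: case analysis on the list argument, using naturality of \ensuremath{(\mathbin{!})} on \ensuremath{\Varid{h}} to push indexing through the \ensuremath{\Varid{fmap}\;\Varid{cojoin}} layer, and recursion (you say induction on \ensuremath{\Varid{u}}, the paper says coinduction; since the property is pointwise at finite lists, either phrasing works). One small correction: naturality of \ensuremath{(\mathbin{!})} on \ensuremath{\Conid{Cofree}\;\Varid{h}} is \emph{not} part of \thmref{Cofree}---that theorem concerns the \ensuremath{\Conid{Additive}}, \ensuremath{\Conid{Semiring}}, etc.\ instances, not \ensuremath{\Conid{Functor}}---and the paper proves it separately as the first step of the \thmref{Cofree hom} proof. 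Your reformulation \ensuremath{\Varid{t}\mathbin{!}(\Varid{u} \diamond \Varid{v})\mathrel{=}((\Varid{cojoin}\;\Varid{t})\mathbin{!}\Varid{u})\mathbin{!}\Varid{v}} already has that naturality baked in (it is how you pass from the paper's \ensuremath{\Varid{fmap}\;(\mathbin{!})\;(\Varid{cojoin}\;\Varid{t})\mathbin{!}\Varid{u}} to your \ensuremath{(\Varid{cojoin}\;\Varid{t})\mathbin{!}\Varid{u}\mathbin{!}}), so you should either prove it or flag it as a lemma rather than appeal to \thmref{Cofree}.
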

\figdef{Cofree hom}{\ensuremath{\Conid{Comonad}} and instances}{
\begin{hscode}\SaveRestoreHook
\column{B}{@{}>{\hspre}l<{\hspost}@{}}%
\column{3}{@{}>{\hspre}l<{\hspost}@{}}%
\column{13}{@{}>{\hspre}l<{\hspost}@{}}%
\column{14}{@{}>{\hspre}l<{\hspost}@{}}%
\column{25}{@{}>{\hspre}l<{\hspost}@{}}%
\column{E}{@{}>{\hspre}l<{\hspost}@{}}%
\>[B]{}\mathbf{instance}\;\Conid{Functor}\;\Varid{w}\Rightarrow \Conid{Comonad}\;\Varid{w}\;\mathbf{where}{}\<[E]%
\\
\>[B]{}\hsindent{3}{}\<[3]%
\>[3]{}\Varid{coreturn}{}\<[13]%
\>[13]{}\mathbin{::}\Varid{w}\;\Varid{b}\to \Varid{b}{}\<[E]%
\\
\>[B]{}\hsindent{3}{}\<[3]%
\>[3]{}\Varid{cojoin}{}\<[13]%
\>[13]{}\mathbin{::}\Varid{w}\;\Varid{b}\to \Varid{w}\;(\Varid{w}\;\Varid{b}){}\<[E]%
\\[\blanklineskip]%
\>[B]{}\mathbf{instance}\;\Conid{Monoid}\;\Varid{a}\Rightarrow \Conid{Comonad}\;((\to )\;\Varid{a})\;\mathbf{where}{}\<[E]%
\\
\>[B]{}\hsindent{3}{}\<[3]%
\>[3]{}\Varid{coreturn}\;{}\<[13]%
\>[13]{}\Varid{f}\mathrel{=}\Varid{f}\;\varepsilon{}\<[E]%
\\
\>[B]{}\hsindent{3}{}\<[3]%
\>[3]{}\Varid{cojoin}\;{}\<[13]%
\>[13]{}\Varid{f}\mathrel{=}\lambda\, \Varid{u}\to \lambda\, \Varid{v}\to \Varid{f}\;(\Varid{u} \diamond \Varid{v}){}\<[E]%
\\[\blanklineskip]%
\>[B]{}\mathbf{instance}\;\Conid{Functor}\;\Varid{h}\Rightarrow \Conid{Functor}\;(\Conid{Cofree}\;\Varid{h})\;\mathbf{where}{}\<[E]%
\\
\>[B]{}\hsindent{3}{}\<[3]%
\>[3]{}\Varid{fmap}\;\Varid{f}\;(\Varid{a}\mathrel{\Varid{:\!\!\triangleleft}}\Varid{ds})\mathrel{=}\Varid{f}\;\Varid{a}\mathrel{\Varid{:\!\!\triangleleft}}\Varid{fmap}\;(\Varid{fmap}\;\Varid{f})\;\Varid{ds}{}\<[E]%
\\[\blanklineskip]%
\>[B]{}\mathbf{instance}\;\Conid{Functor}\;\Varid{h}\Rightarrow \Conid{Comonad}\;(\Conid{Cofree}\;\Varid{h})\;\mathbf{where}{}\<[E]%
\\
\>[B]{}\hsindent{3}{}\<[3]%
\>[3]{}\Varid{coreturn}\;{}\<[14]%
\>[14]{}(\Varid{a}\mathrel{\Varid{:\!\!\triangleleft}}\anonymous ){}\<[25]%
\>[25]{}\mathrel{=}\Varid{a}{}\<[E]%
\\
\>[B]{}\hsindent{3}{}\<[3]%
\>[3]{}\Varid{cojoin}\;\Varid{t}\mathord{@}{}\<[14]%
\>[14]{}(\anonymous \mathrel{\Varid{:\!\!\triangleleft}}\Varid{ds}){}\<[25]%
\>[25]{}\mathrel{=}\Varid{t}\mathrel{\Varid{:\!\!\triangleleft}}\Varid{fmap}\;\Varid{cojoin}\;\Varid{ds}{}\<[E]%
\ColumnHook
\end{hscode}\resethooks
}

\sectionl{Performance}

\nc\stat[6]{#1 & #2 & #3 & #4 & #5 \\ \hline}
\nc\incstat[1]{\input{test/stats/#1.txt}}

% For examples that don't run
\nc\hang{\ensuremath{\infty}}

While the implementation has had no performance tuning and only rudimentary benchmarking, we can at least get a sanity check on performance and functionality.
\figrefdef{examples}{Examples}{
\begin{hscode}\SaveRestoreHook
\column{B}{@{}>{\hspre}l<{\hspost}@{}}%
\column{5}{@{}>{\hspre}l<{\hspost}@{}}%
\column{E}{@{}>{\hspre}l<{\hspost}@{}}%
\>[B]{}\Varid{a}{}\<[5]%
\>[5]{}\mathrel{=}\Varid{single}\;\text{\tt \char34 a\char34}{}\<[E]%
\\
\>[B]{}\Varid{b}{}\<[5]%
\>[5]{}\mathrel{=}\Varid{single}\;\text{\tt \char34 b\char34}{}\<[E]%
\\[\blanklineskip]%
\>[B]{}\Varid{atoz}\mathrel{=}\Varid{sum}\;[\mskip1.5mu \Varid{single}\;[\mskip1.5mu \Varid{c}\mskip1.5mu]\mid \Varid{c}\leftarrow [\mskip1.5mu \text{\tt 'a'}\mathinner{\ldotp\ldotp}\text{\tt 'z'}\mskip1.5mu]\mskip1.5mu]{}\<[E]%
\\[\blanklineskip]%
\>[B]{}\Varid{fishy}\mathrel{=}\closure{\Varid{atoz}}\mathbin{*}\Varid{single}\;\text{\tt \char34 fish\char34}\mathbin{*}\closure{\Varid{atoz}}{}\<[E]%
\\[\blanklineskip]%
\>[B]{}\Varid{anbn}\mathrel{=}\mathrm{1}\mathbin{+}\Varid{a}\mathbin{*}\Varid{anbn}\mathbin{*}\Varid{b}{}\<[E]%
\\[\blanklineskip]%
\>[B]{}\Varid{dyck}\mathrel{=}\closure{(\Varid{single}\;\text{\tt \char34 [\char34}\mathbin{*}\Varid{dyck}\mathbin{*}\Varid{single}\;\text{\tt \char34 ]\char34})}{}\<[E]%
\ColumnHook
\end{hscode}\resethooks
} shows the source code for a collection of examples, all polymorphic in the choice of semiring.
The \ensuremath{\Varid{atoz}} language contains single letters from `a' to `z'.
The examples \ensuremath{\Varid{anbn}} and \ensuremath{\Varid{dyck}} are two classic, non-regular, context-free languages: \ensuremath{\set{\Varid{a}^\Varid{n}{\,}\Varid{b}^\Varid{n}\mid\Varid{n}\mathbin{\in}\mathbb N}} and the Dyck language of balanced brackets.

\figrefdef{stats}{Running times for examples in \figref{examples}}{
\begin{center}
\begin{tabular}{|c|c|c|c|c|}
\hline
\stat{Example}{\ensuremath{\Conid{RegExp}_\to }}{\ensuremath{\Conid{RegExp}_\Conid{Map}}}{\ensuremath{\Conid{Cofree}_\to }}{\ensuremath{\Conid{Cofree}_\Conid{Map}}}{} \hline
\hline
\stat{\ensuremath{\closure{\Varid{a}}}}{30.56 $\mu{}$s}{22.45 $\mu{}$s}{5.258 ms}{2.624 $\mu{}$s}{}
\stat{\ensuremath{\closure{\Varid{atoz}}}}{690.4 $\mu{}$s}{690.9 $\mu{}$s}{10.89 ms}{3.574 $\mu{}$s}{}
\stat{\ensuremath{\closure{\Varid{a}}\mathbin{*}\closure{\Varid{a}}}}{2.818 ms}{1.274 ms}{601.6 ms}{2.619 $\mu{}$s}{}
\stat{\ensuremath{\closure{\Varid{a}}\mathbin{*}\closure{\Varid{b}}}}{52.26 $\mu{}$s}{36.59 $\mu{}$s}{14.40 ms}{2.789 $\mu{}$s}{}
\stat{\ensuremath{\closure{\Varid{a}}\mathbin{*}\Varid{b}\mathbin{*}\closure{\Varid{a}}}}{56.53 $\mu{}$s}{49.21 $\mu{}$s}{14.58 ms}{2.798 $\mu{}$s}{}
\stat{\ensuremath{\Varid{fishy}}}{1.276 ms}{2.528 ms}{29.73 ms}{4.233 $\mu{}$s}{}
\stat{\ensuremath{\Varid{anbn}}}{1.293 ms}{\hang}{12.12 ms}{2.770 $\mu{}$s}{}
\stat{\ensuremath{\Varid{dyck}}}{254.9 $\mu{}$s}{\hang}{24.77 ms}{3.062 $\mu{}$s}{}
\end{tabular}
\end{center}
} gives some execution times for these examples measured with the \emph{criterion} library \citep{criterion}, compiled with GHC 8.6.3, and running on a late 2013 MacBook Pro.
(Note milliseconds vs microseconds---``ms'' vs ``$\mu{}$s''.)
Each example is interpreted in four semirings: \ensuremath{\Conid{RegExp}\;((\to )\;\Conid{Char})\;\mathbb N}, \ensuremath{\Conid{RegExp}\;(\Conid{Map}\;\Conid{Char})\;\mathbb N}, \ensuremath{\Conid{Cofree}\;((\to )\;\Conid{Char}\;\mathbb N)}, and \ensuremath{\Conid{Cofree}\;(\Conid{Map}\;\Conid{Char})\;\mathbb N}.
Each interpretation of each language is given a matching input string of length 100; and matches are counted, thanks to use of the \ensuremath{\mathbb N} semiring.
(The \ensuremath{\closure{\Varid{a}}\mathbin{*}\closure{\Varid{a}}} example matches in 101 ways, while the others match uniquely.)
As the figure shows, memoization (via \ensuremath{\Conid{Map}}) is only moderately helpful (and occasionally harmful) for \ensuremath{\Conid{RegExp}}.
\ensuremath{\Conid{Cofree}}, on the other hand, performs terribly without memoization and (in these examples) 2K to 230K times faster with memoization.
Here, memoized \ensuremath{\Conid{Cofree}} performs between 8.5 and 485 times faster than memoized \ensuremath{\Conid{RegExp}} and between 11.5 and 1075 times faster than nonmemoized \ensuremath{\Conid{RegExp}}.
The two recursively defined examples fail to terminate with \ensuremath{\Conid{RegExp}\;\Conid{Map}}, perhaps because the implementation (\secref{Regular Expressions}) lacks one more crucial tricks \citep{Might2010YaccID}.
Other \ensuremath{\Conid{RegExp}} improvements \citep{Might2010YaccID,Adams2016CPP} might narrow the gap further, and careful study and optimization of the \ensuremath{\Conid{Cofree}} implementation (\figref{Cofree}) might widen it.

%% \note{Replace the |RegExpM| ns times with \hang.}

\sectionl{Convolution}

Consider again the definition of multiplication in the monoid semiring, on \ensuremath{\Varid{f},\Varid{g}\mathbin{::}\Varid{a}\to \Varid{b}} from \figref{monoid semiring}.
\begin{hscode}\SaveRestoreHook
\column{B}{@{}>{\hspre}l<{\hspost}@{}}%
\column{8}{@{}>{\hspre}l<{\hspost}@{}}%
\column{E}{@{}>{\hspre}l<{\hspost}@{}}%
\>[B]{}\Varid{f}\mathbin{*}\Varid{g}{}\<[8]%
\>[8]{}\mathrel{=}\bigOp\sum{\Varid{u},\Varid{v}}{0}\;\Varid{u} \diamond \Varid{v}\mapsto\Varid{f}\;\Varid{u}\mathbin{*}\Varid{g}\;\Varid{v}{}\<[E]%
\ColumnHook
\end{hscode}\resethooks
As in \secref{Languages and the Monoid Semiring}, specializing the \emph{codomain} to \ensuremath{\Conid{Bool}}%
, we get
\begin{hscode}\SaveRestoreHook
\column{B}{@{}>{\hspre}l<{\hspost}@{}}%
\column{E}{@{}>{\hspre}l<{\hspost}@{}}%
\>[B]{}\Varid{f}\mathbin{*}\Varid{g}\mathrel{=}\bigOp\bigvee{\Varid{u},\Varid{v}}{0}\;\Varid{u} \diamond \Varid{v}\mapsto\Varid{f}\;\Varid{u}\mathrel{\wedge}\Varid{g}\;\Varid{v}{}\<[E]%
\ColumnHook
\end{hscode}\resethooks
Using
the set/predicate isomorphism from \secref{Calculating Instances from Homomorphisms}, we can translate this definition from predicates to ``languages'' (sets of values in some monoid):
\begin{hscode}\SaveRestoreHook
\column{B}{@{}>{\hspre}l<{\hspost}@{}}%
\column{E}{@{}>{\hspre}l<{\hspost}@{}}%
\>[B]{}\Varid{f}\mathbin{*}\Varid{g}\mathrel{=}\set{\Varid{u} \diamond \Varid{v}\mid \Varid{u}\mathbin{\in}\Varid{f}\mathrel{\wedge}\Varid{v}\mathbin{\in}\Varid{g}}{}\<[E]%
\ColumnHook
\end{hscode}\resethooks
which is the definition of the concatenation of two languages from  \secref{Languages and the Monoid Semiring}.
Likewise, by specializing the \emph{domain} of the functions to sequences (from general monoids), we got efficient matching of semiring-generalized ``languages'', as in \secreftwo{Decomposing Functions from Lists}{Tries}, which translated to regular expressions (\secref{Regular Expressions}), generalizing work of \citet{Brzozowski64}.

Let's now consider specializing the functions' domains to \emph{integers} rather than sequences, recalling that integers (and numeric types in general) form a monoid under addition.
\vspace{-2ex}
\begin{spacing}{1.5}
\begin{hscode}\SaveRestoreHook
\column{B}{@{}>{\hspre}l<{\hspost}@{}}%
\column{10}{@{}>{\hspre}l<{\hspost}@{}}%
\column{77}{@{}>{\hspre}l<{\hspost}@{}}%
\column{E}{@{}>{\hspre}l<{\hspost}@{}}%
\>[B]{}\Varid{f}\mathbin{*}\Varid{g}{}\<[10]%
\>[10]{}\mathrel{=}\bigOp\sum{\Varid{u},\Varid{v}}{0}\;\Varid{u}\mathbin{+}\Varid{v}\mapsto\Varid{f}\;\Varid{u}\mathbin{*}\Varid{g}\;\Varid{v}{}\<[77]%
\>[77]{}\mbox{\onelinecomment  \figref{monoid semiring} with \ensuremath{( \diamond )\mathrel{=}(\mathbin{+})}}{}\<[E]%
\\
\>[10]{}\mathrel{=}\lambda\, \Varid{w}\to \bigOp\sum{\Varid{u},\Varid{v}\;\!\!\\\!\!\;\Varid{u}\mathbin{+}\Varid{v}\mathrel{=}\Varid{w}}{1.4}\;\Varid{f}\;\Varid{u}\mathbin{*}\Varid{g}\;\Varid{v}{}\<[77]%
\>[77]{}\mbox{\onelinecomment  equivalent definition}{}\<[E]%
\\
\>[10]{}\mathrel{=}\lambda\, \Varid{w}\to \bigOp\sum{\Varid{u},\Varid{v}\;\!\!\\\!\!\;\Varid{v}\mathrel{=}\Varid{w}\mathbin{-}\Varid{u}}{1.4}\;\Varid{f}\;\Varid{u}\mathbin{*}\Varid{g}\;\Varid{v}{}\<[77]%
\>[77]{}\mbox{\onelinecomment  solve \ensuremath{\Varid{u}\mathbin{+}\Varid{v}\mathrel{=}\Varid{w}} for \ensuremath{\Varid{v}}}{}\<[E]%
\\
\>[10]{}\mathrel{=}\lambda\, \Varid{w}\to \bigOp\sum{\Varid{u}}{0}\;\Varid{f}\;\Varid{u}\mathbin{*}\Varid{g}\;(\Varid{w}\mathbin{-}\Varid{u}){}\<[77]%
\>[77]{}\mbox{\onelinecomment  substitute \ensuremath{\Varid{w}\mathbin{-}\Varid{u}} for \ensuremath{\Varid{v}}}{}\<[E]%
\ColumnHook
\end{hscode}\resethooks
\end{spacing}
\vspace{-3ex}
\noindent
This last form is the standard definition of one-dimensional, discrete \emph{convolution} \citep[Chapter 6]{Smith1997SEG}.\footnote{Note that this reasoning applies to \emph{any} group (monoid with inverses).}
Therefore, just as monoid semiring multiplication generalizes language concatenation (via the predicate/set isomorphism), it also generalizes the usual notion of discrete convolution.
Moreover, if the domain is a continuous type such as \ensuremath{\mathbb R} or \ensuremath{\mathbb C}, we can reinterpret summation as integration, resulting in \emph{continuous} convolution.
Additionally, for multi-dimensional (discrete or continuous) convolution, we can simply use tuples of scalar indices for \ensuremath{\Varid{w}} and \ensuremath{\Varid{u}}, defining tuple addition and subtraction componentwise.
\notefoot{More generally, cartesian products of monoids are also monoids.
Consider multi-dimensional convolution in which different dimensions have different types, even mixing discrete and continuous, and maybe even sequences and numbers.
At the least, it's useful to combine finite dimensions of different sizes.}
Alternatively, curry, convolve, and uncurry, exploiting the fact that \ensuremath{\Varid{curry}} is a semiring homomorphism (\thmref{curry semiring}).
\notefoot{Mention the connection between generalized tries and currying.}

\note{Maybe give some convolution examples.}

What if we use functions from \ensuremath{\mathbb N} rather than from \ensuremath{\mathbb Z}?
Because \ensuremath{\mathbb N \simeq [\mskip1.5mu ()\mskip1.5mu]} (essentially, Peano numbers), we can directly use the definitions in \secref{Decomposing Functions from Lists} for domain \ensuremath{[\mskip1.5mu \Varid{c}\mskip1.5mu]}, specialized to \ensuremath{\Varid{c}\mathrel{=}()}.
As a suitable indexable functor, we can simply use the identity functor:
%% %format Identity = Id
\begin{hscode}\SaveRestoreHook
\column{B}{@{}>{\hspre}l<{\hspost}@{}}%
\column{3}{@{}>{\hspre}l<{\hspost}@{}}%
\column{21}{@{}>{\hspre}l<{\hspost}@{}}%
\column{E}{@{}>{\hspre}l<{\hspost}@{}}%
\>[B]{}\mathbf{newtype}\;\Conid{Identity}\;\Varid{b}\mathrel{=}\Conid{Identity}\;\Varid{b}\;\mathbf{deriving}{}\<[E]%
\\
\>[B]{}\hsindent{3}{}\<[3]%
\>[3]{}(\Conid{Functor},\Conid{Additive},\Conid{IsZero},\Conid{IsOne},\Conid{LeftSemimodule}\;\Varid{s},\Conid{Semiring}){}\<[E]%
\\[\blanklineskip]%
\>[B]{}\mathbf{instance}\;\Conid{Indexable}\;{}\<[21]%
\>[21]{}()\;\Varid{b}\;(\Conid{Identity}\;\Varid{b})\;\mathbf{where}\;\Conid{Identity}\;\Varid{a}\mathbin{!}()\mathrel{=}\Varid{a}{}\<[E]%
\\
\>[B]{}\mathbf{instance}\;\Conid{HasSingle}\;{}\<[21]%
\>[21]{}()\;\Varid{b}\;(\Conid{Identity}\;\Varid{b})\;\mathbf{where}\;()\mapsto\Varid{b}\mathrel{=}\Conid{Identity}\;\Varid{b}{}\<[E]%
\ColumnHook
\end{hscode}\resethooks
The type \ensuremath{\Conid{Cofree}\;\Conid{Identity}} is isomorphic to \emph{streams} (infinite-only lists).
Inlining and simplification during compilation might eliminate all of the run-time overhead of introducing the identity functor.

Just as \ensuremath{\Conid{Cofree}\;\Conid{Identity}} gives (necessarily infinite) streams, \ensuremath{\Conid{Cofree}\;\Conid{Maybe}} gives (possibly finite) \emph{nonempty lists} \citep{Uustalu2008CNC, Maguire2016}.
As with finite maps, we can interpret absence (\ensuremath{\Conid{Nothing}}) as \ensuremath{\mathrm{0}}%
:
\begin{hscode}\SaveRestoreHook
\column{B}{@{}>{\hspre}l<{\hspost}@{}}%
\column{3}{@{}>{\hspre}l<{\hspost}@{}}%
\column{12}{@{}>{\hspre}l<{\hspost}@{}}%
\column{13}{@{}>{\hspre}l<{\hspost}@{}}%
\column{26}{@{}>{\hspre}l<{\hspost}@{}}%
\column{E}{@{}>{\hspre}l<{\hspost}@{}}%
\>[B]{}\mathbf{instance}\;\Conid{Additive}\;\Varid{b}\Rightarrow \Conid{Indexable}\;()\;\Varid{b}\;(\Conid{Maybe}\;\Varid{b})\;\mathbf{where}{}\<[E]%
\\
\>[B]{}\hsindent{3}{}\<[3]%
\>[3]{}\Conid{Nothing}{}\<[12]%
\>[12]{}\mathbin{!}()\mathrel{=}\mathrm{0}{}\<[E]%
\\
\>[B]{}\hsindent{3}{}\<[3]%
\>[3]{}\Conid{Just}\;\Varid{b}{}\<[12]%
\>[12]{}\mathbin{!}()\mathrel{=}\Varid{b}{}\<[E]%
\\[\blanklineskip]%
\>[B]{}\mathbf{instance}\;(\Conid{IsZero}\;\Varid{b},\Conid{Additive}\;\Varid{b})\Rightarrow \Conid{HasSingle}\;()\;\Varid{b}\;(\Conid{Maybe}\;\Varid{b})\;\mathbf{where}{}\<[E]%
\\
\>[B]{}\hsindent{3}{}\<[3]%
\>[3]{}()\mapsto\Varid{b}{}\<[13]%
\>[13]{}\mid \Varid{isZero}\;\Varid{b}{}\<[26]%
\>[26]{}\mathrel{=}\Conid{Nothing}{}\<[E]%
\\
\>[13]{}\mid \Varid{otherwise}{}\<[26]%
\>[26]{}\mathrel{=}\Conid{Just}\;\Varid{b}{}\<[E]%
\ColumnHook
\end{hscode}\resethooks

Alternatively, define instances directly for lists, specified by a denotation of \ensuremath{[\mskip1.5mu \Varid{b}\mskip1.5mu]} as \ensuremath{\mathbb N\to \Varid{b}}.
The instances resemble those in \figref{Cofree}, but have an extra case for the empty list and no \ensuremath{\Varid{fmap}}:
\begin{hscode}\SaveRestoreHook
\column{B}{@{}>{\hspre}l<{\hspost}@{}}%
\column{3}{@{}>{\hspre}l<{\hspost}@{}}%
\column{7}{@{}>{\hspre}c<{\hspost}@{}}%
\column{7E}{@{}l@{}}%
\column{10}{@{}>{\hspre}l<{\hspost}@{}}%
\column{13}{@{}>{\hspre}l<{\hspost}@{}}%
\column{14}{@{}>{\hspre}l<{\hspost}@{}}%
\column{20}{@{}>{\hspre}l<{\hspost}@{}}%
\column{21}{@{}>{\hspre}l<{\hspost}@{}}%
\column{E}{@{}>{\hspre}l<{\hspost}@{}}%
\>[B]{}\mathbf{instance}\;\Conid{Additive}\;\Varid{b}\Rightarrow \Conid{Indexable}\;\mathbb N\;\Varid{b}\;[\mskip1.5mu \Varid{b}\mskip1.5mu]\;\mathbf{where}{}\<[E]%
\\
\>[B]{}\hsindent{3}{}\<[3]%
\>[3]{}[\mskip1.5mu \mskip1.5mu]\mathbin{!}\anonymous \mathrel{=}\mathrm{0}{}\<[E]%
\\
\>[B]{}\hsindent{3}{}\<[3]%
\>[3]{}(\Varid{b}{}\<[7]%
\>[7]{}\mathbin{:}{}\<[7E]%
\>[10]{}\anonymous {}\<[14]%
\>[14]{})\mathbin{!}\mathrm{0}{}\<[21]%
\>[21]{}\mathrel{=}\Varid{b}{}\<[E]%
\\
\>[B]{}\hsindent{3}{}\<[3]%
\>[3]{}(\anonymous {}\<[7]%
\>[7]{}\mathbin{:}{}\<[7E]%
\>[10]{}\Varid{bs}{}\<[14]%
\>[14]{})\mathbin{!}\Varid{n}{}\<[21]%
\>[21]{}\mathrel{=}\Varid{bs}\mathbin{!}(\Varid{n}\mathbin{-}\mathrm{1}){}\<[E]%
\\[\blanklineskip]%
\>[B]{}\mathbf{instance}\;\Conid{Additive}\;\Varid{b}\Rightarrow \Conid{Additive}\;[\mskip1.5mu \Varid{b}\mskip1.5mu]\;\mathbf{where}{}\<[E]%
\\
\>[B]{}\hsindent{3}{}\<[3]%
\>[3]{}\mathrm{0}\mathrel{=}[\mskip1.5mu \mskip1.5mu]{}\<[E]%
\\
\>[B]{}\hsindent{3}{}\<[3]%
\>[3]{}[\mskip1.5mu \mskip1.5mu]\mathbin{+}\Varid{bs}\mathrel{=}\Varid{bs}{}\<[E]%
\\
\>[B]{}\hsindent{3}{}\<[3]%
\>[3]{}\Varid{as}\mathbin{+}[\mskip1.5mu \mskip1.5mu]\mathrel{=}\Varid{as}{}\<[E]%
\\
\>[B]{}\hsindent{3}{}\<[3]%
\>[3]{}(\Varid{a}\mathbin{:}\Varid{as})\mathbin{+}(\Varid{b}\mathbin{:}\Varid{bs})\mathrel{=}\Varid{a}\mathbin{+}\Varid{b}\mathbin{:}\Varid{as}\mathbin{+}\Varid{bs}{}\<[E]%
\\[\blanklineskip]%
\>[B]{}\mathbf{instance}\;(\Conid{Semiring}\;\Varid{b},\Conid{IsZero}\;\Varid{b},\Conid{IsOne}\;\Varid{b})\Rightarrow \Conid{Semiring}\;[\mskip1.5mu \Varid{b}\mskip1.5mu]\;\mathbf{where}{}\<[E]%
\\
\>[B]{}\hsindent{3}{}\<[3]%
\>[3]{}\mathrm{1}\mathrel{=}\mathrm{1}\mathbin{:}\mathrm{0}{}\<[E]%
\\
\>[B]{}\hsindent{3}{}\<[3]%
\>[3]{}[\mskip1.5mu \mskip1.5mu]{}\<[13]%
\>[13]{}\mathbin{*}\anonymous {}\<[20]%
\>[20]{}\mathrel{=}[\mskip1.5mu \mskip1.5mu]\;{}\mbox{\onelinecomment  \ensuremath{\mathrm{0}\mathbin{*}\Varid{q}\mathrel{=}\mathrm{0}}}{}\<[E]%
\\
\>[B]{}\hsindent{3}{}\<[3]%
\>[3]{}(\Varid{a}\mathbin{:}\Varid{dp}){}\<[13]%
\>[13]{}\mathbin{*}\Varid{q}{}\<[20]%
\>[20]{}\mathrel{=}\Varid{a}\cdot\Varid{q}\mathbin{+}(\mathrm{0}\mathbin{:}\Varid{dp}\mathbin{*}\Varid{q}){}\<[E]%
\ColumnHook
\end{hscode}\resethooks
This last definition is reminiscent of long multiplication, which is convolution in disguise.

\sectionl{Beyond Convolution}

Many uses of discrete convolution (including convolutional neural networks \citep[Chapter 9]{LecunBengioHinton2015DLNature}) involve functions having finite support, i.e., nonzero on only a finite subset of their domains.
\notefoot{First suggest finite maps, using instances from \figref{Map}. Then intervals/arrays.}
In many cases, these domain subsets may be defined by finite \emph{intervals}.
For instance, such a 2D operation would be given by intervals in each dimension, together specifying lower left and upper right corners of a 2D interval (rectangle) outside of which the functions are guaranteed to be zero.
The two input intervals needn't have the same size, and the result's interval of support is typically larger than both inputs, with size equaling the sum of the sizes in each dimension (minus one for the discrete case).
\notefoot{Show an example as a picture.}
Since the result's support size is entirely predictable and based only on the arguments' sizes, it is appealing to track sizes statically via types.
For instance, a 1D convolution might have the following type:
\notefoot{To do: More clearly distinguish between functions with finite support vs functions with finite domains. I think I started this paragraph with the former mindset but switched to the latter.}
\begin{hscode}\SaveRestoreHook
\column{B}{@{}>{\hspre}l<{\hspost}@{}}%
\column{E}{@{}>{\hspre}l<{\hspost}@{}}%
\>[B]{}(\mathbin{*})\mathbin{::}\Conid{Semiring}\;\Varid{s}\Rightarrow \Conid{Array}_{\Varid{m}\mathbin{+}\mathrm{1}}\;\Varid{s}\to \Conid{Array}_{\Varid{n}\mathbin{+}\mathrm{1}}\;\Varid{s}\to \Conid{Array}_{\Varid{m}\mathbin{+}\Varid{n}\mathbin{+}\mathrm{1}}\;\Varid{s}{}\<[E]%
\ColumnHook
\end{hscode}\resethooks
Unfortunately, this signature is incompatible with semiring multiplication, in which arguments and result all have the same type.

From the perspective of functions, an array of size \ensuremath{\Varid{n}} is a memoized function from \ensuremath{\Conid{Fin}_{\Varid{n}}}, a type representing the finite set \ensuremath{\set{\mathrm{0},\ldots,\Varid{n}\mathbin{-}\mathrm{1}}}.
We can still define convolution in the customary sense in terms of index addition:
\begin{hscode}\SaveRestoreHook
\column{B}{@{}>{\hspre}l<{\hspost}@{}}%
\column{E}{@{}>{\hspre}l<{\hspost}@{}}%
\>[B]{}\Varid{f}\mathbin{*}\Varid{g}\mathrel{=}\bigOp\sum{\Varid{u},\Varid{v}}{0}\;\Varid{u}\mathbin{+}\Varid{v}\mapsto\Varid{f}\;\Varid{u}\mathbin{*}\Varid{g}\;\Varid{v}{}\<[E]%
\ColumnHook
\end{hscode}\resethooks
where now
\begin{hscode}\SaveRestoreHook
\column{B}{@{}>{\hspre}l<{\hspost}@{}}%
\column{E}{@{}>{\hspre}l<{\hspost}@{}}%
\>[B]{}(\mathbin{+})\mathbin{::}\Conid{Fin}_{\Varid{m}\mathbin{+}\mathrm{1}}\to \Conid{Fin}_{\Varid{n}\mathbin{+}\mathrm{1}}\to \Conid{Fin}_{\Varid{m}\mathbin{+}\Varid{n}\mathbin{+}\mathrm{1}}{}\<[E]%
\ColumnHook
\end{hscode}\resethooks
Indices can no longer form a monoid under addition, however, due to the nonuniformity of types.

The inability to support convolution on statically sized arrays (or other memoized forms of functions over finite domains) as semiring multiplication came from the expectation that indices/arguments combine via a monoid.
Fortunately, this expectation can be dropped by generalizing from monoidal combination to an \emph{arbitrary} binary operation \ensuremath{\Varid{h}\mathbin{::}\Varid{a}\to \Varid{b}\to \Varid{c}}.
For now, let's call this more general operation ``\ensuremath{\Varid{lift}_{2}\;\Varid{h}}''.
\begin{hscode}\SaveRestoreHook
\column{B}{@{}>{\hspre}l<{\hspost}@{}}%
\column{E}{@{}>{\hspre}l<{\hspost}@{}}%
\>[B]{}\Varid{lift}_{2}\mathbin{::}\Conid{Semiring}\;\Varid{s}\Rightarrow (\Varid{a}\to \Varid{b}\to \Varid{c})\to (\Varid{a}\to \Varid{s})\to (\Varid{b}\to \Varid{s})\to (\Varid{c}\to \Varid{s}){}\<[E]%
\\
\>[B]{}\Varid{lift}_{2}\;\Varid{h}\;\Varid{f}\;\Varid{g}\mathrel{=}\bigOp\sum{\Varid{u},\Varid{v}}{0}\;\Varid{h}\;\Varid{u}\;\Varid{v}\mapsto\Varid{f}\;\Varid{u}\mathbin{*}\Varid{g}\;\Varid{v}{}\<[E]%
\ColumnHook
\end{hscode}\resethooks
We can similarly lift functions of \emph{any} arity:
\begin{hscode}\SaveRestoreHook
\column{B}{@{}>{\hspre}l<{\hspost}@{}}%
\column{E}{@{}>{\hspre}l<{\hspost}@{}}%
\>[B]{}\Varid{lift}_n\mathbin{::}\Conid{Semiring}\;\Varid{s}\Rightarrow (\Varid{a}_{1}\to \cdots\to \Varid{a}_n\to \Varid{b})\to (\Varid{a}_{1}\to \Varid{s})\to \cdots\to (\Varid{a}_n\to \Varid{s})\to (\Varid{b}\to \Varid{s}){}\<[E]%
\\
\>[B]{}\Varid{lift}_n\;\Varid{h}\;\Varid{f}_{1}\cdots\Varid{f}_n\mathrel{=}\bigOp\sum{\Varid{u}_{1},\ldots,\Varid{u}_n}{1}\;\Varid{h}\;\Varid{u}_{1}\cdots\Varid{u}_n\mapsto\Varid{f}_{1}\;\Varid{u}_{1}\mathbin{*}\cdots\mathbin{*}\Varid{f}_n\;\Varid{u}_n{}\<[E]%
\ColumnHook
\end{hscode}\resethooks
Here we are summing over the set-valued \emph{preimage} of \ensuremath{\Varid{w}} under \ensuremath{\Varid{h}}.
Now consider two specific instances of \ensuremath{\Varid{lift}_n}:
\begin{hscode}\SaveRestoreHook
\column{B}{@{}>{\hspre}l<{\hspost}@{}}%
\column{10}{@{}>{\hspre}l<{\hspost}@{}}%
\column{E}{@{}>{\hspre}l<{\hspost}@{}}%
\>[B]{}\Varid{lift}_{1}\mathbin{::}\Conid{Semiring}\;\Varid{s}\Rightarrow (\Varid{a}\to \Varid{b})\to (\Varid{a}\to \Varid{s})\to (\Varid{b}\to \Varid{s}){}\<[E]%
\\
\>[B]{}\Varid{lift}_{1}\;\Varid{h}\;\Varid{f}\mathrel{=}\bigOp\sum{\Varid{u}}{0}\;\Varid{h}\;\Varid{u}\mapsto\Varid{f}\;\Varid{u}{}\<[E]%
\\[\blanklineskip]%
\>[B]{}\Varid{lift}_{0}\mathbin{::}\Conid{Semiring}\;\Varid{s}\Rightarrow \Varid{b}\to (\Varid{b}\to \Varid{s}){}\<[E]%
\\
\>[B]{}\Varid{lift}_{0}\;\Varid{b}{}\<[10]%
\>[10]{}\mathrel{=}\Varid{b}\mapsto\mathrm{1}{}\<[E]%
\\
\>[10]{}\mathrel{=}\Varid{single}\;\Varid{b}{}\<[E]%
\ColumnHook
\end{hscode}\resethooks

%% %format keys p = p
\noindent
The signatures of \ensuremath{\Varid{lift}_{2}}, \ensuremath{\Varid{lift}_{1}}, and \ensuremath{\Varid{lift}_{0}} \emph{almost} generalize to those of \ensuremath{\Varid{liftA}_{2}}, \ensuremath{\Varid{fmap}}, and \ensuremath{\Varid{pure}} from the \ensuremath{\Conid{Functor}} and \ensuremath{\Conid{Applicative}} type classes \citep{McBride2008APE,Typeclassopedia}.
In type systems like Haskell's, however, \ensuremath{\Varid{a}\to \Varid{s}} is the functor \ensuremath{(\Varid{a}\to )} applied to \ensuremath{\Varid{s}}, while we would need it to be \ensuremath{(\to \Varid{s})} applied to \ensuremath{\Varid{a}}.
To fix this problem, define a type wrapper that swaps domain and codomain type parameters:
\begin{hscode}\SaveRestoreHook
\column{B}{@{}>{\hspre}l<{\hspost}@{}}%
\column{E}{@{}>{\hspre}l<{\hspost}@{}}%
\>[B]{}\mathbf{newtype}\;\Varid{s}\leftarrow\Varid{a}\mathrel{=}\Conid{F}\;(\Varid{a}\to \Varid{s}){}\<[E]%
\ColumnHook
\end{hscode}\resethooks
The use of \ensuremath{\Varid{s}\leftarrow\Varid{a}} as an alternative to \ensuremath{\Varid{a}\to \Varid{s}} allows us to give instances for both and to stay within Haskell's type system (and ability to infer types via first-order unification).

With this change, we can replace the specialized \ensuremath{\Varid{lift}_n} operations with standard ones.
An enhanced version of the \ensuremath{\Conid{Functor}}, \ensuremath{\Conid{Applicative}}, and \ensuremath{\Conid{Monad}} classes (similar to those by \citet{Kidney2017CA}) appear in \figrefdef{FunApp}{\ensuremath{\Conid{Functor}} and \ensuremath{\Conid{Applicative}} classes and some instances}{
\begin{hscode}\SaveRestoreHook
\column{B}{@{}>{\hspre}l<{\hspost}@{}}%
\column{3}{@{}>{\hspre}l<{\hspost}@{}}%
\column{21}{@{}>{\hspre}l<{\hspost}@{}}%
\column{32}{@{}>{\hspre}l<{\hspost}@{}}%
\column{41}{@{}>{\hspre}l<{\hspost}@{}}%
\column{48}{@{}>{\hspre}l<{\hspost}@{}}%
\column{50}{@{}>{\hspre}l<{\hspost}@{}}%
\column{64}{@{}>{\hspre}l<{\hspost}@{}}%
\column{72}{@{}>{\hspre}l<{\hspost}@{}}%
\column{E}{@{}>{\hspre}l<{\hspost}@{}}%
\>[B]{}\mathbf{class}\;\Conid{Functor}\;\Varid{f}\;\mathbf{where}{}\<[E]%
\\
\>[B]{}\hsindent{3}{}\<[3]%
\>[3]{}\mathbf{type}\;\Conid{Ok}\;\Varid{f}\;\Varid{a}\mathbin{::}\Conid{Constraint}{}\<[E]%
\\
\>[B]{}\hsindent{3}{}\<[3]%
\>[3]{}\mathbf{type}\;\Conid{Ok}\;\Varid{f}\;\Varid{a}\mathrel{=}(){}\<[21]%
\>[21]{}\mbox{\onelinecomment  default}{}\<[E]%
\\
\>[B]{}\hsindent{3}{}\<[3]%
\>[3]{}\Varid{fmap}\mathbin{::}(\Conid{Ok}\;\Varid{f}\;\Varid{a},\Conid{Ok}\;\Varid{f}\;\Varid{b})\Rightarrow (\Varid{a}\to \Varid{b})\to \Varid{f}\;\Varid{a}\to \Varid{f}\;\Varid{b}{}\<[E]%
\\[\blanklineskip]%
\>[B]{}\mathbf{class}\;\Conid{Functor}\;\Varid{f}\Rightarrow \Conid{Applicative}\;\Varid{f}\;\mathbf{where}{}\<[E]%
\\
\>[B]{}\hsindent{3}{}\<[3]%
\>[3]{}\Varid{pure}\mathbin{::}\Conid{Ok}\;\Varid{f}\;\Varid{a}\Rightarrow \Varid{a}\to \Varid{f}\;\Varid{a}{}\<[E]%
\\
\>[B]{}\hsindent{3}{}\<[3]%
\>[3]{}\Varid{liftA}_{2}\mathbin{::}(\Conid{Ok}\;\Varid{f}\;\Varid{a},\Conid{Ok}\;\Varid{f}\;\Varid{b},\Conid{Ok}\;\Varid{f}\;\Varid{c})\Rightarrow (\Varid{a}\to \Varid{b}\to \Varid{c})\to \Varid{f}\;\Varid{a}\to \Varid{f}\;\Varid{b}\to \Varid{f}\;\Varid{c}{}\<[E]%
\\[\blanklineskip]%
\>[B]{}\mathbf{infixl}\;\mathrm{1}\bind {}\<[E]%
\\
\>[B]{}\mathbf{class}\;\Conid{Applicative}\;\Varid{f}\Rightarrow \Conid{Monad}\;\Varid{f}\;\mathbf{where}{}\<[E]%
\\
\>[B]{}\hsindent{3}{}\<[3]%
\>[3]{}(\bind )\mathbin{::}(\Conid{Ok}\;\Varid{f}\;\Varid{a},\Conid{Ok}\;\Varid{f}\;\Varid{b})\Rightarrow \Varid{f}\;\Varid{a}\to (\Varid{a}\to \Varid{f}\;\Varid{b})\to \Varid{f}\;\Varid{b}{}\<[E]%
\\
\>[B]{}{}{}\<[E]%
\\[\blanklineskip]%
\>[B]{}\mathbf{instance}\;\Conid{Functor}\;((\to )\;\Varid{a})\;\mathbf{where}\;{}\<[48]%
\>[48]{}\mathbf{instance}\;\Conid{Semiring}\;\Varid{b}\Rightarrow \Conid{Functor}\;((\leftarrow)\;\Varid{b})\;\mathbf{where}{}\<[E]%
\\
\>[48]{}\hsindent{2}{}\<[50]%
\>[50]{}\mathbf{type}\;\Conid{Ok}\;((\leftarrow)\;\Varid{b})\;\Varid{a}\mathrel{=}\Conid{Eq}\;\Varid{a}{}\<[E]%
\\
\>[B]{}\hsindent{3}{}\<[3]%
\>[3]{}\Varid{fmap}\;\Varid{h}\;\Varid{f}\mathrel{=}\lambda\, \Varid{a}\to \Varid{h}\;(\Varid{f}\;\Varid{a})\;{}\<[50]%
\>[50]{}\Varid{fmap}\;\Varid{h}\;(\Conid{F}\;\Varid{f}){}\<[64]%
\>[64]{}\mathrel{=}\bigOp\sum{\Varid{u}}{0}\;\Varid{h}\;\Varid{u}\mapsto\Varid{f}\;\Varid{u}{}\<[E]%
\\
\>[64]{}\mathrel{=}\Conid{F}\;(\lambda\, \Varid{z}\to \bigOp\sum{\Varid{u}\;\!\!\\\!\!\;\Varid{h}\;\Varid{u}\mathrel{=}\Varid{z}}{1}\;\Varid{f}\;\Varid{u}){}\<[E]%
\\[\blanklineskip]%
\>[B]{}\mathbf{instance}\;\Conid{Applicative}\;((\to )\;\Varid{a})\;\mathbf{where}\;{}\<[41]%
\>[41]{}\ \ \ \ \ \ \ \ \;{}\<[48]%
\>[48]{}\mathbf{instance}\;\Conid{Semiring}\;\Varid{b}\Rightarrow \Conid{Applicative}\;((\leftarrow)\;\Varid{b})\;\mathbf{where}{}\<[E]%
\\
\>[B]{}\hsindent{3}{}\<[3]%
\>[3]{}\Varid{pure}\;\Varid{b}\mathrel{=}\lambda\, \Varid{a}\to \Varid{b}\;{}\<[50]%
\>[50]{}\Varid{pure}\;\Varid{a}\mathrel{=}\Varid{single}\;\Varid{a}{}\<[E]%
\\
\>[B]{}\hsindent{3}{}\<[3]%
\>[3]{}\Varid{liftA}_{2}\;\Varid{h}\;\Varid{f}\;\Varid{g}\mathrel{=}\lambda\, \Varid{a}\to \Varid{h}\;(\Varid{f}\;\Varid{a})\;(\Varid{g}\;\Varid{a})\;{}\<[50]%
\>[50]{}\Varid{liftA}_{2}\;\Varid{h}\;(\Conid{F}\;\Varid{f})\;(\Conid{F}\;\Varid{g}){}\<[72]%
\>[72]{}\mathrel{=}\bigOp\sum{\Varid{u},\Varid{v}}{0}\;\Varid{h}\;\Varid{u}\;\Varid{v}\mapsto\Varid{f}\;\Varid{u}\mathbin{*}\Varid{g}\;\Varid{v}{}\<[E]%
\\
\>[72]{}\mathrel{=}\Conid{F}\;(\lambda\, \Varid{z}\to \bigOp\sum{\Varid{u},\Varid{v}\;\!\!\\\!\!\;\Varid{h}\;\Varid{u}\;\Varid{v}\mathrel{=}\Varid{z}}{1}\;\Varid{f}\;\Varid{u}\mathbin{*}\Varid{g}\;\Varid{v}){}\<[E]%
\\[\blanklineskip]%
\>[B]{}\mathbf{instance}\;\Conid{Monad}\;((\to )\;\Varid{a})\;\mathbf{where}{}\<[E]%
\\
\>[B]{}\hsindent{3}{}\<[3]%
\>[3]{}\Varid{m}\bind \Varid{f}\mathrel{=}\lambda\, \Varid{a}\to \Varid{f}\;(\Varid{m}\;\Varid{a})\;\Varid{a}{}\<[E]%
\\[\blanklineskip]%
\>[B]{}\mathbf{instance}\;\Conid{Ord}\;\Varid{a}\Rightarrow \Conid{Functor}\;{}\<[32]%
\>[32]{}(\Conid{Map}\;\Varid{a})\;\mathbf{where}\ldots{}\<[E]%
\\
\>[B]{}\mathbf{instance}\;\Conid{Ord}\;\Varid{a}\Rightarrow \Conid{Applicative}\;{}\<[32]%
\>[32]{}(\Conid{Map}\;\Varid{a})\;\mathbf{where}\ldots{}\<[E]%
\\[\blanklineskip]%
\>[B]{}\mathbf{newtype}\;\Conid{Map'}\;\Varid{b}\;\Varid{a}\mathrel{=}\Conid{M}\;(\Conid{Map}\;\Varid{a}\;\Varid{b}){}\<[E]%
\\[\blanklineskip]%
\>[B]{}\mathbf{instance}\;\Conid{IsZero}\;\Varid{b}\Rightarrow \Conid{Functor}\;(\Conid{Map'}\;\Varid{b})\;\mathbf{where}{}\<[E]%
\\
\>[B]{}\hsindent{3}{}\<[3]%
\>[3]{}\mathbf{type}\;\Conid{Ok}\;(\Conid{Map'}\;\Varid{b})\;\Varid{a}\mathrel{=}\Conid{Ord}\;\Varid{a}{}\<[E]%
\\
\>[B]{}\hsindent{3}{}\<[3]%
\>[3]{}\Varid{fmap}\;\Varid{h}\;(\Conid{M}\;\Varid{p})\mathrel{=}\bigOp\sum{\Varid{a}\mathbin{\in}\Conid{M}\!.\Varid{keys}\;\Varid{p}}{2.5}\;\Varid{h}\;\Varid{a}\mapsto\Varid{p}\mathbin{!}\Varid{a}{}\<[E]%
\\[\blanklineskip]%
\>[B]{}\mathbf{instance}\;\Conid{IsZero}\;\Varid{b}\Rightarrow \Conid{Applicative}\;(\Conid{Map'}\;\Varid{b})\;\mathbf{where}{}\<[E]%
\\
\>[B]{}\hsindent{3}{}\<[3]%
\>[3]{}\Varid{pure}\;\Varid{a}\mathrel{=}\Varid{single}\;\Varid{a}{}\<[E]%
\\
\>[B]{}\hsindent{3}{}\<[3]%
\>[3]{}\Varid{liftA}_{2}\;\Varid{h}\;(\Conid{M}\;\Varid{p})\;(\Conid{M}\;\Varid{q})\mathrel{=}\bigOp\sum{\Varid{a}\mathbin{\in}\Conid{M}\!.\Varid{keys}\;\Varid{p}\;\!\!\\\!\!\;\Varid{b}\mathbin{\in}\Conid{M}\!.\Varid{keys}\;\Varid{q}}{2.5}\;\Varid{h}\;\Varid{a}\;\Varid{b}\mapsto(\Varid{p}\mathbin{!}\Varid{a})\mathbin{*}(\Varid{q}\mathbin{!}\Varid{b}){}\<[E]%
\ColumnHook
\end{hscode}\resethooks

  %% f * g  = bigSum (u,v) u <> v +-> f u <.> g v
  %%        = \ w -> bigSumQ (u,v BR u <> v == w) f u <.> g v
\vspace{-3ex}
}, along with instances for functions and finite maps.
Other representations would need similar reversal of type arguments.
\footnote{The enhancement is the associated constraint \citep{Bolingbroke2011CK} \ensuremath{\Conid{Ok}}, limiting the types that the class methods must support. The line ``\ensuremath{\mathbf{type}\;\Conid{Ok}\;\Varid{f}\;\Varid{a}\mathrel{=}()}'' means that the constraint on \ensuremath{\Varid{a}} defaults to \ensuremath{()}, which holds vacuously for all \ensuremath{\Varid{a}}.}%
\footnote{Originally, \ensuremath{\Conid{Applicative}} had a \ensuremath{(\mathbin{<\!\!\!*\!\!\!>})} method from which one can easily define \ensuremath{\Varid{liftA}_{2}}. Since the base library version 4.10, \ensuremath{\Varid{liftA}_{2}} was added as a method (along with a default definition of \ensuremath{(\mathbin{<\!\!\!*\!\!\!>})}) to allow for more efficient implementation \citep[Section 3.2.2]{GHC821}.}
Higher-arity liftings can be defined via these three.\out{ (Exercise.)}
For \ensuremath{\Varid{b}\leftarrow\Varid{a}}, these definitions are not really executable code, since they involve potentially infinite summations, but they serve as specifications for other representations such as finite maps, regular expressions, and tries.
\begin{theorem}
For each instance defined in \figref{FunApp}, \ensuremath{\mathrm{1}\mathrel{=}\Varid{pure}\;\varepsilon}, and \ensuremath{(\mathbin{*})\mathrel{=}\Varid{liftA}_{2}\;( \diamond )}.
\notefoot{Probe more into the pattern of \ensuremath{\Varid{single}\mathrel{=}\Varid{pure}}, \ensuremath{\mathrm{1}\mathrel{=}\Varid{single}\;\varepsilon} and \ensuremath{(\mathbin{*})\mathrel{=}\Varid{liftA}_{2}\;( \diamond )}.
Also the relationship between forward and reverse functions and maps.}
\end{theorem}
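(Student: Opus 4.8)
The plan is to prove both equations by definitional unfolding, carried out separately for each instance, since every operation involved is given by explicit defining equations: the Semiring operations \ensuremath{\mathrm{1}} and \ensuremath{(\mathbin{*})} come from \figref{monoid semiring} (for the function and reverse-function representations) and \figref{Map} (for finite maps), while \ensuremath{\Varid{pure}} and \ensuremath{\Varid{liftA}_{2}} come from \figref{FunApp}. The organizing observation is that the generic lifting \ensuremath{\Varid{liftA}_{2}\;\Varid{h}} in \figref{FunApp} is literally the generalized convolution \ensuremath{\Varid{lift}_{2}\;\Varid{h}} of \secref{Beyond Convolution}, and that \ensuremath{\Varid{pure}} in each instance is exactly \ensuremath{\Varid{single}}; so the two claimed equations are instances of a single slogan, namely that \ensuremath{\Varid{single}\;\varepsilon} is the unit and convolution-by-\ensuremath{(\diamond)} is the product. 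I would first check the equations at the reference representation \ensuremath{\Varid{b}\leftarrow\Varid{a}} (equivalently \ensuremath{\Varid{a}\to \Varid{b}} under the \ensuremath{\Conid{F}} wrapper), and then transport to the remaining representations, either by the same unfolding or, more economically, by invoking the homomorphism property of the denotation \ensuremath{(\mathbin{!})} established in the earlier theorems.

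The unit equation \ensuremath{\mathrm{1}\mathrel{=}\Varid{pure}\;\varepsilon} needs almost no work: in each Applicative instance \ensuremath{\Varid{pure}\;\varepsilon\mathrel{=}\Varid{single}\;\varepsilon} by the defining clause for \ensuremath{\Varid{pure}}, and each Semiring instance defines \ensuremath{\mathrm{1}\mathrel{=}\Varid{single}\;\varepsilon\mathrel{=}\varepsilon\mapsto \mathrm{1}} (see \figref{monoid semiring}, \figref{Map}, and the identity \ensuremath{\varepsilon\mapsto \mathrm{1}\mathrel{=}\Varid{single}\;\varepsilon\mathrel{=}\Varid{value}\;\mathrm{1}} from \secref{Function-like Types and Singletons}). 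Thus both sides reduce to the same singleton, and the case closes by citing definitions.

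For the product equation I would specialize the \ensuremath{\Varid{liftA}_{2}} clause with \ensuremath{\Varid{h}\mathrel{:=}(\diamond)}. For \ensuremath{(\leftarrow)\;\Varid{b}} this yields \ensuremath{\Varid{liftA}_{2}\;(\diamond)\;(\Conid{F}\;\Varid{f})\;(\Conid{F}\;\Varid{g})\mathrel{=}\Conid{F}\;(\sum_{\Varid{u},\Varid{v}}\Varid{u}\diamond \Varid{v}\mapsto \Varid{f}\;\Varid{u}\mathbin{*}\Varid{g}\;\Varid{v})}, which is verbatim the monoid-semiring product of \figref{monoid semiring} carried through \ensuremath{\Conid{F}}, so it coincides with \ensuremath{(\mathbin{*})} on \ensuremath{\Varid{b}\leftarrow\Varid{a}}. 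For \ensuremath{\Conid{Map'}\;\Varid{b}} the same substitution turns the \ensuremath{\Varid{liftA}_{2}} clause into a sum of \ensuremath{\Varid{u}\diamond \Varid{v}\mapsto(\Varid{p}\mathbin{!}\Varid{u})\mathbin{*}(\Varid{q}\mathbin{!}\Varid{v})} over the stored keys, which is exactly the \ensuremath{\Conid{Map}} product of \figref{Map}. The forward reader and forward-map applicatives are pointwise, so there the relevant product is the pointwise one and the identity holds trivially; I would note this explicitly so that all instances are accounted for without conflating the pointwise and convolution structures.

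The one genuinely nontrivial step, and the expected obstacle, is the finite-map reconciliation. The Applicative sums in \figref{FunApp} range only over the finitely many stored keys \ensuremath{\Conid{M}\!.\Varid{keys}}, whereas the idealized product sums over the entire, possibly infinite, domain. To justify restricting the sum to the support I would argue that any unstored key indexes \ensuremath{\mathrm{0}} under \ensuremath{(\mathbin{!})} (\figref{Map}), that a summand \ensuremath{\Varid{w}\mapsto \mathrm{0}} is \ensuremath{\mathrm{0}} because \ensuremath{(\mapsto)} is additive (\lemref{+-> homomorphism}), and that \ensuremath{\mathrm{0}} is the additive identity; hence the omitted terms contribute nothing. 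I would also track the swap of type parameters relating \ensuremath{\Conid{Map}\;\Varid{a}\;\Varid{b}} and \ensuremath{\Conid{Map'}\;\Varid{b}\;\Varid{a}} so that ``keys'' on the two sides refer to the same index set. This bookkeeping, rather than any conceptual difficulty, is where the care is required.
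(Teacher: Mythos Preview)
Your approach—unfold each instance definition and match the two sides—is exactly what the paper does; its entire proof is the one sentence ``Immediate from the instance definitions.'' For the reversed instances \ensuremath{(\leftarrow)\;\Varid{b}} and \ensuremath{\Conid{Map'}\;\Varid{b}} your argument is correct: substituting \ensuremath{\Varid{h}\mathbin{:=}(\diamond)} into the \ensuremath{\Varid{liftA}_{2}} clause of \figref{FunApp} reproduces the convolution product symbol for symbol, and \ensuremath{\Varid{pure}\;\varepsilon\mathrel{=}\Varid{single}\;\varepsilon\mathrel{=}\mathrm{1}} by the \ensuremath{\Varid{pure}} clause.

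The ``genuinely nontrivial step'' you anticipate is, however, illusory. You worry that the finite-map semiring product sums over the whole domain while the applicative sum ranges only over \ensuremath{\Conid{M}\!.\Varid{keys}}, so that a support-restriction argument is needed. But re-read \figref{Map}: the \ensuremath{\Conid{Map}} product is already defined as \ensuremath{\Varid{sum}\;[\mskip1.5mu \Varid{u}\mathbin{\diamond}\Varid{v}\mapsto\Varid{p}\mathbin{!}\Varid{u}\mathbin{*}\Varid{q}\mathbin{!}\Varid{v}\mid \Varid{u}\leftarrow \Conid{M}\!.\Varid{keys}\;\Varid{p},\Varid{v}\leftarrow \Conid{M}\!.\Varid{keys}\;\Varid{q}\mskip1.5mu]}, i.e., it ranges over stored keys from the outset. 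The two sides are the \emph{same} finite sum, which is why the paper can call the whole thing immediate.

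Your paragraph on the forward instances is also off. For \ensuremath{(\to)\;\Varid{a}} and \ensuremath{\Conid{Map}\;\Varid{a}} the applicative varies the \emph{codomain}, so \ensuremath{\Varid{pure}\;\varepsilon} would require \ensuremath{\varepsilon\mathbin{::}\Varid{b}} and \ensuremath{\Varid{liftA}_{2}\;(\diamond)} would require \ensuremath{(\diamond)} on \ensuremath{\Varid{b}}; these are not the domain-monoid \ensuremath{\varepsilon} and \ensuremath{(\diamond)} the theorem intends, and the pointwise semiring of \figref{-> and <-- semirings} has \ensuremath{\mathrm{1}\mathrel{=}\Varid{pure}\;\mathrm{1}} and \ensuremath{(\mathbin{*})\mathrel{=}\Varid{liftA}_{2}\;(\mathbin{*})}, not \ensuremath{\Varid{pure}\;\varepsilon} and \ensuremath{\Varid{liftA}_{2}\;(\diamond)}. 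The theorem is really aimed at the reversed instances where the functor position is the domain monoid; the quantification ``for each instance'' is loose, but the types already rule the forward cases out, so no argument for them is needed.
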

\begin{proof}
Immediate from the instance definitions.
\end{proof}

Given the type distinction between \ensuremath{\Varid{a}\to \Varid{b}} and \ensuremath{\Varid{b}\leftarrow\Varid{a}}, let's now reconsider the \ensuremath{\Conid{Semiring}} instances for functions in \figref{monoid semiring} and for sets in \secref{Languages and the Monoid Semiring}.
Each has an alternative choice that is in some ways more compelling, as shown in \figrefdef{-> and <-- semirings}{The \ensuremath{\Varid{a}\to \Varid{b}} and \ensuremath{\Varid{b}\leftarrow\Varid{a}} semirings}{
\begin{hscode}\SaveRestoreHook
\column{B}{@{}>{\hspre}l<{\hspost}@{}}%
\column{3}{@{}>{\hspre}l<{\hspost}@{}}%
\column{21}{@{}>{\hspre}l<{\hspost}@{}}%
\column{22}{@{}>{\hspre}l<{\hspost}@{}}%
\column{E}{@{}>{\hspre}l<{\hspost}@{}}%
\>[B]{}\mathbf{instance}\;\Conid{Semiring}\;\Varid{b}\Rightarrow \Conid{Semiring}\;(\Varid{a}\to \Varid{b})\;\mathbf{where}{}\<[E]%
\\
\>[B]{}\hsindent{3}{}\<[3]%
\>[3]{}\mathrm{1}\mathrel{=}\Varid{pure}\;\mathrm{1}{}\<[21]%
\>[21]{}\mbox{\onelinecomment  i.e., \ensuremath{\mathrm{1}\mathrel{=}\lambda\, \Varid{a}\to \mathrm{1}}}{}\<[E]%
\\
\>[B]{}\hsindent{3}{}\<[3]%
\>[3]{}(\mathbin{*})\mathrel{=}\Varid{liftA}_{2}\;(\mathbin{*}){}\<[21]%
\>[21]{}\mbox{\onelinecomment  i.e., \ensuremath{\Varid{f}\mathbin{*}\Varid{g}\mathrel{=}\lambda\, \Varid{a}\to \Varid{f}\;\Varid{a}\mathbin{*}\Varid{g}\;\Varid{a}}}{}\<[E]%
\\[\blanklineskip]%
\>[B]{}\mathbf{newtype}\;\Varid{b}\leftarrow\Varid{a}\mathrel{=}\Conid{F}\;(\Varid{a}\to \Varid{b})\;\mathbf{deriving}\;(\Conid{Additive},\Conid{HasSingle}\;\Varid{b},\Conid{LeftSemimodule}\;\Varid{b},\Conid{Indexable}\;\Varid{a}\;\Varid{b}){}\<[E]%
\\[\blanklineskip]%
\>[B]{}\mathbf{instance}\;(\Conid{Semiring}\;\Varid{b},\Conid{Monoid}\;\Varid{a})\Rightarrow \Conid{Semiring}\;(\Varid{b}\leftarrow\Varid{a})\;\mathbf{where}{}\<[E]%
\\
\>[B]{}\hsindent{3}{}\<[3]%
\>[3]{}\mathrm{1}\mathrel{=}\Varid{pure}\;\varepsilon{}\<[E]%
\\
\>[B]{}\hsindent{3}{}\<[3]%
\>[3]{}(\mathbin{*})\mathrel{=}\Varid{liftA}_{2}\;( \diamond ){}\<[E]%
\\[\blanklineskip]%
\>[B]{}\mathbf{instance}\;\Conid{Semiring}\;(\Pow\;\Varid{a})\;\mathbf{where}{}\<[E]%
\\
\>[B]{}\hsindent{3}{}\<[3]%
\>[3]{}\mathrm{1}\mathrel{=}\set{\Varid{a}\mid \Conid{True}}{}\<[E]%
\\
\>[B]{}\hsindent{3}{}\<[3]%
\>[3]{}(\mathbin{*})\mathrel{=}\cap{}\<[E]%
\\[\blanklineskip]%
\>[B]{}\mathbf{newtype}\;\Pow'\;\Varid{a}\mathrel{=}\Conid{P}\;(\Pow\;\Varid{a})\;\mathbf{deriving}\;(\Conid{Additive},\Conid{HasSingle}\;\Varid{b},\Conid{LeftSemimodule}\;\Varid{b},\Conid{Indexable}\;\Varid{a}\;\Conid{Bool}){}\<[E]%
\\[\blanklineskip]%
\>[B]{}\mathbf{instance}\;\Conid{Semiring}\;(\Pow'\;\Varid{a})\;\mathbf{where}{}\<[E]%
\\
\>[B]{}\hsindent{3}{}\<[3]%
\>[3]{}\mathrm{1}\mathrel{=}\Varid{pure}\;\varepsilon{}\<[22]%
\>[22]{}\mbox{\onelinecomment  \ensuremath{\mathrm{1}\mathrel{=}\set{\varepsilon}\mathrel{=}\Varid{single}\;\Varid{empty}\mathrel{=}\Varid{value}\;\mathrm{1}}}{}\<[E]%
\\
\>[B]{}\hsindent{3}{}\<[3]%
\>[3]{}(\mathbin{*})\mathrel{=}\Varid{liftA}_{2}\;( \diamond ){}\<[22]%
\>[22]{}\mbox{\onelinecomment  \ensuremath{\Varid{p}\mathbin{*}\Varid{q}\mathrel{=}\set{\Varid{u} \diamond \Varid{v}\mid\Varid{u}\mathbin{\in}\Varid{p}\mathrel{\wedge}\Varid{v}\mathbin{\in}\Varid{q}}}}{}\<[E]%
\ColumnHook
\end{hscode}\resethooks
\vspace{-4ex}
}, along with a the old \ensuremath{\Varid{a}\to \Varid{b}} instance reexpressed and reassigned to \ensuremath{\Varid{b}\leftarrow\Varid{a}}.
Just as the \ensuremath{\Conid{Additive}} and \ensuremath{\Conid{Semiring}} instances for \ensuremath{\Conid{Bool}\leftarrow\Varid{a}} give us four important languages operations (union, concatenation and their identities), now the \ensuremath{\Conid{Semiring}\;(\Varid{a}\to \Conid{Bool})} gives us two more: the \emph{intersection} of languages and its identity (the set of all ``strings'').
These two semirings share several instances in common, expressed in \figref{-> and <-- semirings} via GHC-Haskell's \text{\tt GeneralizedNewtypeDeriving} language extension (present since GHC 6.8.1 and later made safe by \citet{Breitner2016SZC}).
All six of these operations are also useful in their generalized form (i.e., for \ensuremath{\Varid{a}\to \Varid{b}} and \ensuremath{\Varid{b}\leftarrow\Varid{a}} for semirings \ensuremath{\Varid{b}}).
As with \ensuremath{\Conid{Additive}}, this \ensuremath{\Conid{Semiring}\;(\Varid{a}\to \Varid{b})} instance implies that curried functions (of any number and type of arguments and with semiring result type) are semirings, with \ensuremath{\Varid{curry}} and \ensuremath{\Varid{uncurry}} being semiring homomorphisms.
(The proof is very similar to that of \thmref{curry additive}.)

The \ensuremath{\Varid{a}\to \Varid{b}} and \ensuremath{\Varid{b}\leftarrow\Varid{a}} semirings have another deep relationship:
\begin{theorem}\thmlabel{Fourier}
The Fourier transform is a semiring and left semimodule homomorphism from \ensuremath{\Varid{b}\leftarrow \Varid{a}} to \ensuremath{\Varid{a}\to \Varid{b}}.
\end{theorem}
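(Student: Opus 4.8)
The plan is to make the Fourier transform concrete in this abstract algebraic setting and then verify each homomorphism clause in turn, saving the convolution theorem for last. I will assume the target semiring $\Varid{b}$ is commutative and introduce a \emph{character kernel} $\omega \colon \Varid{a} \to \Varid{a} \to \Varid{b}$ whose defining property is that, for every fixed $w$, the map $u \mapsto \omega\;w\;u$ is a monoid homomorphism from $(\Varid{a}, \diamond, \varepsilon)$ into the multiplicative monoid $(\Varid{b}, *, 1)$; that is, $\omega\;w\;\varepsilon = 1$ and $\omega\;w\;(u \diamond v) = \omega\;w\;u * \omega\;w\;v$. (In the familiar analytic case $\omega\;w\;u = e^{-2\pi i\,w u}$ realises exactly this.) The transform is then $\mathcal{F}\,(\Conid{F}\;f) = \lambda w \to \sum_u f\;u * \omega\;w\;u$, a value of type $\Varid{a}\to\Varid{b}$ matching the stated codomain, with $\Varid{a}$ playing the role of its own dual.

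With this definition the additive clauses are routine: $\mathcal{F}$ is linear in $f$, so $\mathcal{F}\,0 = 0$ and $\mathcal{F}(p + q) = \mathcal{F}\,p + \mathcal{F}\,q$ follow from distributivity of $*$ over $+$ and the pointwise definition of $+$ on $\Varid{a}\to\Varid{b}$; the left-semimodule clause $\mathcal{F}(s \cdot p) = s \cdot \mathcal{F}\,p$ is the same computation with a scalar pulled out of the sum. Preservation of the unit uses the first character law: since $1$ in $\Varid{b}\leftarrow\Varid{a}$ is $\Varid{single}\;\varepsilon$, the sum collapses at $u = \varepsilon$, giving $\mathcal{F}(\Varid{single}\;\varepsilon)\;w = \omega\;w\;\varepsilon = 1$, so $\mathcal{F}\,1 = \lambda w \to 1 = \Varid{pure}\;1$, the unit of the pointwise semiring $\Varid{a}\to\Varid{b}$.

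The heart of the proof, and the main obstacle, is the multiplicative clause $\mathcal{F}(p * q) = \mathcal{F}\,p * \mathcal{F}\,q$, i.e.\ the classical convolution theorem, where the left $*$ is convolution ($\Varid{liftA}_{2}\;(\diamond)$ on $\Varid{b}\leftarrow\Varid{a}$) and the right $*$ is pointwise multiplication ($\Varid{liftA}_{2}\;(*)$ on $\Varid{a}\to\Varid{b}$). I would expand $\mathcal{F}((\Conid{F}\;f) * (\Conid{F}\;g))\;w$ as $\sum_z \bigl(\sum_{u \diamond v = z} f\;u * g\;v\bigr) * \omega\;w\;z$, reindex the nested sum over splittings into a single free sum over all pairs $(u,v)$ (each pair contributing to the unique $z = u \diamond v$), and apply the second character law to rewrite $\omega\;w\;(u \diamond v)$ as $\omega\;w\;u * \omega\;w\;v$. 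The remaining step is to factor $\sum_{u,v} (f\;u * \omega\;w\;u) * (g\;v * \omega\;w\;v)$ into $\bigl(\sum_u f\;u * \omega\;w\;u\bigr) * \bigl(\sum_v g\;v * \omega\;w\;v\bigr) = \mathcal{F}\,p\;w * \mathcal{F}\,q\;w$, which is where commutativity of $\Varid{b}$ (to interleave the two character factors with their coefficients) and distributivity (to split the double sum into a product of independent sums) are essential. The delicate point is precisely this factorisation together with the interchange of summations; as elsewhere in the paper the infinite sums are treated formally, so I would either restrict to functions of finite support or assume the summability needed to justify the reindexing, rather than prove convergence.
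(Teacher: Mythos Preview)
Your proposal is correct and considerably more explicit than what the paper does. The paper does not give a proof of this theorem at all: immediately after stating it, the text simply unpacks the claim as (a) linearity of the Fourier transform and (b) the classical convolution theorem, and cites Bracewell's textbook for the latter. In other words, the paper treats the result as a known fact being renamed in the semiring/semimodule vocabulary, rather than something to be derived within the paper's framework.

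Your approach, by contrast, actually carries out the argument inside the abstract setting: you axiomatise the kernel $\omega$ as a family of monoid homomorphisms from $(\Varid{a},\diamond,\varepsilon)$ into $(\Varid{b},*,1)$, and then the convolution-theorem calculation goes through by the same distribute-and-factor manoeuvre used in the paper's proof of \thmref{poly hom} (which is really the same phenomenon with $\omega\;x\;i = x^i$). This buys a self-contained proof that makes the hypotheses visible---commutativity of $\Varid{b}$, the character law for $\omega$, and the formal treatment of infinite sums---whereas the paper's citation-based ``proof'' leaves all of that implicit. Your caveats about finite support or formal summability are appropriate and match how the paper handles such sums elsewhere.
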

This theorem is more often expressed by saying that (a) the Fourier transform is linear (i.e., an additive-monoid and left-semimodule homomorphism), and (b) the Fourier transform of a convolution (i.e., \ensuremath{(\mathbin{*})} on \ensuremath{\Varid{b}\leftarrow\Varid{a}}) of two functions is the pointwise product (i.e., \ensuremath{(\mathbin{*})} on \ensuremath{\Varid{a}\to \Varid{b}}) of the Fourier transforms of the two functions.
The latter property is known as ``the convolution theorem'' \citep[Chapter 6]{Bracewell2000Fourier}.

There is also an important relationship between \ensuremath{\Varid{a}\to \Varid{b}} and \ensuremath{\Pow\;\Varid{a}\leftarrow\Varid{b}}.
Given a function \ensuremath{\Varid{f}\mathbin{::}\Varid{a}\to \Varid{b}}, the \emph{preimage} under \emph{f} of a codomain value \ensuremath{\Varid{b}} is the set of all values that get mapped to \ensuremath{\Varid{b}}:
\begin{hscode}\SaveRestoreHook
\column{B}{@{}>{\hspre}l<{\hspost}@{}}%
\column{E}{@{}>{\hspre}l<{\hspost}@{}}%
\>[B]{}\Varid{pre}\mathbin{::}(\Varid{a}\to \Varid{b})\to (\Pow\;\Varid{a}\leftarrow\Varid{b}){}\<[E]%
\\
\>[B]{}\Varid{pre}\;\Varid{f}\mathrel{=}\Conid{F}\;(\lambda\, \Varid{b}\to \set{\Varid{a}\mid \Varid{f}\;\Varid{a}\mathrel{=}\Varid{b}}){}\<[E]%
\ColumnHook
\end{hscode}\resethooks
\begin{theorem}[\provedIn{theorem:pre hom}]\thmlabel{pre hom}
The \ensuremath{\Varid{pre}} operation is a \ensuremath{\Conid{Functor}} and \ensuremath{\Conid{Applicative}} homomorphism.
\end{theorem}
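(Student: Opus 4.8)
The plan is to read the claim as the statement that $\Varid{pre}$ is a natural transformation between two functors in the codomain variable $\Varid{b}$: the source $((\to)\;\Varid{a})$, whose $\Varid{fmap}$, $\Varid{pure}$, and $\Varid{liftA}_2$ are the pointwise operations of \figref{FunApp}, and the target $((\leftarrow)\;\Pow\;\Varid{a})$, whose operations are the preimage-style sums of \figref{FunApp}. A \emph{Functor} homomorphism must satisfy $\Varid{pre}\;(\Varid{fmap}\;h\;f) = \Varid{fmap}\;h\;(\Varid{pre}\;f)$, and an \emph{Applicative} homomorphism must additionally satisfy $\Varid{pre}\;(\Varid{pure}\;x) = \Varid{pure}\;x$ and $\Varid{pre}\;(\Varid{liftA}_2\;h\;f\;g) = \Varid{liftA}_2\;h\;(\Varid{pre}\;f)\;(\Varid{pre}\;g)$. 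A preliminary step, and the one that actually makes the theorem true, is to pin down the scalar semiring carried by $\Pow\;\Varid{a}$ inside the target: it must be the \emph{intersection} semiring (unit the full set $\set{\Varid{a}\mid\Conid{True}}$, product $\cap$), i.e. the $\Pow'$ structure, not the concatenation semiring. I would state this choice up front and justify it by the $\Varid{liftA}_2$ calculation below. Note that the additive part (union) is shared by both $\Pow$ structures, so only $\Varid{liftA}_2$ and the $1$ inside $\Varid{pure}$ are sensitive to the choice.

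I would then dispatch the two easy equations by unfolding definitions. For $\Varid{fmap}$, the left side $\Varid{pre}\;(h\circ f)$ sends $\Varid{b}'$ to $\set{\Varid{a}\mid h\;(f\;\Varid{a}) = \Varid{b}'}$, while the right side applies the target $\Varid{fmap}$ (push-forward along $h$) to $\Varid{pre}\;f$, sending $\Varid{b}'$ to $\bigcup\set{\set{\Varid{a}\mid f\;\Varid{a}=\Varid{b}}\mid h\;\Varid{b}=\Varid{b}'}$ (the infinite sum of \figref{FunApp} being union in the set semiring); these agree because a union of preimage fibres over the $h$-fibre of $\Varid{b}'$ is exactly $\set{\Varid{a}\mid h\;(f\;\Varid{a})=\Varid{b}'}$. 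For $\Varid{pure}$, the constant function $\Varid{pure}\;x = \lambda\,\Varid{a}\to x$ has preimage map sending $x$ to the full set and every other value to $\emptyset$; this is precisely $x\mapsto 1 = \Varid{single}\;x = \Varid{pure}\;x$ in $((\leftarrow)\;\Pow\;\Varid{a})$, using that $1$ in $\Pow'$ is the full set.

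The crux is the $\Varid{liftA}_2$ equation, which I expect to be the main obstacle. Expanding the right-hand side with the target's $\Varid{liftA}_2$ gives, at index $\Varid{w}$, the sum $\sum_{\Varid{u},\Varid{v}\,:\,h\;\Varid{u}\;\Varid{v}=\Varid{w}} \set{\Varid{a}\mid f\;\Varid{a}=\Varid{u}}\conv\set{\Varid{a}\mid g\;\Varid{a}=\Varid{v}}$, where $\conv$ is the $\Pow\;\Varid{a}$ product and the outer sum is union. Here the intersection choice is forced: $\set{\Varid{a}\mid f\;\Varid{a}=\Varid{u}}\cap\set{\Varid{a}\mid g\;\Varid{a}=\Varid{v}} = \set{\Varid{a}\mid f\;\Varid{a}=\Varid{u}\wedge g\;\Varid{a}=\Varid{v}}$, and taking the union over all $(\Varid{u},\Varid{v})$ with $h\;\Varid{u}\;\Varid{v}=\Varid{w}$ collapses the two existentials to $\set{\Varid{a}\mid h\;(f\;\Varid{a})\;(g\;\Varid{a})=\Varid{w}}$, which is exactly $\Varid{pre}\;(\Varid{liftA}_2\;h\;f\;g)$ at $\Varid{w}$ (the left-hand side, since source $\Varid{liftA}_2$ is pointwise). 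The delicate points are keeping the nested quantifiers straight through the union/intersection rewriting and treating the potentially infinite sums as set unions, both of which I would handle by the same extensional, $\Varid{pred}$/$\Varid{pred}^{-1}$-style reasoning used in \secref{Calculating Instances from Homomorphisms}. Higher-arity preservation and the $(\mathbin{<\!\!\!*\!\!\!>})$ form then follow for free, since those liftings are built from $\Varid{pure}$ and $\Varid{liftA}_2$.
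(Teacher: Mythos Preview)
Your proposal is correct and follows essentially the same route as the paper: verify the three equations for $\Varid{pure}$, $\Varid{fmap}$, and $\Varid{liftA}_2$ by unfolding definitions, introducing intermediate variables for the composite preimages, and recognising that the $\Pow\;\Varid{a}$ product must be intersection (with unit the full set) for the $\Varid{liftA}_2$ case to go through. One small naming slip: in \figref{-> and <-- semirings} it is the \emph{unprimed} $\Pow\;\Varid{a}$ that receives the intersection semiring, while $\Pow'\;\Varid{a}$ is the newtype carrying the old concatenation instance---you have these labels reversed, but your mathematical content is right.
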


\sectionl{The Free Semimodule Monad}

Where there's an applicative, there's often a compatible monad.
For \ensuremath{\Varid{b}\leftarrow\Varid{a}}, the monad is known as the ``free semimodule monad'' (or sometimes the ``free \emph{vector space} monad'') \citep{Piponi2007MonadVS,Kmett2011FreeModules,Gehrke2017Q}.
The semimodule's dimension is the cardinality of \ensuremath{\Varid{a}}.
Basis vectors have the form \ensuremath{\Varid{single}\;\Varid{u}\mathrel{=}\Varid{u}\mapsto\mathrm{1}} for \ensuremath{\Varid{u}\mathbin{::}\Varid{a}} (mapping \ensuremath{\Varid{u}} to \ensuremath{\mathrm{1}} and every other value to \ensuremath{\mathrm{0}} as in \figref{monoid semiring}).

The monad instances for \ensuremath{(\leftarrow)\;\Varid{b}} and \ensuremath{\Conid{Map'}\;\Varid{b}} are defined as follows:\footnote{The \ensuremath{\Varid{return}} method does not appear here, since it is equivalent to \ensuremath{\Varid{pure}} from \ensuremath{\Conid{Applicative}}.}
\begin{hscode}\SaveRestoreHook
\column{B}{@{}>{\hspre}l<{\hspost}@{}}%
\column{3}{@{}>{\hspre}l<{\hspost}@{}}%
\column{E}{@{}>{\hspre}l<{\hspost}@{}}%
\>[B]{}\mathbf{instance}\;\Conid{Semiring}\;\Varid{s}\Rightarrow \Conid{Monad}\;((\leftarrow)\;\Varid{s})\;\mathbf{where}{}\<[E]%
\\
\>[B]{}\hsindent{3}{}\<[3]%
\>[3]{}(\bind )\mathbin{::}(\Varid{s}\leftarrow\Varid{a})\to (\Varid{a}\to (\Varid{s}\leftarrow\Varid{b})))\to (\Varid{s}\leftarrow\Varid{b}){}\<[E]%
\\
\>[B]{}\hsindent{3}{}\<[3]%
\>[3]{}\Conid{F}\;\Varid{f}\bind \Varid{h}\mathrel{=}\bigOp\sum{\Varid{a}}{0}\;\Varid{f}\;\Varid{a}\cdot\Varid{h}\;\Varid{a}{}\<[E]%
\\[\blanklineskip]%
\>[B]{}\mathbf{instance}\;(\Conid{Semiring}\;\Varid{b},\Conid{IsZero}\;\Varid{b})\Rightarrow \Conid{Monad}\;(\Conid{Map'}\;\Varid{b})\;\mathbf{where}{}\<[E]%
\\
\>[B]{}\hsindent{3}{}\<[3]%
\>[3]{}\Conid{M}\;\Varid{m}\bind \Varid{h}\mathrel{=}\bigOp\sum{\Varid{a}\mathbin{\in}\Conid{M}\!.\Varid{keys}\;\Varid{m}}{2.5}\;\Varid{m}\mathbin{!}\Varid{a}\cdot\Varid{h}\;\Varid{a}{}\<[E]%
\ColumnHook
\end{hscode}\resethooks
\vspace{-2ex}
\begin{theorem}[\provedIn{theorem:standard FunApp}]\thmlabel{standard FunApp}
The definitions of \ensuremath{\Varid{fmap}} and \ensuremath{\Varid{liftA}_{2}} on \ensuremath{(\leftarrow)\;\Varid{b}} in \figref{FunApp} satisfy the following standard equations for monads:
\begin{hscode}\SaveRestoreHook
\column{B}{@{}>{\hspre}l<{\hspost}@{}}%
\column{15}{@{}>{\hspre}l<{\hspost}@{}}%
\column{E}{@{}>{\hspre}l<{\hspost}@{}}%
\>[B]{}\Varid{fmap}\;\Varid{h}\;\Varid{p}\mathrel{=}\Varid{p}\bind \Varid{pure}\hsdot{\circ }{.\:}\Varid{h}{}\<[E]%
\\[\blanklineskip]%
\>[B]{}\Varid{liftA}_{2}\;\Varid{h}\;\Varid{p}\;\Varid{q}{}\<[15]%
\>[15]{}\mathrel{=}\Varid{p}\bind \lambda\, \Varid{u}\to \Varid{fmap}\;(\Varid{h}\;\Varid{u})\;\Varid{q}{}\<[E]%
\\
\>[15]{}\mathrel{=}\Varid{p}\bind \lambda\, \Varid{u}\to \Varid{q}\bind \lambda\, \Varid{v}\to \Varid{pure}\;(\Varid{h}\;\Varid{u}\;\Varid{v}){}\<[E]%
\ColumnHook
\end{hscode}\resethooks
\end{theorem}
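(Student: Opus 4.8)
The plan is to reduce all three equations to the explicit \ensuremath{(\mapsto)}-sum definitions of \ensuremath{\Varid{fmap}} and \ensuremath{\Varid{liftA}_{2}} from \figref{FunApp}, using the monad's \ensuremath{(\bind )} definition on \ensuremath{(\leftarrow)\;\Varid{b}} together with the singleton machinery of \secref{Function-like Types and Singletons}. Throughout I would write \ensuremath{\Varid{p}\mathrel{=}\Conid{F}\;\Varid{f}} and \ensuremath{\Varid{q}\mathrel{=}\Conid{F}\;\Varid{g}}, and lean on two already-available facts: that \ensuremath{\Varid{pure}\;\Varid{a}\mathrel{=}\Varid{single}\;\Varid{a}\mathrel{=}\Varid{a}\mapsto\mathrm{1}}; and that each partial application \ensuremath{(\Varid{a}\mapsto)} is a left semimodule homomorphism (\lemref{+-> homomorphism}), which, since scaling in the base semiring is multiplication, yields the key identity \ensuremath{\Varid{s}\cdot(\Varid{a}\mapsto\Varid{b})\mathrel{=}\Varid{a}\mapsto(\Varid{s}\mathbin{*}\Varid{b})}.

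For the \ensuremath{\Varid{fmap}} equation, I would expand \ensuremath{\Varid{p}\bind (\Varid{pure}\hsdot{\circ }{.\:}\Varid{h})} via the \ensuremath{(\bind )} definition to obtain \ensuremath{\bigOp\sum{\Varid{a}}{0}\;\Varid{f}\;\Varid{a}\cdot\Varid{single}\;(\Varid{h}\;\Varid{a})}, rewrite \ensuremath{\Varid{single}\;(\Varid{h}\;\Varid{a})} as \ensuremath{(\Varid{h}\;\Varid{a})\mapsto\mathrm{1}}, and apply the singleton-scaling identity to reach \ensuremath{\bigOp\sum{\Varid{a}}{0}\;(\Varid{h}\;\Varid{a})\mapsto\Varid{f}\;\Varid{a}}, which is exactly the summation form of \ensuremath{\Varid{fmap}\;\Varid{h}\;\Varid{p}}.

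For the first \ensuremath{\Varid{liftA}_{2}} equation I would again unfold \ensuremath{(\bind )} to reach \ensuremath{\bigOp\sum{\Varid{u}}{0}\;\Varid{f}\;\Varid{u}\cdot\Varid{fmap}\;(\Varid{h}\;\Varid{u})\;\Varid{q}}, substitute the already-proved \ensuremath{\Varid{fmap}} form \ensuremath{\Varid{fmap}\;(\Varid{h}\;\Varid{u})\;\Varid{q}\mathrel{=}\bigOp\sum{\Varid{v}}{0}\;(\Varid{h}\;\Varid{u}\;\Varid{v})\mapsto\Varid{g}\;\Varid{v}}, push the scalar \ensuremath{\Varid{f}\;\Varid{u}} through the inner sum using linearity of \ensuremath{(\Varid{f}\;\Varid{u}\cdot)}, and collapse via the singleton-scaling identity into the double sum \ensuremath{\bigOp\sum{\Varid{u},\Varid{v}}{0}\;(\Varid{h}\;\Varid{u}\;\Varid{v})\mapsto\Varid{f}\;\Varid{u}\mathbin{*}\Varid{g}\;\Varid{v}}, which is the definition of \ensuremath{\Varid{liftA}_{2}\;\Varid{h}\;\Varid{p}\;\Varid{q}}. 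The second \ensuremath{\Varid{liftA}_{2}} equation then needs no fresh computation: I reuse the \ensuremath{\Varid{fmap}} equation in the reverse direction to rewrite \ensuremath{\Varid{fmap}\;(\Varid{h}\;\Varid{u})\;\Varid{q}} as \ensuremath{\Varid{q}\bind \lambda\, \Varid{v}\to \Varid{pure}\;(\Varid{h}\;\Varid{u}\;\Varid{v})} inside the first.

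The main obstacle is not any individual rewrite but the legitimacy of the sum manipulations: reindexing the single-variable \ensuremath{\Varid{fmap}} sum and, above all, distributing the scalar \ensuremath{(\Varid{f}\;\Varid{u}\cdot)} across the inner summation. These are instances of treating \ensuremath{(\Varid{s}\cdot)} as an additive homomorphism over the \emph{infinite} formal sums of \lemref{decomp +->}, so the careful part is recording that these free-semimodule sums are manipulated exactly as finite ones — justified because we read \ensuremath{(\leftarrow)\;\Varid{b}} as the specification monoid-semiring object rather than executable code — and that the semimodule laws of \secref{Semimodules} license pulling scalars inside. Once that interchange is granted, the three derivations are short chains of definitional unfolding and the singleton-scaling identity.
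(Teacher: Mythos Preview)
Your proposal is correct and follows essentially the same approach as the paper: both proofs hinge on the \ensuremath{(\bind )} definition, \ensuremath{\Varid{pure}\mathrel{=}\Varid{single}}, the singleton-scaling identity from \lemref{+-> homomorphism}, and linearity of \ensuremath{(\Varid{s}\cdot)} to interchange scalar and inner sum. The only cosmetic difference is direction---the paper starts from the \ensuremath{\Varid{fmap}}/\ensuremath{\Varid{liftA}_{2}} sum and rewrites toward the monadic form, whereas you start from the monadic side and rewrite toward the sum---but the chain of equalities is the same.
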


\sectionl{Other Applications}

\subsectionl{Polynomials}

As is well known, univariate polynomials form a semiring and can be multiplied by convolving their coefficients.
Perhaps less known is that this trick extends naturally to power series and to multivariate polynomials.

%% %format pows a (b) = pow a (^b)

%% %format bigProdQ (lim) = "\bigOp\prod{" lim "}{1}"

Looking more closely, univariate polynomials (and even power series) can be represented by a collection of coefficients indexed by exponents, or conversely as a collection of exponents weighted by coefficients.
For a polynomial in a variable \ensuremath{\Varid{x}}, an association \ensuremath{\Varid{i}\mapsto\Varid{c}} of coefficient \ensuremath{\Varid{c}} with exponent \ensuremath{\Varid{i}} represents the monomial (polynomial term) \ensuremath{\Varid{c}\mathbin{*}\Varid{x}^\Varid{i}}.
One can use a variety of representations for these indexed collections.
We'll consider efficient representations below, but let's begin as \ensuremath{\Varid{b}\leftarrow\mathbb{N}} along with a denotation as a (polynomial) function of type \ensuremath{\Varid{b}\to \Varid{b}}:
\begin{hscode}\SaveRestoreHook
\column{B}{@{}>{\hspre}l<{\hspost}@{}}%
\column{32}{@{}>{\hspre}l<{\hspost}@{}}%
\column{E}{@{}>{\hspre}l<{\hspost}@{}}%
\>[B]{}\Varid{poly}_{1}\mathbin{::}\Conid{Semiring}\;\Varid{b}\Rightarrow (\Varid{b}\leftarrow\mathbb{N})\to (\Varid{b}\to \Varid{b}){}\<[E]%
\\
\>[B]{}\Varid{poly}_{1}\;(\Conid{F}\;\Varid{f})\mathrel{=}\lambda\, \Varid{x}\to \bigOp\sum{\Varid{i}}{0}\;{}\<[32]%
\>[32]{}\Varid{f}\;\Varid{i}\mathbin{*}\Varid{x}^\Varid{i}{}\<[E]%
\ColumnHook
\end{hscode}\resethooks
Polynomial multiplication via convolution follows from the following property:
\begin{theorem}[\provedIn{theorem:poly hom}]\thmlabel{poly hom}
The function \ensuremath{\Varid{poly}_{1}} is a semiring homomorphism when multiplication on \ensuremath{\Varid{b}} commutes.
\end{theorem}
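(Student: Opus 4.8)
The plan is to verify directly the four conditions of \defref{semiring homomorphism}, viewing \ensuremath{\Varid{poly}_{1}} as a map from the convolution semiring \ensuremath{\Varid{b}\leftarrow\mathbb{N}} (whose \ensuremath{(\mathbin{*})} is \ensuremath{\Varid{liftA}_{2}\;(\mathbin{+})}, i.e.\ coefficient convolution with index addition, so that \ensuremath{(\Varid{p}\mathbin{*}\Varid{q})\;\Varid{w}=\sum_{\Varid{i}\mathbin{+}\Varid{j}\mathrel{=}\Varid{w}}\Varid{f}\;\Varid{i}\mathbin{*}\Varid{g}\;\Varid{j}}) to the pointwise-product semiring \ensuremath{\Varid{b}\to \Varid{b}} of \figref{-> and <-- semirings}. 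The additive half is the easy part: because the defining summation is linear in the coefficient function, \ensuremath{\Varid{poly}_{1}\;\mathrm{0}=\lambda\, \Varid{x}\to \sum_{\Varid{i}}\mathrm{0}\mathbin{*}\Varid{x}^{\Varid{i}}=\mathrm{0}}, and \ensuremath{\Varid{poly}_{1}\;(\Varid{p}\mathbin{+}\Varid{q})=\Varid{poly}_{1}\;\Varid{p}\mathbin{+}\Varid{poly}_{1}\;\Varid{q}} follows from distributivity \ensuremath{(\Varid{f}\;\Varid{i}\mathbin{+}\Varid{g}\;\Varid{i})\mathbin{*}\Varid{x}^{\Varid{i}}=\Varid{f}\;\Varid{i}\mathbin{*}\Varid{x}^{\Varid{i}}\mathbin{+}\Varid{g}\;\Varid{i}\mathbin{*}\Varid{x}^{\Varid{i}}} together with commutativity and associativity of \ensuremath{(\mathbin{+})} to regroup the two sums. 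None of this uses any assumption on \ensuremath{\Varid{b}}'s multiplication.

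For the unit I would expand \ensuremath{\mathrm{1}=\Varid{pure}\;\varepsilon=\Varid{single}\;\mathrm{0}} in \ensuremath{\Varid{b}\leftarrow\mathbb{N}}, whose coefficient function sends \ensuremath{\mathrm{0}\mapsto\mathrm{1}} and every other index to \ensuremath{\mathrm{0}}, so that \ensuremath{\Varid{poly}_{1}\;\mathrm{1}=\lambda\, \Varid{x}\to \mathrm{1}\mathbin{*}\Varid{x}^{\mathrm{0}}=\lambda\, \Varid{x}\to \mathrm{1}} using the power convention \ensuremath{\Varid{x}^{\mathrm{0}}=\mathrm{1}} from \secref{Star Semirings} and \ensuremath{\mathrm{1}\mathbin{*}\Varid{x}=\Varid{x}}; the constant-\ensuremath{\mathrm{1}} function is exactly \ensuremath{\mathrm{1}} in the target semiring.

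The crux is preservation of \ensuremath{(\mathbin{*})}, and it is the only place commutativity of \ensuremath{\Varid{b}}'s multiplication is needed. Expanding the left-hand side, \ensuremath{\Varid{poly}_{1}\;(\Varid{p}\mathbin{*}\Varid{q})} sums \ensuremath{\bigl(\sum_{\Varid{i}\mathbin{+}\Varid{j}\mathrel{=}\Varid{w}}\Varid{f}\;\Varid{i}\mathbin{*}\Varid{g}\;\Varid{j}\bigr)\mathbin{*}\Varid{x}^{\Varid{w}}} over \ensuremath{\Varid{w}}; distributing and reindexing the nested sum gives \ensuremath{\lambda\, \Varid{x}\to \sum_{\Varid{i},\Varid{j}}(\Varid{f}\;\Varid{i}\mathbin{*}\Varid{g}\;\Varid{j})\mathbin{*}\Varid{x}^{\Varid{i}\mathbin{+}\Varid{j}}}. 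The right-hand side \ensuremath{\Varid{poly}_{1}\;\Varid{p}\mathbin{*}\Varid{poly}_{1}\;\Varid{q}}, after using the pointwise target product and distributing the product of the two series, is \ensuremath{\lambda\, \Varid{x}\to \sum_{\Varid{i},\Varid{j}}(\Varid{f}\;\Varid{i}\mathbin{*}\Varid{x}^{\Varid{i}})\mathbin{*}(\Varid{g}\;\Varid{j}\mathbin{*}\Varid{x}^{\Varid{j}})}. Matching these term-by-term reduces to the single identity \ensuremath{(\Varid{f}\;\Varid{i}\mathbin{*}\Varid{x}^{\Varid{i}})\mathbin{*}(\Varid{g}\;\Varid{j}\mathbin{*}\Varid{x}^{\Varid{j}})=(\Varid{f}\;\Varid{i}\mathbin{*}\Varid{g}\;\Varid{j})\mathbin{*}\Varid{x}^{\Varid{i}\mathbin{+}\Varid{j}}}, which I would prove by associativity to expose the interior factor \ensuremath{\Varid{x}^{\Varid{i}}\mathbin{*}\Varid{g}\;\Varid{j}}, commuting it to \ensuremath{\Varid{g}\;\Varid{j}\mathbin{*}\Varid{x}^{\Varid{i}}}, and then collapsing \ensuremath{\Varid{x}^{\Varid{i}}\mathbin{*}\Varid{x}^{\Varid{j}}=\Varid{x}^{\Varid{i}\mathbin{+}\Varid{j}}} by the power law. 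The main obstacle is exactly this swap: without commutativity \ensuremath{\Varid{g}\;\Varid{j}} and \ensuremath{\Varid{x}^{\Varid{i}}} cannot be interchanged and the identity fails, which is why the hypothesis is indispensable. A secondary, more routine point is that the index set is infinite, so the reindexing \ensuremath{\sum_{\Varid{w}}\sum_{\Varid{i}\mathbin{+}\Varid{j}\mathrel{=}\Varid{w}}=\sum_{\Varid{i},\Varid{j}}} should be justified at the level of the generalized (unordered) summation used throughout the paper rather than as a limit of partial sums.
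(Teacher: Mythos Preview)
Your proposal is correct and follows essentially the same route as the paper: verify the four homomorphism conditions directly, with the multiplicative case reducing to the double sum $\sum_{i,j}(f\,i * g\,j)*x^{i+j}$ and the single commutativity-dependent rewrite $(f\,i * x^i)*(g\,j * x^j) = (f\,i * g\,j)*x^{i+j}$. The only cosmetic difference is that the paper reaches that double sum by first invoking additivity of \ensuremath{\Varid{poly}_{1}} and a small lemma $\Varid{poly}_{1}\,(n\mapsto b)=\lambda x\to b*x^n$, whereas you expand the convolution form of the product directly; the substance is identical.
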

Pragmatically, \thmref{poly hom} says that the \ensuremath{\Varid{b}\leftarrow\mathbb{N}} semiring (in which \ensuremath{(\mathbin{*})} is convolution) correctly implements arithmetic on univariate polynomials.
More usefully, we can adopt other representations of \ensuremath{\Varid{b}\leftarrow\mathbb{N}}, such as \ensuremath{\Conid{Map}\;\mathbb{N}\;\Varid{b}}.
For viewing results, wrap these representations in a new type, and provide a \ensuremath{\Conid{Show}} instance:
\begin{hscode}\SaveRestoreHook
\column{B}{@{}>{\hspre}l<{\hspost}@{}}%
\column{E}{@{}>{\hspre}l<{\hspost}@{}}%
\>[B]{}\mathbf{newtype}\;\Conid{Poly}_{1}\;\Varid{z}\mathrel{=}\Conid{Poly}_{1}\;\Varid{z}\;\mathbf{deriving}\;(\Conid{Additive},\Conid{Semiring},\Conid{Indexable}\;\Varid{n}\;\Varid{b},\Conid{HasSingle}\;\Varid{n}\;\Varid{b}){}\<[E]%
\\[\blanklineskip]%
\>[B]{}\mathbf{instance}\;(\ldots)\Rightarrow \Conid{Show}\;(\Conid{Poly}_{1}\;\Varid{z})\;\mathbf{where}\;{}\ldots{}\<[E]%
\ColumnHook
\end{hscode}\resethooks
Try it out (with prompts indicated by ``\ensuremath{\lambda\rangle\ }''):
\begin{hscode}\SaveRestoreHook
\column{B}{@{}>{\hspre}l<{\hspost}@{}}%
\column{E}{@{}>{\hspre}l<{\hspost}@{}}%
\>[B]{}\lambda\rangle\ \mathbf{let}\;\Varid{p}\mathrel{=}\Varid{single}\;\mathrm{1}\mathbin{+}\Varid{value}\;\mathrm{3}\mathbin{::}\Conid{Poly}_{1}\;(\Conid{Map}\;\mathbb{N}\;\mathbb Z){}\<[E]%
\\
\>[B]{}\lambda\rangle\ \Varid{p}{}\<[E]%
\\
\>[B]{}\Varid{x}\mathbin{+}\mathrm{3}{}\<[E]%
\\[1.5ex]\>[B]{}\lambda\rangle\ \Varid{p}^\mathrm{3}{}\<[E]%
\\
\>[B]{}\Varid{x}^\mathrm{3}\mathbin{+}\mathrm{9}{}\Varid{x}^\mathrm{2}\mathbin{+}\mathrm{27}{}\Varid{x}\mathbin{+}\mathrm{27}{}\<[E]%
\\[1.5ex]\>[B]{}\lambda\rangle\ \Varid{p}^\mathrm{7}{}\<[E]%
\\
\>[B]{}\mathrm{2187}\mathbin{+}\mathrm{5103}{}\Varid{x}\mathbin{+}\mathrm{5103}{}\Varid{x}^\mathrm{2}\mathbin{+}\mathrm{2835}{}\Varid{x}^\mathrm{3}\mathbin{+}\mathrm{945}{}\Varid{x}^\mathrm{4}\mathbin{+}\mathrm{189}{}\Varid{x}^\mathrm{5}\mathbin{+}\mathrm{21}{}\Varid{x}^\mathrm{6}\mathbin{+}\Varid{x}^\mathrm{7}{}\<[E]%
\\[1.5ex]\>[B]{}\lambda\rangle\ \Varid{poly}_{1}\;(\Varid{p}^\mathrm{5})\;\mathrm{17}\mathrel{=}(\Varid{poly}_{1}\;\Varid{p}\;\mathrm{17})^\mathrm{5}{}\<[E]%
\\
\>[B]{}\Conid{True}{}\<[E]%
\ColumnHook
\end{hscode}\resethooks

We can also use \ensuremath{[\mskip1.5mu \mskip1.5mu]} in place of \ensuremath{\Conid{Map}\;\mathbb{N}}.
The example above yields identical results.
Since lists are potentially infinite (unlike finite maps), however, this simple change enables power series%
.
Following \citet{McIlroy1999PSPS,McIlroy2001MS}, define integration and differentiation as follows:
\begin{hscode}\SaveRestoreHook
\column{B}{@{}>{\hspre}l<{\hspost}@{}}%
\column{3}{@{}>{\hspre}l<{\hspost}@{}}%
\column{5}{@{}>{\hspre}l<{\hspost}@{}}%
\column{19}{@{}>{\hspre}l<{\hspost}@{}}%
\column{20}{@{}>{\hspre}l<{\hspost}@{}}%
\column{24}{@{}>{\hspre}l<{\hspost}@{}}%
\column{28}{@{}>{\hspre}l<{\hspost}@{}}%
\column{34}{@{}>{\hspre}l<{\hspost}@{}}%
\column{E}{@{}>{\hspre}l<{\hspost}@{}}%
\>[B]{}\Varid{integral}\mathbin{::}\Conid{Fractional}\;\Varid{b}\Rightarrow \Conid{Poly}_{1}\;[\mskip1.5mu \Varid{b}\mskip1.5mu]\to \Conid{Poly}_{1}\;[\mskip1.5mu \Varid{b}\mskip1.5mu]{}\<[E]%
\\
\>[B]{}\Varid{integral}\;(\Conid{Poly}_{1}\;\Varid{bs}_{0})\mathrel{=}\Conid{Poly}_{1}\;(\mathrm{0}\mathbin{:}\Varid{go}\;\mathrm{1}\;\Varid{bs}_{0}){}\<[E]%
\\
\>[B]{}\hsindent{3}{}\<[3]%
\>[3]{}\mathbf{where}{}\<[E]%
\\
\>[3]{}\hsindent{2}{}\<[5]%
\>[5]{}\Varid{go}\;\anonymous \;[\mskip1.5mu \mskip1.5mu]{}\<[19]%
\>[19]{}\mathrel{=}[\mskip1.5mu \mskip1.5mu]{}\<[E]%
\\
\>[3]{}\hsindent{2}{}\<[5]%
\>[5]{}\Varid{go}\;\Varid{n}\;(\Varid{b}\mathbin{:}\Varid{d}){}\<[19]%
\>[19]{}\mathrel{=}\Varid{b}\mathbin{/}\Varid{n}\mathbin{:}\Varid{go}\;(\Varid{n}\mathbin{+}\mathrm{1})\;\Varid{d}{}\<[E]%
\\[1.5ex]\>[B]{}\Varid{derivative}\mathbin{::}(\Conid{Additive}\;\Varid{b},\Conid{Fractional}\;\Varid{b})\Rightarrow \Conid{Poly}_{1}\;[\mskip1.5mu \Varid{b}\mskip1.5mu]\to \Conid{Poly}_{1}\;[\mskip1.5mu \Varid{b}\mskip1.5mu]{}\<[E]%
\\
\>[B]{}\Varid{derivative}\;(\Conid{Poly}_{1}\;{}\<[24]%
\>[24]{}[\mskip1.5mu \mskip1.5mu]{}\<[34]%
\>[34]{})\mathrel{=}\mathrm{0}{}\<[E]%
\\
\>[B]{}\Varid{derivative}\;(\Conid{Poly}_{1}\;(\anonymous {}\<[24]%
\>[24]{}\mathbin{:}{}\<[28]%
\>[28]{}\Varid{bs}_{0}){}\<[34]%
\>[34]{})\mathrel{=}\Conid{Poly}_{1}\;(\Varid{go}\;\mathrm{1}\;\Varid{bs}_{0}){}\<[E]%
\\
\>[B]{}\hsindent{3}{}\<[3]%
\>[3]{}\mathbf{where}{}\<[E]%
\\
\>[3]{}\hsindent{2}{}\<[5]%
\>[5]{}\Varid{go}\;\anonymous \;[\mskip1.5mu \mskip1.5mu]{}\<[20]%
\>[20]{}\mathrel{=}[\mskip1.5mu \mskip1.5mu]{}\<[E]%
\\
\>[3]{}\hsindent{2}{}\<[5]%
\>[5]{}\Varid{go}\;\Varid{n}\;(\Varid{b}\mathbin{:}\Varid{bs}){}\<[20]%
\>[20]{}\mathrel{=}\Varid{n}\mathbin{*}\Varid{b}\mathbin{:}\Varid{go}\;(\Varid{n}\mathbin{+}\mathrm{1})\;\Varid{bs}{}\<[E]%
\ColumnHook
\end{hscode}\resethooks
Then define \ensuremath{\Varid{sin}}, \ensuremath{\Varid{cos}}, and \ensuremath{\Varid{exp}} via simple ordinary differential equations (ODEs):
\begin{hscode}\SaveRestoreHook
\column{B}{@{}>{\hspre}l<{\hspost}@{}}%
\column{E}{@{}>{\hspre}l<{\hspost}@{}}%
\>[B]{}\Varid{sin}_{\hspace{-1pt}p},\Varid{cos}_{\hspace{-1pt}p},\Varid{exp}_{\hspace{-1pt}p}\mathbin{::}\Conid{Poly}_{1}\;[\mskip1.5mu \Conid{Rational}\mskip1.5mu]{}\<[E]%
\\
\>[B]{}\Varid{sin}_{\hspace{-1pt}p}\mathrel{=}\Varid{integral}\;\Varid{cos}_{\hspace{-1pt}p}{}\<[E]%
\\
\>[B]{}\Varid{cos}_{\hspace{-1pt}p}\mathrel{=}\mathrm{1}\mathbin{-}\Varid{integral}\;\Varid{sin}_{\hspace{-1pt}p}{}\<[E]%
\\
\>[B]{}\Varid{exp}_{\hspace{-1pt}p}\mathrel{=}\mathrm{1}\mathbin{+}\Varid{integral}\;\Varid{exp}_{\hspace{-1pt}p}{}\<[E]%
\ColumnHook
\end{hscode}\resethooks
Try it out:
\begin{hscode}\SaveRestoreHook
\column{B}{@{}>{\hspre}l<{\hspost}@{}}%
\column{118}{@{}>{\hspre}c<{\hspost}@{}}%
\column{118E}{@{}l@{}}%
\column{E}{@{}>{\hspre}l<{\hspost}@{}}%
\>[B]{}\lambda\rangle\ \Varid{sin}_{\hspace{-1pt}p}{}\<[E]%
\\
\>[B]{}\Varid{x}\mathbin{-}\mathrm{1}/\mathrm{6}\mathbin{*}\Varid{x}^\mathrm{3}\mathbin{+}\mathrm{1}/\mathrm{120}\mathbin{*}\Varid{x}^\mathrm{5}\mathbin{-}\mathrm{1}/\mathrm{5040}\mathbin{*}\Varid{x}^\mathrm{7}\mathbin{+}\mathrm{1}/\mathrm{362880}\mathbin{*}\Varid{x}^\mathrm{9}\mathbin{-}\mathrm{1}/\mathrm{39916800}\mathbin{*}\Varid{x}^\mathrm{11}\mathbin{+}\mathrm{1}/\mathrm{6227020800}\mathbin{*}\Varid{x}^\mathrm{13}\mathbin{-}{}\<[118]%
\>[118]{}\ldots{}\<[118E]%
\\
\>[B]{}\lambda\rangle\ \Varid{cos}_{\hspace{-1pt}p}{}\<[E]%
\\
\>[B]{}\mathrm{1}/\mathrm{1}\mathbin{-}\mathrm{1}/\mathrm{2}\mathbin{*}\Varid{x}^\mathrm{2}\mathbin{+}\mathrm{1}/\mathrm{24}\mathbin{*}\Varid{x}^\mathrm{4}\mathbin{-}\mathrm{1}/\mathrm{720}\mathbin{*}\Varid{x}^\mathrm{6}\mathbin{+}\mathrm{1}/\mathrm{40320}\mathbin{*}\Varid{x}^\mathrm{8}\mathbin{-}\mathrm{1}/\mathrm{3628800}\mathbin{*}\Varid{x}^\mathrm{10}\mathbin{+}\mathrm{1}/\mathrm{479001600}\mathbin{*}\Varid{x}^\mathrm{12}\mathbin{-}\ldots{}\<[E]%
\\
\>[B]{}\lambda\rangle\ \Varid{exp}_{\hspace{-1pt}p}{}\<[E]%
\\
\>[B]{}\mathrm{1}/\mathrm{1}\mathbin{+}\Varid{x}\mathbin{+}\mathrm{1}/\mathrm{2}\mathbin{*}\Varid{x}^\mathrm{2}\mathbin{+}\mathrm{1}/\mathrm{6}\mathbin{*}\Varid{x}^\mathrm{3}\mathbin{+}\mathrm{1}/\mathrm{24}\mathbin{*}\Varid{x}^\mathrm{4}\mathbin{+}\mathrm{1}/\mathrm{120}\mathbin{*}\Varid{x}^\mathrm{5}\mathbin{+}\mathrm{1}/\mathrm{720}\mathbin{*}\Varid{x}^\mathrm{6}\mathbin{+}\mathrm{1}/\mathrm{5040}\mathbin{*}\Varid{x}^\mathrm{7}\mathbin{+}\mathrm{1}/\mathrm{40320}\mathbin{*}\Varid{x}^\mathrm{8}\ldots{}\<[E]%
\ColumnHook
\end{hscode}\resethooks
As expected, \ensuremath{\Varid{derivative}\;\Varid{sin}_{\hspace{-1pt}p}\mathrel{=}\Varid{cos}_{\hspace{-1pt}p}}, \ensuremath{\Varid{derivative}\;\Varid{cos}_{\hspace{-1pt}p}\mathrel{=}\mathbin{-}\Varid{sin}_{\hspace{-1pt}p}}, and \ensuremath{\Varid{derivative}\;\Varid{exp}_{\hspace{-1pt}p}\mathrel{=}\Varid{exp}_{\hspace{-1pt}p}}:
\begin{hscode}\SaveRestoreHook
\column{B}{@{}>{\hspre}l<{\hspost}@{}}%
\column{29}{@{}>{\hspre}l<{\hspost}@{}}%
\column{E}{@{}>{\hspre}l<{\hspost}@{}}%
\>[B]{}\lambda\rangle\ \Varid{derivative}\;\Varid{sin}_{\hspace{-1pt}p}{}\<[29]%
\>[29]{}\mbox{\onelinecomment  \ensuremath{\mathrel{=}\Varid{cos}_{\hspace{-1pt}p}}}{}\<[E]%
\\
\>[B]{}\mathrm{1}/\mathrm{1}\mathbin{-}\mathrm{1}/\mathrm{2}\mathbin{*}\Varid{x}^\mathrm{2}\mathbin{+}\mathrm{1}/\mathrm{24}\mathbin{*}\Varid{x}^\mathrm{4}\mathbin{-}\mathrm{1}/\mathrm{720}\mathbin{*}\Varid{x}^\mathrm{6}\mathbin{+}\mathrm{1}/\mathrm{40320}\mathbin{*}\Varid{x}^\mathrm{8}\mathbin{-}\mathrm{1}/\mathrm{3628800}\mathbin{*}\Varid{x}^\mathrm{10}\mathbin{+}\mathrm{1}/\mathrm{479001600}\mathbin{*}\Varid{x}^\mathrm{12}\mathbin{-}\ldots{}\<[E]%
\\
\>[B]{}\lambda\rangle\ \Varid{derivative}\;\Varid{cos}_{\hspace{-1pt}p}{}\<[29]%
\>[29]{}\mbox{\onelinecomment  \ensuremath{\mathrel{=}\mathbin{-}\Varid{sin}_{\hspace{-1pt}p}}}{}\<[E]%
\\
\>[B]{}(\mathbin{-}\mathrm{1})/\mathrm{1}\mathbin{*}\Varid{x}\mathbin{+}\mathrm{1}/\mathrm{6}\mathbin{*}\Varid{x}^\mathrm{3}\mathbin{-}\mathrm{1}/\mathrm{120}\mathbin{*}\Varid{x}^\mathrm{5}\mathbin{+}\mathrm{1}/\mathrm{5040}\mathbin{*}\Varid{x}^\mathrm{7}\mathbin{-}\mathrm{1}/\mathrm{362880}\mathbin{*}\Varid{x}^\mathrm{9}\mathbin{+}\mathrm{1}/\mathrm{39916800}\mathbin{*}\Varid{x}^\mathrm{11}\mathbin{-}\ldots{}\<[E]%
\\
\>[B]{}\lambda\rangle\ \Varid{derivative}\;\Varid{exp}_{\hspace{-1pt}p}{}\<[29]%
\>[29]{}\mbox{\onelinecomment  \ensuremath{\mathrel{=}\Varid{exp}_{\hspace{-1pt}p}}}{}\<[E]%
\\
\>[B]{}\mathrm{1}/\mathrm{1}\mathbin{+}\Varid{x}\mathbin{+}\mathrm{1}/\mathrm{2}\mathbin{*}\Varid{x}^\mathrm{2}\mathbin{+}\mathrm{1}/\mathrm{6}\mathbin{*}\Varid{x}^\mathrm{3}\mathbin{+}\mathrm{1}/\mathrm{24}\mathbin{*}\Varid{x}^\mathrm{4}\mathbin{+}\mathrm{1}/\mathrm{120}\mathbin{*}\Varid{x}^\mathrm{5}\mathbin{+}\mathrm{1}/\mathrm{720}\mathbin{*}\Varid{x}^\mathrm{6}\mathbin{+}\mathrm{1}/\mathrm{5040}\mathbin{*}\Varid{x}^\mathrm{7}\mathbin{+}\mathrm{1}/\mathrm{40320}\mathbin{*}\Varid{x}^\mathrm{8}\ldots{}\<[E]%
\ColumnHook
\end{hscode}\resethooks
Crucially for termination of ODEs such as these, \ensuremath{\Varid{integral}} is nonstrict, yielding its result's first coefficient before examining its argument.
In particular, the definition of \ensuremath{\Varid{integral}} does \emph{not} optimize for \ensuremath{\Conid{Poly}_{1}\;[\mskip1.5mu \mskip1.5mu]}.

\vspace{2ex}

What about multivariate polynomials, i.e., polynomial functions over higher-dimensional domains?
Consider a 2D domain:
\begin{hscode}\SaveRestoreHook
\column{B}{@{}>{\hspre}l<{\hspost}@{}}%
\column{E}{@{}>{\hspre}l<{\hspost}@{}}%
\>[B]{}\Varid{poly}_{2}\mathbin{::}\Conid{Semiring}\;\Varid{b}\Rightarrow (\Varid{b}\leftarrow\mathbb{N} \times \mathbb{N})\to (\Varid{b}\mathbin{*}\Varid{b}\to \Varid{b}){}\<[E]%
\\
\>[B]{}\Varid{poly}_{2}\;(\Conid{F}\;\Varid{f})\mathrel{=}\lambda\, (\Varid{x},\Varid{y})\to \bigOp\sum{\Varid{i},\Varid{j}}{0}\;\Varid{f}\;(\Varid{i},\Varid{j})\mathbin{*}\Varid{x}^\Varid{i}\mathbin{*}\Varid{y}^\Varid{j}{}\<[E]%
\ColumnHook
\end{hscode}\resethooks
Then
\begin{hscode}\SaveRestoreHook
\column{B}{@{}>{\hspre}c<{\hspost}@{}}%
\column{BE}{@{}l@{}}%
\column{5}{@{}>{\hspre}l<{\hspost}@{}}%
\column{50}{@{}>{\hspre}l<{\hspost}@{}}%
\column{E}{@{}>{\hspre}l<{\hspost}@{}}%
\>[5]{}\Varid{poly}_{2}\;(\Conid{F}\;\Varid{f})\;(\Varid{x},\Varid{y}){}\<[E]%
\\
\>[B]{}\mathrel{=}{}\<[BE]%
\>[5]{}\bigOp\sum{\Varid{i},\Varid{j}}{0}\;\Varid{f}\;(\Varid{i},\Varid{j})\mathbin{*}\Varid{x}^\Varid{i}\mathbin{*}\Varid{y}^\Varid{j}{}\<[50]%
\>[50]{}\mbox{\onelinecomment  \ensuremath{\Varid{poly}_{2}} definition}{}\<[E]%
\\
\>[B]{}\mathrel{=}{}\<[BE]%
\>[5]{}\bigOp\sum{\Varid{i},\Varid{j}}{0}\;\Varid{curry}\;\Varid{f}\;\Varid{i}\;\Varid{j}\mathbin{*}\Varid{x}^\Varid{i}\mathbin{*}\Varid{y}^\Varid{j}{}\<[50]%
\>[50]{}\mbox{\onelinecomment  \ensuremath{\Varid{curry}} definition}{}\<[E]%
\\
\>[B]{}\mathrel{=}{}\<[BE]%
\>[5]{}\bigOp\sum{\Varid{i}}{0}\;(\bigOp\sum{\Varid{j}}{0}\;\Varid{curry}\;\Varid{f}\;\Varid{i}\;\Varid{j}\mathbin{*}\Varid{y}^\Varid{j})\mathbin{*}\Varid{x}^\Varid{i}{}\<[50]%
\>[50]{}\mbox{\onelinecomment  linearity and commutativity assumption}{}\<[E]%
\\
\>[B]{}\mathrel{=}{}\<[BE]%
\>[5]{}\bigOp\sum{\Varid{i}}{0}\;\Varid{poly}\;(\Varid{curry}\;\Varid{f}\;\Varid{i})\;\Varid{y}\mathbin{*}\Varid{x}^\Varid{i}{}\<[50]%
\>[50]{}\mbox{\onelinecomment  \ensuremath{\Varid{poly}} definition}{}\<[E]%
\\
\>[B]{}\mathrel{=}{}\<[BE]%
\>[5]{}\Varid{poly}\;(\lambda\, \Varid{i}\to \Varid{poly}\;(\Varid{curry}\;\Varid{f}\;\Varid{i})\;\Varid{y})\;\Varid{x}{}\<[50]%
\>[50]{}\mbox{\onelinecomment  \ensuremath{\Varid{poly}} definition}{}\<[E]%
\ColumnHook
\end{hscode}\resethooks
The essential idea here is that a polynomial with a pair-valued domain can be viewed as a polynomial over polynomials.

We can do much better, however, generalizing from two dimensions to \ensuremath{\Varid{n}} dimensions for any \ensuremath{\Varid{n}}:
\twocol{0.5}{
\begin{hscode}\SaveRestoreHook
\column{B}{@{}>{\hspre}l<{\hspost}@{}}%
\column{44}{@{}>{\hspre}l<{\hspost}@{}}%
\column{E}{@{}>{\hspre}l<{\hspost}@{}}%
\>[B]{}\Varid{poly}\mathbin{::}(\Varid{b}\leftarrow\mathbb{N}^\Varid{n})\to (\Varid{b}^\Varid{n}\to \Varid{b}){}\<[E]%
\\
\>[B]{}\Varid{poly}\;(\Conid{F}\;\Varid{f})\;(\Varid{x}\mathbin{::}\Varid{b}^\Varid{n})\mathrel{=}\bigOp\sum{\Varid{p}\mathbin{::}\mathbb{N}^\Varid{n}}{0}\;{}\<[44]%
\>[44]{}\Varid{f}\;\Varid{p}\mathbin{*}\Varid{x}\!{\string^}^{\hspace{-1pt}\Varid{p}}{}\<[E]%
\ColumnHook
\end{hscode}\resethooks
}{0.45}{
\begin{hscode}\SaveRestoreHook
\column{B}{@{}>{\hspre}l<{\hspost}@{}}%
\column{E}{@{}>{\hspre}l<{\hspost}@{}}%
\>[B]{}\mathbf{infixr}\;\mathrm{8}\;{}\string^{}\<[E]%
\\
\>[B]{}(\string^)\mathbin{::}\Varid{b}^\Varid{n}\to \mathbb{N}^\Varid{n}\to \Varid{b}{}\<[E]%
\\
\>[B]{}\Varid{x}\!{\string^}^{\hspace{-1pt}\Varid{p}}\mathrel{=}\bigOp\prod{\Varid{i}\mathbin{<}\Varid{n}}{0}{\,}\Varid{x}_\Varid{i}^{\Varid{p}_\Varid{i}}{}\<[E]%
\ColumnHook
\end{hscode}\resethooks
}

\noindent
For instance, for \ensuremath{\Varid{n}\mathrel{=}\mathrm{3}}, \ensuremath{(\Varid{x},\Varid{y},\Varid{z})\!{\string^}^{\hspace{-1pt}(\Varid{i},\Varid{j},\Varid{k})}\mathrel{=}\Varid{x}^\Varid{i}\mathbin{*}\Varid{y}^\Varid{j}\mathbin{*}\Varid{z}^\Varid{k}}.
Generalizing further,\out{ rather than taking \ensuremath{\Varid{n}} here to be a natural number,} let \ensuremath{\Varid{n}} be any type\out{ with countable cardinality}, and interpret \ensuremath{\Varid{b}^\Varid{n}} and \ensuremath{\mathbb{N}^\Varid{n}} as \ensuremath{\Varid{n}\to \Varid{b}} and \ensuremath{\Varid{n}\to \mathbb{N}}:
\twocol{0.5}{
\begin{hscode}\SaveRestoreHook
\column{B}{@{}>{\hspre}l<{\hspost}@{}}%
\column{51}{@{}>{\hspre}l<{\hspost}@{}}%
\column{E}{@{}>{\hspre}l<{\hspost}@{}}%
\>[B]{}\Varid{poly}\mathbin{::}(\Varid{b}\leftarrow(\Varid{n}\to \mathbb{N}))\to ((\Varid{n}\to \Varid{b})\to \Varid{b}){}\<[E]%
\\
\>[B]{}\Varid{poly}\;(\Conid{F}\;\Varid{f})\;(\Varid{x}\mathbin{::}\Varid{n}\to \Varid{b})\mathrel{=}\bigOp\sum{\Varid{p}\mathbin{::}\Varid{n}\to \mathbb{N}}{1}\;{}\<[51]%
\>[51]{}\Varid{f}\;\Varid{p}\mathbin{*}\Varid{x}\!{\string^}^{\hspace{-1pt}\Varid{p}}{}\<[E]%
\ColumnHook
\end{hscode}\resethooks
}{0.45}{
\begin{hscode}\SaveRestoreHook
\column{B}{@{}>{\hspre}l<{\hspost}@{}}%
\column{E}{@{}>{\hspre}l<{\hspost}@{}}%
\>[B]{}\mathbf{infixr}\;\mathrm{8}\;{}^{}\<[E]%
\\
\>[B]{}(\string^)\mathbin{::}(\Varid{n}\to \Varid{b})\to (\Varid{n}\to \mathbb{N})\to \Varid{b}{}\<[E]%
\\
\>[B]{}\Varid{x}\!{\string^}^{\hspace{-1pt}\Varid{p}}\mathrel{=}\bigOp\prod{\Varid{i}}{0}{\,}(\Varid{x}\;\Varid{i})^{(\Varid{p}\;\Varid{i})}{}\<[E]%
\ColumnHook
\end{hscode}\resethooks
}

\begin{lemma}[\provedIn{lemma:pows hom}]\lemlabel{pows hom}
When \ensuremath{(\mathbin{*})} commutes, \ensuremath{(\string^)} satisfies the following exponentiation laws:
\notefoot{Maybe also \ensuremath{(\Varid{x}\!{\string^}^{\hspace{-1pt}\Varid{p}})\!{\string^}^{\hspace{-1pt}\Varid{q}}\mathrel{=}\Varid{x}\!{\string^}^{\hspace{-1pt}\Varid{p}\mathbin{*}\Varid{q}}}. Hunch: I'd have to generalize regular exponentiation and make \ensuremath{(\string^)} a special case.
Handily, I could then drop the carrot symbol.
I think I'll also need the \ensuremath{\Conid{Listable}} class (preferably with a better name).}
\begin{hscode}\SaveRestoreHook
\column{B}{@{}>{\hspre}l<{\hspost}@{}}%
\column{E}{@{}>{\hspre}l<{\hspost}@{}}%
\>[B]{}\Varid{x}\!{\string^}^{\hspace{-1pt}\mathrm{0}}\mathrel{=}\mathrm{1}{}\<[E]%
\\
\>[B]{}\Varid{x}\!{\string^}^{\hspace{-1pt}\Varid{p}\mathbin{+}\Varid{q}}\mathrel{=}\Varid{x}\!{\string^}^{\hspace{-1pt}\Varid{p}}\mathbin{*}\Varid{x}\!{\string^}^{\hspace{-1pt}\Varid{q}}{}\<[E]%
\ColumnHook
\end{hscode}\resethooks
In other words, \ensuremath{\Varid{x}\!{\string^}^{\hspace{-1pt}\cdot }} is a (commutative) monoid homomorphism from the sum monoid to the product monoid.
\end{lemma}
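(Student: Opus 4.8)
The plan is to unfold the definition of $x^{(\cdot)}$ and reduce each of the two homomorphism equations to (i) the ordinary single-variable exponentiation laws on $b$ and (ii) a distributivity-of-products property. Recall that the \emph{sum monoid} here is the additive monoid $n \to \mathbb{N}$ (pointwise $0$ and $(+)$, per the $\Conid{Additive}$ instance for functions), the \emph{product monoid} is the multiplicative monoid $(b,{*},1)$ of the semiring $b$, and by definition $x^{p} = \prod_{i}\,(x\,i)^{(p\,i)}$, where $(x\,i)^{(p\,i)}$ denotes ordinary semiring exponentiation.

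First I would record the single-variable laws: for any $y :: b$ and $m,n :: \mathbb{N}$,
\[
y^{0} = 1, \qquad y^{m+n} = y^{m} * y^{n}.
\]
The first is immediate from the defining equation $y^{0}=1$, and the second follows by a routine induction on $m$ using $y^{(k+1)} = y * y^{k}$ together with associativity of $(*)$; neither needs commutativity.

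Next, the zero equation is a direct calculation: since $0\,i = 0$ for every $i$ (definition of $0$ on functions), we have $x^{0} = \prod_{i}(x\,i)^{0} = \prod_{i} 1 = 1$, the last step being that the indexed product of the multiplicative identity is $1$. For the additivity equation I would unfold pointwise,
\[
x^{p+q} = \prod_{i}(x\,i)^{(p+q)\,i} = \prod_{i}(x\,i)^{(p\,i)+(q\,i)} = \prod_{i}\bigl((x\,i)^{p\,i} * (x\,i)^{q\,i}\bigr),
\]
using the function-$(+)$ definition $(p+q)\,i = p\,i + q\,i$ and the single-variable law from the previous step. It then remains to split the product of pointwise products, $\prod_{i}(a_i * b_i) = \bigl(\prod_{i} a_i\bigr) * \bigl(\prod_{i} b_i\bigr)$, which collapses the expression to $x^{p} * x^{q}$.

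The hard part is exactly this last interchange, and it is where the commutativity hypothesis on $(*)$ is consumed: reordering the interleaved factors $a_0 b_0 a_1 b_1 \cdots$ into $a_0 a_1 \cdots b_0 b_1 \cdots$ requires a commutative multiplicative monoid. For a finite index type $n$ this is a straightforward induction on the (finite) list of indices; for an infinite index type one must additionally observe that $x^{p}$ is really a \emph{finite} product, since in the intended applications $p$ (and $q$) have finite support, so all but finitely many factors equal $1$ and the indexed product is well defined. Equivalently, I would phrase the interchange as the statement that $\prod_{i}$ is a monoid homomorphism from the commutative monoid of finitely-supported families under pointwise $(*)$ to $(b,{*},1)$, and then derive both equations as its two homomorphism conditions, mirroring the homomorphism-driven style used throughout the paper.
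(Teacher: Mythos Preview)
Your proof is correct and follows essentially the same route as the paper's: unfold the definition of $x^{p}$, apply the pointwise definitions of $0$ and $(+)$ on $n \to \mathbb{N}$, use the single-variable exponent laws, and then split the product $\prod_i(a_i * b_i)$ into $(\prod_i a_i)*(\prod_i b_i)$ using commutativity of $(*)$. You are in fact a bit more careful than the paper about \emph{where} commutativity is needed (only at the product-interchange step, not at the single-variable law $y^{m+n}=y^m*y^n$), and your remark about finite support for infinite index types is a reasonable caveat that the paper leaves implicit.
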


\begin{theorem}%
\thmlabel{generalized poly hom}
The generalized \ensuremath{\Varid{poly}} function is a semiring homomorphism when multiplication on \ensuremath{\Varid{b}} commutes.
\end{theorem}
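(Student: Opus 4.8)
The plan is to verify the three defining properties of a semiring homomorphism (\defref{semiring homomorphism}): additivity, preservation of $\mathrm{1}$, and preservation of $(\mathbin{*})$. The target semiring $(\Varid{n}\to \Varid{b})\to \Varid{b}$ carries the \emph{pointwise} structure of \figref{-> and <-- semirings} ($\mathrm{1}=\lambda\, \Varid{x}\to \mathrm{1}$ and $\Varid{f}\mathbin{*}\Varid{g}=\lambda\, \Varid{x}\to \Varid{f}\;\Varid{x}\mathbin{*}\Varid{g}\;\Varid{x}$), while the source $\Varid{b}\leftarrow(\Varid{n}\to \mathbb{N})$ carries convolution, with $\mathrm{1}=\Varid{single}\;\varepsilon$ and $(\mathbin{*})=\Varid{liftA}_2\;(\mathbin{+})$, where $\varepsilon$ is the constant-zero exponent vector (the identity of the pointwise-addition monoid on $\Varid{n}\to \mathbb{N}$) and, by the $\Varid{liftA}_2$ definition of \figref{FunApp}, the convolution of $\Conid{F}\;\Varid{f}$ and $\Conid{F}\;\Varid{g}$ has coefficient $\sum_{\Varid{p}+\Varid{q}=\Varid{w}} \Varid{f}\;\Varid{p}\mathbin{*}\Varid{g}\;\Varid{q}$ at $\Varid{w}$. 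Throughout I write $\Varid{x}^{\Varid{p}}$ for the generalized power $\prod_{\Varid{i}} (\Varid{x}\;\Varid{i})^{(\Varid{p}\;\Varid{i})}$, so that $\Varid{poly}\;(\Conid{F}\;\Varid{f})\;\Varid{x}=\sum_{\Varid{p}} \Varid{f}\;\Varid{p}\mathbin{*}\Varid{x}^{\Varid{p}}$. This generalizes \thmref{poly hom}, and the new leverage is \lemref{pows hom}, supplying the two exponent laws $\Varid{x}^{\varepsilon}=\mathrm{1}$ and $\Varid{x}^{\Varid{p}+\Varid{q}}=\Varid{x}^{\Varid{p}}\mathbin{*}\Varid{x}^{\Varid{q}}$.

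Additivity is the cheap part: $\Varid{poly}\;(\Conid{F}\;\Varid{f})$ is linear in the coefficient function $\Varid{f}$, since $\Varid{poly}\;\mathrm{0}\;\Varid{x}=\sum_{\Varid{p}} \mathrm{0}\mathbin{*}\Varid{x}^{\Varid{p}}=\mathrm{0}$ and, using distributivity together with associativity and commutativity of addition on $\Varid{b}$, $\Varid{poly}\;(\Varid{u}+\Varid{v})\;\Varid{x}=\sum_{\Varid{p}} (\Varid{f}\;\Varid{p}+\Varid{g}\;\Varid{p})\mathbin{*}\Varid{x}^{\Varid{p}}=\sum_{\Varid{p}} \Varid{f}\;\Varid{p}\mathbin{*}\Varid{x}^{\Varid{p}}+\sum_{\Varid{p}} \Varid{g}\;\Varid{p}\mathbin{*}\Varid{x}^{\Varid{p}}$, which is exactly $(\Varid{poly}\;\Varid{u}+\Varid{poly}\;\Varid{v})\;\Varid{x}$. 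For the unit, expand $\mathrm{1}=\Varid{single}\;\varepsilon=(\varepsilon\mapsto\mathrm{1})$, whose coefficient function is $\mathrm{1}$ at $\varepsilon$ and $\mathrm{0}$ elsewhere; the sum collapses to the single term $\Varid{x}^{\varepsilon}$, which equals $\mathrm{1}$ by the first law of \lemref{pows hom}. Since the target unit is $\lambda\, \Varid{x}\to \mathrm{1}$, this gives $\Varid{poly}\;\mathrm{1}=\mathrm{1}$.

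The multiplicative law is the crux. Writing $\Varid{u}=\Conid{F}\;\Varid{f}$ and $\Varid{v}=\Conid{F}\;\Varid{g}$, I would compute both sides and reconcile them. On the source side, the convolution formula gives
\begin{align*}
\Varid{poly}\;(\Varid{u}\mathbin{*}\Varid{v})\;\Varid{x} = \sum_{\Varid{w}} \Big(\sum_{\Varid{p}+\Varid{q}=\Varid{w}} \Varid{f}\;\Varid{p}\mathbin{*}\Varid{g}\;\Varid{q}\Big)\mathbin{*}\Varid{x}^{\Varid{w}}.
\end{align*}
On the target side, $\Varid{poly}\;\Varid{u}\mathbin{*}\Varid{poly}\;\Varid{v}$ is the pointwise product, so I would expand the product of the two sums by distributivity into the double sum $\sum_{\Varid{p},\Varid{q}} (\Varid{f}\;\Varid{p}\mathbin{*}\Varid{x}^{\Varid{p}})\mathbin{*}(\Varid{g}\;\Varid{q}\mathbin{*}\Varid{x}^{\Varid{q}})$, use commutativity of $(\mathbin{*})$ on $\Varid{b}$ to regroup each summand as $(\Varid{f}\;\Varid{p}\mathbin{*}\Varid{g}\;\Varid{q})\mathbin{*}(\Varid{x}^{\Varid{p}}\mathbin{*}\Varid{x}^{\Varid{q}})$, and apply the second law of \lemref{pows hom} to rewrite $\Varid{x}^{\Varid{p}}\mathbin{*}\Varid{x}^{\Varid{q}}$ as $\Varid{x}^{\Varid{p}+\Varid{q}}$, yielding $\sum_{\Varid{p},\Varid{q}} (\Varid{f}\;\Varid{p}\mathbin{*}\Varid{g}\;\Varid{q})\mathbin{*}\Varid{x}^{\Varid{p}+\Varid{q}}$. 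The two expressions then agree after reindexing the double sum by total exponent $\Varid{w}=\Varid{p}+\Varid{q}$: the fiber $\{(\Varid{p},\Varid{q}) : \Varid{p}+\Varid{q}=\Varid{w}\}$ is precisely the index set appearing in the convolution coefficient, so the grouping recovers the source-side form.

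The main obstacle is this final reindexing, a Fubini-style reassociation of what are in general infinite formal sums over $\Varid{n}\to \mathbb{N}$: both the distributive expansion of a product of sums into a double sum and the subsequent regrouping by total degree must be justified, and it is exactly here that commutativity of $\Varid{b}$'s multiplication is indispensable, since it is what lets the coefficients $\Varid{f}\;\Varid{p}\mathbin{*}\Varid{g}\;\Varid{q}$ be separated from the monomials $\Varid{x}^{\Varid{p}}\mathbin{*}\Varid{x}^{\Varid{q}}$. As elsewhere in the paper, these summations serve as specifications rather than executable computations, so I would carry out the reassociation purely at the level of the commutative additive monoid of $\Varid{b}$ without dwelling on convergence; the only genuinely load-bearing ingredients are distributivity, commutativity, and the two exponent laws of \lemref{pows hom}.
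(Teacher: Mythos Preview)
Your proposal is correct and follows essentially the same route as the paper: the paper's proof is literally ``Just like the proof of \thmref{poly hom}, given \lemref{pows hom},'' and you have carried out exactly that plan, replaying the additivity, unit, and product calculations from \proofref{theorem:poly hom} with the generalized exponent laws $\Varid{x}^{\varepsilon}=\mathrm{1}$ and $\Varid{x}^{\Varid{p}+\Varid{q}}=\Varid{x}^{\Varid{p}}\mathbin{*}\Varid{x}^{\Varid{q}}$ supplied by \lemref{pows hom} in place of the scalar ones. Your closing remarks about the Fubini-style reindexing and the role of commutativity are apt and match the paper's level of rigor.
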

\begin{proof}
Just like the proof of \thmref{poly hom}, given \lemref{pows hom}.
\end{proof}
\thmref{generalized poly hom} says that the \ensuremath{\Varid{b}\leftarrow(\Varid{n}\to \mathbb{N})} semiring (in which \ensuremath{(\mathbin{*})} is higher-dimensional convolution) correctly implements arithmetic on multivariate polynomials.
We can instead use \ensuremath{\Conid{Map}\;(\Varid{f}\;\mathbb{N})\;\Varid{b}} to denote \ensuremath{\Varid{b}\leftarrow(\Varid{n}\to \mathbb{N})}, where \ensuremath{\Varid{f}} is indexable with \ensuremath{\Conid{Key}\;\Varid{f}\mathrel{=}\Varid{n}}.
One convenient choice is to let \ensuremath{\Varid{n}\mathrel{=}\Conid{String}} for variable names, and \ensuremath{\Varid{f}\mathrel{=}\Conid{Map}\;\Conid{String}}.\footnote{Unfortunately, the \ensuremath{\Conid{Monoid}} instance for the standard \ensuremath{\Conid{Map}} type defines \ensuremath{\Varid{m} \diamond \Varid{m'}} so that keys present in \ensuremath{\Varid{m'}} \emph{replace} those in \ensuremath{\Varid{m}}.
This behavior is problematic for our use (and many others), so we must use a \ensuremath{\Conid{Map}} variant that wraps the standard type, changing the \ensuremath{\Conid{Monoid}} instance so that \ensuremath{\Varid{m} \diamond \Varid{m'}} \emph{combines} values (via \ensuremath{( \diamond )}) associated with the same keys in \ensuremath{\Varid{m}} and \ensuremath{\Varid{m'}}.}
As with \ensuremath{\Conid{Poly}_{1}}, wrap this representation in a new type, and add a \ensuremath{\Conid{Show}} instance:
\begin{hscode}\SaveRestoreHook
\column{B}{@{}>{\hspre}l<{\hspost}@{}}%
\column{3}{@{}>{\hspre}l<{\hspost}@{}}%
\column{E}{@{}>{\hspre}l<{\hspost}@{}}%
\>[B]{}\mathbf{newtype}\;\Conid{Poly}_{\!M}\;\Varid{b}\mathrel{=}\Conid{Poly}_{\!M}\;(\Conid{Map}\;(\Conid{Map}\;\Conid{String}\;\mathbb{N})\;\Varid{b}){}\<[E]%
\\
\>[B]{}\hsindent{3}{}\<[3]%
\>[3]{}\mathbf{deriving}\;(\Conid{Additive},\Conid{Semiring},\Conid{Indexable}\;\Varid{n},\Conid{HasSingle}\;\Varid{n},\Conid{Functor}){}\<[E]%
\\[1.5ex]\>[B]{}\mathbf{instance}\;(\ldots)\Rightarrow \Conid{Show}\;(\Conid{Poly}_{\!M}\;\Varid{b})\;\mathbf{where}\ldots{}\<[E]%
\\[1.5ex]\>[B]{}\Varid{var}\mathbin{::}\Conid{Semiring}\;\Varid{b}\Rightarrow \Conid{String}\to \Conid{Poly}_{\!M}\;\Varid{b}{}\<[E]%
\\
\>[B]{}\Varid{var}\mathrel{=}\Varid{single}\hsdot{\circ }{.\:}\Varid{single}{}\<[E]%
\ColumnHook
\end{hscode}\resethooks
Try it out:
%% %format @ = "\mathbin{@}"
\begin{hscode}\SaveRestoreHook
\column{B}{@{}>{\hspre}l<{\hspost}@{}}%
\column{E}{@{}>{\hspre}l<{\hspost}@{}}%
\>[B]{}\lambda\rangle\ \mathbf{let}\;\Varid{p}\mathrel{=}\Varid{var}\;\text{\tt \char34 x\char34}\mathbin{+}\Varid{var}\;\text{\tt \char34 y\char34}\mathbin{+}\Varid{var}\;\text{\tt \char34 z\char34}\mathbin{::}\Conid{Poly}_{\!M}\;\mathbb Z{}\<[E]%
\\
\>[B]{}\lambda\rangle\ \Varid{p}{}\<[E]%
\\
\>[B]{}\Varid{x}\mathbin{+}\Varid{y}\mathbin{+}\Varid{z}{}\<[E]%
\\[1.5ex]\>[B]{}\lambda\rangle\ \Varid{p}^\mathrm{2}{}\<[E]%
\\
\>[B]{}\Varid{x}^\mathrm{2}\mathbin{+}\mathrm{2}{}\Varid{x}{}\Varid{y}\mathbin{+}\mathrm{2}{}\Varid{x}{}\Varid{z}\mathbin{+}\Varid{y}^\mathrm{2}\mathbin{+}\mathrm{2}{}\Varid{y}{}\Varid{z}\mathbin{+}\Varid{z}^\mathrm{2}{}\<[E]%
\\[1.5ex]\>[B]{}\lambda\rangle\ \Varid{p}^\mathrm{3}{}\<[E]%
\\
\>[B]{}\Varid{x}^\mathrm{3}\mathbin{+}\mathrm{3}{}\Varid{x}^\mathrm{2}{}\Varid{y}\mathbin{+}\mathrm{3}{}\Varid{x}{}\Varid{y}^\mathrm{2}\mathbin{+}\mathrm{6}{}\Varid{x}{}\Varid{y}{}\Varid{z}\mathbin{+}\mathrm{3}{}\Varid{x}^\mathrm{2}{}\Varid{z}\mathbin{+}\mathrm{3}{}\Varid{x}{}\Varid{z}^\mathrm{2}\mathbin{+}\Varid{y}^\mathrm{3}\mathbin{+}\mathrm{3}{}\Varid{y}^\mathrm{2}{}\Varid{z}\mathbin{+}\mathrm{3}{}\Varid{y}{}\Varid{z}^\mathrm{2}\mathbin{+}\Varid{z}^\mathrm{3}{}\<[E]%
\ColumnHook
\end{hscode}\resethooks

\note{Next:
\begin{itemize}
\item Generalize \ensuremath{\Varid{a}^\Varid{b}} and \ensuremath{(\Varid{a}\string^\Varid{b})} via a class, as in the implementation.
\item Maybe generalize \ensuremath{\Conid{Poly}_{1}} from the start.
\item Power series (infinite polynomials).
      Maybe also \ensuremath{[\mskip1.5mu \Varid{a}\mskip1.5mu]}, representing \ensuremath{\Varid{a}\leftarrow\mathbb{N}}.
\item Should I move multidimensional convolution to \secref{Convolution}?
\item References on multivariate polynomial multiplication \href{https://www.google.com/search?q=algorithm+for+multiplying+multivariate+polynomials}{(starting here)}
\item Generalize to $m$-dimensional codomains (and maybe swap roles of $m$ and $n$)
\item Finite maps
\end{itemize}
}

\subsectionl{Image Convolution}

\nc\figO[1]{
\begin{minipage}[t]{0.23\textwidth}
\centering
\includegraphics[width=\textwidth]{test/wizard-#1.png}
\\\textit{#1}
\end{minipage}
}

\figrefdef{wizard}{Image convolution}{
\figO{original}\figO{blur}\figO{sharpen}\figO{edge-detect}
\vspace{-3ex}
} shows examples of image convolution with some commonly used kernels \citep{Petrick2016Kernels,Young95FIP}.
The source image (left) and convolution kernels are all represented as lists of lists of floating point grayscale values.
Because (semiring) multiplication on \ensuremath{[\mskip1.5mu \Varid{b}\mskip1.5mu]} is defined via multiplication on \ensuremath{\Varid{b}}, one can nest representations arbitrarily.
Other more efficient representations can work similarly.

\sectionl{Related Work}

This paper began with a desire to understand regular expression matching via ``derivatives'' by \citet{Brzozowski64} more fundamentally and generally.
Brzozowski's method spurred much follow-up investigation in recent years.
\citet{Owens2009RE} dusted off regular expression derivatives after years of neglect with a new exposition and experience report.
\citet{Might2010YaccID} considerably extended expressiveness to context-free grammars\iflong{ (recursively defined regular expressions)} as well as addressing some efficiency issues, including memoization, with further performance analysis given later \citep{Adams2016CPP}.
\citet{Fischer2010PRE} also extended regular language membership from boolean to ``weighted'' by an arbitrary semiring, relating them to weighted finite automata.
\citet{Piponi2015PF} investigated regular expressions and their relationship to the semiring of polynomial functors, as well as data type derivatives and dissections.
\citet{Radanne2018RLG} explored regular expressions extended to include intersection and complement (as did Brzozowski) with an emphasis on testing.

\citet{McIlroy1999PSPS,McIlroy2001MS} formulated power series as a small and beautiful collection of operations on infinite coefficient streams, including not only the arithmetic operations, but also inversion and composition, as well as differentiation and integration.
He also defined transcendental operations by simple recursion and integration, such as \ensuremath{\Varid{sin}\mathrel{=}\Varid{integral}\;\Varid{cos}} and \ensuremath{\Varid{cos}\mathrel{=}\mathrm{1}\mathbin{-}\Varid{integral}\;\Varid{sin}}.

\citet{Dongol2016CUC} investigated convolution in a general algebraic setting that includes formal language concatenation.
\citet{Kmett2015MfL} observed that Moore machines are a special case of the cofree comonad.
The connections between parsing and semirings have been explored deeply by \citet{Goodman1998PIO,Goodman1999SP} and by \citet{Liu2004}, building on the foundational work of \citet{Chomsky1959CFL}.
\citet{Kmett2011FreeModules} also explored some issues similar to those in the present paper, building on semirings and free semimodules, pointing out that the classic continuation monad can neatly represent linear functionals.

\citet{Kidney2016Semi,semiring-num} implemented a Haskell semiring library that helped with early implementations leading to the present paper, with a particular leaning toward convolution \citep{Kidney2017CS}.
Several of the class instances given above, though independently encountered, also appear in that library.

\note{To do: More fully describe connections between this paper and the work cited above.}

%% \sectionl{Conclusions and Future Work}

\note{
\begin{itemize}
\item More careful performance testing, analysis, and optimization.
\item Explore Brzozowski derivatives as actual derivatives of residual functions, as in my journal notes from 2019-02-08.
\item Generalization from lists to other data types.
\item Comonadic animation and imagery.
\end{itemize}
}

\appendix

\sectionl{Proofs}

\subsection{\thmref{curry additive}}\prooflabel{theorem:curry additive}

\begin{hscode}\SaveRestoreHook
\column{B}{@{}>{\hspre}c<{\hspost}@{}}%
\column{BE}{@{}l@{}}%
\column{5}{@{}>{\hspre}l<{\hspost}@{}}%
\column{51}{@{}>{\hspre}l<{\hspost}@{}}%
\column{E}{@{}>{\hspre}l<{\hspost}@{}}%
\>[5]{}\Varid{curry}\;\mathrm{0}{}\<[E]%
\\
\>[B]{}\mathrel{=}{}\<[BE]%
\>[5]{}\Varid{curry}\;(\lambda\, (\Varid{x},\Varid{y})\to \mathrm{0}){}\<[51]%
\>[51]{}\mbox{\onelinecomment  \ensuremath{\mathrm{0}} on functions}{}\<[E]%
\\
\>[B]{}\mathrel{=}{}\<[BE]%
\>[5]{}\lambda\, \Varid{x}\to \lambda\, \Varid{y}\to \mathrm{0}{}\<[51]%
\>[51]{}\mbox{\onelinecomment  \ensuremath{\Varid{curry}} definition}{}\<[E]%
\\
\>[B]{}\mathrel{=}{}\<[BE]%
\>[5]{}\lambda\, \Varid{x}\to \mathrm{0}{}\<[51]%
\>[51]{}\mbox{\onelinecomment  \ensuremath{\mathrm{0}} on functions}{}\<[E]%
\\
\>[B]{}\mathrel{=}{}\<[BE]%
\>[5]{}\mathrm{0}{}\<[51]%
\>[51]{}\mbox{\onelinecomment  \ensuremath{\mathrm{0}} on functions}{}\<[E]%
\\[1.5ex]\>[5]{}\Varid{curry}\;(\Varid{f}\mathbin{+}\Varid{g}){}\<[E]%
\\
\>[B]{}\mathrel{=}{}\<[BE]%
\>[5]{}\Varid{curry}\;(\lambda\, (\Varid{x},\Varid{y})\to \Varid{f}\;(\Varid{x},\Varid{y})\mathbin{+}\Varid{g}\;(\Varid{x},\Varid{y})){}\<[51]%
\>[51]{}\mbox{\onelinecomment  \ensuremath{(\mathbin{+})} on functions}{}\<[E]%
\\
\>[B]{}\mathrel{=}{}\<[BE]%
\>[5]{}\lambda\, \Varid{x}\to \lambda\, \Varid{y}\to \Varid{f}\;(\Varid{x},\Varid{y})\mathbin{+}\Varid{g}\;(\Varid{x},\Varid{y}){}\<[51]%
\>[51]{}\mbox{\onelinecomment  \ensuremath{\Varid{curry}} definition}{}\<[E]%
\\
\>[B]{}\mathrel{=}{}\<[BE]%
\>[5]{}\lambda\, \Varid{x}\to \lambda\, \Varid{y}\to \Varid{curry}\;\Varid{f}\;\Varid{x}\;\Varid{y}\mathbin{+}\Varid{curry}\;\Varid{g}\;\Varid{x}\;\Varid{y}{}\<[51]%
\>[51]{}\mbox{\onelinecomment  \ensuremath{\Varid{curry}} definition (twice)}{}\<[E]%
\\
\>[B]{}\mathrel{=}{}\<[BE]%
\>[5]{}\lambda\, \Varid{x}\to \Varid{curry}\;\Varid{f}\;\Varid{x}\mathbin{+}\Varid{curry}\;\Varid{g}\;\Varid{x}{}\<[51]%
\>[51]{}\mbox{\onelinecomment  \ensuremath{(\mathbin{+})} on functions}{}\<[E]%
\\
\>[B]{}\mathrel{=}{}\<[BE]%
\>[5]{}\Varid{curry}\;\Varid{f}\mathbin{+}\Varid{curry}\;\Varid{g}{}\<[51]%
\>[51]{}\mbox{\onelinecomment  \ensuremath{(\mathbin{+})} on functions}{}\<[E]%
\ColumnHook
\end{hscode}\resethooks
Likewise for \ensuremath{\Varid{uncurry}}, or because \ensuremath{\Varid{curry}} and \ensuremath{\Varid{uncurry}} are inverses.

\subsection{\thmref{curry semiring}}\prooflabel{theorem:curry semiring}

For \ensuremath{\mathrm{1}\mathbin{::}\Varid{u} \times \Varid{v}\to \Varid{b}},
\begin{hscode}\SaveRestoreHook
\column{B}{@{}>{\hspre}c<{\hspost}@{}}%
\column{BE}{@{}l@{}}%
\column{5}{@{}>{\hspre}l<{\hspost}@{}}%
\column{38}{@{}>{\hspre}l<{\hspost}@{}}%
\column{E}{@{}>{\hspre}l<{\hspost}@{}}%
\>[5]{}\Varid{curry}\;\mathrm{1}{}\<[E]%
\\
\>[B]{}\mathrel{=}{}\<[BE]%
\>[5]{}\Varid{curry}\;(\varepsilon\mapsto\mathrm{1}){}\<[38]%
\>[38]{}\mbox{\onelinecomment  \ensuremath{\mathrm{1}} on functions}{}\<[E]%
\\
\>[B]{}\mathrel{=}{}\<[BE]%
\>[5]{}\Varid{curry}\;((\varepsilon,\varepsilon)\mapsto\mathrm{1}){}\<[38]%
\>[38]{}\mbox{\onelinecomment  \ensuremath{\varepsilon} on pairs}{}\<[E]%
\\
\>[B]{}\mathrel{=}{}\<[BE]%
\>[5]{}\varepsilon\mapsto\varepsilon\mapsto\mathrm{1}{}\<[38]%
\>[38]{}\mbox{\onelinecomment  \lemref{curry +->}}{}\<[E]%
\\
\>[B]{}\mathrel{=}{}\<[BE]%
\>[5]{}\varepsilon\mapsto\mathrm{1}{}\<[38]%
\>[38]{}\mbox{\onelinecomment  \ensuremath{\mathrm{1}} on functions}{}\<[E]%
\\
\>[B]{}\mathrel{=}{}\<[BE]%
\>[5]{}\mathrm{1}{}\<[38]%
\>[38]{}\mbox{\onelinecomment  \ensuremath{\mathrm{1}} on functions}{}\<[E]%
\ColumnHook
\end{hscode}\resethooks

\noindent
For \ensuremath{\Varid{f},\Varid{g}\mathbin{::}\Varid{u} \times \Varid{v}\to \Varid{b}},
\begin{hscode}\SaveRestoreHook
\column{B}{@{}>{\hspre}c<{\hspost}@{}}%
\column{BE}{@{}l@{}}%
\column{5}{@{}>{\hspre}l<{\hspost}@{}}%
\column{82}{@{}>{\hspre}l<{\hspost}@{}}%
\column{E}{@{}>{\hspre}l<{\hspost}@{}}%
\>[5]{}\Varid{curry}\;(\Varid{f}\mathbin{*}\Varid{g}){}\<[E]%
\\
\>[B]{}\mathrel{=}{}\<[BE]%
\>[5]{}\Varid{curry}\;(\bigOp\sum{(\Varid{u},\Varid{v}),(\Varid{s},\Varid{t})}{1.4}\;(\Varid{u},\Varid{s}) \diamond (\Varid{v},\Varid{t})\mapsto\Varid{f}\;(\Varid{u},\Varid{s})\mathbin{*}\Varid{g}\;(\Varid{v},\Varid{t})){}\<[82]%
\>[82]{}\mbox{\onelinecomment  \ensuremath{(\mathbin{*})} on functions (monoid semiring)}{}\<[E]%
\\
\>[B]{}\mathrel{=}{}\<[BE]%
\>[5]{}\Varid{curry}\;(\bigOp\sum{(\Varid{u},\Varid{v}),(\Varid{s},\Varid{t})}{1.4}\;(\Varid{u} \diamond \Varid{v},\Varid{s} \diamond \Varid{t})\mapsto\Varid{f}\;(\Varid{u},\Varid{s})\mathbin{*}\Varid{g}\;(\Varid{v},\Varid{t})){}\<[82]%
\>[82]{}\mbox{\onelinecomment  \ensuremath{( \diamond )} on pairs}{}\<[E]%
\\
\>[B]{}\mathrel{=}{}\<[BE]%
\>[5]{}\bigOp\sum{(\Varid{u},\Varid{v}),(\Varid{s},\Varid{t})}{1.4}\;\Varid{u} \diamond \Varid{v}\mapsto\Varid{s} \diamond \Varid{t}\mapsto\Varid{f}\;(\Varid{u},\Varid{s})\mathbin{*}\Varid{g}\;(\Varid{v},\Varid{t}){}\<[82]%
\>[82]{}\mbox{\onelinecomment  \lemref{curry +->}}{}\<[E]%
\\
\>[B]{}\mathrel{=}{}\<[BE]%
\>[5]{}\bigOp\sum{\Varid{u},\Varid{v}}{0}{\,}\bigOp\sum{\Varid{s},\Varid{t}}{0}\;\Varid{u} \diamond \Varid{v}\mapsto\Varid{s} \diamond \Varid{t}\mapsto\Varid{f}\;(\Varid{u},\Varid{s})\mathbin{*}\Varid{g}\;(\Varid{v},\Varid{t}){}\<[82]%
\>[82]{}\mbox{\onelinecomment  summation mechanics}{}\<[E]%
\\
\>[B]{}\mathrel{=}{}\<[BE]%
\>[5]{}\bigOp\sum{\Varid{u},\Varid{v}}{0}\;\Varid{u} \diamond \Varid{v}\mapsto\bigOp\sum{\Varid{s},\Varid{t}}{0}\;\Varid{s} \diamond \Varid{t}\mapsto\Varid{f}\;(\Varid{u},\Varid{s})\mathbin{*}\Varid{g}\;(\Varid{v},\Varid{t}){}\<[82]%
\>[82]{}\mbox{\onelinecomment  \lemref{+-> homomorphism}}{}\<[E]%
\\
\>[B]{}\mathrel{=}{}\<[BE]%
\>[5]{}\bigOp\sum{\Varid{u},\Varid{v}}{0}\;\Varid{u} \diamond \Varid{v}\mapsto\bigOp\sum{\Varid{s},\Varid{t}}{0}\;\Varid{s} \diamond \Varid{t}\mapsto\Varid{curry}\;\Varid{f}\;\Varid{u}\;\Varid{s}\mathbin{*}\Varid{curry}\;\Varid{g}\;\Varid{v}\;\Varid{t}{}\<[82]%
\>[82]{}\mbox{\onelinecomment  \ensuremath{\Varid{curry}} definition}{}\<[E]%
\\
\>[B]{}\mathrel{=}{}\<[BE]%
\>[5]{}\bigOp\sum{\Varid{u},\Varid{v}}{0}\;\Varid{u} \diamond \Varid{v}\mapsto\Varid{curry}\;\Varid{f}\;\Varid{u}\mathbin{*}\Varid{curry}\;\Varid{g}\;\Varid{v}{}\<[82]%
\>[82]{}\mbox{\onelinecomment  \ensuremath{(\mapsto)} on functions}{}\<[E]%
\\
\>[B]{}\mathrel{=}{}\<[BE]%
\>[5]{}\Varid{curry}\;\Varid{f}\mathbin{*}\Varid{curry}\;\Varid{g}{}\<[82]%
\>[82]{}\mbox{\onelinecomment  \ensuremath{(\mapsto)} on functions}{}\<[E]%
\ColumnHook
\end{hscode}\resethooks

\subsection{\lemref{decomp +->}}\prooflabel{lemma:decomp +->}

\begin{hscode}\SaveRestoreHook
\column{B}{@{}>{\hspre}c<{\hspost}@{}}%
\column{BE}{@{}l@{}}%
\column{5}{@{}>{\hspre}l<{\hspost}@{}}%
\column{55}{@{}>{\hspre}l<{\hspost}@{}}%
\column{E}{@{}>{\hspre}l<{\hspost}@{}}%
\>[5]{}\bigOp\sum{\Varid{a}}{0}\;\Varid{a}\mapsto\Varid{f}\;\Varid{a}{}\<[E]%
\\
\>[B]{}\mathrel{=}{}\<[BE]%
\>[5]{}\bigOp\sum{\Varid{a}}{0}\;(\lambda\, \Varid{a'}\to \mathbf{if}\;\Varid{a'}\mathrel{=}\Varid{a}\;\mathbf{then}\;\Varid{f}\;\Varid{a}\;\mathbf{else}\;\mathrm{0}){}\<[55]%
\>[55]{}\mbox{\onelinecomment  \ensuremath{(\mapsto)} on \ensuremath{\Varid{a}\to \Varid{b}}}{}\<[E]%
\\
\>[B]{}\mathrel{=}{}\<[BE]%
\>[5]{}\lambda\, \Varid{a'}\to \bigOp\sum{\Varid{a}}{0}\;\mathbf{if}\;\Varid{a'}\mathrel{=}\Varid{a}\;\mathbf{then}\;\Varid{f}\;\Varid{a}\;\mathbf{else}\;\mathrm{0}{}\<[55]%
\>[55]{}\mbox{\onelinecomment  \ensuremath{(\mathbin{+})} on \ensuremath{\Varid{a}\to \Varid{b}}}{}\<[E]%
\\
\>[B]{}\mathrel{=}{}\<[BE]%
\>[5]{}\lambda\, \Varid{a'}\to \Varid{f}\;\Varid{a'}{}\<[55]%
\>[55]{}\mbox{\onelinecomment  other addends vanish}{}\<[E]%
\\
\>[B]{}\mathrel{=}{}\<[BE]%
\>[5]{}\Varid{f}{}\<[55]%
\>[55]{}\mbox{\onelinecomment  $\eta$ reduction}{}\<[E]%
\ColumnHook
\end{hscode}\resethooks

\subsection{\lemref{curry +->}}\prooflabel{lemma:curry +->}

\begin{hscode}\SaveRestoreHook
\column{B}{@{}>{\hspre}c<{\hspost}@{}}%
\column{BE}{@{}l@{}}%
\column{5}{@{}>{\hspre}l<{\hspost}@{}}%
\column{74}{@{}>{\hspre}l<{\hspost}@{}}%
\column{E}{@{}>{\hspre}l<{\hspost}@{}}%
\>[5]{}\Varid{curry}\;((\Varid{a},\Varid{b})\mapsto\Varid{c}){}\<[E]%
\\
\>[B]{}\mathrel{=}{}\<[BE]%
\>[5]{}\Varid{curry}\;(\lambda\, (\Varid{u},\Varid{v})\to \mathbf{if}\;(\Varid{u},\Varid{v})\mathrel{=}(\Varid{a},\Varid{b})\;\mathbf{then}\;\Varid{c}\;\mathbf{else}\;\mathrm{0}){}\<[74]%
\>[74]{}\mbox{\onelinecomment  \ensuremath{(\mapsto)} on functions}{}\<[E]%
\\
\>[B]{}\mathrel{=}{}\<[BE]%
\>[5]{}\Varid{curry}\;(\lambda\, (\Varid{u},\Varid{v})\to \mathbf{if}\;\Varid{u}\mathrel{=}\Varid{a}\mathrel{\wedge}\Varid{v}\mathrel{=}\Varid{b}\;\mathbf{then}\;\Varid{c}\;\mathbf{else}\;\mathrm{0}){}\<[74]%
\>[74]{}\mbox{\onelinecomment  pairing is injective}{}\<[E]%
\\
\>[B]{}\mathrel{=}{}\<[BE]%
\>[5]{}\lambda\, \Varid{u}\to \lambda\, \Varid{v}\to \mathbf{if}\;\Varid{u}\mathrel{=}\Varid{a}\mathrel{\wedge}\Varid{v}\mathrel{=}\Varid{b}\;\mathbf{then}\;\Varid{c}\;\mathbf{else}\;\mathrm{0}{}\<[74]%
\>[74]{}\mbox{\onelinecomment  \ensuremath{\Varid{curry}} definition}{}\<[E]%
\\
\>[B]{}\mathrel{=}{}\<[BE]%
\>[5]{}\lambda\, \Varid{u}\to \lambda\, \Varid{v}\to \mathbf{if}\;\Varid{u}\mathrel{=}\Varid{a}\;\mathbf{then}\;(\mathbf{if}\;\Varid{v}\mathrel{=}\Varid{b}\;\mathbf{then}\;\Varid{c}\;\mathbf{else}\;\mathrm{0})\;\mathbf{else}\;\mathrm{0}{}\<[74]%
\>[74]{}\mbox{\onelinecomment  property of \ensuremath{\mathbf{if}} and \ensuremath{(\mathrel{\wedge})}}{}\<[E]%
\\
\>[B]{}\mathrel{=}{}\<[BE]%
\>[5]{}\lambda\, \Varid{u}\to \mathbf{if}\;\Varid{u}\mathrel{=}\Varid{a}\;\mathbf{then}\;(\lambda\, \Varid{v}\to \mathbf{if}\;\Varid{v}\mathrel{=}\Varid{b}\;\mathbf{then}\;\Varid{c}\;\mathbf{else}\;\mathrm{0})\;\mathbf{else}\;\mathrm{0}{}\<[74]%
\>[74]{}\mbox{\onelinecomment  \ensuremath{(\Varid{u}\mathrel{=}\Varid{a})} is independent of \ensuremath{\Varid{v}}}{}\<[E]%
\\
\>[B]{}\mathrel{=}{}\<[BE]%
\>[5]{}\lambda\, \Varid{u}\to \mathbf{if}\;\Varid{u}\mathrel{=}\Varid{a}\;\mathbf{then}\;\Varid{b}\mapsto\Varid{c}\;\mathbf{else}\;\mathrm{0}{}\<[74]%
\>[74]{}\mbox{\onelinecomment  \ensuremath{(\mapsto)} on functions}{}\<[E]%
\\
\>[B]{}\mathrel{=}{}\<[BE]%
\>[5]{}\Varid{a}\mapsto\Varid{b}\mapsto\Varid{c}{}\<[74]%
\>[74]{}\mbox{\onelinecomment  \ensuremath{(\mapsto)} on functions}{}\<[E]%
\ColumnHook
\end{hscode}\resethooks

\subsection{\thmref{semiring hom ->}}\prooflabel{theorem:semiring hom ->}

\begin{hscode}\SaveRestoreHook
\column{B}{@{}>{\hspre}c<{\hspost}@{}}%
\column{BE}{@{}l@{}}%
\column{5}{@{}>{\hspre}l<{\hspost}@{}}%
\column{49}{@{}>{\hspre}l<{\hspost}@{}}%
\column{E}{@{}>{\hspre}l<{\hspost}@{}}%
\>[5]{}\Varid{pred}\;\mathrm{1}{}\<[E]%
\\
\>[B]{}\mathrel{=}{}\<[BE]%
\>[5]{}\Varid{pred}\;\set{\varepsilon}{}\<[49]%
\>[49]{}\mbox{\onelinecomment  \ensuremath{\mathrm{1}} on sets}{}\<[E]%
\\
\>[B]{}\mathrel{=}{}\<[BE]%
\>[5]{}\lambda\, \Varid{w}\to \Varid{w}\mathbin{\in}\set{\varepsilon}{}\<[49]%
\>[49]{}\mbox{\onelinecomment  \ensuremath{\Varid{pred}} definition}{}\<[E]%
\\
\>[B]{}\mathrel{=}{}\<[BE]%
\>[5]{}\lambda\, \Varid{w}\to \Varid{w}\mathrel{=}\varepsilon{}\<[49]%
\>[49]{}\mbox{\onelinecomment  property of sets}{}\<[E]%
\\
\>[B]{}\mathrel{=}{}\<[BE]%
\>[5]{}\lambda\, \Varid{w}\to \mathbf{if}\;\Varid{w}\mathrel{=}\varepsilon\;\mathbf{then}\;\Conid{True}\;\mathbf{else}\;\Conid{False}{}\<[49]%
\>[49]{}\mbox{\onelinecomment  property of \ensuremath{\mathbf{if}}}{}\<[E]%
\\
\>[B]{}\mathrel{=}{}\<[BE]%
\>[5]{}\lambda\, \Varid{w}\to \mathbf{if}\;\Varid{w}\mathrel{=}\varepsilon\;\mathbf{then}\;\mathrm{1}\;\mathbf{else}\;\mathrm{0}{}\<[49]%
\>[49]{}\mbox{\onelinecomment  \ensuremath{\mathrm{1}} and \ensuremath{\mathrm{0}} on \ensuremath{\Conid{Bool}}}{}\<[E]%
\\
\>[B]{}\mathrel{=}{}\<[BE]%
\>[5]{}\varepsilon\mapsto\mathrm{1}{}\<[49]%
\>[49]{}\mbox{\onelinecomment  \ensuremath{(\mapsto)} definition}{}\<[E]%
\\
\>[B]{}\mathrel{=}{}\<[BE]%
\>[5]{}\mathrm{1}{}\<[49]%
\>[49]{}\mbox{\onelinecomment  \ensuremath{\mathrm{1}} on functions}{}\<[E]%
\ColumnHook
\end{hscode}\resethooks

\begin{hscode}\SaveRestoreHook
\column{B}{@{}>{\hspre}c<{\hspost}@{}}%
\column{BE}{@{}l@{}}%
\column{5}{@{}>{\hspre}l<{\hspost}@{}}%
\column{78}{@{}>{\hspre}l<{\hspost}@{}}%
\column{E}{@{}>{\hspre}l<{\hspost}@{}}%
\>[5]{}\Varid{pred}^{-1}\;(\Varid{pred}\;\Varid{p}\mathbin{*}\Varid{pred}\;\Varid{q}){}\<[E]%
\\
\>[B]{}\mathrel{=}{}\<[BE]%
\>[5]{}\Varid{pred}^{-1}\;(\lambda\, \Varid{w}\to \bigOp\sum{\Varid{u},\Varid{v}\;\!\!\\\!\!\;\Varid{w}\mathrel{=}\Varid{u} \diamond \Varid{v}}{1}\;\Varid{pred}\;\Varid{p}\;\Varid{u}\mathbin{*}\Varid{pred}\;\Varid{q}\;\Varid{v}){}\<[78]%
\>[78]{}\mbox{\onelinecomment  \ensuremath{(\mathbin{*})} on functions}{}\<[E]%
\\
\>[B]{}\mathrel{=}{}\<[BE]%
\>[5]{}\Varid{pred}^{-1}\;(\lambda\, \Varid{w}\to \bigOp\sum{\Varid{u},\Varid{v}\;\!\!\\\!\!\;\Varid{w}\mathrel{=}\Varid{u} \diamond \Varid{v}}{1}\;(\Varid{u}\mathbin{\in}\Varid{p})\mathbin{*}(\Varid{v}\mathbin{\in}\Varid{q})){}\<[78]%
\>[78]{}\mbox{\onelinecomment  \ensuremath{\Varid{pred}} definition (twice)}{}\<[E]%
\\
\>[B]{}\mathrel{=}{}\<[BE]%
\>[5]{}\Varid{pred}^{-1}\;(\lambda\, \Varid{w}\to \bigOp\bigvee{\Varid{u},\Varid{v}\;\!\!\\\!\!\;\Varid{w}\mathrel{=}\Varid{u} \diamond \Varid{v}}{1.5}\;\Varid{u}\mathbin{\in}\Varid{p}\mathrel{\wedge}\Varid{v}\mathbin{\in}\Varid{q}){}\<[78]%
\>[78]{}\mbox{\onelinecomment  \ensuremath{(\mathbin{+})} and \ensuremath{(\mathbin{*})} on \ensuremath{\Conid{Bool}}}{}\<[E]%
\\
\>[B]{}\mathrel{=}{}\<[BE]%
\>[5]{}\set{\Varid{w}\mid\bigOp\bigvee{\Varid{u},\Varid{v}}{0}\;\Varid{w}\mathrel{=}\Varid{u} \diamond \Varid{v}\mathrel{\wedge}\Varid{u}\mathbin{\in}\Varid{p}\mathrel{\wedge}\Varid{v}\mathbin{\in}\Varid{q}}{}\<[78]%
\>[78]{}\mbox{\onelinecomment  \ensuremath{\Varid{pred}^{-1}} definition}{}\<[E]%
\\
\>[B]{}\mathrel{=}{}\<[BE]%
\>[5]{}\set{\Varid{u} \diamond \Varid{v}\mid\Varid{u}\mathbin{\in}\Varid{p}\mathrel{\wedge}\Varid{v}\mathbin{\in}\Varid{q}}{}\<[78]%
\>[78]{}\mbox{\onelinecomment  set notation}{}\<[E]%
\\
\>[B]{}\mathrel{=}{}\<[BE]%
\>[5]{}\Varid{p}\mathbin{*}\Varid{q}{}\<[78]%
\>[78]{}\mbox{\onelinecomment  \ensuremath{(\mathbin{*})} on sets}{}\<[E]%
\ColumnHook
\end{hscode}\resethooks

For \ensuremath{\Conid{StarSemiring}} the default recursive definition embodies the star semiring law.
\note{Hm. Assuming not bottom?}

%% \subsection{\thmref{Map}}\prooflabel{theorem:Map}

\subsection{\lemref{decomp ([c] -> b)}}\prooflabel{lemma:decomp ([c] -> b)}

Any argument to \ensuremath{\Varid{f}} must be either \ensuremath{[\mskip1.5mu \mskip1.5mu]} or \ensuremath{\Varid{c}\mathbin{:}\Varid{cs}} for some value \ensuremath{\Varid{c}} and list \ensuremath{\Varid{cs}}.
Consider each case:
\begin{hscode}\SaveRestoreHook
\column{B}{@{}>{\hspre}c<{\hspost}@{}}%
\column{BE}{@{}l@{}}%
\column{5}{@{}>{\hspre}l<{\hspost}@{}}%
\column{34}{@{}>{\hspre}l<{\hspost}@{}}%
\column{E}{@{}>{\hspre}l<{\hspost}@{}}%
\>[5]{}(\Varid{at}_\epsilon\;\Varid{f}\mathrel\triangleleft\derivOp\;\Varid{f})\;[\mskip1.5mu \mskip1.5mu]{}\<[E]%
\\
\>[B]{}\mathrel{=}{}\<[BE]%
\>[5]{}\Varid{at}_\epsilon\;\Varid{f}\;[\mskip1.5mu \mskip1.5mu]{}\<[34]%
\>[34]{}\mbox{\onelinecomment  \ensuremath{\Varid{b}\mathrel\triangleleft\Varid{h}} definition}{}\<[E]%
\\
\>[B]{}\mathrel{=}{}\<[BE]%
\>[5]{}\Varid{f}\;[\mskip1.5mu \mskip1.5mu]{}\<[34]%
\>[34]{}\mbox{\onelinecomment  \ensuremath{\Varid{at}_\epsilon} definition}{}\<[E]%
\\[1.5ex]\>[5]{}(\Varid{at}_\epsilon\;\Varid{f}\mathrel\triangleleft\derivOp\;\Varid{f})\;(\Varid{c}\mathbin{:}\Varid{cs})\;{}\<[34]%
\>[34]{}{}{}\<[E]%
\\
\>[B]{}\mathrel{=}{}\<[BE]%
\>[5]{}\derivOp\;\Varid{f}\;(\Varid{c}\mathbin{:}\Varid{cs}){}\<[34]%
\>[34]{}\mbox{\onelinecomment  \ensuremath{\Varid{b}\mathrel\triangleleft\Varid{h}} definition}{}\<[E]%
\\
\>[B]{}\mathrel{=}{}\<[BE]%
\>[5]{}\Varid{f}\;(\Varid{c}\mathbin{:}\Varid{cs}){}\<[34]%
\>[34]{}\mbox{\onelinecomment  \ensuremath{\derivOp} definition}{}\<[E]%
\ColumnHook
\end{hscode}\resethooks
Thus, for \emph{all} \ensuremath{\Varid{w}\mathbin{::}[\mskip1.5mu \Varid{c}\mskip1.5mu]}, \ensuremath{\Varid{f}\;\Varid{w}\mathrel{=}(\Varid{at}_\epsilon\;\Varid{f}\mathrel\triangleleft\derivOp\;\Varid{f})\;\Varid{w}}, from which the lemma follows by extensionality.

For the other two equations:
\begin{hscode}\SaveRestoreHook
\column{B}{@{}>{\hspre}c<{\hspost}@{}}%
\column{BE}{@{}l@{}}%
\column{5}{@{}>{\hspre}l<{\hspost}@{}}%
\column{68}{@{}>{\hspre}l<{\hspost}@{}}%
\column{E}{@{}>{\hspre}l<{\hspost}@{}}%
\>[5]{}\Varid{at}_\epsilon\;(\Varid{b}\mathrel\triangleleft\Varid{h}){}\<[E]%
\\
\>[B]{}\mathrel{=}{}\<[BE]%
\>[5]{}\Varid{at}_\epsilon\;(\lambda\, {}\;\mathbf{case}\;\{\mskip1.5mu {}\;[\mskip1.5mu \mskip1.5mu]\to \Varid{b}\;{};{}\;\Varid{c}\mathbin{:}\Varid{cs}\to \Varid{h}\;\Varid{c}\;\Varid{cs}\;{}\mskip1.5mu\}){}\<[68]%
\>[68]{}\mbox{\onelinecomment  \ensuremath{(\mathrel\triangleleft)} definition}{}\<[E]%
\\
\>[B]{}\mathrel{=}{}\<[BE]%
\>[5]{}(\lambda\, {}\;\mathbf{case}\;\{\mskip1.5mu {}\;[\mskip1.5mu \mskip1.5mu]\to \Varid{b}\;{};{}\;\Varid{c}\mathbin{:}\Varid{cs}\to \Varid{h}\;\Varid{c}\;\Varid{cs}\;{}\mskip1.5mu\})\;[\mskip1.5mu \mskip1.5mu]{}\<[68]%
\>[68]{}\mbox{\onelinecomment  \ensuremath{\Varid{at}_\epsilon} definition}{}\<[E]%
\\
\>[B]{}\mathrel{=}{}\<[BE]%
\>[5]{}\Varid{b}{}\<[68]%
\>[68]{}\mbox{\onelinecomment  semantics of \ensuremath{\mathbf{case}}}{}\<[E]%
\ColumnHook
\end{hscode}\resethooks
\vspace{-4ex}
\begin{hscode}\SaveRestoreHook
\column{B}{@{}>{\hspre}c<{\hspost}@{}}%
\column{BE}{@{}l@{}}%
\column{5}{@{}>{\hspre}l<{\hspost}@{}}%
\column{84}{@{}>{\hspre}l<{\hspost}@{}}%
\column{E}{@{}>{\hspre}l<{\hspost}@{}}%
\>[5]{}\derivOp\;(\Varid{b}\mathrel\triangleleft\Varid{h}){}\<[E]%
\\
\>[B]{}\mathrel{=}{}\<[BE]%
\>[5]{}\derivOp\;(\lambda\, {}\;\mathbf{case}\;\{\mskip1.5mu {}\;[\mskip1.5mu \mskip1.5mu]\to \Varid{b}\;{};{}\;\Varid{c}\mathbin{:}\Varid{cs}\to \Varid{h}\;\Varid{c}\;\Varid{cs}\;{}\mskip1.5mu\}){}\<[84]%
\>[84]{}\mbox{\onelinecomment  \ensuremath{(\mathrel\triangleleft)} definition}{}\<[E]%
\\
\>[B]{}\mathrel{=}{}\<[BE]%
\>[5]{}\lambda\, \Varid{c}\to \lambda\, \Varid{cs}\to (\lambda\, {}\;\mathbf{case}\;\{\mskip1.5mu {}\;[\mskip1.5mu \mskip1.5mu]\to \Varid{b}\;{};{}\;\Varid{c}\mathbin{:}\Varid{cs}\to \Varid{h}\;\Varid{c}\;\Varid{cs}\;{}\mskip1.5mu\})\;(\Varid{c}\mathbin{:}\Varid{cs}){}\<[84]%
\>[84]{}\mbox{\onelinecomment  \ensuremath{\derivOp} definition}{}\<[E]%
\\
\>[B]{}\mathrel{=}{}\<[BE]%
\>[5]{}\lambda\, \Varid{c}\to \lambda\, \Varid{cs}\to \Varid{h}\;\Varid{c}\;\Varid{cs}{}\<[84]%
\>[84]{}\mbox{\onelinecomment  semantics of \ensuremath{\mathbf{case}}}{}\<[E]%
\\
\>[B]{}\mathrel{=}{}\<[BE]%
\>[5]{}\Varid{h}{}\<[84]%
\>[84]{}\mbox{\onelinecomment  $\eta$ reduction (twice)}{}\<[E]%
\ColumnHook
\end{hscode}\resethooks

\subsection{\lemref{atEps [c] -> b}}\prooflabel{lemma:atEps [c] -> b}

\begin{hscode}\SaveRestoreHook
\column{B}{@{}>{\hspre}c<{\hspost}@{}}%
\column{BE}{@{}l@{}}%
\column{5}{@{}>{\hspre}l<{\hspost}@{}}%
\column{26}{@{}>{\hspre}l<{\hspost}@{}}%
\column{E}{@{}>{\hspre}l<{\hspost}@{}}%
\>[5]{}\Varid{at}_\epsilon\;\mathrm{0}{}\<[E]%
\\
\>[B]{}\mathrel{=}{}\<[BE]%
\>[5]{}\Varid{at}_\epsilon\;(\lambda\, \Varid{a}\to \mathrm{0}){}\<[26]%
\>[26]{}\mbox{\onelinecomment  \ensuremath{\mathrm{0}} on functions}{}\<[E]%
\\
\>[B]{}\mathrel{=}{}\<[BE]%
\>[5]{}(\lambda\, \Varid{a}\to \mathrm{0})\;[\mskip1.5mu \mskip1.5mu]{}\<[26]%
\>[26]{}\mbox{\onelinecomment  \ensuremath{\Varid{at}_\epsilon} definition}{}\<[E]%
\\
\>[B]{}\mathrel{=}{}\<[BE]%
\>[5]{}\mathrm{0}{}\<[26]%
\>[26]{}\mbox{\onelinecomment  $\beta$ reduction}{}\<[E]%
\ColumnHook
\end{hscode}\resethooks

\begin{hscode}\SaveRestoreHook
\column{B}{@{}>{\hspre}c<{\hspost}@{}}%
\column{BE}{@{}l@{}}%
\column{5}{@{}>{\hspre}l<{\hspost}@{}}%
\column{57}{@{}>{\hspre}l<{\hspost}@{}}%
\column{E}{@{}>{\hspre}l<{\hspost}@{}}%
\>[5]{}\Varid{at}_\epsilon\;\mathrm{1}{}\<[E]%
\\
\>[B]{}\mathrel{=}{}\<[BE]%
\>[5]{}\Varid{at}_\epsilon\;(\varepsilon\mapsto\mathrm{1}){}\<[57]%
\>[57]{}\mbox{\onelinecomment  \ensuremath{\mathrm{1}} on functions}{}\<[E]%
\\
\>[B]{}\mathrel{=}{}\<[BE]%
\>[5]{}\Varid{at}_\epsilon\;(\lambda\, \Varid{b}\to \mathbf{if}\;\Varid{b}\mathrel{=}\varepsilon\;\mathbf{then}\;\mathrm{1}\;\mathbf{else}\;\mathrm{0}){}\<[57]%
\>[57]{}\mbox{\onelinecomment  \ensuremath{(\mapsto)} on functions}{}\<[E]%
\\
\>[B]{}\mathrel{=}{}\<[BE]%
\>[5]{}\Varid{at}_\epsilon\;(\lambda\, \Varid{b}\to \mathbf{if}\;\Varid{b}\mathrel{=}\varepsilon\;\mathbf{then}\;\Conid{True}\;\mathbf{else}\;\Conid{False}){}\<[57]%
\>[57]{}\mbox{\onelinecomment  \ensuremath{\mathrm{1}} and \ensuremath{\mathrm{0}} on \ensuremath{\Conid{Bool}}}{}\<[E]%
\\
\>[B]{}\mathrel{=}{}\<[BE]%
\>[5]{}\Varid{at}_\epsilon\;(\lambda\, \Varid{b}\to \Varid{b}\mathrel{=}\varepsilon){}\<[57]%
\>[57]{}\mbox{\onelinecomment  property of \ensuremath{\mathbf{if}}}{}\<[E]%
\\
\>[B]{}\mathrel{=}{}\<[BE]%
\>[5]{}\varepsilon\mathrel{=}\varepsilon{}\<[57]%
\>[57]{}\mbox{\onelinecomment  \ensuremath{\Varid{at}_\epsilon} definition            }{}\<[E]%
\\
\>[B]{}\mathrel{=}{}\<[BE]%
\>[5]{}\mathrm{1}{}\<[E]%
\ColumnHook
\end{hscode}\resethooks

\begin{hscode}\SaveRestoreHook
\column{B}{@{}>{\hspre}c<{\hspost}@{}}%
\column{BE}{@{}l@{}}%
\column{5}{@{}>{\hspre}l<{\hspost}@{}}%
\column{33}{@{}>{\hspre}l<{\hspost}@{}}%
\column{E}{@{}>{\hspre}l<{\hspost}@{}}%
\>[5]{}\Varid{at}_\epsilon\;(\Varid{f}\mathbin{+}\Varid{g}){}\<[E]%
\\
\>[B]{}\mathrel{=}{}\<[BE]%
\>[5]{}\Varid{at}_\epsilon\;(\lambda\, \Varid{a}\to \Varid{f}\;\Varid{a}\mathbin{+}\Varid{g}\;\Varid{a}){}\<[33]%
\>[33]{}\mbox{\onelinecomment  \ensuremath{(\mathbin{+})} on functions}{}\<[E]%
\\
\>[B]{}\mathrel{=}{}\<[BE]%
\>[5]{}(\lambda\, \Varid{a}\to \Varid{f}\;\Varid{a}\mathbin{+}\Varid{g}\;\Varid{a})\;[\mskip1.5mu \mskip1.5mu]{}\<[33]%
\>[33]{}\mbox{\onelinecomment  \ensuremath{\Varid{at}_\epsilon} definition}{}\<[E]%
\\
\>[B]{}\mathrel{=}{}\<[BE]%
\>[5]{}\Varid{f}\;[\mskip1.5mu \mskip1.5mu]\mathbin{+}\Varid{g}\;[\mskip1.5mu \mskip1.5mu]{}\<[33]%
\>[33]{}\mbox{\onelinecomment  $\beta$ reduction}{}\<[E]%
\\
\>[B]{}\mathrel{=}{}\<[BE]%
\>[5]{}\Varid{at}_\epsilon\;\Varid{f}\mathbin{+}\Varid{at}_\epsilon\;\Varid{g}{}\<[33]%
\>[33]{}\mbox{\onelinecomment  \ensuremath{\Varid{at}_\epsilon} definition}{}\<[E]%
\ColumnHook
\end{hscode}\resethooks

\begin{hscode}\SaveRestoreHook
\column{B}{@{}>{\hspre}c<{\hspost}@{}}%
\column{BE}{@{}l@{}}%
\column{5}{@{}>{\hspre}l<{\hspost}@{}}%
\column{63}{@{}>{\hspre}l<{\hspost}@{}}%
\column{E}{@{}>{\hspre}l<{\hspost}@{}}%
\>[5]{}\Varid{at}_\epsilon\;(\Varid{f}\mathbin{*}\Varid{g}){}\<[E]%
\\
\>[B]{}\mathrel{=}{}\<[BE]%
\>[5]{}\Varid{at}_\epsilon\;(\bigOp\sum{\Varid{u},\Varid{v}}{0}\;\Varid{u} \diamond \Varid{v}\mapsto\Varid{f}\;\Varid{u}\mathbin{*}\Varid{g}\;\Varid{v}){}\<[63]%
\>[63]{}\mbox{\onelinecomment  \ensuremath{(\mathbin{*})} on functions}{}\<[E]%
\\
\>[B]{}\mathrel{=}{}\<[BE]%
\>[5]{}\Varid{at}_\epsilon\;(\lambda\, \Varid{w}\to \bigOp\sum{\Varid{u},\Varid{v}\;\!\!\\\!\!\;\Varid{u} \diamond \Varid{v}\mathrel{=}[\mskip1.5mu \mskip1.5mu]}{1}\;\Varid{f}\;\Varid{u}\mathbin{*}\Varid{g}\;\Varid{v}){}\<[63]%
\>[63]{}\mbox{\onelinecomment  alternative definition from \figref{monoid semiring}}{}\<[E]%
\\
\>[B]{}\mathrel{=}{}\<[BE]%
\>[5]{}\bigOp\sum{\Varid{u},\Varid{v}\;\!\!\\\!\!\;\Varid{u}\mathrel{=}[\mskip1.5mu \mskip1.5mu]\mathrel{\wedge}\Varid{v}\mathrel{=}[\mskip1.5mu \mskip1.5mu]}{2.5}\;{}\;\Varid{f}\;\Varid{u}\mathbin{*}\Varid{g}\;\Varid{v}{}\<[63]%
\>[63]{}\mbox{\onelinecomment  \ensuremath{\Varid{u} \diamond \Varid{v}\mathrel{=}[\mskip1.5mu \mskip1.5mu]\Longleftrightarrow\Varid{u}\mathrel{=}[\mskip1.5mu \mskip1.5mu]\mathrel{\wedge}\Varid{v}\mathrel{=}[\mskip1.5mu \mskip1.5mu]} }{}\<[E]%
\\
\>[B]{}\mathrel{=}{}\<[BE]%
\>[5]{}\Varid{f}\;[\mskip1.5mu \mskip1.5mu]\mathbin{*}\Varid{g}\;[\mskip1.5mu \mskip1.5mu]{}\<[63]%
\>[63]{}\mbox{\onelinecomment  singleton sum}{}\<[E]%
\\
\>[B]{}\mathrel{=}{}\<[BE]%
\>[5]{}\Varid{at}_\epsilon\;\Varid{f}\mathbin{*}\Varid{at}_\epsilon\;\Varid{g}{}\<[63]%
\>[63]{}\mbox{\onelinecomment  \ensuremath{\Varid{at}_\epsilon} definition}{}\<[E]%
\ColumnHook
\end{hscode}\resethooks

%% \note{For the |star p| proof, maybe instead show inductively that |atEps (p^n) == (atEps p)^n| for all |n >= 0|, and then appeal to the summation definition of |star p|.}

\begin{hscode}\SaveRestoreHook
\column{B}{@{}>{\hspre}c<{\hspost}@{}}%
\column{BE}{@{}l@{}}%
\column{5}{@{}>{\hspre}l<{\hspost}@{}}%
\column{30}{@{}>{\hspre}l<{\hspost}@{}}%
\column{E}{@{}>{\hspre}l<{\hspost}@{}}%
\>[5]{}\Varid{at}_\epsilon\;(\closure{\Varid{p}}){}\<[E]%
\\
\>[B]{}\mathrel{=}{}\<[BE]%
\>[5]{}\Varid{at}_\epsilon\;(\bigOp\sum{\Varid{i}}{0}{\,}\Varid{p}^\Varid{i}){}\<[30]%
\>[30]{}\mbox{\onelinecomment  alternative \ensuremath{\closure{\Varid{p}}} formulation}{}\<[E]%
\\
\>[B]{}\mathrel{=}{}\<[BE]%
\>[5]{}\bigOp\sum{\Varid{i}}{0}{\,}(\Varid{at}_\epsilon\;\Varid{p})^\Varid{i}{}\<[30]%
\>[30]{}\mbox{\onelinecomment  \ensuremath{\Varid{at}_\epsilon} is a semiring homomorphism (above)}{}\<[E]%
\\
\>[B]{}\mathrel{=}{}\<[BE]%
\>[5]{}\closure{(\Varid{at}_\epsilon\;\Varid{p})}{}\<[30]%
\>[30]{}\mbox{\onelinecomment  defining property of \ensuremath{\closure{\cdot }}}{}\<[E]%
\ColumnHook
\end{hscode}\resethooks

\begin{hscode}\SaveRestoreHook
\column{B}{@{}>{\hspre}c<{\hspost}@{}}%
\column{BE}{@{}l@{}}%
\column{5}{@{}>{\hspre}l<{\hspost}@{}}%
\column{29}{@{}>{\hspre}l<{\hspost}@{}}%
\column{E}{@{}>{\hspre}l<{\hspost}@{}}%
\>[5]{}\Varid{at}_\epsilon\;(\Varid{s}\cdot\Varid{f}){}\<[E]%
\\
\>[B]{}\mathrel{=}{}\<[BE]%
\>[5]{}\Varid{at}_\epsilon\;(\lambda\, \Varid{a}\to \Varid{s}\mathbin{*}\Varid{f}\;\Varid{a}){}\<[29]%
\>[29]{}\mbox{\onelinecomment  \ensuremath{(\cdot)} on functions}{}\<[E]%
\\
\>[B]{}\mathrel{=}{}\<[BE]%
\>[5]{}(\lambda\, \Varid{a}\to \Varid{s}\mathbin{*}\Varid{f}\;\Varid{a})\;[\mskip1.5mu \mskip1.5mu]{}\<[29]%
\>[29]{}\mbox{\onelinecomment  \ensuremath{\Varid{at}_\epsilon} definition}{}\<[E]%
\\
\>[B]{}\mathrel{=}{}\<[BE]%
\>[5]{}\Varid{s}\mathbin{*}\Varid{f}\;[\mskip1.5mu \mskip1.5mu]{}\<[29]%
\>[29]{}\mbox{\onelinecomment  $\beta$ reduction}{}\<[E]%
\\
\>[B]{}\mathrel{=}{}\<[BE]%
\>[5]{}\Varid{s}\mathbin{*}\Varid{at}_\epsilon\;\Varid{f}{}\<[29]%
\>[29]{}\mbox{\onelinecomment  \ensuremath{\Varid{at}_\epsilon} definition}{}\<[E]%
\ColumnHook
\end{hscode}\resethooks

\begin{hscode}\SaveRestoreHook
\column{B}{@{}>{\hspre}c<{\hspost}@{}}%
\column{BE}{@{}l@{}}%
\column{5}{@{}>{\hspre}l<{\hspost}@{}}%
\column{53}{@{}>{\hspre}l<{\hspost}@{}}%
\column{E}{@{}>{\hspre}l<{\hspost}@{}}%
\>[5]{}\Varid{at}_\epsilon\;([\mskip1.5mu \mskip1.5mu]\mapsto\Varid{b}){}\<[E]%
\\
\>[B]{}\mathrel{=}{}\<[BE]%
\>[5]{}\Varid{at}_\epsilon\;(\lambda\, \Varid{w}\to \mathbf{if}\;\Varid{w}\mathrel{=}[\mskip1.5mu \mskip1.5mu]\;\mathbf{then}\;\Varid{b}\;\mathbf{else}\;\mathrm{0}){}\<[53]%
\>[53]{}\mbox{\onelinecomment  \ensuremath{(\mapsto)} on \ensuremath{[\mskip1.5mu \Varid{c}\mskip1.5mu]\to \Varid{b}}}{}\<[E]%
\\
\>[B]{}\mathrel{=}{}\<[BE]%
\>[5]{}(\lambda\, \Varid{w}\to \mathbf{if}\;\Varid{w}\mathrel{=}[\mskip1.5mu \mskip1.5mu]\;\mathbf{then}\;\Varid{b}\;\mathbf{else}\;\mathrm{0})\;[\mskip1.5mu \mskip1.5mu]{}\<[53]%
\>[53]{}\mbox{\onelinecomment  \ensuremath{\Varid{at}_\epsilon} definition}{}\<[E]%
\\
\>[B]{}\mathrel{=}{}\<[BE]%
\>[5]{}\mathbf{if}\;[\mskip1.5mu \mskip1.5mu]\mathrel{=}[\mskip1.5mu \mskip1.5mu]\;\mathbf{then}\;\Varid{b}\;\mathbf{else}\;\mathrm{0}{}\<[53]%
\>[53]{}\mbox{\onelinecomment  $\beta$ reduction}{}\<[E]%
\\
\>[B]{}\mathrel{=}{}\<[BE]%
\>[5]{}\Varid{b}{}\<[53]%
\>[53]{}\mbox{\onelinecomment  \ensuremath{\mathbf{if}\;\Conid{True}}}{}\<[E]%
\\[1.5ex]\>[5]{}\Varid{at}_\epsilon\;(\Varid{c'}\mathbin{:}\Varid{cs'}\mapsto\Varid{b}){}\<[E]%
\\
\>[B]{}\mathrel{=}{}\<[BE]%
\>[5]{}\Varid{at}_\epsilon\;(\lambda\, \Varid{w}\to \mathbf{if}\;\Varid{w}\mathrel{=}\Varid{c'}\mathbin{:}\Varid{cs'}\;\mathbf{then}\;\Varid{b}\;\mathbf{else}\;\mathrm{0}){}\<[53]%
\>[53]{}\mbox{\onelinecomment  \ensuremath{(\mapsto)} on \ensuremath{[\mskip1.5mu \Varid{c}\mskip1.5mu]\to \Varid{b}}}{}\<[E]%
\\
\>[B]{}\mathrel{=}{}\<[BE]%
\>[5]{}(\lambda\, \Varid{w}\to \mathbf{if}\;\Varid{w}\mathrel{=}\Varid{c'}\mathbin{:}\Varid{cs'}\;\mathbf{then}\;\Varid{b}\;\mathbf{else}\;\mathrm{0})\;[\mskip1.5mu \mskip1.5mu]{}\<[53]%
\>[53]{}\mbox{\onelinecomment  \ensuremath{\Varid{at}_\epsilon} definition}{}\<[E]%
\\
\>[B]{}\mathrel{=}{}\<[BE]%
\>[5]{}\mathbf{if}\;[\mskip1.5mu \mskip1.5mu]\mathrel{=}\Varid{c'}\mathbin{:}\Varid{cs'}\;\mathbf{then}\;\Varid{b}\;\mathbf{else}\;\mathrm{0}{}\<[53]%
\>[53]{}\mbox{\onelinecomment  $\beta$ reduction}{}\<[E]%
\\
\>[B]{}\mathrel{=}{}\<[BE]%
\>[5]{}\mathrm{0}{}\<[53]%
\>[53]{}\mbox{\onelinecomment  \ensuremath{\mathbf{if}\;\Conid{False}}}{}\<[E]%
\ColumnHook
\end{hscode}\resethooks

\subsection{\lemref{deriv [c] -> b}}\prooflabel{lemma:deriv [c] -> b}

\begin{hscode}\SaveRestoreHook
\column{B}{@{}>{\hspre}c<{\hspost}@{}}%
\column{BE}{@{}l@{}}%
\column{5}{@{}>{\hspre}l<{\hspost}@{}}%
\column{42}{@{}>{\hspre}l<{\hspost}@{}}%
\column{E}{@{}>{\hspre}l<{\hspost}@{}}%
\>[5]{}\derivOp\;\mathrm{0}{}\<[E]%
\\
\>[B]{}\mathrel{=}{}\<[BE]%
\>[5]{}\derivOp\;(\lambda\, \Varid{w}\to \mathrm{0}){}\<[42]%
\>[42]{}\mbox{\onelinecomment  \ensuremath{\mathrm{0}} on functions}{}\<[E]%
\\
\>[B]{}\mathrel{=}{}\<[BE]%
\>[5]{}\lambda\, \Varid{c}\to \lambda\, \Varid{cs}\to (\lambda\, \Varid{w}\to \mathrm{0})\;(\Varid{c}\mathbin{:}\Varid{cs}){}\<[42]%
\>[42]{}\mbox{\onelinecomment  \ensuremath{\derivOp} on functions}{}\<[E]%
\\
\>[B]{}\mathrel{=}{}\<[BE]%
\>[5]{}\lambda\, \Varid{c}\to \lambda\, \Varid{cs}\to \mathrm{0}{}\<[42]%
\>[42]{}\mbox{\onelinecomment  $\beta$ reduction}{}\<[E]%
\\
\>[B]{}\mathrel{=}{}\<[BE]%
\>[5]{}\lambda\, \Varid{c}\to \mathrm{0}{}\<[42]%
\>[42]{}\mbox{\onelinecomment  \ensuremath{\mathrm{0}} on functions}{}\<[E]%
\\
\>[B]{}\mathrel{=}{}\<[BE]%
\>[5]{}\mathrm{0}{}\<[42]%
\>[42]{}\mbox{\onelinecomment  \ensuremath{\mathrm{0}} on \ensuremath{\Varid{a}\to \Varid{b}}}{}\<[E]%
\ColumnHook
\end{hscode}\resethooks
\vspace{-3ex}

\begin{hscode}\SaveRestoreHook
\column{B}{@{}>{\hspre}c<{\hspost}@{}}%
\column{BE}{@{}l@{}}%
\column{5}{@{}>{\hspre}l<{\hspost}@{}}%
\column{42}{@{}>{\hspre}l<{\hspost}@{}}%
\column{E}{@{}>{\hspre}l<{\hspost}@{}}%
\>[5]{}\derivOp\;\mathrm{1}{}\<[E]%
\\
\>[B]{}\mathrel{=}{}\<[BE]%
\>[5]{}\derivOp\;(\Varid{single}\;\varepsilon){}\<[42]%
\>[42]{}\mbox{\onelinecomment  \ensuremath{\mathrm{1}} on functions}{}\<[E]%
\\
\>[B]{}\mathrel{=}{}\<[BE]%
\>[5]{}\lambda\, \Varid{c}\to \lambda\, \Varid{cs}\to \Varid{single}\;\varepsilon\;(\Varid{c}\mathbin{:}\Varid{cs}){}\<[42]%
\>[42]{}\mbox{\onelinecomment  \ensuremath{\derivOp} on functions}{}\<[E]%
\\
\>[B]{}\mathrel{=}{}\<[BE]%
\>[5]{}\lambda\, \Varid{c}\to \lambda\, \Varid{cs}\to \mathrm{0}{}\<[42]%
\>[42]{}\mbox{\onelinecomment  \ensuremath{\Varid{c}\mathbin{:}\Varid{cs}\not=\varepsilon}}{}\<[E]%
\\
\>[B]{}\mathrel{=}{}\<[BE]%
\>[5]{}\lambda\, \Varid{c}\to \mathrm{0}{}\<[42]%
\>[42]{}\mbox{\onelinecomment  \ensuremath{\mathrm{0}} on functions}{}\<[E]%
\\
\>[B]{}\mathrel{=}{}\<[BE]%
\>[5]{}\mathrm{0}{}\<[42]%
\>[42]{}\mbox{\onelinecomment  \ensuremath{\mathrm{0}} on \ensuremath{\Varid{a}\to \Varid{b}}}{}\<[E]%
\ColumnHook
\end{hscode}\resethooks
\vspace{-3ex}

\begin{hscode}\SaveRestoreHook
\column{B}{@{}>{\hspre}c<{\hspost}@{}}%
\column{BE}{@{}l@{}}%
\column{5}{@{}>{\hspre}l<{\hspost}@{}}%
\column{55}{@{}>{\hspre}l<{\hspost}@{}}%
\column{E}{@{}>{\hspre}l<{\hspost}@{}}%
\>[5]{}\derivOp\;(\Varid{f}\mathbin{+}\Varid{g}){}\<[E]%
\\
\>[B]{}\mathrel{=}{}\<[BE]%
\>[5]{}\derivOp\;(\lambda\, \Varid{w}\to \Varid{f}\;\Varid{w}\mathbin{+}\Varid{g}\;\Varid{w}){}\<[55]%
\>[55]{}\mbox{\onelinecomment  \ensuremath{(\mathbin{+})} on functions}{}\<[E]%
\\
\>[B]{}\mathrel{=}{}\<[BE]%
\>[5]{}\lambda\, \Varid{c}\to \lambda\, \Varid{cs}\to (\lambda\, \Varid{w}\to \Varid{f}\;\Varid{w}\mathbin{+}\Varid{g}\;\Varid{w})\;(\Varid{c}\mathbin{:}\Varid{cs}){}\<[55]%
\>[55]{}\mbox{\onelinecomment  \ensuremath{\derivOp} on functions}{}\<[E]%
\\
\>[B]{}\mathrel{=}{}\<[BE]%
\>[5]{}\lambda\, \Varid{c}\to \lambda\, \Varid{cs}\to \Varid{f}\;(\Varid{c}\mathbin{:}\Varid{cs})\mathbin{+}\Varid{g}\;(\Varid{c}\mathbin{:}\Varid{cs}){}\<[55]%
\>[55]{}\mbox{\onelinecomment  $\beta$ reduction}{}\<[E]%
\\
\>[B]{}\mathrel{=}{}\<[BE]%
\>[5]{}\lambda\, \Varid{c}\to (\lambda\, \Varid{cs}\to \Varid{f}\;(\Varid{c}\mathbin{:}\Varid{cs}))\mathbin{+}(\lambda\, \Varid{cs}\to \Varid{g}\;(\Varid{c}\mathbin{:}\Varid{cs})){}\<[55]%
\>[55]{}\mbox{\onelinecomment  \ensuremath{(\mathbin{+})} on functions}{}\<[E]%
\\
\>[B]{}\mathrel{=}{}\<[BE]%
\>[5]{}\lambda\, \Varid{c}\to \derivOp\;\Varid{f}\;\Varid{c}\mathbin{+}\derivOp\;\Varid{g}\;\Varid{c}{}\<[55]%
\>[55]{}\mbox{\onelinecomment  \ensuremath{\derivOp} on functions}{}\<[E]%
\\
\>[B]{}\mathrel{=}{}\<[BE]%
\>[5]{}\derivOp\;\Varid{f}\mathbin{+}\derivOp\;\Varid{g}{}\<[55]%
\>[55]{}\mbox{\onelinecomment  \ensuremath{(\mathbin{+})} on \ensuremath{\Varid{a}\to \Varid{b}}}{}\<[E]%
\ColumnHook
\end{hscode}\resethooks

\begin{hscode}\SaveRestoreHook
\column{B}{@{}>{\hspre}c<{\hspost}@{}}%
\column{BE}{@{}l@{}}%
\column{5}{@{}>{\hspre}l<{\hspost}@{}}%
\column{125}{@{}>{\hspre}l<{\hspost}@{}}%
\column{E}{@{}>{\hspre}l<{\hspost}@{}}%
\>[5]{}\derivOp\;(\Varid{f}\mathbin{*}\Varid{g}){}\<[E]%
\\
\>[B]{}\mathrel{=}{}\<[BE]%
\>[5]{}\derivOp\;(\bigOp\sum{\Varid{u},\Varid{v}}{0}\;\Varid{u} \diamond \Varid{v}\mapsto\Varid{f}\;\Varid{u}\mathbin{*}\Varid{g}\;\Varid{v}){}\<[125]%
\>[125]{}\mbox{\onelinecomment  \ensuremath{(\mathbin{*})} on functions}{}\<[E]%
\\
\>[B]{}\mathrel{=}{}\<[BE]%
\>[5]{}\derivOp\;(\bigOp\sum{\Varid{v}}{0}\;(\varepsilon \diamond \Varid{v}\mapsto\Varid{f}\;\varepsilon\mathbin{*}\Varid{g}\;\Varid{v})\mathbin{+}\bigOp\sum{\Varid{c'},\Varid{u'},\Varid{v}}{1}\;((\Varid{c'}\mathbin{:}\Varid{u'}) \diamond \Varid{v}\mapsto\Varid{f}\;(\Varid{c'}\mathbin{:}\Varid{u'})\mathbin{*}\Varid{g}\;\Varid{v})){}\<[125]%
\>[125]{}\mbox{\onelinecomment  empty vs nonempty \ensuremath{\Varid{u}}}{}\<[E]%
\\
\>[B]{}\mathrel{=}{}\<[BE]%
\>[5]{}\derivOp\;(\bigOp\sum{\Varid{v}}{0}\;(\varepsilon \diamond \Varid{v}\mapsto\Varid{f}\;\varepsilon\mathbin{*}\Varid{g}\;\Varid{v}))\mathbin{+}\derivOp\;(\bigOp\sum{\Varid{c'},\Varid{u'},\Varid{v}}{1}\;((\Varid{c'}\mathbin{:}\Varid{u'}) \diamond \Varid{v}\mapsto\Varid{f}\;(\Varid{c'}\mathbin{:}\Varid{u'})\mathbin{*}\Varid{g}\;\Varid{v})){}\<[125]%
\>[125]{}\mbox{\onelinecomment  additivity of \ensuremath{\derivOp} (above)}{}\<[E]%
\ColumnHook
\end{hscode}\resethooks
First addend:
\begin{hscode}\SaveRestoreHook
\column{B}{@{}>{\hspre}c<{\hspost}@{}}%
\column{BE}{@{}l@{}}%
\column{5}{@{}>{\hspre}l<{\hspost}@{}}%
\column{56}{@{}>{\hspre}l<{\hspost}@{}}%
\column{E}{@{}>{\hspre}l<{\hspost}@{}}%
\>[5]{}\derivOp\;(\bigOp\sum{\Varid{v}}{0}\;(\varepsilon \diamond \Varid{v}\mapsto\Varid{f}\;\varepsilon\mathbin{*}\Varid{g}\;\Varid{v})){}\<[E]%
\\
\>[B]{}\mathrel{=}{}\<[BE]%
\>[5]{}\derivOp\;(\bigOp\sum{\Varid{v}}{0}\;(\Varid{v}\mapsto\Varid{f}\;\varepsilon\mathbin{*}\Varid{g}\;\Varid{v})){}\<[56]%
\>[56]{}\mbox{\onelinecomment  monoid law}{}\<[E]%
\\
\>[B]{}\mathrel{=}{}\<[BE]%
\>[5]{}\derivOp\;(\Varid{f}\;\varepsilon\cdot\bigOp\sum{\Varid{v}}{0}\;(\Varid{v}\mapsto\Varid{g}\;\Varid{v})){}\<[56]%
\>[56]{}\mbox{\onelinecomment  distributivity (semiring law)}{}\<[E]%
\\
\>[B]{}\mathrel{=}{}\<[BE]%
\>[5]{}\lambda\, \Varid{c}\to \derivOp\;(\Varid{f}\;\varepsilon\cdot\bigOp\sum{\Varid{v}}{0}\;(\Varid{v}\mapsto\Varid{g}\;\Varid{v}))\;\Varid{c}{}\<[56]%
\>[56]{}\mbox{\onelinecomment  $\eta$ expansion}{}\<[E]%
\\
\>[B]{}\mathrel{=}{}\<[BE]%
\>[5]{}\lambda\, \Varid{c}\to \Varid{f}\;\varepsilon\cdot\derivOp\;(\bigOp\sum{\Varid{v}}{0}\;\Varid{v}\mapsto\Varid{g}\;\Varid{v})\;\Varid{c}{}\<[56]%
\>[56]{}\mbox{\onelinecomment  additivity of \ensuremath{\derivOp} (above)}{}\<[E]%
\\
\>[B]{}\mathrel{=}{}\<[BE]%
\>[5]{}\lambda\, \Varid{c}\to \Varid{f}\;\varepsilon\cdot\derivOp\;\Varid{g}\;\Varid{c}{}\<[56]%
\>[56]{}\mbox{\onelinecomment  \lemref{decomp +->}}{}\<[E]%
\\
\>[B]{}\mathrel{=}{}\<[BE]%
\>[5]{}\lambda\, \Varid{c}\to \Varid{at}_\epsilon\;\Varid{f}\cdot\derivOp\;\Varid{g}{}\<[56]%
\>[56]{}\mbox{\onelinecomment  \ensuremath{\Varid{at}_\epsilon} on functions}{}\<[E]%
\\
\>[B]{}\mathrel{=}{}\<[BE]%
\>[5]{}\Varid{fmap}\;(\Varid{at}_\epsilon\;\Varid{f}\;{}\cdot)\;(\derivOp\;\Varid{g}\;\Varid{c}){}\<[56]%
\>[56]{}\mbox{\onelinecomment  \ensuremath{\Varid{fmap}} on functions}{}\<[E]%
\ColumnHook
\end{hscode}\resethooks
Second addend:
\begin{hscode}\SaveRestoreHook
\column{B}{@{}>{\hspre}c<{\hspost}@{}}%
\column{BE}{@{}l@{}}%
\column{5}{@{}>{\hspre}l<{\hspost}@{}}%
\column{69}{@{}>{\hspre}l<{\hspost}@{}}%
\column{E}{@{}>{\hspre}l<{\hspost}@{}}%
\>[5]{}\derivOp\;(\bigOp\sum{\Varid{c'},\Varid{u'},\Varid{v}}{1}\;((\Varid{c'}\mathbin{:}\Varid{u'}) \diamond \Varid{v}\mapsto\Varid{f}\;(\Varid{c'}\mathbin{:}\Varid{u'})\mathbin{*}\Varid{g}\;\Varid{v})){}\<[E]%
\\
\>[B]{}\mathrel{=}{}\<[BE]%
\>[5]{}\bigOp\sum{\Varid{c'},\Varid{u'},\Varid{v}}{1}\;\derivOp\;((\Varid{c'}\mathbin{:}\Varid{u'}) \diamond \Varid{v}\mapsto\Varid{f}\;(\Varid{c'}\mathbin{:}\Varid{u'})\mathbin{*}\Varid{g}\;\Varid{v}){}\<[69]%
\>[69]{}\mbox{\onelinecomment  additivity of \ensuremath{\derivOp}}{}\<[E]%
\\
\>[B]{}\mathrel{=}{}\<[BE]%
\>[5]{}\bigOp\sum{\Varid{c'},\Varid{u'},\Varid{v}}{1}\;\derivOp\;(\Varid{c'}\mathbin{:}(\Varid{u'} \diamond \Varid{v})\mapsto\Varid{f}\;(\Varid{c'}\mathbin{:}\Varid{u'})\mathbin{*}\Varid{g}\;\Varid{v}){}\<[69]%
\>[69]{}\mbox{\onelinecomment  \ensuremath{( \diamond )} on lists}{}\<[E]%
\\
\>[B]{}\mathrel{=}{}\<[BE]%
\>[5]{}\lambda\, \Varid{c}\to \bigOp\sum{\Varid{u'},\Varid{v}}{0}\;\Varid{u'} \diamond \Varid{v}\mapsto\Varid{f}\;(\Varid{c}\mathbin{:}\Varid{u'})\mathbin{*}\Varid{g}\;\Varid{v}{}\<[69]%
\>[69]{}\mbox{\onelinecomment  \ensuremath{\derivOp} on \ensuremath{(\mapsto)} below}{}\<[E]%
\\
\>[B]{}\mathrel{=}{}\<[BE]%
\>[5]{}\lambda\, \Varid{c}\to \bigOp\sum{\Varid{u'},\Varid{v}}{0}\;\Varid{u'} \diamond \Varid{v}\mapsto(\lambda\, \Varid{cs}\to \Varid{f}\;(\Varid{c}\mathbin{:}\Varid{cs}))\;\Varid{u'}\mathbin{*}\Varid{g}\;\Varid{v}{}\<[69]%
\>[69]{}\mbox{\onelinecomment  $\beta$ expansion}{}\<[E]%
\\
\>[B]{}\mathrel{=}{}\<[BE]%
\>[5]{}\lambda\, \Varid{c}\to \lambda\, \Varid{cs}\to \Varid{f}\;(\Varid{c}\mathbin{:}\Varid{cs})\mathbin{*}\Varid{g}{}\<[69]%
\>[69]{}\mbox{\onelinecomment  \ensuremath{(\mathbin{*})} on functions}{}\<[E]%
\\
\>[B]{}\mathrel{=}{}\<[BE]%
\>[5]{}\lambda\, \Varid{c}\to \derivOp\;\Varid{f}\;\Varid{c}\mathbin{*}\Varid{g}{}\<[69]%
\>[69]{}\mbox{\onelinecomment  \ensuremath{\derivOp} on functions}{}\<[E]%
\\
\>[B]{}\mathrel{=}{}\<[BE]%
\>[5]{}\Varid{fmap}\;(\mathbin{*}{}\;\Varid{g})\;(\derivOp\;\Varid{f}){}\<[69]%
\>[69]{}\mbox{\onelinecomment  \ensuremath{\Varid{fmap}} on functions}{}\<[E]%
\ColumnHook
\end{hscode}\resethooks
Combining addends,
\begin{hscode}\SaveRestoreHook
\column{B}{@{}>{\hspre}l<{\hspost}@{}}%
\column{E}{@{}>{\hspre}l<{\hspost}@{}}%
\>[B]{}\derivOp\;(\Varid{f}\mathbin{*}\Varid{g})\mathrel{=}\Varid{fmap}\;(\Varid{at}_\epsilon\;\Varid{f}\;{})\;(\derivOp\;\Varid{g})\mathbin{+}\Varid{fmap}\;(\mathbin{*}{}\;\Varid{g})\;(\derivOp\;\Varid{f}){}\<[E]%
\ColumnHook
\end{hscode}\resethooks
\noindent
Continuing with the other equations in \lemref{deriv [c] -> b},
\begin{hscode}\SaveRestoreHook
\column{B}{@{}>{\hspre}c<{\hspost}@{}}%
\column{BE}{@{}l@{}}%
\column{5}{@{}>{\hspre}l<{\hspost}@{}}%
\column{66}{@{}>{\hspre}l<{\hspost}@{}}%
\column{E}{@{}>{\hspre}l<{\hspost}@{}}%
\>[5]{}\derivOp\;(\closure{\Varid{p}}){}\<[E]%
\\
\>[B]{}\mathrel{=}{}\<[BE]%
\>[5]{}\derivOp\;(\mathrm{1}\mathbin{+}\Varid{p}\mathbin{*}\closure{\Varid{p}}){}\<[66]%
\>[66]{}\mbox{\onelinecomment  star semiring law}{}\<[E]%
\\
\>[B]{}\mathrel{=}{}\<[BE]%
\>[5]{}\derivOp\;\mathrm{1}\mathbin{+}\derivOp\;(\Varid{p}\mathbin{*}\closure{\Varid{p}}){}\<[66]%
\>[66]{}\mbox{\onelinecomment  additivity of \ensuremath{\derivOp} (above)}{}\<[E]%
\\
\>[B]{}\mathrel{=}{}\<[BE]%
\>[5]{}\derivOp\;(\Varid{p}\mathbin{*}\closure{\Varid{p}}){}\<[66]%
\>[66]{}\mbox{\onelinecomment  \ensuremath{\derivOp\;\mathrm{1}\mathrel{=}\mathrm{0}} (above)}{}\<[E]%
\\
\>[B]{}\mathrel{=}{}\<[BE]%
\>[5]{}\lambda\, \Varid{c}\to \Varid{at}_\epsilon\;\Varid{p}\cdot\derivOp\;(\closure{\Varid{p}})\;\Varid{c}\mathbin{+}\derivOp\;\Varid{p}\;\Varid{c}\mathbin{*}\closure{\Varid{p}}{}\<[66]%
\>[66]{}\mbox{\onelinecomment  \ensuremath{\derivOp\;(\Varid{p}\mathbin{*}\Varid{q})} above}{}\<[E]%
\\
\>[B]{}\mathrel{=}{}\<[BE]%
\>[5]{}\lambda\, \Varid{c}\to \closure{(\Varid{at}_\epsilon\;\Varid{p})}\cdot\derivOp\;\Varid{p}\;\Varid{c}\mathbin{*}\closure{\Varid{p}}{}\<[66]%
\>[66]{}\mbox{\onelinecomment  \lemref{affine over semimodule}}{}\<[E]%
\\
\>[B]{}\mathrel{=}{}\<[BE]%
\>[5]{}\Varid{fmap}\;(\lambda\, \Varid{d}\to \closure{(\Varid{at}_\epsilon\;\Varid{p})}\cdot\Varid{d}\mathbin{*}\Conid{Star}\;\Varid{p})\;(\derivOp\;\Varid{p}){}\<[66]%
\>[66]{}\mbox{\onelinecomment  \ensuremath{\Varid{fmap}} on functions}{}\<[E]%
\ColumnHook
\end{hscode}\resethooks

\begin{hscode}\SaveRestoreHook
\column{B}{@{}>{\hspre}c<{\hspost}@{}}%
\column{BE}{@{}l@{}}%
\column{5}{@{}>{\hspre}l<{\hspost}@{}}%
\column{45}{@{}>{\hspre}l<{\hspost}@{}}%
\column{E}{@{}>{\hspre}l<{\hspost}@{}}%
\>[5]{}\derivOp\;(\Varid{s}\cdot\Varid{f}){}\<[E]%
\\
\>[B]{}\mathrel{=}{}\<[BE]%
\>[5]{}\derivOp\;(\lambda\, \Varid{w}\to \Varid{s}\mathbin{*}\Varid{f}\;\Varid{w}){}\<[45]%
\>[45]{}\mbox{\onelinecomment  \ensuremath{(\cdot)} on functions}{}\<[E]%
\\
\>[B]{}\mathrel{=}{}\<[BE]%
\>[5]{}\lambda\, \Varid{c}\to \lambda\, \Varid{cs}\to (\lambda\, \Varid{w}\to \Varid{s}\mathbin{*}\Varid{f}\;\Varid{w})\;(\Varid{c}\mathbin{:}\Varid{cs}){}\<[45]%
\>[45]{}\mbox{\onelinecomment  \ensuremath{\derivOp} definition}{}\<[E]%
\\
\>[B]{}\mathrel{=}{}\<[BE]%
\>[5]{}\lambda\, \Varid{c}\to \lambda\, \Varid{cs}\to \Varid{s}\mathbin{*}\Varid{f}\;(\Varid{c}\mathbin{:}\Varid{cs}){}\<[45]%
\>[45]{}\mbox{\onelinecomment  $\beta$ reduction}{}\<[E]%
\\
\>[B]{}\mathrel{=}{}\<[BE]%
\>[5]{}\lambda\, \Varid{c}\to \Varid{s}\cdot(\lambda\, \Varid{cs}\to \Varid{f}\;(\Varid{c}\mathbin{:}\Varid{cs})){}\<[45]%
\>[45]{}\mbox{\onelinecomment  \ensuremath{(\cdot)} on functions}{}\<[E]%
\\
\>[B]{}\mathrel{=}{}\<[BE]%
\>[5]{}\lambda\, \Varid{c}\to \Varid{s}\cdot\derivOp\;\Varid{f}\;\Varid{c}{}\<[45]%
\>[45]{}\mbox{\onelinecomment  \ensuremath{\derivOp} definition}{}\<[E]%
\\
\>[B]{}\mathrel{=}{}\<[BE]%
\>[5]{}\Varid{fmap}\;(\Varid{s}\;{}\cdot)\;(\derivOp\;\Varid{f}){}\<[45]%
\>[45]{}\mbox{\onelinecomment  \ensuremath{\Varid{fmap}} on functions}{}\<[E]%
\ColumnHook
\end{hscode}\resethooks

\begin{hscode}\SaveRestoreHook
\column{B}{@{}>{\hspre}c<{\hspost}@{}}%
\column{BE}{@{}l@{}}%
\column{5}{@{}>{\hspre}l<{\hspost}@{}}%
\column{58}{@{}>{\hspre}l<{\hspost}@{}}%
\column{79}{@{}>{\hspre}l<{\hspost}@{}}%
\column{E}{@{}>{\hspre}l<{\hspost}@{}}%
\>[5]{}\derivOp\;([\mskip1.5mu \mskip1.5mu]\mapsto\Varid{b})\;\Varid{c}{}\<[E]%
\\
\>[B]{}\mathrel{=}{}\<[BE]%
\>[5]{}\derivOp\;(\lambda\, \Varid{w}\to \mathbf{if}\;\Varid{w}\mathrel{=}[\mskip1.5mu \mskip1.5mu]\;\mathbf{then}\;\Varid{b}\;\mathbf{else}\;\mathrm{0}){}\<[58]%
\>[58]{}\mbox{\onelinecomment  \ensuremath{(\mapsto)} on functions}{}\<[E]%
\\
\>[B]{}\mathrel{=}{}\<[BE]%
\>[5]{}\lambda\, \Varid{cs}\to (\lambda\, \Varid{w}\to \mathbf{if}\;\Varid{w}\mathrel{=}[\mskip1.5mu \mskip1.5mu]\;\mathbf{then}\;\Varid{b}\;\mathbf{else}\;\mathrm{0})\;(\Varid{c}\mathbin{:}\Varid{cs}){}\<[58]%
\>[58]{}\mbox{\onelinecomment  \ensuremath{\derivOp} definition}{}\<[E]%
\\
\>[B]{}\mathrel{=}{}\<[BE]%
\>[5]{}\lambda\, \Varid{cs}\to \mathbf{if}\;\Varid{c}\mathbin{:}\Varid{cs}\mathrel{=}[\mskip1.5mu \mskip1.5mu]\;\mathbf{then}\;\Varid{b}\;\mathbf{else}\;\mathrm{0}{}\<[58]%
\>[58]{}\mbox{\onelinecomment  $\beta$ reduction}{}\<[E]%
\\
\>[B]{}\mathrel{=}{}\<[BE]%
\>[5]{}\lambda\, \Varid{cs}\to \mathrm{0}{}\<[58]%
\>[58]{}\mbox{\onelinecomment  \ensuremath{\Varid{c}\mathbin{:}\Varid{cs}\not=[\mskip1.5mu \mskip1.5mu]}}{}\<[E]%
\\
\>[B]{}\mathrel{=}{}\<[BE]%
\>[5]{}\mathrm{0}{}\<[58]%
\>[58]{}\mbox{\onelinecomment  \ensuremath{\mathrm{0}} on functions}{}\<[E]%
\\[1.5ex]\>[5]{}\derivOp\;(\Varid{c'}\mathbin{:}\Varid{cs'}\mapsto\Varid{b}){}\<[E]%
\\
\>[B]{}\mathrel{=}{}\<[BE]%
\>[5]{}\derivOp\;(\lambda\, \Varid{w}\to \mathbf{if}\;\Varid{w}\mathrel{=}\Varid{c'}\mathbin{:}\Varid{cs'}\;\mathbf{then}\;\Varid{b}\;\mathbf{else}\;\mathrm{0}){}\<[79]%
\>[79]{}\mbox{\onelinecomment  \ensuremath{(\mapsto)} on functions}{}\<[E]%
\\
\>[B]{}\mathrel{=}{}\<[BE]%
\>[5]{}\lambda\, \Varid{c}\to \lambda\, \Varid{cs}\to (\lambda\, \Varid{w}\to \mathbf{if}\;\Varid{w}\mathrel{=}\Varid{c'}\mathbin{:}\Varid{cs'}\;\mathbf{then}\;\Varid{b}\;\mathbf{else}\;\mathrm{0})\;(\Varid{c}\mathbin{:}\Varid{cs}){}\<[79]%
\>[79]{}\mbox{\onelinecomment  \ensuremath{(\mapsto)} on functions}{}\<[E]%
\\
\>[B]{}\mathrel{=}{}\<[BE]%
\>[5]{}\lambda\, \Varid{c}\to \lambda\, \Varid{cs}\to \mathbf{if}\;\Varid{c}\mathbin{:}\Varid{cs}\mathrel{=}\Varid{c'}\mathbin{:}\Varid{cs'}\;\mathbf{then}\;\Varid{b}\;\mathbf{else}\;\mathrm{0}{}\<[79]%
\>[79]{}\mbox{\onelinecomment  $\beta$ reduction}{}\<[E]%
\\
\>[B]{}\mathrel{=}{}\<[BE]%
\>[5]{}\lambda\, \Varid{c}\to \lambda\, \Varid{cs}\to \mathbf{if}\;\Varid{c}\mathrel{=}\Varid{c'}\mathrel{\wedge}\Varid{cs}\mathrel{=}\Varid{cs'}\;\mathbf{then}\;\Varid{b}\;\mathbf{else}\;\mathrm{0}{}\<[79]%
\>[79]{}\mbox{\onelinecomment  \ensuremath{(\mathbin{:})} injectivity}{}\<[E]%
\\
\>[B]{}\mathrel{=}{}\<[BE]%
\>[5]{}\lambda\, \Varid{c}\to \lambda\, \Varid{cs}\to \mathbf{if}\;\Varid{c}\mathrel{=}\Varid{c'}\;\mathbf{then}\;(\mathbf{if}\;\Varid{cs}\mathrel{=}\Varid{cs'}\;\mathbf{then}\;\Varid{b}\;\mathbf{else}\;\mathrm{0})\;\mathbf{else}\;\mathrm{0}{}\<[79]%
\>[79]{}\mbox{\onelinecomment  property of \ensuremath{\mathbf{if}} and \ensuremath{(\mathrel{\wedge})}}{}\<[E]%
\\
\>[B]{}\mathrel{=}{}\<[BE]%
\>[5]{}\lambda\, \Varid{c}\to \mathbf{if}\;\Varid{c}\mathrel{=}\Varid{c'}\;\mathbf{then}\;(\lambda\, \Varid{cs}\to \mathbf{if}\;\Varid{cs}\mathrel{=}\Varid{cs'}\;\mathbf{then}\;\Varid{b}\;\mathbf{else}\;\mathrm{0}\;\mathbf{else}\;\mathrm{0}){}\<[79]%
\>[79]{}\mbox{\onelinecomment  property of \ensuremath{\mathbf{if}}}{}\<[E]%
\\
\>[B]{}\mathrel{=}{}\<[BE]%
\>[5]{}\lambda\, \Varid{c}\to \mathbf{if}\;\Varid{c}\mathrel{=}\Varid{c'}\;\mathbf{then}\;\Varid{cs'}\mapsto\Varid{b}\;\mathbf{else}\;\mathrm{0}{}\<[79]%
\>[79]{}\mbox{\onelinecomment  \ensuremath{(\mapsto)} on functions}{}\<[E]%
\\
\>[B]{}\mathrel{=}{}\<[BE]%
\>[5]{}\Varid{c'}\mapsto\Varid{cs'}\mapsto\Varid{b}{}\<[79]%
\>[79]{}\mbox{\onelinecomment  \ensuremath{(\mapsto)} on \ensuremath{\Varid{s}\to \Varid{t}}}{}\<[E]%
\ColumnHook
\end{hscode}\resethooks

\subsection{\thmref{semiring decomp [c] -> b}}\prooflabel{theorem:semiring decomp [c] -> b}

\begin{hscode}\SaveRestoreHook
\column{B}{@{}>{\hspre}c<{\hspost}@{}}%
\column{BE}{@{}l@{}}%
\column{5}{@{}>{\hspre}l<{\hspost}@{}}%
\column{31}{@{}>{\hspre}l<{\hspost}@{}}%
\column{E}{@{}>{\hspre}l<{\hspost}@{}}%
\>[5]{}\mathrm{0}{}\<[E]%
\\
\>[B]{}\mathrel{=}{}\<[BE]%
\>[5]{}\Varid{at}_\epsilon\;\mathrm{0}\mathrel\triangleleft\derivOp\;\mathrm{0}{}\<[31]%
\>[31]{}\mbox{\onelinecomment  \lemref{decomp ([c] -> b)}}{}\<[E]%
\\
\>[B]{}\mathrel{=}{}\<[BE]%
\>[5]{}\mathrm{0}\mathrel\triangleleft\lambda\, \Varid{c}\to \mathrm{0}{}\<[31]%
\>[31]{}\mbox{\onelinecomment  \lemreftwo{atEps [c] -> b}{deriv [c] -> b}}{}\<[E]%
\\
\>[B]{}\mathrel{=}{}\<[BE]%
\>[5]{}\mathrm{0}\mathrel\triangleleft\mathrm{0}{}\<[31]%
\>[31]{}\mbox{\onelinecomment  \ensuremath{\mathrm{0}} on functions}{}\<[E]%
\ColumnHook
\end{hscode}\resethooks

\begin{hscode}\SaveRestoreHook
\column{B}{@{}>{\hspre}c<{\hspost}@{}}%
\column{BE}{@{}l@{}}%
\column{5}{@{}>{\hspre}l<{\hspost}@{}}%
\column{29}{@{}>{\hspre}l<{\hspost}@{}}%
\column{E}{@{}>{\hspre}l<{\hspost}@{}}%
\>[5]{}\mathrm{1}{}\<[E]%
\\
\>[B]{}\mathrel{=}{}\<[BE]%
\>[5]{}\Varid{at}_\epsilon\;\mathrm{1}\mathrel\triangleleft\derivOp\;\mathrm{1}{}\<[29]%
\>[29]{}\mbox{\onelinecomment  \lemref{decomp ([c] -> b)}}{}\<[E]%
\\
\>[B]{}\mathrel{=}{}\<[BE]%
\>[5]{}\mathrm{1}\mathrel\triangleleft\lambda\, \Varid{c}\to \mathrm{0}{}\<[29]%
\>[29]{}\mbox{\onelinecomment  \lemreftwo{atEps [c] -> b}{deriv [c] -> b}}{}\<[E]%
\\
\>[B]{}\mathrel{=}{}\<[BE]%
\>[5]{}\mathrm{1}\mathrel\triangleleft\mathrm{0}{}\<[29]%
\>[29]{}\mbox{\onelinecomment  \ensuremath{\mathrm{0}} on functions}{}\<[E]%
\ColumnHook
\end{hscode}\resethooks

\begin{hscode}\SaveRestoreHook
\column{B}{@{}>{\hspre}c<{\hspost}@{}}%
\column{BE}{@{}l@{}}%
\column{5}{@{}>{\hspre}l<{\hspost}@{}}%
\column{73}{@{}>{\hspre}l<{\hspost}@{}}%
\column{E}{@{}>{\hspre}l<{\hspost}@{}}%
\>[5]{}(\Varid{a}\mathrel\triangleleft\Varid{dp})\mathbin{+}(\Varid{b}\mathrel\triangleleft\Varid{dp}){}\<[E]%
\\
\>[B]{}\mathrel{=}{}\<[BE]%
\>[5]{}\Varid{at}_\epsilon\;((\Varid{a}\mathrel\triangleleft\Varid{dp})\mathbin{+}(\Varid{b}\mathrel\triangleleft\Varid{dq}))\mathrel\triangleleft\derivOp\;((\Varid{a}\mathrel\triangleleft\Varid{dp})\mathbin{+}(\Varid{b}\mathrel\triangleleft\Varid{dq})){}\<[73]%
\>[73]{}\mbox{\onelinecomment  \lemref{decomp ([c] -> b)}}{}\<[E]%
\\
\>[B]{}\mathrel{=}{}\<[BE]%
\>[5]{}\Varid{a}\mathbin{+}\Varid{b}\mathrel\triangleleft\Varid{dp}\mathbin{+}\Varid{dq}{}\<[73]%
\>[73]{}\mbox{\onelinecomment  \lemref{atEps and deriv via (<:)} below}{}\<[E]%
\ColumnHook
\end{hscode}\resethooks

%% (a  <:  dp)  <.>  q == a .> q <+> (zero <: (<.> NOP q) . dp)

\begin{hscode}\SaveRestoreHook
\column{B}{@{}>{\hspre}c<{\hspost}@{}}%
\column{BE}{@{}l@{}}%
\column{5}{@{}>{\hspre}l<{\hspost}@{}}%
\column{79}{@{}>{\hspre}l<{\hspost}@{}}%
\column{E}{@{}>{\hspre}l<{\hspost}@{}}%
\>[5]{}(\Varid{a}\mathrel\triangleleft\Varid{dp})\mathbin{*}(\Varid{b}\mathrel\triangleleft\Varid{dq}){}\<[E]%
\\
\>[B]{}\mathrel{=}{}\<[BE]%
\>[5]{}\Varid{at}_\epsilon\;((\Varid{a}\mathrel\triangleleft\Varid{dp})\mathbin{*}(\Varid{b}\mathrel\triangleleft\Varid{dq}))\mathrel\triangleleft\derivOp\;((\Varid{a}\mathrel\triangleleft\Varid{dp})\mathbin{*}(\Varid{b}\mathrel\triangleleft\Varid{dq})){}\<[79]%
\>[79]{}\mbox{\onelinecomment  \lemref{decomp ([c] -> b)}}{}\<[E]%
\\
\>[B]{}\mathrel{=}{}\<[BE]%
\>[5]{}\Varid{a}\mathbin{*}\Varid{b}\mathrel\triangleleft\lambda\, \Varid{c}\to \Varid{a}\cdot\Varid{dq}\;\Varid{c}\mathbin{+}\Varid{dp}\;\Varid{c}\mathbin{*}(\Varid{b}\mathrel\triangleleft\Varid{dq}){}\<[79]%
\>[79]{}\mbox{\onelinecomment  \lemref{atEps and deriv via (<:)} below}{}\<[E]%
\\
\>[B]{}\mathrel{=}{}\<[BE]%
\>[5]{}(\Varid{a}\mathbin{*}\Varid{b}\mathbin{+}\mathrm{0})\mathrel\triangleleft(\lambda\, \Varid{c}\to \Varid{a}\cdot\Varid{dq}\;\Varid{c})\mathbin{+}(\lambda\, \Varid{c}\to \Varid{dp}\;\Varid{c}\mathbin{*}(\Varid{b}\mathrel\triangleleft\Varid{dq})){}\<[79]%
\>[79]{}\mbox{\onelinecomment  additive identity; \ensuremath{(\mathbin{+})} on functions}{}\<[E]%
\\
\>[B]{}\mathrel{=}{}\<[BE]%
\>[5]{}(\Varid{a}\mathbin{*}\Varid{b}\mathrel\triangleleft\lambda\, \Varid{c}\to \Varid{a}\cdot\Varid{dq}\;\Varid{c})\mathbin{+}(\mathrm{0}\mathrel\triangleleft\lambda\, \Varid{c}\to \Varid{dp}\;\Varid{c}\mathbin{*}(\Varid{b}\mathrel\triangleleft\Varid{dq})){}\<[79]%
\>[79]{}\mbox{\onelinecomment  previous result}{}\<[E]%
\\
\>[B]{}\mathrel{=}{}\<[BE]%
\>[5]{}\Varid{a}\cdot(\Varid{b}\mathrel\triangleleft\Varid{dq})\mathbin{+}(\mathrm{0}\mathrel\triangleleft\lambda\, \Varid{c}\to \Varid{dp}\;\Varid{c}\mathbin{*}(\Varid{b}\mathrel\triangleleft\Varid{dq})){}\<[79]%
\>[79]{}\mbox{\onelinecomment  \ensuremath{(\cdot)} case below}{}\<[E]%
\\
\>[B]{}\mathrel{=}{}\<[BE]%
\>[5]{}\Varid{a}\cdot(\Varid{b}\mathrel\triangleleft\Varid{dq})\mathbin{+}(\mathrm{0}\mathrel\triangleleft(\mathbin{*}{}\;(\Varid{b}\mathrel\triangleleft\Varid{dq}))\hsdot{\circ }{.\:}\Varid{dp}){}\<[79]%
\>[79]{}\mbox{\onelinecomment  \ensuremath{(\hsdot{\circ }{.\:})} definition}{}\<[E]%
\\
\>[B]{}\mathrel{=}{}\<[BE]%
\>[5]{}\Varid{a}\cdot(\Varid{b}\mathrel\triangleleft\Varid{dq})\mathbin{+}(\mathrm{0}\mathrel\triangleleft\Varid{fmap}\;(\mathbin{*}{}\;(\Varid{b}\mathrel\triangleleft\Varid{dq}))\;\Varid{dp}){}\<[79]%
\>[79]{}\mbox{\onelinecomment  fmap on functions}{}\<[E]%
\ColumnHook
\end{hscode}\resethooks

%% \lemref{atEps and deriv via (<:)}

%% star (a <: dp) = q where q = star a .> (one <: (<.> NOP q) .  dp)

%% deriv (star p) c == star (atEps p) .> deriv p c <.> star p

%% deriv (star (a <: dp)) c == star a .> dp c * star (a <: dp)  -- {atEps and deriv via (<:)}

\begin{hscode}\SaveRestoreHook
\column{B}{@{}>{\hspre}c<{\hspost}@{}}%
\column{BE}{@{}l@{}}%
\column{5}{@{}>{\hspre}l<{\hspost}@{}}%
\column{59}{@{}>{\hspre}l<{\hspost}@{}}%
\column{E}{@{}>{\hspre}l<{\hspost}@{}}%
\>[5]{}\closure{(\Varid{a}\mathrel\triangleleft\Varid{dp})}{}\<[E]%
\\
\>[B]{}\mathrel{=}{}\<[BE]%
\>[5]{}\Varid{at}_\epsilon\;(\closure{(\Varid{a}\mathrel\triangleleft\Varid{dp})})\mathrel\triangleleft\derivOp\;(\closure{(\Varid{a}\mathrel\triangleleft\Varid{dp})}){}\<[59]%
\>[59]{}\mbox{\onelinecomment  \lemref{decomp ([c] -> b)}}{}\<[E]%
\\
\>[B]{}\mathrel{=}{}\<[BE]%
\>[5]{}\closure{\Varid{a}}\mathrel\triangleleft\lambda\, \Varid{c}\to \closure{\Varid{a}}\cdot\Varid{dp}\;\Varid{c}\mathbin{*}\closure{(\Varid{a}\mathrel\triangleleft\Varid{dp})}{}\<[59]%
\>[59]{}\mbox{\onelinecomment  \lemref{atEps and deriv via (<:)} below}{}\<[E]%
\\
\>[B]{}\mathrel{=}{}\<[BE]%
\>[5]{}\closure{\Varid{a}}\cdot(\mathrm{1}\mathrel\triangleleft\lambda\, \Varid{c}\to \Varid{dp}\;\Varid{c}\mathbin{*}\closure{(\Varid{a}\mathrel\triangleleft\Varid{dp})}){}\<[59]%
\>[59]{}\mbox{\onelinecomment  \ensuremath{(\cdot)} case below}{}\<[E]%
\\
\>[B]{}\mathrel{=}{}\<[BE]%
\>[5]{}\closure{\Varid{a}}\cdot(\mathrm{1}\mathrel\triangleleft\Varid{fmap}\;(\mathbin{*}{}{\,}\closure{(\Varid{a}\mathrel\triangleleft\Varid{dp})})\;\Varid{dp}){}\<[59]%
\>[59]{}\mbox{\onelinecomment  \ensuremath{\Varid{fmap}} on functions}{}\<[E]%
\ColumnHook
\end{hscode}\resethooks

\begin{hscode}\SaveRestoreHook
\column{B}{@{}>{\hspre}c<{\hspost}@{}}%
\column{BE}{@{}l@{}}%
\column{5}{@{}>{\hspre}l<{\hspost}@{}}%
\column{53}{@{}>{\hspre}l<{\hspost}@{}}%
\column{E}{@{}>{\hspre}l<{\hspost}@{}}%
\>[5]{}\Varid{s}\cdot(\Varid{b}\mathrel\triangleleft\Varid{h}){}\<[E]%
\\
\>[B]{}\mathrel{=}{}\<[BE]%
\>[5]{}\Varid{at}_\epsilon\;(\Varid{s}\cdot(\Varid{b}\mathrel\triangleleft\Varid{h}))\mathrel\triangleleft\derivOp\;(\Varid{s}\cdot(\Varid{b}\mathrel\triangleleft\Varid{h})){}\<[53]%
\>[53]{}\mbox{\onelinecomment  \lemref{decomp ([c] -> b)}}{}\<[E]%
\\
\>[B]{}\mathrel{=}{}\<[BE]%
\>[5]{}\Varid{s}\mathbin{*}\Varid{b}\mathrel\triangleleft\lambda\, \Varid{c}\to \Varid{s}\cdot\Varid{dp}\;\Varid{c}{}\<[53]%
\>[53]{}\mbox{\onelinecomment  \lemref{atEps and deriv via (<:)} below}{}\<[E]%
\\
\>[B]{}\mathrel{=}{}\<[BE]%
\>[5]{}\Varid{s}\mathbin{*}\Varid{b}\mathrel\triangleleft(\Varid{s}\cdot)\hsdot{\circ }{.\:}\Varid{dp}{}\<[53]%
\>[53]{}\mbox{\onelinecomment  \ensuremath{(\hsdot{\circ }{.\:})} definition}{}\<[E]%
\\
\>[B]{}\mathrel{=}{}\<[BE]%
\>[5]{}\Varid{s}\mathbin{*}\Varid{b}\mathrel\triangleleft\Varid{fmap}\;(\Varid{s}\;{}\cdot)\;\Varid{dp}{}\<[53]%
\>[53]{}\mbox{\onelinecomment  \ensuremath{\Varid{fmap}} on functions}{}\<[E]%
\ColumnHook
\end{hscode}\resethooks

\begin{hscode}\SaveRestoreHook
\column{B}{@{}>{\hspre}c<{\hspost}@{}}%
\column{BE}{@{}l@{}}%
\column{5}{@{}>{\hspre}l<{\hspost}@{}}%
\column{56}{@{}>{\hspre}l<{\hspost}@{}}%
\column{E}{@{}>{\hspre}l<{\hspost}@{}}%
\>[5]{}[\mskip1.5mu \mskip1.5mu]\mapsto\Varid{b}{}\<[E]%
\\
\>[B]{}\mathrel{=}{}\<[BE]%
\>[5]{}\Varid{at}_\epsilon\;([\mskip1.5mu \mskip1.5mu]\mapsto\Varid{b})\mathrel\triangleleft\derivOp\;([\mskip1.5mu \mskip1.5mu]\mapsto\Varid{b}){}\<[56]%
\>[56]{}\mbox{\onelinecomment  \lemref{decomp ([c] -> b)}}{}\<[E]%
\\
\>[B]{}\mathrel{=}{}\<[BE]%
\>[5]{}\Varid{b}\mathrel\triangleleft\lambda\, \Varid{c}\to \mathrm{0}{}\<[56]%
\>[56]{}\mbox{\onelinecomment  \lemreftwo{atEps [c] -> b}{deriv [c] -> b}}{}\<[E]%
\\
\>[B]{}\mathrel{=}{}\<[BE]%
\>[5]{}\Varid{b}\mathrel\triangleleft\mathrm{0}{}\<[56]%
\>[56]{}\mbox{\onelinecomment  \ensuremath{\mathrm{0}} on functions}{}\<[E]%
\\[1.5ex]\>[5]{}\Varid{c'}\mathbin{:}\Varid{cs'}\mapsto\Varid{b}{}\<[E]%
\\
\>[B]{}\mathrel{=}{}\<[BE]%
\>[5]{}\Varid{at}_\epsilon\;(\Varid{c'}\mathbin{:}\Varid{cs'}\mapsto\Varid{b})\mathrel\triangleleft\derivOp\;(\Varid{c'}\mathbin{:}\Varid{cs'}\mapsto\Varid{b}){}\<[56]%
\>[56]{}\mbox{\onelinecomment  \lemref{decomp ([c] -> b)}}{}\<[E]%
\\
\>[B]{}\mathrel{=}{}\<[BE]%
\>[5]{}\mathrm{0}\mathrel\triangleleft\lambda\, \Varid{c}\to \mathbf{if}\;\Varid{c}\mathrel{=}\Varid{c'}\;\mathbf{then}\;\Varid{cs'}\mapsto\Varid{b}\;\mathbf{else}\;\mathrm{0}{}\<[56]%
\>[56]{}\mbox{\onelinecomment  \lemreftwo{atEps [c] -> b}{deriv [c] -> b}}{}\<[E]%
\\
\>[B]{}\mathrel{=}{}\<[BE]%
\>[5]{}\mathrm{0}\mathrel\triangleleft\Varid{c'}\mapsto\Varid{cs'}\mapsto\Varid{b}{}\<[56]%
\>[56]{}\mbox{\onelinecomment  \ensuremath{(\mapsto)} on functions}{}\<[E]%
\ColumnHook
\end{hscode}\resethooks
Expressed via \ensuremath{\Varid{foldr}},
\begin{hscode}\SaveRestoreHook
\column{B}{@{}>{\hspre}l<{\hspost}@{}}%
\column{3}{@{}>{\hspre}l<{\hspost}@{}}%
\column{E}{@{}>{\hspre}l<{\hspost}@{}}%
\>[3]{}\Varid{w}\mapsto\Varid{b}\mathrel{=}\Varid{foldr}\;(\lambda\, \Varid{c}\;\Varid{t}\to \mathrm{0}\mathrel\triangleleft\Varid{c}\mapsto\Varid{t})\;(\Varid{b}\mathrel\triangleleft\mathrm{0})\;\Varid{w}{}\<[E]%
\ColumnHook
\end{hscode}\resethooks
\begin{lemma}\lemlabel{atEps and deriv via (<:)}
The \ensuremath{\Varid{at}_\epsilon} and \ensuremath{\derivOp} functions satisfy the following properties in terms of \ensuremath{(\mathrel\triangleleft)}-decompositions:
\begin{spacing}{1.2}
\begin{hscode}\SaveRestoreHook
\column{B}{@{}>{\hspre}l<{\hspost}@{}}%
\column{19}{@{}>{\hspre}c<{\hspost}@{}}%
\column{19E}{@{}l@{}}%
\column{24}{@{}>{\hspre}l<{\hspost}@{}}%
\column{E}{@{}>{\hspre}l<{\hspost}@{}}%
\>[B]{}\Varid{at}_\epsilon\;((\Varid{a}\mathrel\triangleleft\Varid{dp}){}\<[19]%
\>[19]{}\mathbin{+}{}\<[19E]%
\>[24]{}(\Varid{b}\mathrel\triangleleft\Varid{dq}))\mathrel{=}\Varid{a}\mathbin{+}\Varid{b}{}\<[E]%
\\
\>[B]{}\Varid{at}_\epsilon\;((\Varid{a}\mathrel\triangleleft\Varid{dp}){}\<[19]%
\>[19]{}\mathbin{*}{}\<[19E]%
\>[24]{}(\Varid{b}\mathrel\triangleleft\Varid{dq}))\mathrel{=}\Varid{a}\mathbin{*}\Varid{b}{}\<[E]%
\\
\>[B]{}\Varid{at}_\epsilon\;(\closure{(\Varid{a}\mathrel\triangleleft\Varid{dp})})\mathrel{=}\closure{\Varid{a}}{}\<[E]%
\\
\>[B]{}\Varid{at}_\epsilon\;(\Varid{s}\cdot(\Varid{a}\mathrel\triangleleft\Varid{dp}))\mathrel{=}\Varid{s}\mathbin{*}\Varid{a}{}\<[E]%
\ColumnHook
\end{hscode}\resethooks
\begin{hscode}\SaveRestoreHook
\column{B}{@{}>{\hspre}l<{\hspost}@{}}%
\column{19}{@{}>{\hspre}c<{\hspost}@{}}%
\column{19E}{@{}l@{}}%
\column{24}{@{}>{\hspre}l<{\hspost}@{}}%
\column{E}{@{}>{\hspre}l<{\hspost}@{}}%
\>[B]{}\derivOp\;((\Varid{a}\mathrel\triangleleft\Varid{dp}){}\<[19]%
\>[19]{}\mathbin{+}{}\<[19E]%
\>[24]{}(\Varid{b}\mathrel\triangleleft\Varid{dq}))\;\Varid{c}\mathrel{=}\Varid{dp}\;\Varid{c}\mathbin{+}\Varid{dq}\;\Varid{c}{}\<[E]%
\\
\>[B]{}\derivOp\;((\Varid{a}\mathrel\triangleleft\Varid{dp}){}\<[19]%
\>[19]{}\mathbin{*}{}\<[19E]%
\>[24]{}(\Varid{b}\mathrel\triangleleft\Varid{dq}))\;\Varid{c}\mathrel{=}\Varid{a}\cdot\Varid{dq}\;\Varid{c}\mathbin{+}\Varid{dp}\;\Varid{c}\mathbin{*}(\Varid{b}\mathrel\triangleleft\Varid{dq}){}\<[E]%
\\
\>[B]{}\derivOp\;(\closure{(\Varid{a}\mathrel\triangleleft\Varid{dp})})\;\Varid{c}\mathrel{=}\closure{\Varid{a}}\cdot\Varid{dp}\;\Varid{c}\mathbin{*}\closure{(\Varid{a}\mathrel\triangleleft\Varid{dp})}{}\<[E]%
\\
\>[B]{}\derivOp\;(\Varid{s}\cdot(\Varid{a}\mathrel\triangleleft\Varid{dp}))\;\Varid{c}\mathrel{=}\Varid{s}\cdot\Varid{dp}\;\Varid{c}{}\<[E]%
\ColumnHook
\end{hscode}\resethooks
\end{spacing}
\vspace{-2ex}
\end{lemma}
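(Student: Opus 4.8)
The plan is to derive every equation as a direct corollary of three earlier results: the projection identities \ensuremath{\Varid{at}_\epsilon\;(\Varid{b}\mathrel\triangleleft\Varid{h})\mathrel{=}\Varid{b}} and \ensuremath{\derivOp\;(\Varid{b}\mathrel\triangleleft\Varid{h})\mathrel{=}\Varid{h}} from \lemref{decomp ([c] -> b)}, the homomorphism properties of \ensuremath{\Varid{at}_\epsilon} from \lemref{atEps [c] -> b}, and the differentiation rules from \lemref{deriv [c] -> b}. The single organizing move is to instantiate those general rules at \ensuremath{\Varid{p}\mathrel{=}\Varid{a}\mathrel\triangleleft\Varid{dp}} and \ensuremath{\Varid{q}\mathrel{=}\Varid{b}\mathrel\triangleleft\Varid{dq}}, and then immediately collapse \ensuremath{\Varid{at}_\epsilon\;(\Varid{a}\mathrel\triangleleft\Varid{dp})} to \ensuremath{\Varid{a}} and \ensuremath{\derivOp\;(\Varid{a}\mathrel\triangleleft\Varid{dp})} to \ensuremath{\Varid{dp}} via the projection identities. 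No induction or extensionality argument is needed, since the three cited lemmas already supply all the algebra.

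For the four \ensuremath{\Varid{at}_\epsilon} equations, I would apply the relevant homomorphism law from \lemref{atEps [c] -> b} (\ensuremath{\Varid{at}_\epsilon} preserving \ensuremath{(\mathbin{+})}, \ensuremath{(\mathbin{*})}, \ensuremath{\closure{\cdot }}, and \ensuremath{(\cdot)}) and then replace each \ensuremath{\Varid{at}_\epsilon\;(\Varid{a}\mathrel\triangleleft\Varid{dp})} by \ensuremath{\Varid{a}}. For instance, \ensuremath{\Varid{at}_\epsilon\;((\Varid{a}\mathrel\triangleleft\Varid{dp})\mathbin{*}(\Varid{b}\mathrel\triangleleft\Varid{dq}))\mathrel{=}\Varid{at}_\epsilon\;(\Varid{a}\mathrel\triangleleft\Varid{dp})\mathbin{*}\Varid{at}_\epsilon\;(\Varid{b}\mathrel\triangleleft\Varid{dq})\mathrel{=}\Varid{a}\mathbin{*}\Varid{b}}, and the sum, star, and scaling cases go through identically. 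These are pure one-line rewrites with no case analysis.

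For the four \ensuremath{\derivOp} equations, I would start from the corresponding rule in \lemref{deriv [c] -> b}, instantiate at the decomposed arguments, and substitute \ensuremath{\derivOp\;(\Varid{a}\mathrel\triangleleft\Varid{dp})\;\Varid{c}\mathrel{=}\Varid{dp}\;\Varid{c}} together with \ensuremath{\Varid{at}_\epsilon\;(\Varid{a}\mathrel\triangleleft\Varid{dp})\mathrel{=}\Varid{a}}. The product rule then reduces to \ensuremath{\Varid{a}\cdot\Varid{dq}\;\Varid{c}\mathbin{+}\Varid{dp}\;\Varid{c}\mathbin{*}(\Varid{b}\mathrel\triangleleft\Varid{dq})}, and the star rule to \ensuremath{\closure{\Varid{a}}\cdot\Varid{dp}\;\Varid{c}\mathbin{*}\closure{(\Varid{a}\mathrel\triangleleft\Varid{dp})}}, while the additive and scaling rules reduce to \ensuremath{\Varid{dp}\;\Varid{c}\mathbin{+}\Varid{dq}\;\Varid{c}} and \ensuremath{\Varid{s}\cdot\Varid{dp}\;\Varid{c}} respectively.

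The closest thing to an obstacle is purely presentational rather than mathematical: in the product and star cases I must \emph{not} project the non-differentiated factor, keeping it in its \ensuremath{(\mathrel\triangleleft)}-decomposed form (\ensuremath{\Varid{b}\mathrel\triangleleft\Varid{dq}}, respectively \ensuremath{\Varid{a}\mathrel\triangleleft\Varid{dp}}), since the whole value — not just its head — is required on the right-hand side, and in the star case the self-reference \ensuremath{\closure{(\Varid{a}\mathrel\triangleleft\Varid{dp})}} must be retained verbatim rather than unfolded. With that care taken, the lemma is essentially a repackaging of \lemrefthree{decomp ([c] -> b)}{atEps [c] -> b}{deriv [c] -> b} specialized to \ensuremath{(\mathrel\triangleleft)}-decompositions, and it is precisely this packaged form that feeds the calculations in \thmref{semiring decomp [c] -> b}.
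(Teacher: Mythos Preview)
Your proposal is correct and follows exactly the approach the paper takes: its proof reads, in full, ``Substitute into \lemreftwo{atEps [c] -> b}{deriv [c] -> b}, and simplify, using \lemref{decomp ([c] -> b)}.'' Your write-up simply unpacks that one-liner, and your remark about not projecting the non-differentiated factor in the product and star cases is the only care the substitution actually requires.
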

\begin{proof}
Substitute into \lemreftwo{atEps [c] -> b}{deriv [c] -> b}, and simplify, using \lemref{decomp ([c] -> b)}.
\end{proof}

%% \subsection{\thmref{semiring decomp generalized}}\prooflabel{theorem:semiring decomp generalized}

\subsection{\thmref{Cofree}}\prooflabel{theorem:Cofree}

The theorem follows from \thmref{semiring decomp [c] -> b}.
A few details:

\begin{spacing}{1.2}

\begin{hscode}\SaveRestoreHook
\column{B}{@{}>{\hspre}c<{\hspost}@{}}%
\column{BE}{@{}l@{}}%
\column{5}{@{}>{\hspre}l<{\hspost}@{}}%
\column{29}{@{}>{\hspre}l<{\hspost}@{}}%
\column{E}{@{}>{\hspre}l<{\hspost}@{}}%
\>[5]{}(\mathbin{!})\;\mathrm{0}{}\<[E]%
\\
\>[B]{}\mathrel{=}{}\<[BE]%
\>[5]{}(\mathbin{!})\;(\mathrm{0}\mathrel{\Varid{:\!\!\triangleleft}}\mathrm{0}){}\<[29]%
\>[29]{}\mbox{\onelinecomment  \ensuremath{\mathrm{0}} for \ensuremath{\Conid{Cofree}\;\Varid{c}\;\Varid{b}}}{}\<[E]%
\\
\>[B]{}\mathrel{=}{}\<[BE]%
\>[5]{}\mathrm{0}\mathrel\triangleleft(\mathbin{!})\hsdot{\circ }{.\:}(\mathbin{!})\;\mathrm{0}{}\<[29]%
\>[29]{}\mbox{\onelinecomment  \ensuremath{(\mathbin{!})} for \ensuremath{\Conid{Cofree}\;\Varid{c}\;\Varid{b}}}{}\<[E]%
\\
\>[B]{}\mathrel{=}{}\<[BE]%
\>[5]{}\mathrm{0}\mathrel\triangleleft(\mathbin{!})\hsdot{\circ }{.\:}\mathrm{0}{}\<[29]%
\>[29]{}\mbox{\onelinecomment  Additive functor assumption}{}\<[E]%
\\
\>[B]{}\mathrel{=}{}\<[BE]%
\>[5]{}\mathrm{0}\mathrel\triangleleft\mathrm{0}{}\<[29]%
\>[29]{}\mbox{\onelinecomment  coinduction}{}\<[E]%
\\
\>[B]{}\mathrel{=}{}\<[BE]%
\>[5]{}\mathrm{0}{}\<[29]%
\>[29]{}\mbox{\onelinecomment  \thmref{semiring decomp [c] -> b}}{}\<[E]%
\ColumnHook
\end{hscode}\resethooks

\begin{hscode}\SaveRestoreHook
\column{B}{@{}>{\hspre}c<{\hspost}@{}}%
\column{BE}{@{}l@{}}%
\column{5}{@{}>{\hspre}l<{\hspost}@{}}%
\column{61}{@{}>{\hspre}l<{\hspost}@{}}%
\column{E}{@{}>{\hspre}l<{\hspost}@{}}%
\>[5]{}(\mathbin{!})\;((\Varid{a}\mathrel{\Varid{:\!\!\triangleleft}}\Varid{dp})\mathbin{+}(\Varid{b}\mathrel{\Varid{:\!\!\triangleleft}}\Varid{dq})){}\<[E]%
\\
\>[B]{}\mathrel{=}{}\<[BE]%
\>[5]{}(\mathbin{!})\;(\Varid{a}\mathbin{+}\Varid{b}\mathrel{\Varid{:\!\!\triangleleft}}\Varid{dp}\mathbin{+}\Varid{dq}){}\<[61]%
\>[61]{}\mbox{\onelinecomment  \ensuremath{(\mathbin{+})} on \ensuremath{\Conid{Cofree}\;\Varid{c}\;\Varid{b}}}{}\<[E]%
\\
\>[B]{}\mathrel{=}{}\<[BE]%
\>[5]{}\Varid{a}\mathbin{+}\Varid{b}\mathrel\triangleleft(\mathbin{!})\hsdot{\circ }{.\:}(\mathbin{!})\;(\Varid{dp}\mathbin{+}\Varid{dq}){}\<[61]%
\>[61]{}\mbox{\onelinecomment  \ensuremath{(\mathbin{!})} on \ensuremath{\Conid{Cofree}\;\Varid{c}\;\Varid{b}}}{}\<[E]%
\\
\>[B]{}\mathrel{=}{}\<[BE]%
\>[5]{}\Varid{a}\mathbin{+}\Varid{b}\mathrel\triangleleft(\mathbin{!})\hsdot{\circ }{.\:}((\mathbin{!})\;\Varid{dp}\mathbin{+}(\mathbin{!})\;\Varid{dq}){}\<[61]%
\>[61]{}\mbox{\onelinecomment  \ensuremath{\Conid{Indexable}} law}{}\<[E]%
\\
\>[B]{}\mathrel{=}{}\<[BE]%
\>[5]{}\Varid{a}\mathbin{+}\Varid{b}\mathrel\triangleleft(\mathbin{!})\hsdot{\circ }{.\:}(\lambda\, \Varid{cs}\to \Varid{dp}\mathbin{!}\Varid{cs}\mathbin{+}\Varid{dq}\mathbin{!}\Varid{cs}){}\<[61]%
\>[61]{}\mbox{\onelinecomment  \ensuremath{(\mathbin{+})} on functions}{}\<[E]%
\\
\>[B]{}\mathrel{=}{}\<[BE]%
\>[5]{}\Varid{a}\mathbin{+}\Varid{b}\mathrel\triangleleft\lambda\, \Varid{cs}\to (\mathbin{!})\;(\Varid{dp}\mathbin{!}\Varid{cs}\mathbin{+}\Varid{dq}\mathbin{!}\Varid{cs}){}\<[61]%
\>[61]{}\mbox{\onelinecomment  \ensuremath{(\hsdot{\circ }{.\:})} definition}{}\<[E]%
\\
\>[B]{}\mathrel{=}{}\<[BE]%
\>[5]{}\Varid{a}\mathbin{+}\Varid{b}\mathrel\triangleleft\lambda\, \Varid{cs}\to (\mathbin{!})\;(\Varid{dp}\mathbin{!}\Varid{cs})\mathbin{+}(\mathbin{!})\;(\Varid{dq}\mathbin{!}\Varid{cs}){}\<[61]%
\>[61]{}\mbox{\onelinecomment  \ensuremath{\Conid{Indexable}} law}{}\<[E]%
\\
\>[B]{}\mathrel{=}{}\<[BE]%
\>[5]{}\Varid{a}\mathbin{+}\Varid{b}\mathrel\triangleleft\lambda\, \Varid{cs}\to ((\mathbin{!})\hsdot{\circ }{.\:}(\mathbin{!})\;\Varid{dp})\;\Varid{cs}\mathbin{+}((\mathbin{!})\hsdot{\circ }{.\:}(\mathbin{!})\;\Varid{dq})\;\Varid{cs}{}\<[61]%
\>[61]{}\mbox{\onelinecomment  \ensuremath{(\hsdot{\circ }{.\:})} definition}{}\<[E]%
\\
\>[B]{}\mathrel{=}{}\<[BE]%
\>[5]{}\Varid{a}\mathbin{+}\Varid{b}\mathrel\triangleleft((\mathbin{!})\hsdot{\circ }{.\:}(\mathbin{!})\;\Varid{dp})\mathbin{+}((\mathbin{!})\hsdot{\circ }{.\:}(\mathbin{!})\;\Varid{dq}){}\<[61]%
\>[61]{}\mbox{\onelinecomment  \ensuremath{(\mathbin{+})} on functions}{}\<[E]%
\\
\>[B]{}\mathrel{=}{}\<[BE]%
\>[5]{}(\Varid{a}\mathrel{\Varid{:\!\!\triangleleft}}(\mathbin{!})\hsdot{\circ }{.\:}(\mathbin{!})\;\Varid{dp})\mathbin{+}(\Varid{b}\mathrel{\Varid{:\!\!\triangleleft}}(\mathbin{!})\hsdot{\circ }{.\:}(\mathbin{!})\;\Varid{dq}){}\<[61]%
\>[61]{}\mbox{\onelinecomment  \ensuremath{(\mathbin{+})} on \ensuremath{\Conid{Cofree}\;\Varid{c}\;\Varid{b}}}{}\<[E]%
\\
\>[B]{}\mathrel{=}{}\<[BE]%
\>[5]{}(\mathbin{!})\;(\Varid{a}\mathrel{\Varid{:\!\!\triangleleft}}\Varid{dp})\mathbin{+}(\mathbin{!})\;(\Varid{b}\mathrel{\Varid{:\!\!\triangleleft}}\Varid{dq}){}\<[61]%
\>[61]{}\mbox{\onelinecomment  \ensuremath{(\mathbin{!})} on \ensuremath{\Conid{Cofree}\;\Varid{c}\;\Varid{b}}}{}\<[E]%
\ColumnHook
\end{hscode}\resethooks

%% Similarly for the other homomorphism properties.

\end{spacing}

\subsection{\thmref{Cofree hom}}\prooflabel{theorem:Cofree hom}

First show that \ensuremath{(\mathbin{!})} is natural (a functor homomorphism):
\begin{hscode}\SaveRestoreHook
\column{B}{@{}>{\hspre}l<{\hspost}@{}}%
\column{3}{@{}>{\hspre}l<{\hspost}@{}}%
\column{E}{@{}>{\hspre}l<{\hspost}@{}}%
\>[3]{}(\mathbin{!})\;(\Varid{fmap}\;\Varid{f}\;(\Varid{a}\mathrel{\Varid{:\!\!\triangleleft}}\Varid{ds}))\mathrel{=}\Varid{fmap}\;\Varid{f}\;((\mathbin{!})\;(\Varid{a}\mathrel{\Varid{:\!\!\triangleleft}}\Varid{ds})){}\<[E]%
\ColumnHook
\end{hscode}\resethooks
i.e.,
\begin{hscode}\SaveRestoreHook
\column{B}{@{}>{\hspre}l<{\hspost}@{}}%
\column{3}{@{}>{\hspre}l<{\hspost}@{}}%
\column{E}{@{}>{\hspre}l<{\hspost}@{}}%
\>[3]{}\Varid{fmap}\;\Varid{f}\;(\Varid{a}\mathrel{\Varid{:\!\!\triangleleft}}\Varid{ds})\mathbin{!}\Varid{cs}\mathrel{=}\Varid{fmap}\;\Varid{f}\;((\mathbin{!})\;(\Varid{a}\mathrel{\Varid{:\!\!\triangleleft}}\Varid{ds}))\;\Varid{cs}{}\<[E]%
\ColumnHook
\end{hscode}\resethooks
Consider cases for \ensuremath{\Varid{cs}}:

\begin{hscode}\SaveRestoreHook
\column{B}{@{}>{\hspre}c<{\hspost}@{}}%
\column{BE}{@{}l@{}}%
\column{5}{@{}>{\hspre}l<{\hspost}@{}}%
\column{37}{@{}>{\hspre}l<{\hspost}@{}}%
\column{40}{@{}>{\hspre}l<{\hspost}@{}}%
\column{E}{@{}>{\hspre}l<{\hspost}@{}}%
\>[5]{}\Varid{fmap}\;\Varid{f}\;(\Varid{a}\mathrel{\Varid{:\!\!\triangleleft}}\Varid{ds})\mathbin{!}[\mskip1.5mu \mskip1.5mu]{}\<[E]%
\\
\>[B]{}\mathrel{=}{}\<[BE]%
\>[5]{}\Varid{f}\;\Varid{a}\mathrel{\Varid{:\!\!\triangleleft}}(\Varid{fmap}\;(\Varid{fmap}\;\Varid{f})\;\Varid{ds})\mathbin{!}[\mskip1.5mu \mskip1.5mu]{}\<[37]%
\>[37]{}\mbox{\onelinecomment  \ensuremath{\Varid{fmap}} on \ensuremath{\Conid{Cofree}\;\Varid{h}}}{}\<[E]%
\\
\>[B]{}\mathrel{=}{}\<[BE]%
\>[5]{}\Varid{f}\;\Varid{a}{}\<[37]%
\>[37]{}\mbox{\onelinecomment  \ensuremath{(\mathbin{!})} on \ensuremath{\Conid{Cofree}\;\Varid{h}}}{}\<[E]%
\\[1.5ex]\>[5]{}\Varid{fmap}\;\Varid{f}\;((\mathbin{!})\;(\Varid{a}\mathrel{\Varid{:\!\!\triangleleft}}\Varid{ds}))\;[\mskip1.5mu \mskip1.5mu]{}\<[E]%
\\
\>[B]{}\mathrel{=}{}\<[BE]%
\>[5]{}(\Varid{f}\hsdot{\circ }{.\:}(\mathbin{!})\;(\Varid{a}\mathrel{\Varid{:\!\!\triangleleft}}\Varid{ds}))\;[\mskip1.5mu \mskip1.5mu]{}\<[37]%
\>[37]{}\mbox{\onelinecomment  \ensuremath{\Varid{fmap}} on functions}{}\<[E]%
\\
\>[B]{}\mathrel{=}{}\<[BE]%
\>[5]{}\Varid{f}\;((\Varid{a}\mathrel{\Varid{:\!\!\triangleleft}}\Varid{ds})\mathbin{!}[\mskip1.5mu \mskip1.5mu]){}\<[37]%
\>[37]{}\mbox{\onelinecomment  \ensuremath{(\hsdot{\circ }{.\:})} definition}{}\<[E]%
\\
\>[B]{}\mathrel{=}{}\<[BE]%
\>[5]{}\Varid{f}\;\Varid{a}{}\<[37]%
\>[37]{}\mbox{\onelinecomment  \ensuremath{(\mathbin{!})} on \ensuremath{\Conid{Cofree}\;\Varid{h}}}{}\<[E]%
\\[1.5ex]\>[5]{}\Varid{fmap}\;\Varid{f}\;(\Varid{a}\mathrel{\Varid{:\!\!\triangleleft}}\Varid{ds})\mathbin{!}(\Varid{c}\mathbin{:}\Varid{cs'}){}\<[E]%
\\
\>[B]{}\mathrel{=}{}\<[BE]%
\>[5]{}\Varid{f}\;\Varid{a}\mathrel{\Varid{:\!\!\triangleleft}}\Varid{fmap}\;(\Varid{fmap}\;\Varid{f})\;\Varid{ds}\mathbin{!}(\Varid{c}\mathbin{:}\Varid{cs'}){}\<[40]%
\>[40]{}\mbox{\onelinecomment  \ensuremath{\Varid{fmap}} on \ensuremath{\Conid{Cofree}\;\Varid{h}}}{}\<[E]%
\\
\>[B]{}\mathrel{=}{}\<[BE]%
\>[5]{}\Varid{fmap}\;(\Varid{fmap}\;\Varid{f})\;\Varid{ds}\mathbin{!}\Varid{c}\mathbin{!}\Varid{cs}{}\<[40]%
\>[40]{}\mbox{\onelinecomment  \ensuremath{(\mathbin{!})} on \ensuremath{\Conid{Cofree}\;\Varid{h}}}{}\<[E]%
\\
\>[B]{}\mathrel{=}{}\<[BE]%
\>[5]{}\Varid{fmap}\;\Varid{f}\;(\Varid{ds}\mathbin{!}\Varid{c})\mathbin{!}\Varid{cs}{}\<[40]%
\>[40]{}\mbox{\onelinecomment  \ensuremath{(\mathbin{!})} on \ensuremath{\Varid{h}} is natural}{}\<[E]%
\\
\>[B]{}\mathrel{=}{}\<[BE]%
\>[5]{}\Varid{f}\;(\Varid{ds}\mathbin{!}\Varid{c}\mathbin{!}\Varid{cs}){}\<[40]%
\>[40]{}\mbox{\onelinecomment  \ensuremath{(\mathbin{!})} on \ensuremath{\Varid{h}} is natural}{}\<[E]%
\\[1.5ex]\>[5]{}\Varid{fmap}\;\Varid{f}\;((\mathbin{!})\;(\Varid{a}\mathrel{\Varid{:\!\!\triangleleft}}\Varid{ds}))\;(\Varid{c}\mathbin{:}\Varid{cs'}){}\<[E]%
\\
\>[B]{}\mathrel{=}{}\<[BE]%
\>[5]{}(\Varid{f}\hsdot{\circ }{.\:}(\mathbin{!})\;(\Varid{a}\mathrel{\Varid{:\!\!\triangleleft}}\Varid{ds}))\;(\Varid{c}\mathbin{:}\Varid{cs'}){}\<[40]%
\>[40]{}\mbox{\onelinecomment  \ensuremath{\Varid{fmap}} on functions}{}\<[E]%
\\
\>[B]{}\mathrel{=}{}\<[BE]%
\>[5]{}\Varid{f}\;((\Varid{a}\mathrel{\Varid{:\!\!\triangleleft}}\Varid{ds})\mathbin{!}(\Varid{c}\mathbin{:}\Varid{cs'})){}\<[40]%
\>[40]{}\mbox{\onelinecomment  \ensuremath{(\hsdot{\circ }{.\:})} definition}{}\<[E]%
\\
\>[B]{}\mathrel{=}{}\<[BE]%
\>[5]{}\Varid{f}\;(\Varid{ds}\mathbin{!}\Varid{c}\mathbin{!}\Varid{cs'}){}\<[40]%
\>[40]{}\mbox{\onelinecomment  \ensuremath{(\mathbin{!})} on \ensuremath{\Conid{Cofree}\;\Varid{h}}}{}\<[E]%
\ColumnHook
\end{hscode}\resethooks
Next show that
\begin{hscode}\SaveRestoreHook
\column{B}{@{}>{\hspre}l<{\hspost}@{}}%
\column{3}{@{}>{\hspre}l<{\hspost}@{}}%
\column{E}{@{}>{\hspre}l<{\hspost}@{}}%
\>[3]{}\Varid{coreturn}\;(\Varid{a}\mathrel{\Varid{:\!\!\triangleleft}}\Varid{ds})\mathrel{=}\Varid{coreturn}\;((\mathbin{!})\;(\Varid{a}\mathrel{\Varid{:\!\!\triangleleft}}\Varid{ds})){}\<[E]%
\ColumnHook
\end{hscode}\resethooks
\begin{hscode}\SaveRestoreHook
\column{B}{@{}>{\hspre}c<{\hspost}@{}}%
\column{BE}{@{}l@{}}%
\column{5}{@{}>{\hspre}l<{\hspost}@{}}%
\column{E}{@{}>{\hspre}l<{\hspost}@{}}%
\>[5]{}\Varid{coreturn}\;((\mathbin{!})\;(\Varid{a}\mathrel{\Varid{:\!\!\triangleleft}}\Varid{ds})){}\<[E]%
\\
\>[B]{}\mathrel{=}{}\<[BE]%
\>[5]{}((\mathbin{!})\;(\Varid{a}\mathrel{\Varid{:\!\!\triangleleft}}\Varid{ds}))\;\varepsilon{}\<[E]%
\\
\>[B]{}\mathrel{=}{}\<[BE]%
\>[5]{}(\Varid{a}\mathrel{\Varid{:\!\!\triangleleft}}\Varid{ds})\mathbin{!}\varepsilon{}\<[E]%
\\
\>[B]{}\mathrel{=}{}\<[BE]%
\>[5]{}\Varid{a}{}\<[E]%
\\
\>[B]{}\mathrel{=}{}\<[BE]%
\>[5]{}\Varid{coreturn}\;(\Varid{a}\mathrel{\Varid{:\!\!\triangleleft}}\Varid{ds}){}\<[E]%
\ColumnHook
\end{hscode}\resethooks
Finally,
\begin{hscode}\SaveRestoreHook
\column{B}{@{}>{\hspre}l<{\hspost}@{}}%
\column{3}{@{}>{\hspre}l<{\hspost}@{}}%
\column{E}{@{}>{\hspre}l<{\hspost}@{}}%
\>[3]{}(\mathbin{!})\hsdot{\circ }{.\:}\Varid{fmap}\;(\mathbin{!})\hsdot{\circ }{.\:}\Varid{cojoin}\mathrel{=}\Varid{cojoin}\hsdot{\circ }{.\:}(\mathbin{!}){}\<[E]%
\ColumnHook
\end{hscode}\resethooks
i.e.,
\begin{hscode}\SaveRestoreHook
\column{B}{@{}>{\hspre}l<{\hspost}@{}}%
\column{3}{@{}>{\hspre}l<{\hspost}@{}}%
\column{E}{@{}>{\hspre}l<{\hspost}@{}}%
\>[3]{}\Varid{fmap}\;(\mathbin{!})\;(\Varid{cojoin}\;(\Varid{a}\mathrel{\Varid{:\!\!\triangleleft}}\Varid{ds}))\mathbin{!}\Varid{cs}\mathrel{=}\Varid{cojoin}\;((\mathbin{!})\;(\Varid{a}\mathrel{\Varid{:\!\!\triangleleft}}\Varid{ds}))\;\Varid{cs}{}\<[E]%
\ColumnHook
\end{hscode}\resethooks
\begin{hscode}\SaveRestoreHook
\column{B}{@{}>{\hspre}c<{\hspost}@{}}%
\column{BE}{@{}l@{}}%
\column{5}{@{}>{\hspre}l<{\hspost}@{}}%
\column{68}{@{}>{\hspre}l<{\hspost}@{}}%
\column{E}{@{}>{\hspre}l<{\hspost}@{}}%
\>[5]{}\Varid{fmap}\;(\mathbin{!})\;(\Varid{cojoin}\;(\Varid{a}\mathrel{\Varid{:\!\!\triangleleft}}\Varid{ds}))\mathbin{!}[\mskip1.5mu \mskip1.5mu]{}\<[E]%
\\
\>[B]{}\mathrel{=}{}\<[BE]%
\>[5]{}\Varid{fmap}\;(\mathbin{!})\;((\Varid{a}\mathrel{\Varid{:\!\!\triangleleft}}\Varid{ds})\mathrel{\Varid{:\!\!\triangleleft}}\Varid{fmap}\;\Varid{cojoin}\;\Varid{ds})\mathbin{!}[\mskip1.5mu \mskip1.5mu]{}\<[68]%
\>[68]{}\mbox{\onelinecomment  \ensuremath{\Varid{cojoin}} on \ensuremath{\Conid{Cofree}\;\Varid{h}}}{}\<[E]%
\\
\>[B]{}\mathrel{=}{}\<[BE]%
\>[5]{}((\mathbin{!})\;(\Varid{a}\mathrel{\Varid{:\!\!\triangleleft}}\Varid{ds})\mathrel{\Varid{:\!\!\triangleleft}}\Varid{fmap}\;(\Varid{fmap}\;(\mathbin{!}))\;(\Varid{fmap}\;\Varid{cojoin}\;\Varid{ds}))\mathbin{!}[\mskip1.5mu \mskip1.5mu]{}\<[68]%
\>[68]{}\mbox{\onelinecomment  \ensuremath{\Varid{fmap}} on \ensuremath{\Conid{Cofree}\;\Varid{h}}}{}\<[E]%
\\
\>[B]{}\mathrel{=}{}\<[BE]%
\>[5]{}(\mathbin{!})\;(\Varid{a}\mathrel{\Varid{:\!\!\triangleleft}}\Varid{ds}){}\<[68]%
\>[68]{}\mbox{\onelinecomment  \ensuremath{(\mathbin{!})} on \ensuremath{\Conid{Cofree}\;\Varid{h}}}{}\<[E]%
\\[1.5ex]\>[5]{}\Varid{cojoin}\;((\mathbin{!})\;(\Varid{a}\mathrel{\Varid{:\!\!\triangleleft}}\Varid{ds}))\;[\mskip1.5mu \mskip1.5mu]{}\<[E]%
\\
\>[B]{}\mathrel{=}{}\<[BE]%
\>[5]{}(\lambda\, \Varid{u}\to \lambda\, \Varid{v}\to (\Varid{a}\mathrel{\Varid{:\!\!\triangleleft}}\Varid{ds})\mathbin{!}(\Varid{u} \diamond \Varid{v}))\;[\mskip1.5mu \mskip1.5mu]{}\<[68]%
\>[68]{}\mbox{\onelinecomment  \ensuremath{\Varid{cojoin}} on functions}{}\<[E]%
\\
\>[B]{}\mathrel{=}{}\<[BE]%
\>[5]{}\lambda\, \Varid{v}\to (\Varid{a}\mathrel{\Varid{:\!\!\triangleleft}}\Varid{ds})\mathbin{!}([\mskip1.5mu \mskip1.5mu] \diamond \Varid{v}){}\<[68]%
\>[68]{}\mbox{\onelinecomment  $\beta$ reduction}{}\<[E]%
\\
\>[B]{}\mathrel{=}{}\<[BE]%
\>[5]{}\lambda\, \Varid{v}\to (\Varid{a}\mathrel{\Varid{:\!\!\triangleleft}}\Varid{ds})\mathbin{!}\Varid{v}{}\<[68]%
\>[68]{}\mbox{\onelinecomment  \ensuremath{\Conid{Monoid}} law (with \ensuremath{\varepsilon\mathrel{=}[\mskip1.5mu \mskip1.5mu]})}{}\<[E]%
\\
\>[B]{}\mathrel{=}{}\<[BE]%
\>[5]{}(\mathbin{!})\;(\Varid{a}\mathrel{\Varid{:\!\!\triangleleft}}\Varid{ds}){}\<[68]%
\>[68]{}\mbox{\onelinecomment  $\eta$ reduction}{}\<[E]%
\\[1.5ex]\>[5]{}\Varid{fmap}\;(\mathbin{!})\;(\Varid{cojoin}\;(\Varid{a}\mathrel{\Varid{:\!\!\triangleleft}}\Varid{ds}))\mathbin{!}(\Varid{c}\mathbin{:}\Varid{cs'}){}\<[E]%
\\
\>[B]{}\mathrel{=}{}\<[BE]%
\>[5]{}\Varid{fmap}\;(\mathbin{!})\;((\Varid{a}\mathrel{\Varid{:\!\!\triangleleft}}\Varid{ds})\mathrel{\Varid{:\!\!\triangleleft}}\Varid{fmap}\;\Varid{cojoin}\;\Varid{ds})\mathbin{!}(\Varid{c}\mathbin{:}\Varid{cs'}){}\<[68]%
\>[68]{}\mbox{\onelinecomment  \ensuremath{\Varid{cojoin}} on \ensuremath{\Conid{Cofree}\;\Varid{h}}}{}\<[E]%
\\
\>[B]{}\mathrel{=}{}\<[BE]%
\>[5]{}((\mathbin{!})\;(\Varid{a}\mathrel{\Varid{:\!\!\triangleleft}}\Varid{ds})\mathrel{\Varid{:\!\!\triangleleft}}\Varid{fmap}\;(\Varid{fmap}\;(\mathbin{!}))\;(\Varid{fmap}\;\Varid{cojoin}\;\Varid{ds}))\mathbin{!}(\Varid{c}\mathbin{:}\Varid{cs'}){}\<[68]%
\>[68]{}\mbox{\onelinecomment  \ensuremath{\Varid{fmap}} on \ensuremath{\Conid{Cofree}\;\Varid{h}}}{}\<[E]%
\\
\>[B]{}\mathrel{=}{}\<[BE]%
\>[5]{}\Varid{fmap}\;(\Varid{fmap}\;(\mathbin{!}))\;(\Varid{fmap}\;\Varid{cojoin}\;\Varid{ds})\mathbin{!}\Varid{c}\mathbin{!}\Varid{cs'}{}\<[68]%
\>[68]{}\mbox{\onelinecomment  \ensuremath{(\mathbin{!})} on \ensuremath{\Conid{Cofree}\;\Varid{h}}}{}\<[E]%
\\
\>[B]{}\mathrel{=}{}\<[BE]%
\>[5]{}\Varid{fmap}\;(\Varid{fmap}\;(\mathbin{!})\hsdot{\circ }{.\:}\Varid{cojoin})\;\Varid{ds}\mathbin{!}\Varid{c}\mathbin{!}\Varid{cs'}{}\<[68]%
\>[68]{}\mbox{\onelinecomment  \ensuremath{\Conid{Functor}} law}{}\<[E]%
\\
\>[B]{}\mathrel{=}{}\<[BE]%
\>[5]{}(\Varid{fmap}\;(\mathbin{!})\hsdot{\circ }{.\:}\Varid{cojoin})\;((\mathbin{!})\;\Varid{ds})\;\Varid{c}\mathbin{!}\Varid{cs'}{}\<[68]%
\>[68]{}\mbox{\onelinecomment  Naturality of \ensuremath{(\mathbin{!})}}{}\<[E]%
\\
\>[B]{}\mathrel{=}{}\<[BE]%
\>[5]{}\Varid{fmap}\;(\mathbin{!})\;(\Varid{cojoin}\;((\mathbin{!})\;\Varid{ds})\;\Varid{c})\mathbin{!}\Varid{cs'}{}\<[68]%
\>[68]{}\mbox{\onelinecomment  \ensuremath{(\hsdot{\circ }{.\:})} definition}{}\<[E]%
\\
\>[B]{}\mathrel{=}{}\<[BE]%
\>[5]{}\Varid{cojoin}\;((\mathbin{!})\;(\Varid{ds}\mathbin{!}\Varid{c}))\;\Varid{cs'}{}\<[68]%
\>[68]{}\mbox{\onelinecomment  coinduction}{}\<[E]%
\\
\>[B]{}\mathrel{=}{}\<[BE]%
\>[5]{}\lambda\, \Varid{v}\to \Varid{cojoin}\;((\mathbin{!})\;(\Varid{ds}\mathbin{!}\Varid{c}))\;\Varid{cs'}\;\Varid{v}{}\<[68]%
\>[68]{}\mbox{\onelinecomment  $\eta$ expansion}{}\<[E]%
\\
\>[B]{}\mathrel{=}{}\<[BE]%
\>[5]{}\lambda\, \Varid{v}\to (\mathbin{!})\;(\Varid{ds}\mathbin{!}\Varid{c})\;(\Varid{cs'} \diamond \Varid{v}){}\<[68]%
\>[68]{}\mbox{\onelinecomment  \ensuremath{\Varid{cojoin}} for functions}{}\<[E]%
\\
\>[B]{}\mathrel{=}{}\<[BE]%
\>[5]{}\lambda\, \Varid{v}\to \Varid{ds}\mathbin{!}\Varid{c}\mathbin{!}(\Varid{cs'} \diamond \Varid{v}){}\<[68]%
\>[68]{}\mbox{\onelinecomment  infix \ensuremath{(\mathbin{!})}}{}\<[E]%
\\
\>[B]{}\mathrel{=}{}\<[BE]%
\>[5]{}\lambda\, \Varid{v}\to (\Varid{a}\mathrel{\Varid{:\!\!\triangleleft}}\Varid{ds})\mathbin{!}(\Varid{c}\mathbin{:}(\Varid{cs'} \diamond \Varid{v})){}\<[68]%
\>[68]{}\mbox{\onelinecomment  \ensuremath{(\mathbin{!})} on \ensuremath{\Conid{Cofree}\;\Varid{h}}  }{}\<[E]%
\\
\>[B]{}\mathrel{=}{}\<[BE]%
\>[5]{}\lambda\, \Varid{v}\to (\Varid{a}\mathrel{\Varid{:\!\!\triangleleft}}\Varid{ds})\mathbin{!}((\Varid{c}\mathbin{:}\Varid{cs'}) \diamond \Varid{v}){}\<[68]%
\>[68]{}\mbox{\onelinecomment  \ensuremath{( \diamond )} on \ensuremath{[\mskip1.5mu \Varid{c}\mskip1.5mu]}      }{}\<[E]%
\\
\>[B]{}\mathrel{=}{}\<[BE]%
\>[5]{}(\lambda\, \Varid{u}\to \lambda\, \Varid{v}\to (\Varid{a}\mathrel{\Varid{:\!\!\triangleleft}}\Varid{ds})\mathbin{!}(\Varid{u} \diamond \Varid{v}))\;(\Varid{c}\mathbin{:}\Varid{cs'}){}\<[68]%
\>[68]{}\mbox{\onelinecomment  $\beta$ reduction    }{}\<[E]%
\\
\>[B]{}\mathrel{=}{}\<[BE]%
\>[5]{}\Varid{cojoin}\;((\mathbin{!})\;(\Varid{a}\mathrel{\Varid{:\!\!\triangleleft}}\Varid{ds}))\;(\Varid{c}\mathbin{:}\Varid{cs'}){}\<[68]%
\>[68]{}\mbox{\onelinecomment  \ensuremath{\Varid{cojoin}} on functions}{}\<[E]%
\ColumnHook
\end{hscode}\resethooks

\subsection{\thmref{pre hom}}\prooflabel{theorem:pre hom}

\begin{spacing}{1.2}
\begin{hscode}\SaveRestoreHook
\column{B}{@{}>{\hspre}l<{\hspost}@{}}%
\column{5}{@{}>{\hspre}l<{\hspost}@{}}%
\column{85}{@{}>{\hspre}l<{\hspost}@{}}%
\column{E}{@{}>{\hspre}l<{\hspost}@{}}%
\>[5]{}\Varid{pre}\;(\Varid{pure}\;\Varid{b}){}\<[E]%
\\
\>[B]{}\mathrel{=}{}\<[5]%
\>[5]{}\Varid{pre}\;(\lambda\, \Varid{a}\to \Varid{b}){}\<[85]%
\>[85]{}\mbox{\onelinecomment  \ensuremath{\Varid{pure}} on \ensuremath{\Varid{a}\to \Varid{b}}}{}\<[E]%
\\
\>[B]{}\mathrel{=}{}\<[5]%
\>[5]{}\Conid{F}\;(\lambda\, \Varid{b'}\to \set{\Varid{a}\mid\Varid{b}\mathrel{=}\Varid{b'}}){}\<[85]%
\>[85]{}\mbox{\onelinecomment  \ensuremath{\Varid{pre}} definition}{}\<[E]%
\\
\>[B]{}\mathrel{=}{}\<[5]%
\>[5]{}\Conid{F}\;(\lambda\, \Varid{b'}\to \mathbf{if}\;\Varid{b'}\mathrel{=}\Varid{b}\;\mathbf{then}\;\set{\Varid{a}\mid\Conid{True}}\;\mathbf{else}\;\set{\Varid{a}\mid\Conid{False}}){}\<[85]%
\>[85]{}\mbox{\onelinecomment  case split}{}\<[E]%
\\
\>[B]{}\mathrel{=}{}\<[5]%
\>[5]{}\Conid{F}\;(\lambda\, \Varid{b'}\to \mathbf{if}\;\Varid{b'}\mathrel{=}\Varid{b}\;\mathbf{then}\;\mathrm{1}\;\mathbf{else}\;\mathrm{0}){}\<[85]%
\>[85]{}\mbox{\onelinecomment  \ensuremath{\mathrm{1}} and \ensuremath{\mathrm{0}} for \ensuremath{\Pow\;\Varid{a}} (revised in \figref{-> and <-- semirings})}{}\<[E]%
\\
\>[B]{}\mathrel{=}{}\<[5]%
\>[5]{}\Varid{b}\mapsto\mathrm{1}{}\<[85]%
\>[85]{}\mbox{\onelinecomment  \ensuremath{(\mapsto)} definition}{}\<[E]%
\\
\>[B]{}\mathrel{=}{}\<[5]%
\>[5]{}\Varid{pure}\;\Varid{b}{}\<[85]%
\>[85]{}\mbox{\onelinecomment  \ensuremath{\Varid{pure}} for \ensuremath{\Pow\;\Varid{a}\leftarrow\Varid{b}}}{}\<[E]%
\\
\>[B]{}{}\;{}\<[E]%
\\[1.5ex]\>[B]{}\hsindent{5}{}\<[5]%
\>[5]{}\Varid{pre}\;(\Varid{fmap}\;\Varid{h}\;\Varid{f}){}\<[E]%
\\
\>[B]{}\mathrel{=}{}\<[5]%
\>[5]{}\Varid{pre}\;(\lambda\, \Varid{a}\to \Varid{h}\;(\Varid{f}\;\Varid{a})){}\<[85]%
\>[85]{}\mbox{\onelinecomment  \ensuremath{\Varid{fmap}} on \ensuremath{\Varid{a}\to \Varid{b}}}{}\<[E]%
\\
\>[B]{}\mathrel{=}{}\<[5]%
\>[5]{}\Conid{F}\;(\lambda\, \Varid{c}\to \set{\Varid{a}\mid\Varid{h}\;(\Varid{f}\;\Varid{a})\mathrel{=}\Varid{c}}){}\<[85]%
\>[85]{}\mbox{\onelinecomment  \ensuremath{\Varid{pre}} definition}{}\<[E]%
\\
\>[B]{}\mathrel{=}{}\<[5]%
\>[5]{}\Conid{F}\;(\lambda\, \Varid{c}\to \set{\Varid{a}\mid\exists \Varid{b}\hsforall \hsdot{\circ }{.\:}\Varid{f}\;\Varid{a}\mathrel{=}\Varid{b}\mathrel{\wedge}\Varid{h}\;\Varid{b}\mathrel{=}\Varid{c}}){}\<[85]%
\>[85]{}\mbox{\onelinecomment  intermediate variable}{}\<[E]%
\\
\>[B]{}\mathrel{=}{}\<[5]%
\>[5]{}\Conid{F}\;(\lambda\, \Varid{c}\to \bigOp\bigcup{\Varid{b}\;\!\!\\\!\!\;\Varid{h}\;\Varid{b}\mathrel{=}\Varid{c}}{0}{\,}\set{\Varid{a}\mid\Varid{f}\;\Varid{a}\mathrel{=}\Varid{b}}){}\<[85]%
\>[85]{}\mbox{\onelinecomment  logic/sets}{}\<[E]%
\\
\>[B]{}\mathrel{=}{}\<[5]%
\>[5]{}\Conid{F}\;(\lambda\, \Varid{c}\to \bigOp\bigcup{\Varid{b}\;\!\!\\\!\!\;\Varid{h}\;\Varid{b}\mathrel{=}\Varid{c}}{0}{\,}\Varid{pre}\;\Varid{f}\;\Varid{b}){}\<[85]%
\>[85]{}\mbox{\onelinecomment  \ensuremath{\Varid{pre}} definition}{}\<[E]%
\\
\>[B]{}\mathrel{=}{}\<[5]%
\>[5]{}\Conid{F}\;(\lambda\, \Varid{c}\to \bigOp\sum{\Varid{b}\;\!\!\\\!\!\;\Varid{h}\;\Varid{b}\mathrel{=}\Varid{c}}{0}{\,}\Varid{pre}\;\Varid{f}\;\Varid{b}){}\<[85]%
\>[85]{}\mbox{\onelinecomment  \ensuremath{(\mathbin{+})} on \ensuremath{\Pow\;\Varid{a}}}{}\<[E]%
\\
\>[B]{}\mathrel{=}{}\<[5]%
\>[5]{}\Varid{fmap}\;\Varid{h}\;(\Varid{pre}\;\Varid{f}){}\<[85]%
\>[85]{}\mbox{\onelinecomment  \ensuremath{\Varid{fmap}} on \ensuremath{\Pow\;\Varid{a}\leftarrow\Varid{b}}}{}\<[E]%
\\
\>[B]{}{}\;{}\<[E]%
\\[1.5ex]\>[B]{}\hsindent{5}{}\<[5]%
\>[5]{}\Varid{pre}\;(\Varid{liftA}_{2}\;\Varid{h}\;\Varid{f}\;\Varid{g}){}\<[E]%
\\
\>[B]{}\mathrel{=}{}\<[5]%
\>[5]{}\Varid{pre}\;(\lambda\, \Varid{a}\to \Varid{h}\;(\Varid{f}\;\Varid{a})\;(\Varid{g}\;\Varid{a})){}\<[85]%
\>[85]{}\mbox{\onelinecomment  \ensuremath{\Varid{liftA}_{2}} on \ensuremath{\Varid{a}\to \Varid{b}}}{}\<[E]%
\\
\>[B]{}\mathrel{=}{}\<[5]%
\>[5]{}\lambda\, \Varid{c}\to \{\mskip1.5mu \Varid{a}\mid \Varid{h}\;(\Varid{f}\;\Varid{a})\;(\Varid{g}\;\Varid{a})\mathrel{=}\Varid{c}\mskip1.5mu\}{}\<[85]%
\>[85]{}\mbox{\onelinecomment  \ensuremath{\Varid{pre}} definition}{}\<[E]%
\\
\>[B]{}\mathrel{=}{}\<[5]%
\>[5]{}\lambda\, \Varid{c}\to \{\mskip1.5mu \Varid{a}\mid \exists \Varid{x}\hsforall \;\Varid{y}\hsdot{\circ }{.\:}\Varid{x}\mathrel{=}\Varid{f}\;\Varid{a}\mathrel{\wedge}\Varid{y}\mathrel{=}\Varid{g}\;\Varid{a}\mathrel{\wedge}\Varid{h}\;\Varid{x}\;\Varid{y}\mathrel{=}\Varid{c}\mskip1.5mu\}{}\<[85]%
\>[85]{}\mbox{\onelinecomment  intermediate variables}{}\<[E]%
\\
\>[B]{}\mathrel{=}{}\<[5]%
\>[5]{}\lambda\, \Varid{c}\to \{\mskip1.5mu \Varid{a}\mid \exists \Varid{x}\hsforall \;\Varid{y}\hsdot{\circ }{.\:}\Varid{a}\mathbin{\in}\Varid{pre}\;\Varid{f}\;\Varid{x}\mathrel{\wedge}\Varid{a}\mathbin{\in}\Varid{pre}\;\Varid{g}\;\Varid{y}\mathrel{\wedge}\Varid{h}\;\Varid{x}\;\Varid{y}\mathrel{=}\Varid{c}\mskip1.5mu\}{}\<[85]%
\>[85]{}\mbox{\onelinecomment  \ensuremath{\Varid{pre}} definition (twice)}{}\<[E]%
\\
\>[B]{}\mathrel{=}{}\<[5]%
\>[5]{}\lambda\, \Varid{c}\to \{\mskip1.5mu \Varid{a}\mid \exists \Varid{x}\hsforall \;\Varid{y}\hsdot{\circ }{.\:}\Varid{a}\mathbin{\in}(\Varid{pre}\;\Varid{f}\;\Varid{x}\cap\Varid{pre}\;\Varid{g}\;\Varid{y})\mathrel{\wedge}\Varid{h}\;\Varid{x}\;\Varid{y}\mathrel{=}\Varid{c}\mskip1.5mu\}{}\<[85]%
\>[85]{}\mbox{\onelinecomment  \ensuremath{\cap} definition}{}\<[E]%
\\
\>[B]{}\mathrel{=}{}\<[5]%
\>[5]{}\bigOp\bigcup{\Varid{x},\Varid{y}}{0}\;\Varid{h}\;\Varid{x}\;\Varid{y}\mapsto\Varid{pre}\;\Varid{f}\;\Varid{x}\cap\Varid{pre}\;\Varid{g}\;\Varid{y}{}\<[85]%
\>[85]{}\mbox{\onelinecomment  logic/sets}{}\<[E]%
\\
\>[B]{}\mathrel{=}{}\<[5]%
\>[5]{}\bigOp\bigcup{\Varid{x},\Varid{y}}{0}\;\Varid{h}\;\Varid{x}\;\Varid{y}\mapsto\Varid{pre}\;\Varid{f}\;\Varid{x}\mathbin{*}\Varid{pre}\;\Varid{g}\;\Varid{y}{}\<[85]%
\>[85]{}\mbox{\onelinecomment  \ensuremath{(\mathbin{*})} on \ensuremath{\Pow\;\Varid{a}} (revised in \figref{-> and <-- semirings})}{}\<[E]%
\\
\>[B]{}\mathrel{=}{}\<[5]%
\>[5]{}\bigOp\sum{\Varid{x},\Varid{y}}{0}\;\Varid{h}\;\Varid{x}\;\Varid{y}\mapsto\Varid{pre}\;\Varid{f}\;\Varid{x}\mathbin{*}\Varid{pre}\;\Varid{g}\;\Varid{y}{}\<[85]%
\>[85]{}\mbox{\onelinecomment  \ensuremath{(\mathbin{+})} on \ensuremath{\Pow\;\Varid{a}\leftarrow\Varid{b}}}{}\<[E]%
\\
\>[B]{}\mathrel{=}{}\<[5]%
\>[5]{}\Varid{liftA}_{2}\;\Varid{h}\;(\Varid{pre}\;\Varid{f})\;(\Varid{pre}\;\Varid{g}){}\<[E]%
\ColumnHook
\end{hscode}\resethooks
\end{spacing}

\subsection{\thmref{standard FunApp}}\prooflabel{theorem:standard FunApp}

First consider \ensuremath{\Varid{fmap}}, as defined in \figref{FunApp}.
\begin{hscode}\SaveRestoreHook
\column{B}{@{}>{\hspre}c<{\hspost}@{}}%
\column{BE}{@{}l@{}}%
\column{5}{@{}>{\hspre}l<{\hspost}@{}}%
\column{35}{@{}>{\hspre}l<{\hspost}@{}}%
\column{E}{@{}>{\hspre}l<{\hspost}@{}}%
\>[5]{}\Varid{fmap}\;\Varid{h}\;(\Conid{F}\;\Varid{f}){}\<[E]%
\\
\>[B]{}\mathrel{=}{}\<[BE]%
\>[5]{}\bigOp\sum{\Varid{u}}{0}\;\Varid{h}\;\Varid{u}\mapsto\Varid{f}\;\Varid{u}{}\<[35]%
\>[35]{}\mbox{\onelinecomment  definition of \ensuremath{\Varid{fmap}} on \ensuremath{(\leftarrow)\;\Varid{b}}}{}\<[E]%
\\
\>[B]{}\mathrel{=}{}\<[BE]%
\>[5]{}\bigOp\sum{\Varid{u}}{0}\;\Varid{h}\;\Varid{u}\mapsto\Varid{f}\;\Varid{u}\mathbin{*}\mathrm{1}{}\<[35]%
\>[35]{}\mbox{\onelinecomment  multiplicative identity}{}\<[E]%
\\
\>[B]{}\mathrel{=}{}\<[BE]%
\>[5]{}\bigOp\sum{\Varid{u}}{0}\;\Varid{f}\;\Varid{u}\cdot(\Varid{h}\;\Varid{u}\mapsto\mathrm{1})\mbox{\onelinecomment  \lemref{+-> homomorphism}}{}\<[E]%
\\
\>[B]{}\mathrel{=}{}\<[BE]%
\>[5]{}\bigOp\sum{\Varid{u}}{0}\;\Varid{f}\;\Varid{u}\cdot\Varid{single}\;(\Varid{h}\;\Varid{u}){}\<[35]%
\>[35]{}\mbox{\onelinecomment  definition of \ensuremath{\Varid{single}}}{}\<[E]%
\\
\>[B]{}\mathrel{=}{}\<[BE]%
\>[5]{}\bigOp\sum{\Varid{u}}{0}\;\Varid{f}\;\Varid{u}\cdot\Varid{pure}\;(\Varid{h}\;\Varid{u}){}\<[35]%
\>[35]{}\mbox{\onelinecomment  \ensuremath{\Varid{single}\mathrel{=}\Varid{pure}}}{}\<[E]%
\\
\>[B]{}\mathrel{=}{}\<[BE]%
\>[5]{}\Conid{F}\;\Varid{f}\bind \Varid{pure}\hsdot{\circ }{.\:}\Varid{h}{}\<[35]%
\>[35]{}\mbox{\onelinecomment  definition of \ensuremath{(\bind )}}{}\<[E]%
\ColumnHook
\end{hscode}\resethooks
\noindent
Similarly for \ensuremath{\Varid{liftA}_{2}}:
\vspace{-1ex}
\begin{spacing}{1.5}
\begin{hscode}\SaveRestoreHook
\column{B}{@{}>{\hspre}c<{\hspost}@{}}%
\column{BE}{@{}l@{}}%
\column{5}{@{}>{\hspre}l<{\hspost}@{}}%
\column{53}{@{}>{\hspre}l<{\hspost}@{}}%
\column{E}{@{}>{\hspre}l<{\hspost}@{}}%
\>[5]{}\Varid{liftA}_{2}\;\Varid{h}\;(\Conid{F}\;\Varid{f})\;(\Conid{F}\;\Varid{g}){}\<[E]%
\\
\>[B]{}\mathrel{=}{}\<[BE]%
\>[5]{}\bigOp\sum{\Varid{u},\Varid{v}}{0}\;\Varid{h}\;\Varid{u}\;\Varid{v}\mapsto\Varid{f}\;\Varid{u}\mathbin{*}\Varid{g}\;\Varid{v}{}\<[53]%
\>[53]{}\mbox{\onelinecomment  definition of \ensuremath{\Varid{liftA}_{2}}}{}\<[E]%
\\
\>[B]{}\mathrel{=}{}\<[BE]%
\>[5]{}\bigOp\sum{\Varid{u},\Varid{v}}{0}\;(\Varid{f}\;\Varid{u}\mathbin{*}\Varid{g}\;\Varid{v})\cdot\Varid{single}\;(\Varid{h}\;\Varid{u}\;\Varid{v}){}\<[53]%
\>[53]{}\mbox{\onelinecomment  as above}{}\<[E]%
\\
\>[B]{}\mathrel{=}{}\<[BE]%
\>[5]{}\bigOp\sum{\Varid{u},\Varid{v}}{0}\;\Varid{f}\;\Varid{u}\cdot(\Varid{g}\;\Varid{v}\cdot\Varid{single}\;(\Varid{h}\;\Varid{u}\;\Varid{v})){}\<[53]%
\>[53]{}\mbox{\onelinecomment  associativity}{}\<[E]%
\\
\>[B]{}\mathrel{=}{}\<[BE]%
\>[5]{}\bigOp\sum{\Varid{u}}{0}\;\Varid{f}\;\Varid{u}\cdot\bigOp\sum{\Varid{v}}{0}\;\Varid{g}\;\Varid{v}\cdot\Varid{single}\;(\Varid{h}\;\Varid{u}\;\Varid{v}){}\<[53]%
\>[53]{}\mbox{\onelinecomment  linearity}{}\<[E]%
\\
\>[B]{}\mathrel{=}{}\<[BE]%
\>[5]{}\bigOp\sum{\Varid{u}}{0}\;\Varid{f}\;\Varid{u}\cdot\bigOp\sum{\Varid{v}}{0}\;\Varid{h}\;\Varid{u}\;\Varid{v}\mapsto\Varid{g}\;\Varid{v}{}\<[53]%
\>[53]{}\mbox{\onelinecomment  as above}{}\<[E]%
\\
\>[B]{}\mathrel{=}{}\<[BE]%
\>[5]{}\bigOp\sum{\Varid{u}}{0}\;\Varid{f}\;\Varid{u}\cdot\Varid{fmap}\;(\Varid{h}\;\Varid{u})\;(\Conid{F}\;\Varid{g}){}\<[53]%
\>[53]{}\mbox{\onelinecomment  definition of \ensuremath{\Varid{fmap}}}{}\<[E]%
\\
\>[B]{}\mathrel{=}{}\<[BE]%
\>[5]{}\Conid{F}\;\Varid{f}\bind \lambda\, \Varid{u}\to \Varid{fmap}\;(\Varid{h}\;\Varid{u})\;(\Conid{F}\;\Varid{g}){}\<[53]%
\>[53]{}\mbox{\onelinecomment  definition of \ensuremath{(\bind )}}{}\<[E]%
\\
\>[B]{}\mathrel{=}{}\<[BE]%
\>[5]{}\Conid{F}\;\Varid{f}\bind \lambda\, \Varid{u}\to \Conid{F}\;\Varid{g}\bind \lambda\, \Varid{v}\to \Varid{pure}\;(\Varid{h}\;\Varid{u}\;\Varid{v}){}\<[53]%
\>[53]{}\mbox{\onelinecomment  above}{}\<[E]%
\ColumnHook
\end{hscode}\resethooks
\end{spacing}

\subsection{\thmref{poly hom}}\prooflabel{theorem:poly hom}

\begin{hscode}\SaveRestoreHook
\column{B}{@{}>{\hspre}c<{\hspost}@{}}%
\column{BE}{@{}l@{}}%
\column{5}{@{}>{\hspre}l<{\hspost}@{}}%
\column{22}{@{}>{\hspre}l<{\hspost}@{}}%
\column{36}{@{}>{\hspre}l<{\hspost}@{}}%
\column{E}{@{}>{\hspre}l<{\hspost}@{}}%
\>[5]{}\Varid{poly}\;\mathrm{0}{}\<[E]%
\\
\>[B]{}\mathrel{=}{}\<[BE]%
\>[5]{}\Varid{poly}\;(\Conid{F}\;(\lambda\, \Varid{i}\to \mathrm{0})){}\<[36]%
\>[36]{}\mbox{\onelinecomment  \ensuremath{\mathrm{0}} on \ensuremath{\Varid{b}\leftarrow\Varid{a}} (derived)}{}\<[E]%
\\
\>[B]{}\mathrel{=}{}\<[BE]%
\>[5]{}\lambda\, \Varid{x}\to \bigOp\sum{\Varid{i}}{0}\;{}\<[22]%
\>[22]{}\mathrm{0}\mathbin{*}\Varid{x}^\Varid{i}{}\<[36]%
\>[36]{}\mbox{\onelinecomment  \ensuremath{\Varid{poly}} definition}{}\<[E]%
\\
\>[B]{}\mathrel{=}{}\<[BE]%
\>[5]{}\lambda\, \Varid{x}\to \bigOp\sum{\Varid{i}}{0}\;{}\<[22]%
\>[22]{}\mathrm{0}{}\<[36]%
\>[36]{}\mbox{\onelinecomment  \ensuremath{\mathrm{0}} as annihilator}{}\<[E]%
\\
\>[B]{}\mathrel{=}{}\<[BE]%
\>[5]{}\lambda\, \Varid{x}\to \mathrm{0}{}\<[36]%
\>[36]{}\mbox{\onelinecomment  \ensuremath{\mathrm{0}} as additive identity}{}\<[E]%
\\
\>[B]{}\mathrel{=}{}\<[BE]%
\>[5]{}\mathrm{0}{}\<[36]%
\>[36]{}\mbox{\onelinecomment  \ensuremath{\mathrm{0}} on functions}{}\<[E]%
\ColumnHook
\end{hscode}\resethooks

\begin{hscode}\SaveRestoreHook
\column{B}{@{}>{\hspre}c<{\hspost}@{}}%
\column{BE}{@{}l@{}}%
\column{5}{@{}>{\hspre}l<{\hspost}@{}}%
\column{65}{@{}>{\hspre}l<{\hspost}@{}}%
\column{E}{@{}>{\hspre}l<{\hspost}@{}}%
\>[5]{}\Varid{poly}\;\mathrm{1}{}\<[E]%
\\
\>[B]{}\mathrel{=}{}\<[BE]%
\>[5]{}\Varid{poly}\;(\Varid{pure}\;\varepsilon){}\<[65]%
\>[65]{}\mbox{\onelinecomment  \ensuremath{\mathrm{1}} on \ensuremath{\Varid{b}\leftarrow\Varid{a}}}{}\<[E]%
\\
\>[B]{}\mathrel{=}{}\<[BE]%
\>[5]{}\Varid{poly}\;(\Conid{F}\;(\lambda\, \Varid{i}\to \mathbf{if}\;\Varid{i}\mathrel{=}\varepsilon\;\mathbf{then}\;\mathrm{1}\;\mathbf{else}\;\mathrm{0})){}\<[65]%
\>[65]{}\mbox{\onelinecomment  \ensuremath{\Varid{pure}} on \ensuremath{(\leftarrow)\;\Varid{b}}}{}\<[E]%
\\
\>[B]{}\mathrel{=}{}\<[BE]%
\>[5]{}\Varid{poly}\;(\Conid{F}\;(\lambda\, \Varid{i}\to \mathbf{if}\;\Varid{i}\mathrel{=}\mathrm{0}\;\mathbf{then}\;\mathrm{1}\;\mathbf{else}\;\mathrm{0})){}\<[65]%
\>[65]{}\mbox{\onelinecomment  \ensuremath{\varepsilon} on \ensuremath{\mathbb{N}}}{}\<[E]%
\\
\>[B]{}\mathrel{=}{}\<[BE]%
\>[5]{}\lambda\, \Varid{x}\to \bigOp\sum{\Varid{i}}{0}\;(\mathbf{if}\;\Varid{i}\mathrel{=}\mathrm{0}\;\mathbf{then}\;\mathrm{1}\;\mathbf{else}\;\mathrm{0})\mathbin{*}\Varid{x}^\Varid{i}{}\<[65]%
\>[65]{}\mbox{\onelinecomment  \ensuremath{\Varid{poly}} definition}{}\<[E]%
\\
\>[B]{}\mathrel{=}{}\<[BE]%
\>[5]{}\lambda\, \Varid{x}\to \bigOp\sum{\Varid{i}}{0}\;(\mathbf{if}\;\Varid{i}\mathrel{=}\mathrm{0}\;\mathbf{then}\;\Varid{x}^\Varid{i}\;\mathbf{else}\;\mathrm{0}){}\<[65]%
\>[65]{}\mbox{\onelinecomment  simplify}{}\<[E]%
\\
\>[B]{}\mathrel{=}{}\<[BE]%
\>[5]{}\lambda\, \Varid{x}\to \Varid{x}^\mathrm{0}{}\<[65]%
\>[65]{}\mbox{\onelinecomment  other terms vanish}{}\<[E]%
\\
\>[B]{}\mathrel{=}{}\<[BE]%
\>[5]{}\lambda\, \Varid{x}\to \mathrm{1}{}\<[65]%
\>[65]{}\mbox{\onelinecomment  multiplicative identity}{}\<[E]%
\\
\>[B]{}\mathrel{=}{}\<[BE]%
\>[5]{}\mathrm{1}{}\<[65]%
\>[65]{}\mbox{\onelinecomment  \ensuremath{\mathrm{1}} on \ensuremath{\Varid{a}\to \Varid{b}}}{}\<[E]%
\ColumnHook
\end{hscode}\resethooks

\begin{hscode}\SaveRestoreHook
\column{B}{@{}>{\hspre}c<{\hspost}@{}}%
\column{BE}{@{}l@{}}%
\column{5}{@{}>{\hspre}l<{\hspost}@{}}%
\column{22}{@{}>{\hspre}l<{\hspost}@{}}%
\column{23}{@{}>{\hspre}l<{\hspost}@{}}%
\column{51}{@{}>{\hspre}l<{\hspost}@{}}%
\column{65}{@{}>{\hspre}l<{\hspost}@{}}%
\column{E}{@{}>{\hspre}l<{\hspost}@{}}%
\>[5]{}\Varid{poly}\;(\Conid{F}\;\Varid{f}\mathbin{+}\Conid{F}\;\Varid{g}){}\<[E]%
\\
\>[B]{}\mathrel{=}{}\<[BE]%
\>[5]{}\Varid{poly}\;(\Conid{F}\;(\lambda\, \Varid{i}\to \Varid{f}\;\Varid{i}\mathbin{+}\Varid{g}\;\Varid{i})){}\<[65]%
\>[65]{}\mbox{\onelinecomment  \ensuremath{(\mathbin{+})} on \ensuremath{\Varid{b}\leftarrow\Varid{a}} (derived)}{}\<[E]%
\\
\>[B]{}\mathrel{=}{}\<[BE]%
\>[5]{}\lambda\, \Varid{x}\to \bigOp\sum{\Varid{i}}{0}\;{}\<[22]%
\>[22]{}(\Varid{f}\;\Varid{i}\mathbin{+}\Varid{g}\;\Varid{i})\mathbin{*}\Varid{x}^\Varid{i}{}\<[65]%
\>[65]{}\mbox{\onelinecomment  \ensuremath{\Varid{poly}} definition}{}\<[E]%
\\
\>[B]{}\mathrel{=}{}\<[BE]%
\>[5]{}\lambda\, \Varid{x}\to \bigOp\sum{\Varid{i}}{0}\;{}\<[22]%
\>[22]{}\Varid{f}\;\Varid{i}\mathbin{*}\Varid{x}^\Varid{i}\mathbin{+}\Varid{g}\;\Varid{i}\mathbin{*}\Varid{x}^\Varid{i}{}\<[65]%
\>[65]{}\mbox{\onelinecomment  distributivity}{}\<[E]%
\\
\>[B]{}\mathrel{=}{}\<[BE]%
\>[5]{}\lambda\, \Varid{x}\to (\bigOp\sum{\Varid{i}}{0}\;{}\<[23]%
\>[23]{}\Varid{f}\;\Varid{i}\mathbin{*}\Varid{x}^\Varid{i})\mathbin{+}(\bigOp\sum{\Varid{i}}{0}\;{}\<[51]%
\>[51]{}\Varid{g}\;\Varid{i}\mathbin{*}\Varid{x}^\Varid{i}){}\<[65]%
\>[65]{}\mbox{\onelinecomment  summation property}{}\<[E]%
\\
\>[B]{}\mathrel{=}{}\<[BE]%
\>[5]{}\lambda\, \Varid{x}\to \Varid{poly}\;(\Conid{F}\;\Varid{f})\;\Varid{x}\mathbin{+}\Varid{poly}\;(\Conid{F}\;\Varid{g})\;\Varid{x}{}\<[65]%
\>[65]{}\mbox{\onelinecomment  \ensuremath{\Varid{poly}} definition}{}\<[E]%
\\
\>[B]{}\mathrel{=}{}\<[BE]%
\>[5]{}\Varid{poly}\;(\Conid{F}\;\Varid{f})\mathbin{+}\Varid{poly}\;(\Conid{F}\;\Varid{g}){}\<[65]%
\>[65]{}\mbox{\onelinecomment  \ensuremath{(\mathbin{+})} on \ensuremath{\Varid{a}\to \Varid{b}}}{}\<[E]%
\ColumnHook
\end{hscode}\resethooks

\begin{hscode}\SaveRestoreHook
\column{B}{@{}>{\hspre}c<{\hspost}@{}}%
\column{BE}{@{}l@{}}%
\column{5}{@{}>{\hspre}l<{\hspost}@{}}%
\column{19}{@{}>{\hspre}l<{\hspost}@{}}%
\column{23}{@{}>{\hspre}l<{\hspost}@{}}%
\column{25}{@{}>{\hspre}l<{\hspost}@{}}%
\column{51}{@{}>{\hspre}l<{\hspost}@{}}%
\column{65}{@{}>{\hspre}l<{\hspost}@{}}%
\column{E}{@{}>{\hspre}l<{\hspost}@{}}%
\>[5]{}\Varid{poly}\;(\Conid{F}\;\Varid{f}\mathbin{*}\Conid{F}\;\Varid{g}){}\<[E]%
\\
\>[B]{}\mathrel{=}{}\<[BE]%
\>[5]{}\Varid{poly}\;(\Varid{liftA}_{2}\;( \diamond )\;(\Conid{F}\;\Varid{f})\;(\Conid{F}\;\Varid{g})){}\<[65]%
\>[65]{}\mbox{\onelinecomment  \ensuremath{(\mathbin{*})} on \ensuremath{\Varid{b}\leftarrow\Varid{a}}}{}\<[E]%
\\
\>[B]{}\mathrel{=}{}\<[BE]%
\>[5]{}\Varid{poly}\;(\bigOp\sum{\Varid{i},\Varid{j}}{0}\;{}\<[25]%
\>[25]{}\Varid{i} \diamond \Varid{j}\mapsto\Varid{f}\;\Varid{i}\mathbin{*}\Varid{g}\;\Varid{j}){}\<[65]%
\>[65]{}\mbox{\onelinecomment  \ensuremath{\Varid{liftA}_{2}} on \ensuremath{\Varid{b}\leftarrow\Varid{a}}}{}\<[E]%
\\
\>[B]{}\mathrel{=}{}\<[BE]%
\>[5]{}\Varid{poly}\;(\bigOp\sum{\Varid{i},\Varid{j}}{0}\;{}\<[25]%
\>[25]{}\Varid{i}\mathbin{+}\Varid{j}\mapsto\Varid{f}\;\Varid{i}\mathbin{*}\Varid{g}\;\Varid{j}){}\<[65]%
\>[65]{}\mbox{\onelinecomment  \ensuremath{( \diamond )} on \ensuremath{\mathbb{N}}}{}\<[E]%
\\
\>[B]{}\mathrel{=}{}\<[BE]%
\>[5]{}\bigOp\sum{\Varid{i},\Varid{j}}{0}\;{}\<[19]%
\>[19]{}\Varid{poly}\;(\Varid{i}\mathbin{+}\Varid{j}\mapsto\Varid{f}\;\Varid{i}\mathbin{*}\Varid{g}\;\Varid{j}){}\<[65]%
\>[65]{}\mbox{\onelinecomment  additivity of \ensuremath{\Varid{poly}} (previous property)}{}\<[E]%
\\
\>[B]{}\mathrel{=}{}\<[BE]%
\>[5]{}\bigOp\sum{\Varid{i},\Varid{j}}{0}\;(\lambda\, \Varid{x}\to (\Varid{f}\;\Varid{i}\mathbin{*}\Varid{g}\;\Varid{j})\mathbin{*}\Varid{x}^{\Varid{i}\mathbin{+}\Varid{j}}){}\<[65]%
\>[65]{}\mbox{\onelinecomment  \lemref{poly +->} below}{}\<[E]%
\\
\>[B]{}\mathrel{=}{}\<[BE]%
\>[5]{}\lambda\, \Varid{x}\to \bigOp\sum{\Varid{i},\Varid{j}}{0}\;(\Varid{f}\;\Varid{i}\mathbin{*}\Varid{g}\;\Varid{j})\mathbin{*}\Varid{x}^{\Varid{i}\mathbin{+}\Varid{j}}{}\<[65]%
\>[65]{}\mbox{\onelinecomment  \ensuremath{(\mathbin{+})} on functions}{}\<[E]%
\\
\>[B]{}\mathrel{=}{}\<[BE]%
\>[5]{}\lambda\, \Varid{x}\to \bigOp\sum{\Varid{i},\Varid{j}}{0}\;(\Varid{f}\;\Varid{i}\mathbin{*}\Varid{g}\;\Varid{j})\mathbin{*}(\Varid{x}^\Varid{i}\mathbin{*}\Varid{x}^\Varid{j}){}\<[65]%
\>[65]{}\mbox{\onelinecomment  exponentiation property}{}\<[E]%
\\
\>[B]{}\mathrel{=}{}\<[BE]%
\>[5]{}\lambda\, \Varid{x}\to \bigOp\sum{\Varid{i},\Varid{j}}{0}\;(\Varid{f}\;\Varid{i}\mathbin{*}\Varid{x}^\Varid{i})\mathbin{*}(\Varid{g}\;\Varid{j}\mathbin{*}\Varid{x}^\Varid{j}){}\<[65]%
\>[65]{}\mbox{\onelinecomment  commutativity assumption}{}\<[E]%
\\
\>[B]{}\mathrel{=}{}\<[BE]%
\>[5]{}\lambda\, \Varid{x}\to (\bigOp\sum{\Varid{i}}{0}\;{}\<[23]%
\>[23]{}\Varid{f}\;\Varid{i}\mathbin{*}\Varid{x}^\Varid{i})\mathbin{*}(\bigOp\sum{\Varid{j}}{0}\;{}\<[51]%
\>[51]{}\Varid{g}\;\Varid{j}\mathbin{*}\Varid{x}^\Varid{j}){}\<[65]%
\>[65]{}\mbox{\onelinecomment  summation property}{}\<[E]%
\\
\>[B]{}\mathrel{=}{}\<[BE]%
\>[5]{}\lambda\, \Varid{x}\to \Varid{poly}\;(\Conid{F}\;\Varid{f})\;\Varid{x}\mathbin{*}\Varid{poly}\;\Conid{F}\;\Varid{g})\;\Varid{x}{}\<[65]%
\>[65]{}\mbox{\onelinecomment  \ensuremath{\Varid{poly}} definition}{}\<[E]%
\\
\>[B]{}\mathrel{=}{}\<[BE]%
\>[5]{}\Varid{poly}\;(\Conid{F}\;\Varid{f})\mathbin{*}\Varid{poly}\;\Conid{F}\;\Varid{g}){}\<[65]%
\>[65]{}\mbox{\onelinecomment  \ensuremath{(\mathbin{*})} on functions}{}\<[E]%
\ColumnHook
\end{hscode}\resethooks

%% \note{The sum and product derivations might read more easily in reverse.}

\begin{lemma}\lemlabel{poly +->}~
\begin{hscode}\SaveRestoreHook
\column{B}{@{}>{\hspre}l<{\hspost}@{}}%
\column{E}{@{}>{\hspre}l<{\hspost}@{}}%
\>[B]{}\Varid{poly}\;(\Varid{n}\mapsto\Varid{b})\mathrel{=}\lambda\, \Varid{x}\to \Varid{b}\mathbin{*}\Varid{x}^\Varid{n}{}\<[E]%
\ColumnHook
\end{hscode}\resethooks
\end{lemma}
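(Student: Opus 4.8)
The plan is to prove this by a direct calculation, unfolding the definition of the singleton $\ensuremath{\Varid{n}\mapsto\Varid{b}}$ and then the definition of $\ensuremath{\Varid{poly}}$, and finally collapsing the resulting summation. Since $\ensuremath{\Varid{n}\mapsto\Varid{b}}$ is a singleton---a ``vector'' with a single nonzero component---all but one term of the defining sum of $\ensuremath{\Varid{poly}}$ will vanish, leaving exactly the desired monomial $\ensuremath{\lambda\, \Varid{x}\to \Varid{b}\mathbin{*}\Varid{x}^\Varid{n}}$.

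First I would rewrite $\ensuremath{\Varid{poly}\;(\Varid{n}\mapsto\Varid{b})}$ using the definition of $\ensuremath{(\mapsto)}$ on $\ensuremath{\Varid{b}\leftarrow\mathbb{N}}$ (inherited from the $\ensuremath{\Varid{a}\to \Varid{b}}$ instance and wrapped by $\ensuremath{\Conid{F}}$), giving $\ensuremath{\Varid{poly}\;(\Conid{F}\;(\lambda\, \Varid{i}\to \mathbf{if}\;\Varid{i}\mathrel{=}\Varid{n}\;\mathbf{then}\;\Varid{b}\;\mathbf{else}\;\mathrm{0}))}$. Next I would apply the definition of $\ensuremath{\Varid{poly}}$ to obtain $\ensuremath{\lambda\, \Varid{x}\to \bigOp\sum{\Varid{i}}{0}\;(\mathbf{if}\;\Varid{i}\mathrel{=}\Varid{n}\;\mathbf{then}\;\Varid{b}\;\mathbf{else}\;\mathrm{0})\mathbin{*}\Varid{x}^\Varid{i}}$. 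Then I would push the multiplication by $\ensuremath{\Varid{x}^\Varid{i}}$ into the conditional---using annihilation $\ensuremath{\mathrm{0}\mathbin{*}\Varid{x}^\Varid{i}\mathrel{=}\mathrm{0}}$ on the ``else'' branch---rewriting the summand as $\ensuremath{\mathbf{if}\;\Varid{i}\mathrel{=}\Varid{n}\;\mathbf{then}\;\Varid{b}\mathbin{*}\Varid{x}^\Varid{i}\;\mathbf{else}\;\mathrm{0}}$. Finally, since only the $\ensuremath{\Varid{i}\mathrel{=}\Varid{n}}$ term survives, the sum collapses to $\ensuremath{\Varid{b}\mathbin{*}\Varid{x}^\Varid{n}}$, yielding $\ensuremath{\lambda\, \Varid{x}\to \Varid{b}\mathbin{*}\Varid{x}^\Varid{n}}$ as required.

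There is no real obstacle here; the lemma is essentially a packaging of the ``sparse'' decomposition principle (cf.\ \lemref{decomp +->}) specialized to a single term. The only point needing mild care is justifying that the (potentially infinite) summation is well-behaved when collapsing to the single nonzero term---this is the same ``other addends vanish'' step used in the proof of \lemref{decomp +->}, relying only on $\ensuremath{\mathrm{0}}$ being the additive identity and a multiplicative annihilator. This calculation also mirrors the handling of $\ensuremath{\Varid{poly}\;\mathrm{1}}$ in the proof of \thmref{poly hom}, where the analogous conditional summand is simplified in exactly the same way.
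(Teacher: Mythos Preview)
Your proposal is correct and follows essentially the same approach as the paper's own proof: unfold \ensuremath{(\mapsto)}, apply the definition of \ensuremath{\Varid{poly}}, and collapse the sum since all terms with \ensuremath{\Varid{i}\not=\Varid{n}} vanish. You include one extra intermediate step (pushing the multiplication by \ensuremath{\Varid{x}^\Varid{i}} into the conditional via annihilation) that the paper elides, but otherwise the calculations are identical.
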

\begin{proof}~
\begin{hscode}\SaveRestoreHook
\column{B}{@{}>{\hspre}l<{\hspost}@{}}%
\column{55}{@{}>{\hspre}l<{\hspost}@{}}%
\column{E}{@{}>{\hspre}l<{\hspost}@{}}%
\>[B]{}\Varid{poly}\;(\Varid{n}\mapsto\Varid{b}){}\<[E]%
\\
\>[B]{}\Varid{poly}\;(\Conid{F}\;(\lambda\, \Varid{i}\to \mathbf{if}\;\Varid{i}\mathrel{=}\Varid{n}\;\mathbf{then}\;\Varid{b}\;\mathbf{else}\;\mathrm{0})){}\<[55]%
\>[55]{}\mbox{\onelinecomment  \ensuremath{(\mapsto)} on \ensuremath{\Varid{b}\leftarrow\Varid{a}} (derived)}{}\<[E]%
\\
\>[B]{}\lambda\, \Varid{x}\to \bigOp\sum{\Varid{i}}{0}\;(\mathbf{if}\;\Varid{i}\mathrel{=}\Varid{n}\;\mathbf{then}\;\Varid{b}\;\mathbf{else}\;\mathrm{0})\mathbin{*}\Varid{x}^\Varid{n}{}\<[55]%
\>[55]{}\mbox{\onelinecomment  \ensuremath{\Varid{poly}} definition}{}\<[E]%
\\
\>[B]{}\lambda\, \Varid{x}\to \Varid{b}\mathbin{*}\Varid{x}^\Varid{n}{}\<[55]%
\>[55]{}\mbox{\onelinecomment  other terms vanish}{}\<[E]%
\ColumnHook
\end{hscode}\resethooks
\end{proof}

\subsection{\lemref{pows hom}}\prooflabel{lemma:pows hom}

\begin{hscode}\SaveRestoreHook
\column{B}{@{}>{\hspre}c<{\hspost}@{}}%
\column{BE}{@{}l@{}}%
\column{5}{@{}>{\hspre}l<{\hspost}@{}}%
\column{38}{@{}>{\hspre}l<{\hspost}@{}}%
\column{E}{@{}>{\hspre}l<{\hspost}@{}}%
\>[5]{}\Varid{x}\!{\string^}^{\hspace{-1pt}\mathrm{0}}{}\<[E]%
\\
\>[B]{}\mathrel{=}{}\<[BE]%
\>[5]{}\bigOp\prod{\Varid{i}}{0}{\,}(\Varid{x}\;\Varid{i})^{\mathrm{0}\;\Varid{i}}{}\<[38]%
\>[38]{}\mbox{\onelinecomment  \ensuremath{(\string^)} definition}{}\<[E]%
\\
\>[B]{}\mathrel{=}{}\<[BE]%
\>[5]{}\bigOp\prod{\Varid{i}}{0}{\,}(\Varid{x}\;\Varid{i})^{\mathrm{0}}{}\<[38]%
\>[38]{}\mbox{\onelinecomment  \ensuremath{\mathrm{0}} on functions}{}\<[E]%
\\
\>[B]{}\mathrel{=}{}\<[BE]%
\>[5]{}\bigOp\prod{\Varid{i}}{0}\;\mathrm{1}{}\<[38]%
\>[38]{}\mbox{\onelinecomment  exponentiation law}{}\<[E]%
\\
\>[B]{}\mathrel{=}{}\<[BE]%
\>[5]{}\mathrm{1}{}\<[38]%
\>[38]{}\mbox{\onelinecomment  multiplicative identity}{}\<[E]%
\ColumnHook
\end{hscode}\resethooks

\begin{hscode}\SaveRestoreHook
\column{B}{@{}>{\hspre}c<{\hspost}@{}}%
\column{BE}{@{}l@{}}%
\column{5}{@{}>{\hspre}l<{\hspost}@{}}%
\column{82}{@{}>{\hspre}l<{\hspost}@{}}%
\column{E}{@{}>{\hspre}l<{\hspost}@{}}%
\>[5]{}\Varid{x}\!{\string^}^{\hspace{-1pt}\Varid{p}\mathbin{+}\Varid{q}}{}\<[E]%
\\
\>[B]{}\mathrel{=}{}\<[BE]%
\>[5]{}\bigOp\prod{\Varid{i}}{0}{\,}(\Varid{x}\;\Varid{i})^{(\Varid{p}\mathbin{+}\Varid{q})\;\Varid{i}}{}\<[82]%
\>[82]{}\mbox{\onelinecomment  \ensuremath{(\string^)} definition}{}\<[E]%
\\
\>[B]{}\mathrel{=}{}\<[BE]%
\>[5]{}\bigOp\prod{\Varid{i}}{0}{\,}(\Varid{x}\;\Varid{i})^{\Varid{p}\;\Varid{i}\mathbin{+}\Varid{q}\;\Varid{i}}{}\<[82]%
\>[82]{}\mbox{\onelinecomment  \ensuremath{(\mathbin{+})} on functions}{}\<[E]%
\\
\>[B]{}\mathrel{=}{}\<[BE]%
\>[5]{}\bigOp\prod{\Varid{i}}{0}{\,}((\Varid{x}\;\Varid{i})^{\Varid{p}\;\Varid{i}})\mathbin{*}((\Varid{x}\;\Varid{i})^{\Varid{q}\;\Varid{i}}){}\<[82]%
\>[82]{}\mbox{\onelinecomment  exponentiation law (with commutative \ensuremath{(\mathbin{*})})}{}\<[E]%
\\
\>[B]{}\mathrel{=}{}\<[BE]%
\>[5]{}\left(\bigOp\prod{\Varid{i}}{0}{\,}(\Varid{x}\;\Varid{i})^{\Varid{p}\;\Varid{i}}\right)\mathbin{*}\left(\bigOp\prod{\Varid{i}}{0}{\,}(\Varid{x}\;\Varid{i})^{\Varid{q}\;\Varid{i}}\right){}\<[82]%
\>[82]{}\mbox{\onelinecomment  product property (with commutative \ensuremath{(\mathbin{*})})}{}\<[E]%
\\
\>[B]{}\mathrel{=}{}\<[BE]%
\>[5]{}\Varid{x}\!{\string^}^{\hspace{-1pt}\Varid{p}}\mathbin{*}\Varid{x}\!{\string^}^{\hspace{-1pt}\Varid{q}}{}\<[82]%
\>[82]{}\mbox{\onelinecomment  \ensuremath{(\string^)} definition}{}\<[E]%
\ColumnHook
\end{hscode}\resethooks

\bibliography{bib}

\end{document}